\documentclass[oneside, 12pt]{article}
\usepackage[english]{babel}
\usepackage{design_ASC}
\usepackage{mathtools}
\usepackage{amsmath}
\usepackage{amssymb}
\usepackage{amsthm}
\usepackage{wrapfig}
\usepackage{geometry}
\usepackage{setspace}
\usepackage{authblk}
\usepackage[toc,page]{appendix}
\usepackage{subcaption}
\usepackage{biblatex}
\usepackage[hidelinks]{hyperref}
\counterwithin{figure}{section}

\usepackage{titling}
\newcommand{\subtitle}[1]{%
  \posttitle{%
    \par\end{center}
    \begin{center}\large#1\end{center} 
    \vskip0.5em}%
}

\DeclarePairedDelimiterX\braket[2]{\langle}{\rangle}{#1 \delimsize\vert #2}
\usepackage[makeroom]{cancel}
\usepackage{array}

\numberwithin{equation}{section}

\theoremstyle{plain}
\newtheorem{thm}{Theorem}[section]
\newtheorem{lem}[thm]{Lemma}
\newtheorem{prop}[thm]{Proposition}
\newtheorem{cor}[thm]{Corollary}

\theoremstyle{definition}
\newtheorem{defn}[thm]{Definition}

\newtheorem{asmpt}[thm]{Assumptions}

\theoremstyle{remark}
\newtheorem*{rem}{Remark}
\newtheorem*{rems}{Remarks}

\newcommand{\norm}[1]{\left\lVert#1\right\rVert}

\newcommand{\supp}{\text{supp}}

\title{\vspace{-1em} Contrasting behaviour of two spherically symmetric perfect fluids near a weak null singularity in a spherically symmetric black hole\vspace{-6pt}} 
\author[*]{Raya V. Mancheva}  
\affil[*]{School of Mathematics and Maxwell Institute of Mathematical Sciences, University of Edinburgh}
\begin{document}
\maketitle
\setlength{\droptitle}{-2em}

\begin{abstract}
In this work we contrast the behaviour of two spherically symmetric matter models in a class of spherically symmetric spacetimes which feature a weak null singularity. This class in particular contains spherically symmetric perturbations of subextremal Reissner-Nordstr\"{o}m under the Einstein--Maxwell--scalar field system, a system for which a $C^2$ formulation of the strong cosmic censorship conjecture was proved by Luk-Oh \cite{Luk_Oh} and Dafermos \cite{Dafermos2012}. \\[5pt]
Firstly, we consider the Cauchy problem of spherically symmetric dust falling into the weak null singularity (WNS) where the initial dust velocity is normal to a smooth spacelike curve with certain properties. We prove that the flow of the dust velocity does not experience any shell-crossing before or at the singularity, the velocity vector remains timelike, and that the dust energy density remains bounded as matter approaches the singularity.\\[5pt]
Secondly, we consider the characteristic initial value problem for stiff perfect fluid falling into the WNS. By relating the stiff fluid velocity and energy density to a scalar field satisfying the homogeneous linear wave equation, we prove that this energy density becomes infinite as we approach the weak null singularity. Furthermore, we show that the ingoing component of the stiff fluid velocity blows up while the outgoing component approaches zero at the singularity. Therefore the velocity vector approaches an ingoing null vector tangent to the singular hypersurface.
\end{abstract}
\textbf{Acknowledgements}\\[5pt]
The author thanks her PhD advisor, Jan Sbierski, for the invaluable guidance on this work. The author also acknowledges the School of Mathematics at the University of Edinburgh, for financing this project. 
\tableofcontents 
\section{Introduction}\label{Introduction}
\subsection{Strong Cosmic Censorship in Spherical Symmetry}\label{SCSSS}
A central question in mathematical general relativity is the uniqueness of maximally extended spacetimes that solve the Einstein equations, particularly given generic, suitably regular initial data. This question, which forms the basis of the Strong Cosmic Censorship Conjecture, has proven difficult to address in full generality. However, significant advancements have been made under certain simplifying assumptions, particularly in the case of spherical symmetry.  The Strong Cosmic Censorship conjecture, originally conceived by Sir Roger Penrose states that maximal globally hyperbolic developments solving the Einstein equations with generic and suitably regular initial data are inextendible.  This is a statement of uniqueness in general relativity, because it ensures all of spacetime is determined by the initial data supplied on the initial hypersurface and no region outside of the domain of dependence of the initial data makes physical sense. Given a regularity class $\mathcal{G}$ (e.g. $\mathcal{G}=C^0, C^2$ or $H^1$ etc) the question of whether solutions to the Einstein equations with generic and suitably regular initial data are $\mathcal{G}$-extendible is referred to as the $\mathcal{G}$-formulation of the strong cosmic censorship conjecture. The $C^0$-formulation of the strong cosmic censorship has been disproven for solutions of the Einstein--Maxwell--scalar field system in spherical symmetry, see \cite{Dafermos03} and \cite{Dafermos05}. For trapped characteristic initial data, it is shown that the maximal development has a future null boundary along which curvature blows up but the metric is $C^0$-extendible. In \cite{Dafermos03} this is done for spherically symmetric data close to subextremal Reissner-Nordstr\"{o}m data, while \cite{Dafermos05} uses the assumption that a Price's law decay holds on the initial outgoing hypersurface. 
Since the Einstein Equations are second order, a metric $g$ has to be at least $C^2$ to be a classical solution, or $H^1$ to be a weak solution, suggesting that the $C^2$ and $H^1$ formulations of the Strong Cosmic Censorship conjecture are the most physically relevant formulations. It has been established that generic maximal developments of spherically symmetric Einsteinl--scalar field asymptotically flat initial data satisfying the constraint equations are unique as classical (at least $C^2$) solutions of the Einsteinl--scalar field system (see Christodolou, \cite{Chr10}). Similar results have been obtained for small scalar field perturbations of Reissner-Nordstr\"{o}m initial data under the Einstein--Maxwell--scalar field (EMSF) system (see Dafermos \cite{Dafermos2012} for 2-ended asymptotically flat Cauchy data, Luk-Oh \cite{Luk_Oh} for 1-ended developments of characteristic initial data). The abovementioned developments from initial data under the EMSF system are shown to possess a \textit{non-smooth Cauchy horizon} which is a \textit{weak null singularity} (WNS). This means the metric is extendible in a continuous manner beyond the Cauchy horizon, but curvature invariants blow up in any neighbourhood of any point of the singular boundary. Although there is no widely accepted definition of a WNS independent of particular examples, an alternative definition is given by Sbierski in \cite{Sbierski2024}, where a WNS would be a null hypersurface across which the spacetime metric admits a $C^0$ extension but no $C^{0,1}_{loc}$ extension (note that for an extension to be in $C^{0,1}_{loc}$, it has to be in  $H^1$). However, equivalence between curvature blow-up and $C^{0,1}_{loc}$-inextendibility remains an open question in mathematical general relativity, and works like \cite{Sbierski2021} and \cite{Sbierski2024} achieve significant results in proving one direction of the implication, namely that curvature blow-up implies $C^{0,1}_{loc}$ inextendibility. In any case, WNS is a terminal spacetime singularity because there cannot be a solution, even in the \textit{weak sense}, of the Einstein Equations, as noted in \cite{Luk_WNS}. In  \cite{Luk_Oh} and \cite{Dafermos2012} it is shown in particular that the Christoffel symbols blow up in a double null coordinate chart which extends continuously to the Cauchy horizon even though the metric admits a (non-unique) continuous extension past the Cauchy horizon. However, the metric admits no $C^2$ extension beyond the Cauchy horizon. This result is later improved by Sbierski in \cite{Sbierski2021}, showing that the spacetime metric has no $C^{0,1}_{loc}$-extension past the weak null singularity. \\[5pt]
An outstanding problem in general relativity and fluids is to investigate the effect of a weak null singularity on matter. This is also a gateway to the study of WNS spacetimes formed by collapsing matter. A good start for this investigation is the behaviour of perfect fluids in fixed WNS spacetimes. A subsequent stage would involve the coupling of matter to gravity in a WNS spacetime (see e.g. \cite{Song}). Certain issues need to be ruled out: for example, even for the simplest perfect fluid model - spherically symmetric dust, the spacetime trajectories of particles can cross even on a fixed, smooth background, leading to a blow-up of the fluid energy density - see for example \cite{Joshi_Saraykar}, \cite{Szekeres_Lun} and \cite{Fr_Klein}.
\subsection{Rough outline of results}
Heuristic versions of the main results of this paper are given below. The results are illustrated in Figure \ref{Rough_Illu}. 
\begin{thm}\label{master_theorem_1_rough}{\textbf{(Behaviour of spherically symmetric dust near the weak null singularity, rough version.)}} 
Let $\lambda$ be a radial smooth spacelike curve in the interior of a (fixed background\footnote{Meaning the backreaction effect of the fluid is not considered.}) spherically symmetric black hole spacetime with a weak null singularity\footnote{Precise assumptions on the spacetime are provided in Section \ref{preliminaries}, but the motivating example is the black hole of spherically symmetric scalar field perturbations of subextremal Reissner-Nordstr\"{o}m, described in  \cite{Luk_Oh} and \cite{Dafermos2012}.} $(\mathcal{R},g)$.  We assume $\lambda$ satisfies certain admissibility conditions\footnote{These are given in detail in Definition \ref{define_admissible_curve}. Roughly speaking, we require that $\lambda$ is compactly contained in a region near the Cauchy horizon, it is bounded away from null, and its curvature is bounded. There is also a condition on the position of  $\lambda$ for two reasons: firstly to ensure that every inextendible radial future-directed timelike geodesic normal to $\lambda$ terminates on the Cauchy horizon bounding the black hole interior to the future; secondly - to ensure that the variational field along the geodesics of $\Gamma$ stays bounded and bounded away from zero.}. Let $\Gamma$ be a variation of inextendible future-directed radial normalized timelike geodesics, all of which start on the admissible curve $\lambda$, and are initially normal to $\lambda$.\\[5pt]
Then there is no shell-crossing between the geodesics of $\Gamma$. 
Furthermore, a perfect pressureless fluid whose fluid particles follow the trajectories of the geodesics of $\Gamma$ has uniformly bounded energy density everywhere in the support of the fluid and the fluid velocity remains timelike and bounded away from null. In particular, this energy density does not blow up at the Cauchy horizon. 
\end{thm}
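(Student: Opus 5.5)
The plan is to work in the double null chart $(u,v)$ in which the metric takes the form $g=-\Omega^2\,du\,dv+r^2\gamma_{S^2}$. Here $\Omega$ and $r$ extend continuously to the Cauchy horizon $\{v=v_\infty\}$, while $\partial_v\log\Omega$ and $\partial_v r$ may be only integrable rather than bounded. A radial timelike geodesic with unit tangent $\dot\gamma=\dot u\,\partial_u+\dot v\,\partial_v$ satisfies $\Omega^2\dot u\dot v=1$. The geodesic equations reduce to
\[
\frac{d}{ds}\big(\Omega^2\dot v\big)=(\partial_v\Omega^2)\,\dot u\dot v ,
\qquad
\frac{d}{ds}\big(\Omega^2\dot u\big)=(\partial_u\Omega^2)\,\dot u\dot v .
\]
The first step is to reparametrize each geodesic by $v$. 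This gives $\frac{d}{dv}\log(\Omega^2\dot u)=\partial_u\log\Omega^2$, and $\partial_u\log\Omega^2$ is bounded in the relevant region under the assumptions of Section \ref{preliminaries}. The WNS roughness lives only in the $v$-derivatives, so this quantity stays bounded along each geodesic up to $v=v_\infty$. Since $\Omega$ is bounded above and below, both $\dot u$ and $\Omega^2\dot v=1/\dot u$ stay bounded and bounded away from zero. This yields termination on the Cauchy horizon (the position condition in Definition \ref{define_admissible_curve} keeps $u$ in the good range) and shows the velocity $\dot\gamma$ stays uniformly timelike in the sense that its null components are comparable.

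The second step is to parametrize $\Gamma$ by $\sigma\in\lambda$ and study the variation field $J=\partial_\sigma\Gamma$, which is a radial Jacobi field orthogonal to $\dot\gamma$. In $1+1$ dimensions, $J=f\,n$ with $n$ the unit normal, and $f$ solves $\ddot f=K f$, where $K$ is the Gauss curvature of the quotient metric $-\Omega^2dudv$. The curvature $K\sim\Omega^{-2}\partial_u\partial_v\log\Omega$ is not bounded, so I would not use this ODE directly. Instead I would write $J$ in components, $J=J^u\partial_u+J^v\partial_v$, with orthogonality $\dot u J^v+\dot v J^u=0$, and differentiate the conserved-type quantity $\Omega^2\dot u$ in $\sigma$. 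Then $\partial_\sigma(\Omega^2\dot u)$ satisfies a linear ODE in $v$ whose coefficients involve only $\partial_u\log\Omega$ and $\partial_u^2\log\Omega$, together with $\partial_u\partial_v\log\Omega$ integrated against $dv$. That last term should be controlled by the integrable $v$-bounds assumed on the background, presumably from the Raychaudhuri and wave equations. Initial data for $J$ comes from $\lambda$: the normality condition and the bounded curvature of $\lambda$ give $J$ bounded and bounded away from zero at $s=0$.

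A Grönwall argument in $v$ then gives $c\le|J^u|\le C$ uniformly, and hence $|f|$ bounded away from $0$ and $\infty$. This is the \textbf{main obstacle}: it requires every $v$-derivative appearing in the evolution to enter only integrably, and may require choosing the right renormalized variable, for example $J^u$ or $\Omega^2 J^u$, rather than $f$. The lower bound $f\neq0$ is exactly the absence of shell-crossing, since it makes $(s,\sigma)\mapsto\Gamma$ a diffeomorphism onto its image up to and including $v=v_\infty$.

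Finally, for the dust, conservation $\nabla_\mu(\rho\dot\gamma^\mu)=0$ in spherical symmetry gives $\rho\, r^2 f=\rho_0 r_0^2 f_0$ along each geodesic. Since $r$ is bounded below near the Cauchy horizon of a subextremal perturbation and $f$ is bounded below by the previous step, $\rho$ is uniformly bounded on the support and does not blow up at the Cauchy horizon. The velocity is $\dot\gamma$ itself, which is timelike and bounded away from null by the first step.
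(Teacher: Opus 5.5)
Your Steps 1 and 4 are essentially sound, with one slip noted below. The genuine gap is in Steps 2--3, at the point where the theorem has its real content: the \emph{lower} bound $|f|\ge c>0$ (equivalently $|J^u|\ge c$).

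A Gr\"onwall argument cannot give that lower bound. Applied to $\ddot f=Kf$, or to any linearised first-order system in $v$ such as the one you get from $\partial_\sigma(\Omega^2\dot u)$, it bounds $|f|+|\dot f|$ from above and shows that the pair $(f,\dot f)$ never vanishes. It cannot stop $f$ itself from passing through zero. Bounded curvature of $\lambda$ allows converging initial data $\dot f(0)/f(0)\approx -C_K$, and then focusing really occurs. Already in flat $1+1$ Minkowski space, the future normals to a piece of hyperbola at timelike distance $a$ to the past of a point $p$ all meet at $p$ after proper time $a=1/C_K$. If $v(p)<0$, this is shell-crossing before $\mathcal{CH}^+$, however tame the background is. So bounded curvature of $\lambda$ plus integrability of the background cannot give $c\le|J^u|$. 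You need a quantitative smallness condition tying the distance from $\lambda$ to $\mathcal{CH}^+$ to $C_K,n_0,N_0,C_g$. That is admissibility condition 6 in Definition \ref{define_admissible_curve}. You only use the position condition to keep $u$ in range (condition 5).

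The paper's mechanism (Theorem \ref{bounds_for_jacobi_field_thm}) has two steps. First, Gr\"onwall on $(J^1)^2+(K^1)^2$ for $\ddot J^1=2\omega_{,uv}e^{-2\omega}J^1$ gives an upper bound. This needs only the $L^1$ bound on $\omega_{,uv}$ along the geodesic, so the unbounded Gauss curvature is harmless and your detour is unnecessary. Second, $|J^1(\tau)|\ge|J^1(0)|-\int_0^\tau|K^1|\,ds$, combined with the proper-time bound $T\lesssim|v_{\min}|$ of Lemma \ref{bounding_T_lemma}. This yields $|J^1|\ge \frac{n_0}{\sqrt{C_g}}\bigl(1-A|v_{\min}|e^{B|v_{\min}|}\bigr)$, which is positive precisely because of condition 6.

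Some smaller points follow.
\begin{itemize}
\item In Step 1 your Euler--Lagrange equations are swapped. From $-\Omega^2\dot u\dot v$ one gets $\frac{d}{ds}(\Omega^2\dot u)=(\partial_v\Omega^2)\dot u\dot v$, hence $\frac{d}{dv}\log(\Omega^2\dot u)=\partial_v\log\Omega^2$, which is the rough derivative. The conclusion survives if you use $\frac{d}{du}\log\dot u=-\partial_u\log\Omega^2$ and then $\dot v=1/(\Omega^2\dot u)$. Note, though, that the paper's assumptions give only $L^1$ control of $\omega_{,u}$, not a pointwise bound.
\item $f\neq0$ only makes $\Gamma$ a local diffeomorphism. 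Global injectivity, and injectivity of the endpoint map onto $\mathcal{CH}^+$, need a separate argument. The paper obtains them from monotonicity and continuity of $T(\alpha)$, the resulting convexity of $D^0$, and the fixed signs of $J^u,J^v$.
\item Your Step 4 is a real simplification. Along each flow line $\nabla\cdot U=\frac{d}{ds}\log(r^2|f|)$, so $\rho\,r^2|f|$ is conserved. The density bound then follows immediately from $r\ge C_g^{-1}$ and $|f|\ge n_\perp$, replacing the paper's term-by-term $L^1$ estimates of $\nabla\cdot U$.
\end{itemize}
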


\begin{thm}\label{master_theorem_2_rough}\textbf{(Behaviour of spherically symmetric stiff fluid near the weak null singularity, rough version.)}
    Consider the characteristic initial value problem for a spherically symmetric stiff fluid in a (fixed background) weak null singularity spacetime $(\mathcal{R}, g)$, satisfying the Assumptions\footnote{The Assumptions are stated in Section \ref{preliminaries}.}. Assume the initial data for the energy density and fluid velocity is supplied on a bifurcate null hypersurface in $(\mathcal{R}, g)$ such that the closure of the outgoing null hypersurface intersects the weak null singularity at a 2-sphere. Then the fluid energy density and the ingoing null component of the fluid velocity go to infinity at the singularity while the outgoing null component goes to zero. 
\end{thm}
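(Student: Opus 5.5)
We would reduce the stiff fluid problem to a linear wave equation, using the standard correspondence for irrotational stiff fluids. In spherical symmetry the fluid velocity is automatically irrotational, so we write
\[
\rho\, u_\mu u_\nu + p(g_{\mu\nu}+u_\mu u_\nu) \quad (p=\rho)
\]
as the stress-energy tensor of a scalar field $\phi$. Concretely, we set $u_\mu = -\partial_\mu\phi/\sqrt{-g^{\alpha\beta}\partial_\alpha\phi\partial_\beta\phi}$ and $\rho = -\tfrac12 g^{\alpha\beta}\partial_\alpha\phi\partial_\beta\phi$. Then $\nabla^\mu T_{\mu\nu}=0$ becomes $\Box_g\phi=0$, provided $\nabla\phi$ is timelike and $\rho>0$. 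In double null coordinates $(u,v)$ with $g=-\Omega^2\,du\,dv+r^2\gamma$, we get $\rho = 2\Omega^{-2}\partial_u\phi\,\partial_v\phi$. The velocity components are
\[
u^u \propto \Omega^{-1}\sqrt{\partial_v\phi/\partial_u\phi}, \qquad u^v \propto \Omega^{-1}\sqrt{\partial_u\phi/\partial_v\phi}.
\]
So the fluid characteristic data on the bifurcate null hypersurface translate into characteristic data for $\phi$, with $\partial_u\phi,\partial_v\phi$ of a fixed sign (both negative, say, for future-directed $u$). The wave equation $\partial_u\partial_v(r\phi) = \tfrac{\partial_u\partial_v r}{r}\, r\phi$, or in the form $\partial_u\partial_v\phi = -r^{-1}(\partial_u r\,\partial_v\phi+\partial_v r\,\partial_u\phi)$, is then solved in the characteristic rectangle up to the Cauchy horizon $\{v=v_{CH}\}$ (or $v=\infty$, depending on gauge).

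The key steps are as follows.

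\textbf{Step 1: control of $\phi$ and $\partial_u\phi$.} Using the Assumptions of Section \ref{preliminaries}, we show that $\phi$ and $\partial_u\phi$ remain bounded up to the singularity. Those Assumptions include $r$ bounded away from zero, $\partial_u r$ bounded, and $\int|\partial_v r|\,dv<\infty$ in the gauge where $v$ is finite at the horizon. We also show that $\partial_u\phi$ keeps its sign and stays bounded away from zero, after shrinking the $u$-range if necessary. The tool is a Grönwall argument along the characteristics: integrate the equation for $\partial_v(\partial_u\phi)$ in $v$ and the equation for $\partial_u(\partial_v\phi)$ in $u$.

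\textbf{Step 2: behaviour of $\partial_v\phi$.} Here $\partial_v\phi$ stays bounded with a fixed sign. Integrating $\partial_u\partial_v\phi$ in $u$ shows $|\partial_v\phi|$ is comparable to its initial value, which is bounded. With Step 1, this gives $\partial_u\phi\,\partial_v\phi \sim$ const $>0$.

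\textbf{Step 3: use of the singular conformal factor.} In the coordinates where the Cauchy horizon is at finite $v$ and the metric extends continuously, the blow-up of the WNS is encoded in $\Omega^2\to 0$ in the regular $v$ coordinate at the horizon. Equivalently, $\partial_v\log\Omega^2$ is non-integrable, as in Luk--Oh, where $\Omega^2$ in the Eddington-type coordinate decays like the surface-gravity exponential. We must phrase this in the coordinates of the Assumptions. The cleanest formulation is that the fluid quantities are gauge-covariant, so we pick the gauge in which $\phi$'s derivatives are bounded and the conformal factor degenerates. Then $\rho = 2\Omega^{-2}\partial_u\phi\,\partial_v\phi\to\infty$. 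For the velocity, $g(u,\partial_v)=-\tfrac12\Omega^2u^u$. The ratio $\partial_v\phi/\partial_u\phi$ stays bounded above and below, so both null components in the singular gauge scale like $\Omega^{-1}$. This is the step where I am unsure: the statement's ``ingoing blows up, outgoing goes to zero'' must refer to components in a frame regular at the horizon (e.g. normalized against the continuous extension's null vectors, which rescale $\partial_v$ by a factor $\sim\Omega^2$). I would verify that the rescaling gives one component $\sim\Omega^{-1}\to\infty$ and the other $\sim\Omega\to0$. The result is that the velocity tends to the ingoing null generator tangent to $\{v=v_{CH}\}$.

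\textbf{Main obstacle.} The hardest step is Step 1/2: showing $\partial_v\phi$ (in the appropriate gauge) neither vanishes nor changes sign up to the singularity. This is needed so that $\nabla\phi$ stays timelike and the fluid correspondence is valid all the way. It requires the precise integrability of $\partial_v r/r$ and of the Christoffel-type terms from the Assumptions. I would first try a bootstrap on the sign and size of $\partial_u\phi,\partial_v\phi$ in a thin strip near the outgoing initial hypersurface, using smallness of $u$-length to close Grönwall.
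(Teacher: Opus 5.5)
\textbf{There is a genuine gap.} Your Steps 2--3 rest on the wrong blow-up mechanism, and as written they contradict the conclusion rather than prove it.

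\textbf{The conformal factor does not degenerate.} In the chart of the Assumptions, $\omega$ extends continuously to $\{v=0\}$, with $\norm{e^{\pm 2\omega}}_{C(\mathcal{R})}\le C_g$ and $\omega_{,v}\in L^1$. So your premise that $\Omega^2\to 0$, or that $\partial_v\log\Omega^2$ is non-integrable, is ruled out. The decay of the conformal factor in an Eddington--Finkelstein-type coordinate $\tilde v$, with $v=-\frac{1}{2\kappa_-}e^{-2\kappa_-\tilde v}$, is pure gauge. It is already present in exact Reissner--Nordstr\"om, whose Cauchy horizon is smooth. It is also exactly cancelled by $\partial_{\tilde v}\psi=e^{-2\kappa_-\tilde v}\partial_v\psi$, so your claim that in that gauge the derivatives of $\psi$ stay bounded is false.

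\textbf{Steps 1--2 are incompatible with the statement.} Since $\rho=4e^{-2\omega}\psi_{,u}\psi_{,v}$ is a scalar, Steps 1--2 (both null derivatives bounded and bounded away from zero in the regular chart) would force $\rho$ to be bounded in every gauge. They would also keep $U^u=e^{-\omega}\sqrt{\psi_{,v}/\psi_{,u}}$ and $U^v=e^{-\omega}\sqrt{\psi_{,u}/\psi_{,v}}$ bounded above and below; these are the components the statement refers to. Rescaling by factors comparable to $1$ changes nothing.

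\textbf{The missing ingredients.} You need the Instability Assumption $r_{,v}\to-\infty$ together with the trapping $r_{,u},r_{,v}<0$ on the rectangle where the data are posed (Assumptions \ref{instability_assumptions}); your list of assumptions omits both. Step 2 fails precisely because $r_{,v}$ is only integrable in $v$, while $\int_{u_0}^u(-r_{,v})(u',v)\,du'\to\infty$ as $v\to 0$ by Fatou. Integrating $\partial_u(r\psi_{,v})=-r_{,v}\psi_{,u}$ in $u$ therefore produces a divergent term. Note also that $\partial_v\mathring{\psi}$ on $\mathcal{C}_+$ is only assumed to be in $L^1$, not bounded. The paper uses
\[
\partial_u\psi=\frac{1}{r}\Big[r(u,v_0)\,\partial_u\mathring{\psi}(u,v_0)-\int_{v_0}^{v}r_{,u}\,\psi_{,v}\,dv'\Big],
\]
\[
\partial_v\psi=\frac{1}{r}\Big[r(u_0,v)\,\partial_v\mathring{\psi}(u_0,v)-\int_{u_0}^{u}r_{,v}\,\psi_{,u}\,du'\Big].
\]
Trapping gives both integral terms a favourable sign, so a bootstrap propagates $\psi_{,u},\psi_{,v}\le -c<0$ with no smallness needed. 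A Gr\"onwall argument on a thin $u$-strip cannot control the sign of $\psi_{,v}$, because the forcing term above is unbounded as $v\to 0$. The second identity then gives $|\psi_{,v}(u,v)|\gtrsim\int_{u_0}^u(-r_{,v})(u',v)\,du'\to\infty$, and hence $\rho\to\infty$.

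\textbf{The velocity.} What is needed is $\psi_{,u}/\psi_{,v}\to 0$, not a bounded ratio. The paper first shows that $\psi$ itself stays bounded, by induction over $u$-intervals of length $\sim r_1/\norm{r_{,u}}_{C(\mathcal{R})}$ using the propagated monotonicity. Then $\int_{v_0}^v|r_{,u}\psi_{,v}|\,dv'\le\norm{r_{,u}}_{C(\mathcal{R})}\big(\psi(u,v_0)-\psi(u,v)\big)$ stays bounded. So $\psi_{,u}$ stays bounded while $\psi_{,v}$ blows up, which gives $U^u\to\infty$ and $U^v\to 0$.
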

 
\begin{figure}[h!]
\centering
\begin{subfigure}{0.45\textwidth}
    \includegraphics[width=\textwidth]{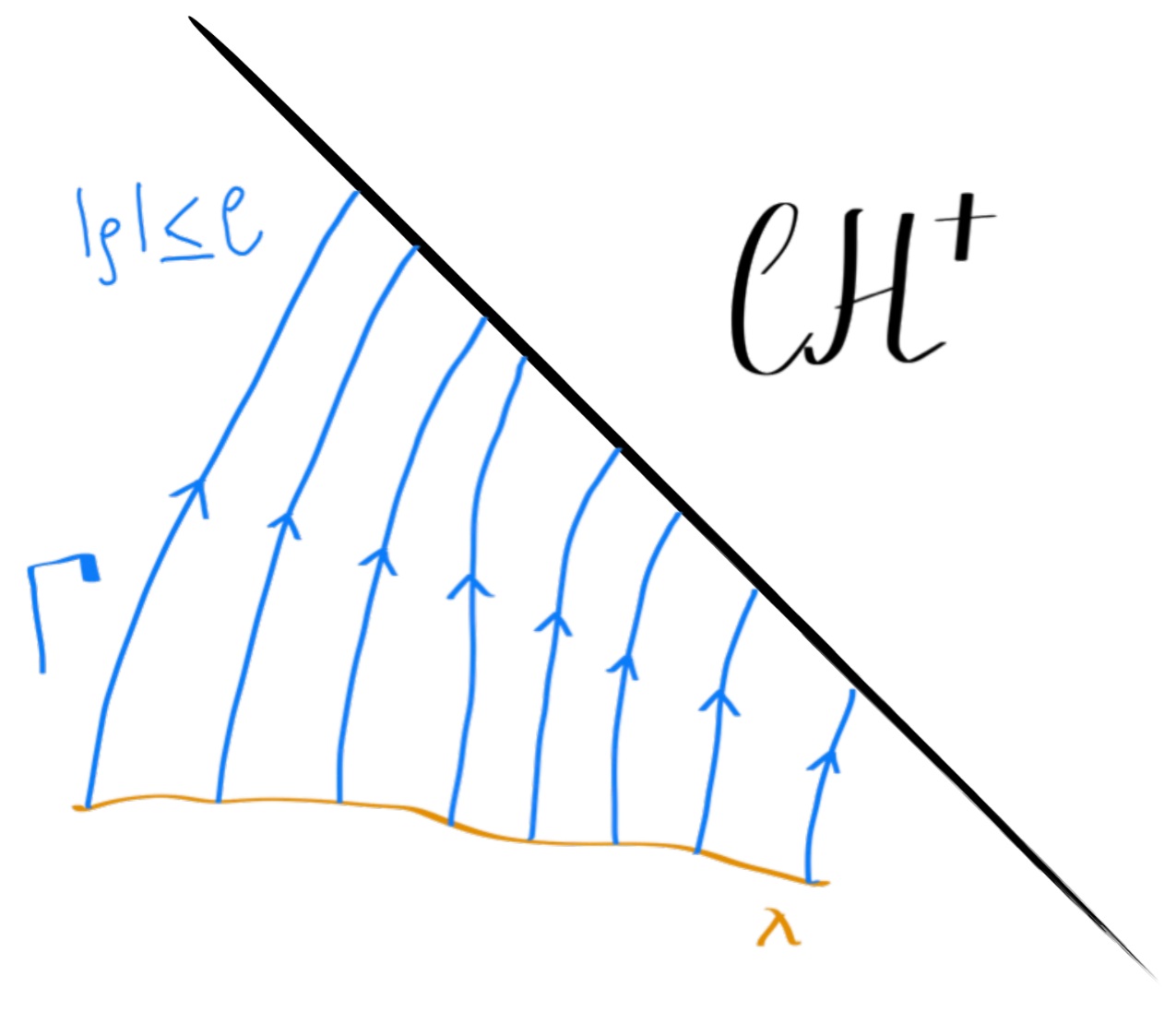} 
    \label{fig:a}
\end{subfigure}
\hfill 
\begin{subfigure}{0.45\textwidth}
    \includegraphics[width=\textwidth]{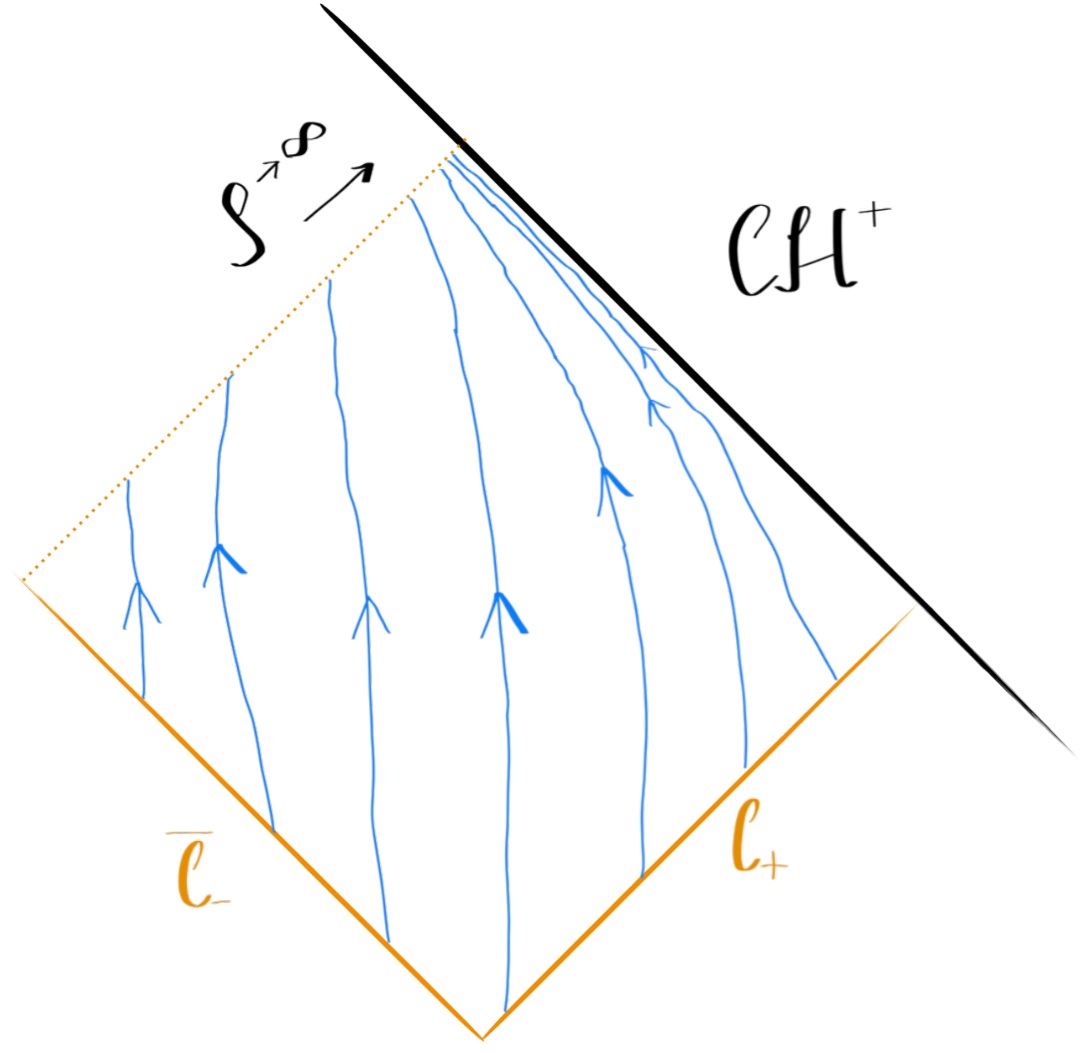} 
    \label{fig:b}
\end{subfigure}
\caption{Left: Illustration of the statement Theorem 1. The flow lines of dust, shown in blue, are initially normal to the spacelike curve $\lambda$ and extend to the singularity as timelike curves. The energy density of dust remains bounded. Right: Illustration of the statement of Theorem 2. The flow lines of the stiff fluid, in blue, emerge as timelike curves from the bifurcate null hypersurface. However, they approach ingoing null curves in the vicinity of the singularity. The energy density of the stiff fluid blows up.} 
\label{Rough_Illu}
\end{figure}

\subsection{Structure of the paper}
The structure of the paper is as follows: In Section \ref{preliminaries} we introduce the class of spacetimes considered in this paper. That is, we describe the key properties of a spherically symmetric black hole interior spacetime possessing a weak null singularity. These are stated in a double null coordinate chart regular in the vicinity of the singularity\footnote{In the sense that the coordinates extend to the WNS.} and we use this chart throughout the paper. The assumptions are roughly divided into two types: 'Stability Assumptions' and 'Instability Assumptions'. The latter ensure blow-up of the Christoffel symbols at a future null boundary of the spacetime, implying that the boundary is singular. The former constitute bounds in $L^1$ of certain geometric quantities, ensuring that the singularity is 'weak'. 
Next, we recall standard results on irrotational perfect fluids. In particular, we derive the energy-momentum conservation equations for dust and stiff fluid in a general spacetime. \\[5pt]  
In Section \ref{master_theorem_section} we state our main results in a precise way after introducing some preliminary definitions for the initial value problems considered in this paper. \\[5pt]
 Section \ref{dust_section} is dedicated to proving Main Theorem \ref{master_theorem_1_rough} (behaviour of dust near the WNS). This boils down to the following set of main steps, which rely heavily on the assumptions of subsection \ref{all_assumptions}. First we prove that future-directed timelike geodesics approaching the weak null singularity have bounded and bounded away from zero velocity components at the Cauchy horizon in double null coordinates, implying that radial timelike geodesics are extendible to the Cauchy horizon. Next we apply the stability assumptions on the spacetime to show that the Jacobi field\footnote{More precisely, the component of the Jacobi field which is transversal to the geodesics. This is the only physically significant component.} along the geodesics in the variation $\Gamma$ remains bounded and bounded away from zero for every value of the affine parameter. This allows us to prove certain expected, but non-obvious results about the total spacetime length (proper time) of the geodesics of the variation - namely that it varies monotonically and continuously with the parameter transverse to the geodesics. This is a subtle but important result which we use, together with the estimates of the variational field, to establish that no shell-crossing occurs in the variation of future-directed radial inextendible timelike geodesics. A variation of geodesics of this sort follows the trajectories of fluid particles in perfect pressureless fluid (dust). As a consequence of the energy-momentum conservation equations, the trajectories of dust are timelike geodesics and the transport equation is satisfied by the energy density and fluid velocity field for this matter model. The details of this derivation are in Section \ref{preliminaries}. We finalise Section \ref{dust_section} by proving, through analysis of the transport equation, that the energy density of the radial dust in this initial value problem stays bounded at the singularity.\\[5pt]
Section \ref{wave_section} is dedicated to proving Main Theorem \ref{master_theorem_2_rough} (behaviour of stiff fluid near the WNS). The method of proof begins by relating the stiff fluid four-velocity and energy density to a scalar field $\psi$ satisfying the linear homogeneous wave equation. This is a consequence of energy-momentum conservation for the stiff fluid and the process of obtaining it is outlined in detail in Section \ref{preliminaries}. Translating the initial data for the stiff fluid IVP into characteristic initial data for the wave equation, we find that the future-directedness of the fluid velocity is equivalent to both double null derivatives of $\psi$ being negative on the initial bifurcate null hypersurface. Through a bootstrap argument, we prove that this monotonicity is propagated throughout the domain of dependence of the characteristic data. Similar techniques are used in Dafermos \cite{Dafermos03} and Fournodavlos-Sbierski \cite{Sbierski_Fournodavlos}. The monotonicity result is in line with the physical expectation that the fluid velocity stays timelike and future-directed. It is important to note here that this monotonicity, as well as the duality between the stiff fluid and the wave equation, fails to hold in a generic background. An example for a breakdown of the duality between the stiff fluid and wave equation in finite time is given in \cite{Gavassino} for a Cauchy problem on Minkowski background. Using the assumptions listed in subsection \ref{all_assumptions}. on the spacetime, we then prove that the outgoing derivative of the scalar field blows up at the singularity but $\psi$ itself stays bounded in the domain where it is defined. When we translate these results back into the stiff fluid language, they imply that the energy density and the ingoing velocity component of the stiff fluid go to infinity at the singularity while the outgoing velocity component goes to zero. 
\\[5pt]
{In Section \ref{EMSf_chapter}, we show that a codimension-0-submanifold of the black hole interior of a spherically symmetric black hole spacetime arising from the EMSF system analysed in \cite{Luk_Oh} satisfies the geometric assumptions listed in Section \ref{preliminaries} to be an eligible spacetime where the results of this paper hold. The Stability Assumptions \ref{assumptions_on_L1_norms} on the spacetime follow from the Theorem on Stability of the Cauchy Horizon - Theorem 5.1 in \cite{Luk_Oh}. This result ensures that if the characteristic initial data for the Einstein--Maxwell--scalar field system in spherical symmetry is close to perfect Reissner-Nordstr\"{o}m data, then the solution remains close to perfect Reissner-Nordstr\"{o}m. On the other hand, the Instability Assumptions \ref{instability_assumptions} follow from the Theorem on Instability of the Cauchy Horizon - Theorem 5.5 in \cite{Luk_Oh}. With the completion of Section \ref{EMSf_chapter}, we demonstrate that there is a wide class of spacetimes on which the results of this paper apply.}

\section{Preliminaries} \label{preliminaries}

\subsection{Assumptions on the background spacetime}\label{all_assumptions}
In this subsection we describe the topological and geometrical properties of the class of Lorentzian manifolds on which the results in the rest of this paper hold.
\subsubsection{General assumptions}
 We begin by introducing the manifold  $\mathcal{R}$ with topology
\begin{align}
    \mathcal{R} = \mathcal{Q_R}\times S^2; \ \ \overline{\mathcal{R
}} = \overline{\mathcal{Q_R}}\times S^2
\end{align}
Here $\overline{\mathcal{R}}$ is to be thought of as a 4-manifold-with-boundary such that the interior of this manifold is $\mathcal{R}$. The manifold is equipped with the spherically symmetric Lorentzian metric
\begin{align}
    g = g_{\mathcal{Q_R}} + r^2 d\Omega_2^2  = -e^{2\omega} dudv + r^2(u,v) d\Omega_2^2 \label{ssmetric_general}
\end{align}
in double null coordinates $(u,v, \theta,\varphi)$. The first term in (\ref{ssmetric_general}) is the metric on the quotient $1+1$-dimensional Lorentzian manifold $(\mathcal{Q_R}, g_{\mathcal{Q_R}})$ and $d\Omega_2^2$ is the unit metric on $S^2$. The Levi-Civita connection of $g$ will be denoted by $\nabla$ and the Levi-Civita connection of $g_{\mathcal{Q_R}}$ will be denoted by $\mathring{\nabla}$. The quotient manifold is parameterised by $(u,v)$, while $(\theta, \varphi)$ are standard spherical coordinates parameterising the unit 2-spheres. In this chart, $\mathcal{Q_R}$ is given by 
\begin{align}
    \mathcal{Q_R} = \{(u,v): -\infty<u<u_s,\  v_\kappa<v<0,\  uv<u_s v_\kappa\}
\end{align}
And its closure $\overline{\mathcal{Q_R}}$, viewed as a 2-manifold-with-boundary, is given by
\begin{align}
     \overline{\mathcal{Q_R}} = \{(u,v): -\infty<u\leq u_s,\  v_\kappa\leq v\leq 0, \ uv\leq u_s v_\kappa\}
\end{align}
Where $u_s<0$ and $v_\kappa<0$ are constants\footnote{These seem like strange names for the coordinate values parameterising the boundary of the spacetime. The motivation behind this choice of notation is that there is an open subset of the spherically symmetric spacetime considered in \cite{Luk_Oh}, defined in double null coordinates by the condition $uv<u_s v_\kappa$, such that the Assumptions are satisfied in this submanifold. This will be explained in detail in {Section \ref{EMSf_chapter}}.}. We call $e^\omega:\mathcal{Q_R}\to \mathbb{R}^+$ the lapse function and 
 $r: \mathcal{Q_R} \to [0, \infty)$ is the area radius function on $\mathcal{R}$, given by $(u,v) \mapsto \sqrt{\text{Area}\big(S^2(u,v)\big)/(4\pi)}$ where $S^2_p$ is the 2-sphere in $\mathcal{R}$ attached to the point $(u,v)$ in $\mathcal{Q_R}$. The latter tells us the area of the 2-sphere attached to the point $(u,v)\in \mathcal{Q_R}$. We will denote by $\Sigma$ the past boundary of $\mathcal{R}$, the spacelike hypersurface defined by the condition $uv =u_s v_\kappa$. Since $D^+(\Sigma)/\Sigma = \mathcal{R}\cup\{u=u_s\}$, the singular boundary $\{v=0\}$ is a Cauchy horizon for $\mathcal{R}$ which we will denote by $\mathcal{CH}^+$\footnote{This notation is motivated by the nature of the Cauchy horizon as a weak null singularity in spacetimes arising from small spherically symmetric scalar field perturbations of subextremal Reissner-Nordstr\"{o}m, \cite{Luk_Oh} \cite{Dafermos2012}}.

 \begin{rem}
A straightforward calculation using the Christoffel symbols listed in Appendix \ref{Christoffel_Symbols} shows that in double null coordinates $(\mathring{\nabla}_{\mathring{X}}\mathring{Y})^\mu=(\nabla_{X}Y)^\mu$ holds for any pair of smooth vector fields $X,Y$ on $\mathcal{R}$ such that $X^\theta = X^\varphi =Y^\theta= Y^\varphi =0$, where $\mathring{X}, \mathring{Y}$ denote the pushforwards of $X, Y$ onto $\mathcal{Q_R}$ via the projection operator $\mathcal{R}\to \mathcal{Q_R},p\mapsto (u(p),v(p))$ . In coordinates,  $X^u=\mathring{X}^u, Y^u=\mathring{Y}^u$ and $X^v=\mathring{X}^v, Y^v=\mathring{Y}^v$ at every point in $\mathcal{Q_R}$. In particular, this implies that for radial geodesics $\gamma$ in the $3+1$-dimensional $\mathcal{R}$, i.e. those with $\gamma^\theta, \gamma^\varphi=$const, the geodesic equation is simply the one for the projected geodesic $\mathring{\gamma}$ in the $1+1$-dimensional $\mathcal{Q_R}$.
 \end{rem}
 \begin{asmpt} \label{general_assumptions}
The metric functions for this spacetime in these coordinates are assumed to satisfy the following conditions, which are also the ones used in the setting of \cite{Sbierski2021}, Section 4:
   \begin{itemize}
           \item $\omega\in C^\infty(\mathcal{Q_R}\to \mathbb{R})$ extends continuously to $\overline{\mathcal{Q_R}}$.
           \item $r \in C^{\infty}(\mathcal{Q_R} \to (0, \infty))$ extends continuously as a positive function to $\overline{\mathcal{Q_R}}$. 
           \item $(\mathcal{R},g)$ is time-oriented, with time orientation fixed by stipulating that $\partial_u + \partial_v$ is future-directed. 
   \end{itemize}
 \end{asmpt}

\begin{figure}[H]
{\begin{center}
    \includegraphics[width=0.45\textwidth]{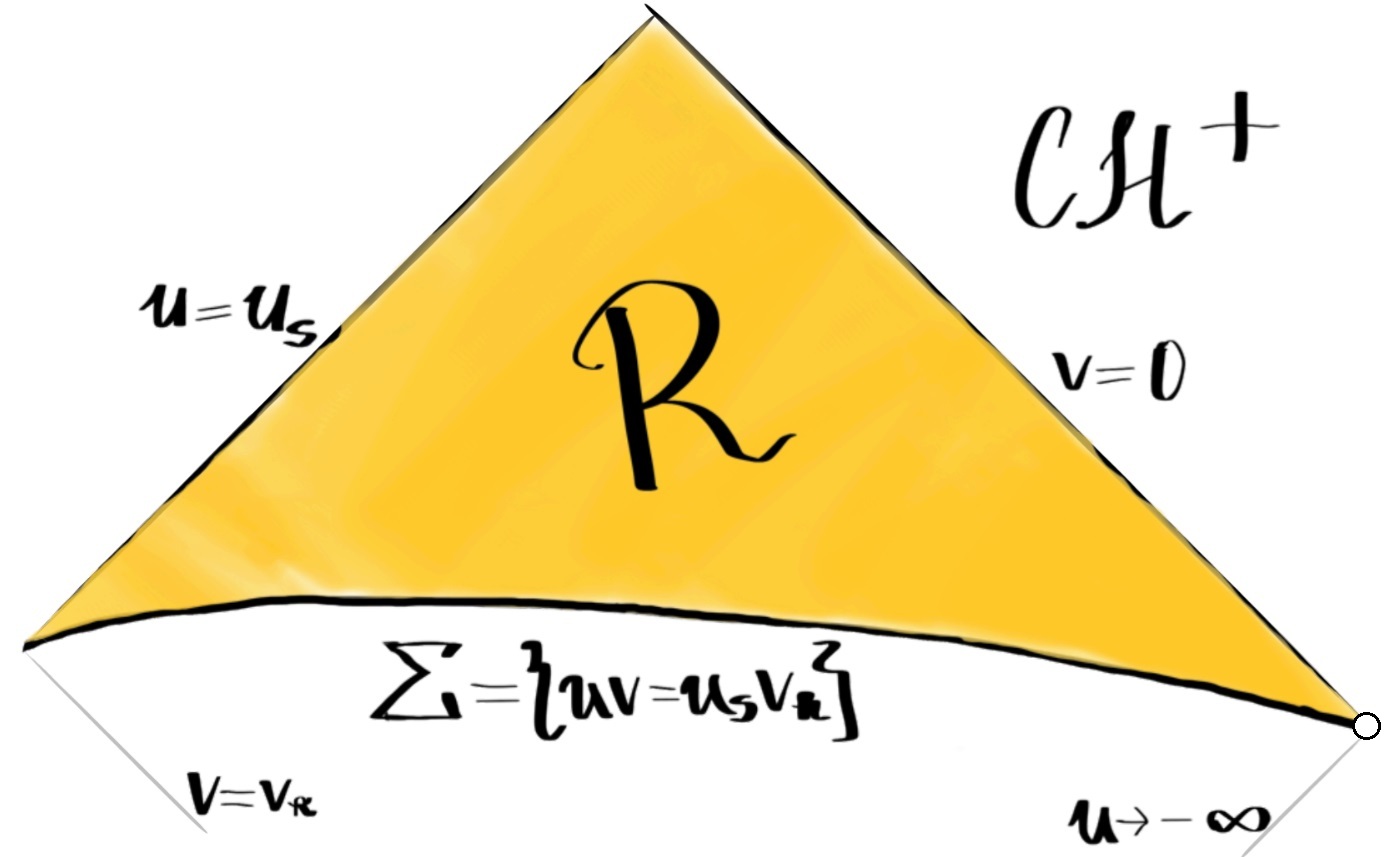}
    \caption{Penrose diagram illustrating the spacetime $\mathcal{R}$, shown in yellow. Note that $\mathcal{R}$ coincides with the domain of dependence of the spacelike hypersurface $\Sigma =\{uv=u_{s}v_{\kappa}\}$.}
    \label{restriction_to_R}
\end{center}}
\end{figure}

\subsubsection{Assumptions on spacetime quantities needed for Theorem \ref{master_theorem_1_rough} (dust behaviour)}
 In this section we state the assumptions for the $L^1$ norms of geometric quantities which are crucial to establish the proof of the first main result of this paper, Theorem \ref{master_theorem}. 
 \begin{asmpt}\label{assumptions_on_L1_norms}
 There is a (big) constant $C_g>1$ which depends only on the global geometry of the Lorentzian manifold $(\mathcal{R}, g)$, such that 
 for every $u_1<u_2$ and $v_1<v_2$ with $(u_i, v_j)\in \mathcal{Q}_\mathcal{R}$, the following $L^1$ bounds in the $(u,v)$ double null coordinate system hold
    \begin{align}
        &\norm{e^{\pm 2\omega}}_{C(\mathcal{R})} + \norm{r^{-1}}_{C(\mathcal{R})}\leq C_g\label{geom_bound_as1}\\   
        &\norm{\omega_{,u}(\cdot, v_j)}_{L^1[u_1, u_2]} + \norm{r_{,u}(\cdot, v_j)}_{L^1[u_1,u_2]} \leq C_g \label{geom_bound_as2}\\
        &\norm{\omega_{,v}(u_i,\cdot)}_{L^1[v_1, v_2]} + \norm{r_{,v}(u_i,\cdot)}_{L^1[v_1, v_2]} \leq C_g \label{geom_bound_as3}\\
        &\norm{\omega_{,uv}(u_i,\cdot)}_{L^1[v_1,v_2]}\leq C_g (1+|v_2-v_1|) \label{geom_bound_as4}
    \end{align}
 \end{asmpt}
     \begin{rem}
        We can uniformly bound what is in the parentheses in (\ref{geom_bound_as4}) by $1+|v_\kappa|$, hence 
        \begin{align}
            \norm{\omega_{,uv}(u_i, \cdot)}_{L^1[v_1,v_2]}\leq C_g (1+|v_{\kappa}|)
        \end{align}
    \end{rem}
         
\subsubsection{Assumptions on the spacetime quantities needed for Theorem \ref{master_theorem_2_rough}, (stiff fluid behaviour)}
In this section we introduce the blow-up assumptions on the Cauchy horizon $\mathcal{CH}^+=\{v=0\}$ which are essential to proving blow-up for this matter model and thus establishing Theorem \ref{masterTheorem_2}. The blow-up assumptions are equivalent to those in the setting of Section 4 of \cite{Sbierski2021}.
\begin{asmpt}\label{instability_assumptions}
For the null derivatives of the area radius function we assume:
\begin{itemize}
  \item $r_{,v}(u,v) \to -\infty$ as $v\to 0^-$ for all $u\in(-\infty,u_s)$
\item For every $u_0\in(-\infty,u_s)$, there is a $v_0=v_0(u_0)<0$ and a $\varepsilon>0$ so that $r_{,u}<0$ and $r_{,v}<0$ in $[u_0-\varepsilon,u_0+\varepsilon]\times[v_0,0)$. 
\item $\norm{r_{,u}}_{C(\mathcal{R})}<\infty$
\end{itemize}
\end{asmpt}
\begin{rems}
     \begin{enumerate}
         \item The first condition above, also featured in Section 4 of \cite{Sbierski2021}, implies that the surface $\mathcal{CH}^+=\{v=0\}$ is singular. In view of the fact that the Christoffel symbols contain $\partial_v r$, they blow up as $v\to 0$, even though the metric functions extend continuously. In the work of Sbierski \cite{Sbierski2021} it is shown in this setting that, as a consequence of the Christoffel symbols blowing up as $v\to 0$, there exists no $C^{0,1}_{loc}$-extension of $(\mathcal{R},g)$ in any neighbourhood of the future null boundary hypersurface $\{v=0\}$. We shall call a future null boundary component of the spacetime to which the metric extends continuously but not in $C^2$, a \textbf{weak null singularity} (WNS) of the spacetime. Hereafter we will occasionally refer to $\mathcal{CH}^+=\{v=0\}$ as the WNS.
         \item Physically, the second condition says that there exists a trapped region in $\mathcal{R}$. The precise wording of the condition assumed in Section 4 of
         \cite{Sbierski2021} is that for each $u_0<u_s$ there exists a $v_0'=v_0'(u_0)$ such that $v>v_0' \implies r_{,u}(u_0,v)<0$. On the other hand the first condition of Assumptions \ref{instability_assumptions} - the blow-up of $\partial_v r$ - implies that for every $u_0<u_s$ there exists a $v_0''=v_0''(u_0)$ such that $v>v_0''\implies\partial_v r(u_0,v)<0$. These two conditions combined ensure the existence of a thin trapped strip around a given $u_0$.
         \item The uniform bound on the ingoing derivative $\partial_u r$, although not essential for proving energy density blow-up for the stiff fluid, is used to prove that one component of the velocity blows up and the other goes to zero. If that bound did not hold, it may well be possible that the double null components of the fluid velocity stay bounded.
     \end{enumerate} 
\end{rems}

\subsection{Irrotational barotropic perfect fluids}
In this section we recall textbook properties of the two models of irrotational perfect fluid we investigate in this paper. We begin with more general analysis of barotropic perfect fluids.\\[5pt]
Let $\rho\in C^\infty(\mathcal{R})$ and let $U^a$ be a normalised future-directed smooth vector field on $(\mathcal{R},g)$. The \textbf{perfect fluid energy momentum tensor} with energy density function $\rho$ and four-velocity $U^a$ in covariant form is given by 
\begin{align}
    T_{ab}^{[\text{pf}]} = (p+\rho) U_a U_b + p g_{ab} \label{generic_stressenergy_tensor}
\end{align}
Where the \textbf{fluid pressure} $p=p(\rho,n)$ is generally a function of the energy density $\rho$ and the \textbf{particle density} $n$ - a smooth scalar function on $\mathcal{R}$. The relation $p=p(\rho,n)$ is called the equation of state. In this paper we will consider two models of \textbf{barotropic} perfect fluid, meaning that the pressure is a function of the energy density only, $p=p(\rho)$.\\[5pt]
We require that the energy-momentum tensor is conserved:
\begin{align}
    \nabla^bT_{ab}^{[\text{pf}]} = 0 \iff U_a \nabla_U (p+\rho) + (p+\rho) \nabla_U U_a + (p+\rho) U_a \nabla\cdot U + \nabla_a p = 0 \label{generic_stressenergy_conservation}
\end{align}
The component of (\ref{generic_stressenergy_conservation}) along $U^a$ is 
\begin{align}
    U^a \nabla^b T_{ab}^{\text{[pf]}} = 0 \iff - \nabla_U (p+\rho) - (p+\rho) \nabla\cdot U + \nabla_U p = 0 \label{generic_stressenergy_cons_parallel_compt}
\end{align}
Let $\perp^{ab} = g^{ab} + U^a U^b$. The component of (\ref{generic_stressenergy_conservation}) normal to $U^a$ is 
\begin{align}
    \perp^{ab} \nabla^c T_{bc}^{[\text{pf}]} = 0 \iff (p+\rho) \nabla_U U^a + \nabla^a p + U^a \nabla_U p = 0 \label{generic_stressenergy_cons_normal_compts}
\end{align}
These are the relativistic Euler equations.
 In this paper we will consider \textbf{irrotational} fluids. In a relativistic context, setting $(dU)_{ab} = \nabla_{[a} U_{b]} = 0$ is a too strong condition, as outlined in \cite{acoustic_spacetimes}. The appropriate relativistic condition (see \cite{Gavassino}, \cite{acoustic_spacetimes} for details of the reasons) for irrotationality is 
\begin{align}
    (U\wedge dU)_{abc} = 0 \iff U_{[a} \nabla_b U_{c]} = 0
\end{align}
By setting the 3-form $(U\wedge dU)_{abc}$ to zero, in the local rest frame of the fluid where $U^0 = 1, U^i=0$ this means $\partial_{[i} U_{j]} = 0$, as outlined also in \cite{acoustic_spacetimes}. By the Frobenius theorem, \textit{locally} there exist functions $\psi,h$ such that $U_a = \frac{1}{h}\nabla_a \psi$. Combining with the condition that $U$ is normalised, 
\begin{align}
    U^a = \frac{\nabla^a \psi}{\sqrt{|\langle\nabla \psi, \nabla \psi\rangle}|} \implies h^2 = -g^{ab} \nabla_a \psi \nabla_b \psi \label{U_intermsof_psi}
\end{align}
This is the relativistic condition for irrotational flow. The relativistic norm of $\nabla_a \psi$ is negative, because the velocity of the fluid is a timelike vector field. Notice that changing the \textit{scalar potential} $\psi\to f(\psi)$ where $f$ is a $C^2$ function with $f'\neq 0$\footnote{If $f'=0$ at a point $\psi_0$, the RHS of the first equality of (\ref{U_intermsof_psi}) has a \textit{removable singularity} at $\psi_0$. By removing the singularity, $U^a$ remains invariant. } does not affect the fluid velocity $U^a$. We will refer to this property as the \textit{gauge freedom} of the scalar potential. \\[5pt]
Let us define the vector field $V^a = hU^a = \nabla^a \psi$. So $V_a$ is a perfect gradient and $\nabla_{[a} V_{b]}=0$. Below we fix what the freedom in choosing the scalar field $\psi$ separately for the dust case versus the scalar field case. 

\subsubsection{Case I: Dust, $p=0$}
The energy-momentum tensor of pressureless perfect fluid (dust) is
\begin{align}
    T_{ab}^{[\text{dust}]} = \rho U_a U_b = \rho \partial_a \psi \partial_b \psi 
\end{align} 
The energy-momentum conservation equations for dust are: 
\begin{align}
        &\nabla_U(\rho) + \rho \nabla \cdot U =0 \ \ \ \ \ \ \text{(transport equation)}\label{dust_transport_equation}\\
        &\nabla_U U^a = 0 \ \ \ \ \ \ \ \ \text{(geodesic equation)} \label{dust_geodesic_equation}
\end{align}
In view of the geodesic equation, if we fix $\psi$ so that $h^2 = -g^{ab} \nabla_a \psi \nabla_b \psi = 1$ at a point $p\in \mathcal{R}$, then $h=1$ along the future-directed timelike geodesic describing the trajectory of the dust particle crossing through $p$. Therefore, we fix $h=1$ and $V^a = U^a$. 

\subsubsection{Case II: Stiff fluid, $p=\rho$}\label{stiff_fluid_to_wave_equation} The energy-momentum tensor for the stiff perfect fluid is 
\begin{align}
    T_{ab}^{[\text{stf}]} = 2\rho U_a U_b + \rho g_{ab}
\end{align}
We want to use the gauge freedom of the scalar field $\psi$ to set $h = \sqrt{-\langle \nabla\psi,\nabla\psi\rangle} = \sqrt{\rho}$, however it is not obvious that this is possible. We argue that it is possible to make this transformation in view of the Euler equations (\ref{generic_stressenergy_cons_normal_compts}). The argument below for fixing the gauge freedom is the one outlined in \cite{acoustic_spacetimes}. As $U^a = \nabla^a\psi / h$, 
\begin{align}
    \nabla_U U^a = \frac{1}{|h|}U^b(g^{ac} + U^a U^c) \nabla_c \nabla_b \psi= -\frac{1}{2h^2} \perp^{ac}\nabla_c(h^2) = -\perp^{ac}\nabla_c(\ln{|h|}) \label{4acceleration_formula_by_defn}
\end{align}
On the other hand, the Euler equations (\ref{generic_stressenergy_cons_normal_compts}) with $p(\rho)=\rho$ are equivalent to
\begin{align}
    \nabla_U U^a  = \frac{1}{2\rho}\big(-\nabla^a \rho - U^a\nabla_U \rho\big) = -\perp^{ab}\nabla_b (\ln \sqrt{\rho}) \label{4acceleration_formula_by_euler}
\end{align}
Comparing (\ref{4acceleration_formula_by_defn}-\ref{4acceleration_formula_by_euler}) and noticing that the function in the parentheses below is constant on the level sets of $\psi$,
\begin{align}
    \perp^{ab}\nabla_b \bigg( \ln \frac{|h|}{\sqrt{\rho}} \bigg)=0 \implies \frac{|h|}{\sqrt{\rho}} = e^{f(\psi)} \label{to_bernoulli}
\end{align}
where $f$ is an 'arbitrary function of integration'; more precisely, notice that since $\perp^{ab}\nabla_b f(\psi)$=0 for any differentiable function $f$, we can add this function inside the parentheses in the first equality of (\ref{to_bernoulli}). In relativistic fluid dynamics the second equation of (\ref{to_bernoulli}) is the Bernoulli equation. Now we use the gauge freedom of the scalar field $\psi$ to make the transformation 
\begin{align}
    &\psi \to F(\psi)= \int_{0}^\psi e^{-f(\overline{\psi})}d\overline{\psi} \\
    &\implies \nabla F(\psi) = e^{-f(\psi)}\nabla\psi
    \end{align}
    The transformation yields a corresponding transformation of $h$,
    \begin{align}
     h^2 \to h_F^2= {-g^{ab}\nabla_a F(\psi) \nabla_b F(\psi)} = -e^{-2f(\psi)} g^{ab}\nabla_a\psi \nabla_b \psi =e^{-2f(\psi)} h^2 
\end{align}
With this change, Bernoulli equation (\ref{to_bernoulli}) simplifies to 
\begin{align}
    |h_F| = \sqrt{\rho}
\end{align}
We change notation back to $h,\psi$ and choose $h>0$ so that $U^a$ is future directed if and only if $\nabla^a \psi$ is future directed. 
With this choice the energy-momentum conservation equations for the stiff fluid in terms of $V^a = \sqrt{\rho}U^a$ become 
\begin{align}
    &\nabla\cdot V = 0 \label{V_is_divergence_free}\\
    &\partial^a \rho +2 \nabla_V V^a = 0 \label{EM_conservation_normal_V}
\end{align}
Written in terms of $\psi$, equations (\ref{V_is_divergence_free} - \ref{EM_conservation_normal_V})  look like
\begin{align}
&\Box_g \psi = 0\ \ \ \ \ \ \ \ \ \ \ \text{(linear wave equation for $\psi$)} \label{lner_wave_eqn_psi}\\
& \partial^a \rho +  \nabla^a (\partial_b\psi \partial^b \psi) = 0\ \ \ \ \ \ \ \ \ \ \text{(gauge choice)}\label{EM_cons_normal_intermsof_psi}
\end{align}
Equations (\ref{EM_conservation_normal_V}) and (\ref{EM_cons_normal_intermsof_psi}) are derived from the Euler equations for the stiff fluid. They simply recover the gauge choice $h^2 = \rho$. Equation (\ref{lner_wave_eqn_psi}) is the one we will study in Section \ref{wave_section}. 
\begin{rem} \textit{(Aside, not used in the rest of this article).}
    As the fluid is irrotational, $(dV)_{ab} = 0 \iff \nabla_{[a} V_{b]} = 0$. Therefore,
\begin{align}
    \nabla_{[a} U_{b]} = \nabla_{[a} {\rho}^{-1/2} V_{b]} = {\rho}^{-1/2} \nabla_{[a} V_{b]} + V_{[b} \nabla_{a]} ({\rho}^{-1/2}) = 0 + \frac{1}{2}\bigg( V_b \nabla_a ({\rho}^{-1/2}) - V_a \nabla_b ({\rho}^{-1/2}) \bigg) \label{irrot_of_V_1}
\end{align}
Since $U_aU^a=-1$, we have
\begin{align}
    &\nabla_a \rho = -2 \sqrt{\rho} U^b \nabla_b (\sqrt{\rho} U_a) = - 2U_a \nabla_U \sqrt{\rho} - \sqrt{\rho}\nabla_a(U_b U^b) = - 2 (\nabla_U\sqrt{\rho} ) U_a \implies \\
    & \nabla_a ({\rho}^{-1/2}) = \frac{\nabla_U \sqrt{\rho}}{\rho^2} V_a =: f V_a \label{gradient_of_rho_propto_U}
\end{align}
where we used that $\langle U, U\rangle = -1$ and $f :=  {(\nabla_U \sqrt{\rho})/\rho^2}$. Hence, (\ref{gradient_of_rho_propto_U}) tells us that the gradient of the energy density is proportional to  the fluid velocity. Substituting this in (\ref{irrot_of_V_1}), we find
\begin{align}
    2\nabla_{[a} U_{b]} = V_b f V_a - V_a f V_b = 0  
\end{align}
Hence, $U$ also has vanishing exterior derivative: $(dU)_{ab} = 0$. This implies that there exists a scalar field $\phi \in C^{\infty}(\mathcal{R})$ such that $\nabla_a\phi = U_a$.
\end{rem}

\section{Main Theorems} \label{master_theorem_section}
In this section we state a comprehensive version of the main results of this paper. In the next two definitions we assume $(\mathcal{R}, g)$ is the smooth, time-oriented spherically symmetric Lorentzian manifold defined in the previous section. 

\subsection{Preliminary definitions for dust regularity}
\begin{defn}\label{1parfamily_of_geodesics} A \textbf{variation} of geodesics on $\mathcal{Q_R}$ is a smooth map 
\begin{align}
    \Gamma:\ & \mathbb{R}^2 \supseteq S \to  \mathcal{Q_R}, \nonumber\\
           &(\alpha,s)\mapsto \gamma(\alpha,s) \nonumber
\end{align}
such that for each fixed $\alpha$, the curve $\gamma(\alpha, \cdot)$ is a geodesic in $\mathcal{Q_R}$.
\end{defn}

\begin{defn}{\label{define_admissible_curve}}
Let $a>0$ and $\lambda:(-a,a) \to \mathcal{Q_R}$ be a smooth curve. Let $0<n_0<N_0, C_K>0$ Denote 
\begin{equation} u_\text{min}:= \inf_\alpha u[\lambda(\alpha)]\in [-\infty, u_s]  ,\  u_\text{max}:= \sup_\alpha u[\lambda(\alpha)]\in [-\infty, u_s]  \hspace{5pt}  \text{  and  } \hspace{5pt} v_\text{min}:= \inf_\alpha v[\lambda(\alpha)] \in [v_\kappa, 0]
\end{equation}
We say that $\lambda$ is an \textbf{${(n_0, N_0, C_K)}$--admissible curve} if 
\begin{enumerate}
    \item $\lambda$ is a nowhere constant smooth spacelike curve segment on the quotient manifold $\mathcal{Q_R}$
    \item $u_\text{min}>-\infty$
    \item For all $\alpha\in(-a, a)$ we have
     \begin{align}
        n_0<\bigg|\frac{d\lambda^u}{d\alpha}\bigg|,\bigg|\frac{d\lambda^v}{d\alpha}\bigg| < N_0\label{condd5}
    \end{align}
     where $d\lambda/d\alpha(\alpha) =: \lambda'(\alpha)$ denotes that tangent to $\lambda$.
    \item Let $\nu(\alpha)$ denote the unique future-directed radial unit normal vector to $\lambda$ at the point $\lambda(\alpha)$\footnote{Using the Christoffel symbols from Appendix \ref{appendix_riemann_components}, it is easy to calculate \begin{align}
        \nu^u = e^{-\omega} \sqrt{(\lambda')^u/(\lambda')^v} ;\ \ \nu^v= e^{-\omega}\sqrt{(\lambda')^v/(\lambda')^u}
    \end{align}}\footnote{In terms of the components of $\lambda'$, we can calculate  \begin{align}
       &\Bigg|\frac{\big\langle \lambda', {\mathring{\nabla}}_{\lambda'}\nu \big\rangle}{\big\langle \lambda', \lambda' \big\rangle} \Bigg| = \frac{1}{\sqrt{\langle \lambda' , \lambda'\rangle}}\Bigg|(\lambda')^\mu\partial_\mu\omega+ \frac{1}{2}\partial_\mu (\lambda')^\mu + \frac{1}{2}\bigg( \frac{(\lambda')^u \partial_u (\lambda')^v}{(\lambda')^v} + \frac{(\lambda')^v \partial_v (\lambda')^u}{(\lambda')^u}\bigg)\Bigg| 
        \end{align}}. Then
   \begin{align}
       &\Bigg|\frac{\big\langle \lambda', {\mathring{\nabla}}_{\lambda'}\nu \big\rangle}{\big\langle \lambda', \lambda' \big\rangle} \Bigg| \leq C_K\ \ \ \forall \alpha\in(-a, a) \label{bound_on_curvature_of_lambda}
        \end{align}
        \item The following is satisfied by $v_\text{min}$
        \begin{align}
            -v_{\text{min}} <(u_s - u_{\text{max}})\frac{n_0}{N_0 C_g e^{4C_g}}
        \end{align} where $C_g$ is the constant defined in Section \ref{assumptions_on_L1_norms}. 
         \item Furthermore, $-v_{\text{min}} < x_*$ where $x_*$ is the unique positive solution of the transcendental equation
         \begin{align}
             A x e^{B x} = 1\label{transc_eqn}
         \end{align}
         where the parameters $A$ and $B$ are given by
         \begin{align}
             &A = \sqrt{\frac{C_g^3 (1+C_K^2) N_0}{n_0}} \exp\Bigg[ \sqrt{\frac{C_g^7 N_0}{n_0}}e^{2C_g}\Bigg] \label{A} \\
             &B =  \sqrt{\frac{C_g^3 N_0}{n_0}} e^{2C_g}\bigg(\frac{1}{2} + C_g^2 \bigg) \label{B}
         \end{align}
\end{enumerate}
\end{defn}
\begin{figure}[h]
    \centering
    \includegraphics[width=0.4\linewidth]{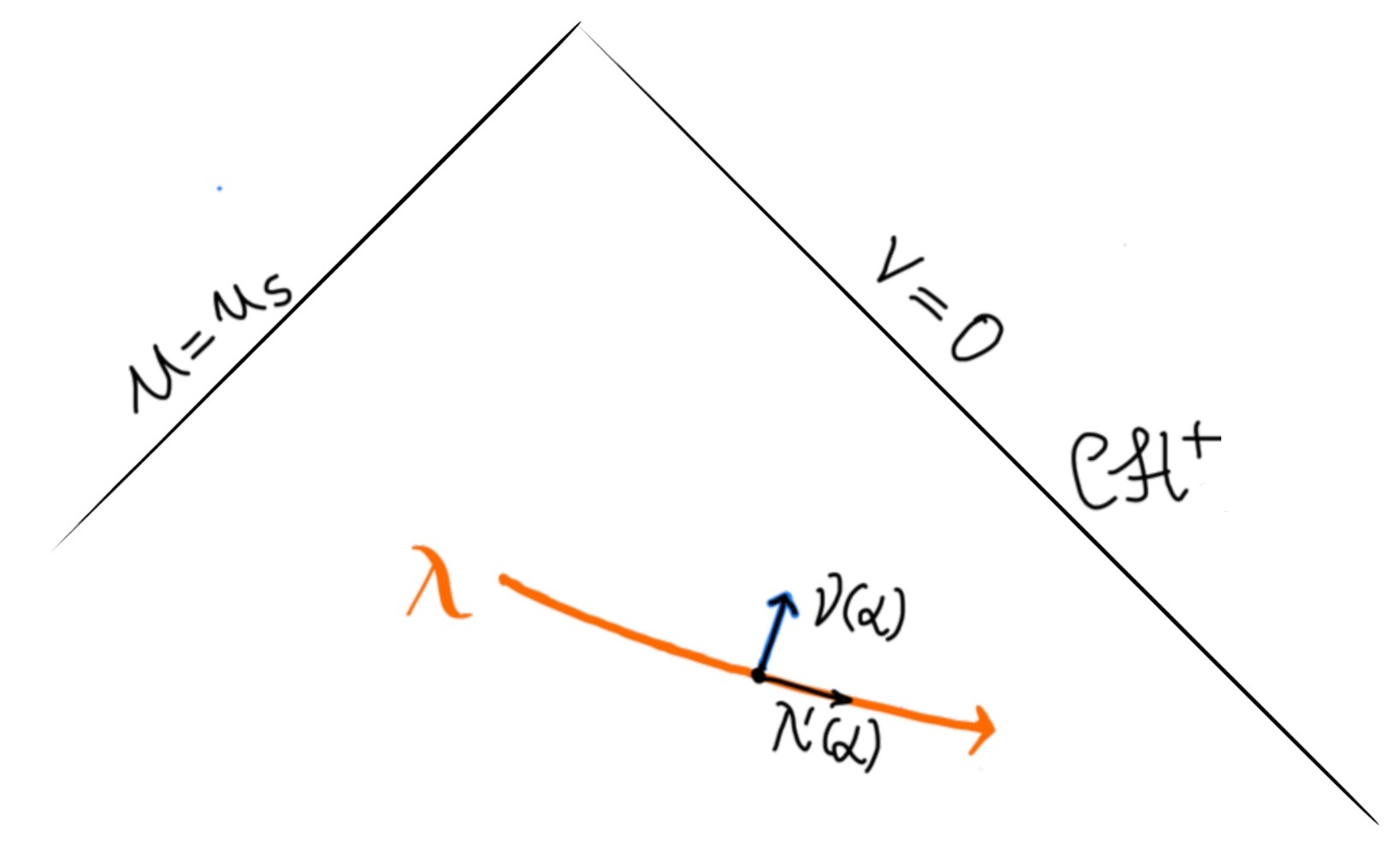}
    \caption{Admissible curve $\lambda$ with future-directed unit normal $\nu$ and tangent $\lambda'$.}
    \end{figure}
\begin{rems}
\begin{enumerate}
    \item Condition 1. ensures the tangent to $\lambda$ is nonvanishing. 
    \item Condition 2. implies that Image$(\lambda)$ is compactly supported inside $\mathcal{Q}_\mathcal{R}$. 
    \item Condition 3. means that $\lambda$ is uniformly bounded away from null. This implies that timelike geodesics locally normal to $\lambda$ are uniformly bounded away from null.
    \item The LHS of Equation (\ref{bound_on_curvature_of_lambda}) is the (only nonvanishing component of) the second fundamental form of the curve $\lambda$ as a 1-manifold, embedded in $\mathcal{Q_R}$. The covariant derivative in (\ref{bound_on_curvature_of_lambda}) is with respect to $g_{\mathcal{Q_R}}$. The condition on the curvature of lambda prevents early focusing of timelike geodesics emanating from $\lambda$. Such a condition is needed even if no singularity is present.
    \item  We will see that Condition 5 ensures that future-directed timelike geodesics emanating from the curve $\lambda$ will reach the Cauchy horizon $\{v=0\}$ rather than escape through the outgoing future null boundary $\{u=u_s\}$ of $\mathcal{Q}_\mathcal{R}$. This will be shown in Lemma \ref{lambda_sends_to_CH}.
     \item We will see that Condition 6 is essential for showing that the variational field along the trajectories of dust particles falling into the weak null singularity can be bounded away from zero. This is what makes possible to prove that no shell-crossing between these trajectories occurs. The reason for the very specific choice of the parameters $A,B$ will become clear in the proof of Theorem \ref{bounds_for_jacobi_field_thm}. 
     \item We need to show that the transcendental equation (\ref{transc_eqn}) indeed has a unique positive solution. To achieve this, note that the function $x\mapsto A x e^{B x}$ is smooth, has a unique zero at $x=0$, and is strictly increasing on $\mathbb{R}$. So the equation $Axe^{Bx}=1$ has a unique solution $x_*>0$. Furthermore, $0<-v_{\text{min}}<x_*$ implies $0<A|v_{\text{min}}|e^{B|v_\text{min}|}<1$. 
     \item Even though Conditions 5 and 6 both restrict the image of $\lambda$ to a rectangle close to the Cauchy horizon, we have deliberately separated them to emphasize their different role in the proof of the dust regularity result. 
    \end{enumerate}
\end{rems}
In Lemma \ref{existence_of_lambda} we will prove that the abovementioned conditions are compatible. That is, we will show existence of an admissible curve $\lambda$ in $(\mathcal{Q_R}, g_{\mathcal{Q_R}
})$.
\begin{defn}{\label{define_geodesic_variation_based_on_lambda}}
    Given a $(n_0, N_0, C_K)$--admissible curve $\lambda$, define \textbf{the geodesic variation based on $\lambda$} to be the map $\overline{\Gamma}$ - a variation of affine, normalised, future-directed, future-inextendible timelike geodesics on the quotient manifold-with-boundary $\overline{\mathcal{Q_R}}$, given by 
 \begin{align}
     \overline{\Gamma}(\alpha, \tau) := \exp_{\lambda(\alpha)}(\tau \nu(\alpha))\text{  on  }   D = \{ (\alpha,\tau) \in \mathbb{R}^2: -a < \alpha < a;\ 0 \leq \tau\leq T(\alpha)\}
 \end{align}
 where the proper time function $\alpha \mapsto T(\alpha)$ is defined by the condition that the $\alpha$-th geodesic\footnote{i.e. the geodesic which 'begins' at the point $\lambda(\alpha)$.}\footnote{We reiterate that, as a consequence of Condition 5 of admissibility, all inextendible future-directed timelike geodesics based on $\lambda$ terminate on $\mathcal{CH}^+$. This is shown in Lemma \ref{lambda_sends_to_CH}. So $\overline{\Gamma}$ is well-defined.} reaches $\mathcal{CH}^+$ when $\tau \to T(\alpha)$ i.e.
 \begin{equation}
     v\big[\overline{\Gamma}(\alpha,T(\alpha))\big] = 0
 \end{equation}
    Let 
    \begin{align}
        D^0 = \{(\alpha,\tau)\in \mathbb{R}^2: -a<\alpha<a; 0<\tau<T(\alpha)\}
    \end{align}
 and $\Gamma:=\overline{\Gamma}|_{D^0}$ be the restriction of $\overline{\Gamma}$ to $D^0$.\\[5pt]
 We say that $\gamma:(0,T)\to \mathcal{Q_R}$ is \textbf{a geodesic based on $\lambda$} if $\gamma$ is an affine, normalised, future-directed, future-inextendible timelike geodesic on $\mathcal{Q_R}$ such that $\gamma(0) = \lambda(\alpha)$ and $\dot{\gamma}(0) = \nu(\alpha)$ for some $\alpha\in(-a,a)$.
\end{defn}
\begin{figure}[h]
    \centering
    \includegraphics[width=0.45\linewidth]{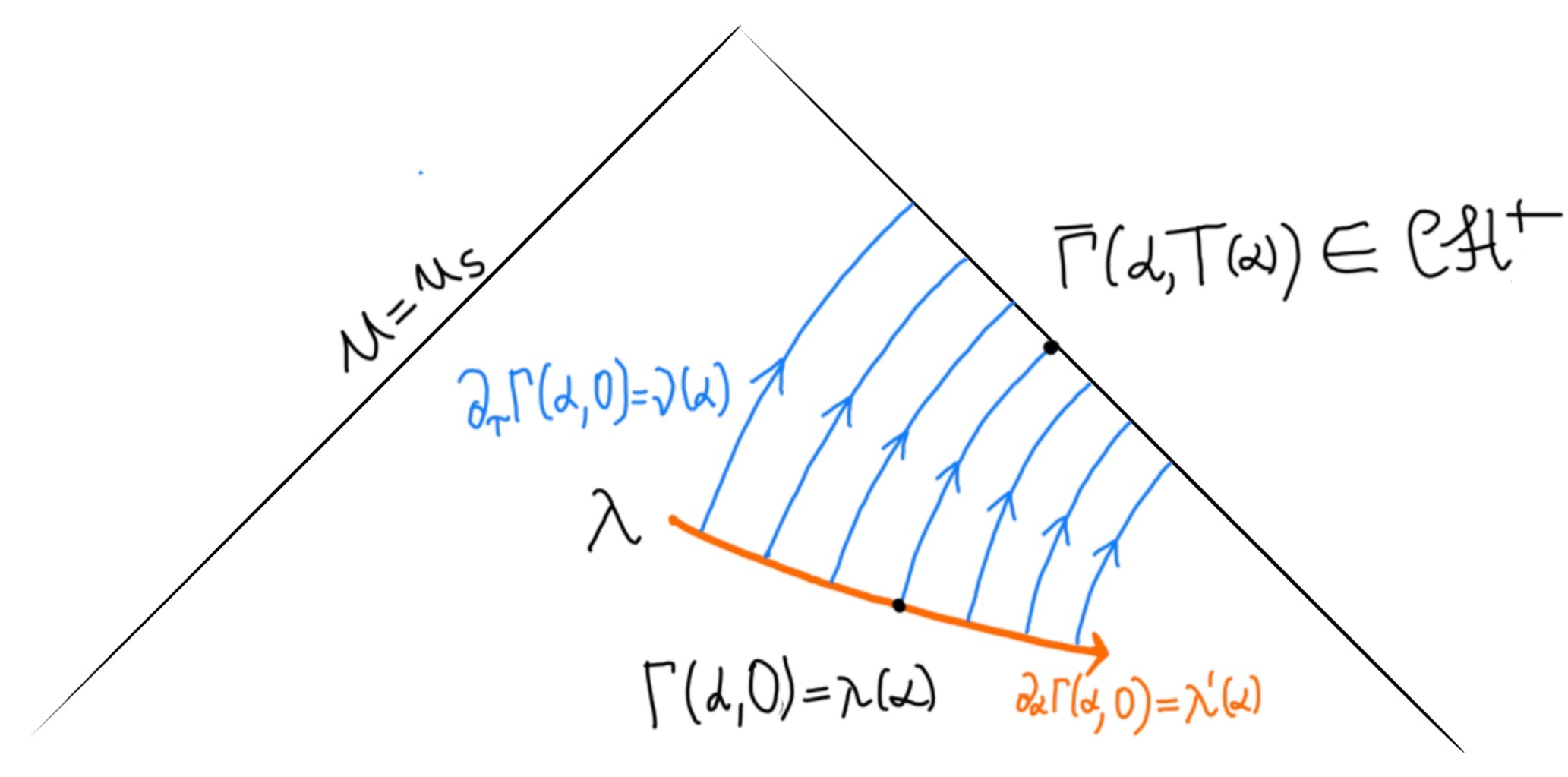}
    \caption{Illustrating the geodesic variation based on the admissible curve $\lambda$. Although in this picture there is no shell-crossing between the geodesics, this is a nontrivial result we are yet to prove. Key features in the illustration are that $\alpha\to \overline{\Gamma}(0,\alpha)$ is the curve $\lambda$, while by definition of the proper time function $T(\alpha)$, $\alpha\mapsto\overline{\Gamma}(\alpha,T(\alpha))$ maps to the weak null singularity. We will see that the restriction $\Gamma$ to $D^0$ maps away from the singularity i.e. away from the boundary of $\overline{\mathcal{Q_R}}$.}
    \end{figure}

\begin{rems}
\begin{enumerate}
    \item By properties of the exponential map and continuous dependence of solutions of the geodesic equation on initial conditions, $\overline{\Gamma}$ is well-defined, continuous in $D$ and smooth in the interior of $D$. In fact, it will be shown that the set $D^0$ is open and coincides with the interior of $D$.
    \item {The no shell-crossing result is false for spherically symmetric dust near a spacelike terminal singularity of Schwarzschild-like strength. See for example \cite{Szekeres_Lun}. Even more, according to \cite{GC_CS_ITF} the tidal forces experienced by test particles in a class of spherically symmetric spacetimes such that curvature invariants blow up at the origin but the metric is continuously extendible to the origin, e.g. wormholes with a singularity at the throat, are divergent.} 
\end{enumerate}
\end{rems}

\subsection{Behaviour of spherically symmetric dust near the WNS, precise statement of the theorem}
Below we state the first main result of this paper. We maintain the convention that the fluid lives in the $3+1$-dimensional spacetime $(\mathcal{R}, g)$ while the geodesic variation based on $\lambda$ lives in the $1+1$-dimensional quotient spacetime $\mathcal{Q_R}$. We denote by $\rho, U$ the fluid variables - a vector field and a scalar field on $\mathcal{R}$, as well as their pushforward into $\mathcal{Q_R}$. In the global double null coordinate chart, the $u$ and $v$ components of the radial geodesics of $\mathcal{R}$ (i.e. those with constant $\theta,\varphi$) coincide with the components of the geodesics of $\mathcal{Q_R}$, ensuring that the simplification to a 2-dimensional problem is valid.

\begin{thm}\label{master_theorem}
Given a $(n_0, N_0, C_K)$--admissible curve $\lambda$, let $\overline{\Gamma}: D\to \overline{\Gamma}(D) \subset \overline{\mathcal{Q}_\mathcal{R}}$ be the geodesic variation based on $\lambda$, as defined in Definition \ref{define_geodesic_variation_based_on_lambda}. Let $U$ be a spherically symmetric vector field defined in $\Gamma(D^0)\times S^2 \subset\mathcal{R}$ with components in double null coordinates given by $U^\mu(u,v) = \partial_\tau\Gamma^\mu\big(\Gamma^{-1}(u,v)\big)$ for $\mu\in\{u,v\}$. Let $\rho = \rho(u,v)$ be a  spherically symmetric scalar function defined on $\overline{\Gamma}(D)\times S^2 \subset \overline{\mathcal{R}}$ such that the restriction of $\rho$ to the spacelike hypersurface Image$(\lambda)\times S^2$ is smooth and positive, and extends to a positive function on $\overline{Image(\lambda)}\times S^2$. Assume that the pair $(\rho, U)$ satisfy the transport equation (\ref{dust_transport_equation}) in $\Gamma(D^0)\times S^2$. \\[5pt]
 Then 
    \begin{enumerate}
         \item There exist constants $0<c<C$, depending on $C_g,n_0,N_0$ such that vector field components $U^\mu$ for $\mu=u,v$ satisfy $c\leq U^\mu\leq C$ in Image$(\overline{\Gamma})\times S^2$.
        \item Image$(\Gamma) \subset \mathcal{Q_R}$
         \item The proper time $T$ is a strictly monotonic, continuous function of the parameter $\alpha$.
        \item $\Gamma$ is a ($C^\infty$-) diffeomorphism onto its image.  
        \item $\overline{\Gamma}$ is a homeomorphism onto its image.
       \item There exists a constant $C>0$, depending on $\norm{\rho}_{C^0(\lambda)}$, $n_0$, $N_0$, $v_{\text{min}}$ and $C_g$ such that $\rho(u,v) \leq C$ for all $(u,v) \in $ Image$(\overline{\Gamma})\times S^2$.
   \end{enumerate}
\end{thm}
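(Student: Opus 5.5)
The plan follows the six items in order. The later items rest on the velocity bounds and on a two-sided bound for the transverse Jacobi field.

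\textbf{Step 1 (velocity bounds, item 1).} Write the radial geodesic equations in double null coordinates. With $\Gamma^u_{uu}=2\omega_{,u}$ and $\Gamma^v_{vv}=2\omega_{,v}$ as the only nonvanishing quotient Christoffel symbols, they read $\frac{d}{d\tau}\dot u=-2\omega_{,u}\dot u^2$ and $\frac{d}{d\tau}\dot v=-2\omega_{,v}\dot v^2$. Hence $\frac{d}{d\tau}(e^{2\omega}\dot u)=2\omega_{,v}\dot v\,e^{2\omega}\dot u$. Reparametrising by $v$ gives $e^{2\omega}\dot u=(e^{2\omega}\dot u)|_{\lambda}\exp\big(2\int\omega_{,v}\,dv\big)$ along the geodesic, and similarly for $\dot v$ using $u$. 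The normalisation $e^{2\omega}\dot u\dot v=1$ then fixes the other component.

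The integrals $\int\omega_{,v}dv$ along a curve are not directly controlled, because the $L^1$ bounds (\ref{geom_bound_as3}) are along constant $u$. I would use $\omega_{,v}(u,v)=\omega_{,v}(u_0,v)+\int_{u_0}^u\omega_{,uv}$ together with (\ref{geom_bound_as4}) and Fubini. This bounds $\int|\omega_{,v}(u(v),v)|dv$ by $C_g(2+|v_\kappa|)$ once $u$ is monotone along the curve, which holds because the curve is timelike. Combining with (\ref{geom_bound_as1}), (\ref{condd5}) and the formula for $\nu$ yields $c\le U^\mu\le C$. Condition 5 and Lemma \ref{lambda_sends_to_CH} show that $u$ stays below $u_s$, since $du/dv=\dot u/\dot v\le N_0C_ge^{4C_g}/n_0$ bounds the total $u$-increase. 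This gives item 2, namely that $\Gamma$ avoids $\{u=u_s\}$ and reaches $v=0$ only at $\tau=T(\alpha)$, with $T(\alpha)<\infty$ because $\dot v\ge c$.

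\textbf{Step 2 (Jacobi field, main obstacle).} Let $J=\partial_\alpha\Gamma$ and let $j=\langle J,n\rangle$ be its component transverse to $\dot\gamma$, where $n$ is the unit normal. In $1+1$ dimensions $j$ satisfies $\ddot j=K j$, with $K$ the Gauss curvature of $g_{\mathcal{Q_R}}$, proportional to $e^{-2\omega}\omega_{,uv}$. The initial data are $j(0)=|\lambda'|$, which lies between two positive constants depending on $n_0,N_0,C_g$, and $\dot j(0)=$ the second fundamental form times $j(0)$, which is bounded by $C_K$ through (\ref{bound_on_curvature_of_lambda}).

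I would write this as a first-order system and apply Gr\"onwall in the variable $v$. This requires $\int|K|\,d\tau\lesssim\int|\omega_{,uv}|\,dv$ along the curve. That integral is finite only because $\omega_{,uv}$ is $L^1$ in $v$, while $\tau$-integration of $K$ is not available pointwise. The Gr\"onwall step gives
\[
|j(\tau)-j(0)-\dot j(0)\tau|\lesssim A|v_{\min}|e^{B|v_{\min}|}\,j(0).
\]
Condition 6 then makes the right side smaller than $j(0)$, so $j$ is bounded and bounded away from zero up to $\mathcal{CH}^+$. Matching the constants with (\ref{A}) and (\ref{B}) is the delicate bookkeeping. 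I expect the hardest point to be controlling $\int|\omega_{,uv}|$ along a non-coordinate timelike curve, again using a Fubini argument along the $u$-direction.

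\textbf{Step 3 (items 3--5).} $T$ is determined by $v(\overline{\Gamma}(\alpha,T(\alpha)))=0$ with $\dot v\ge c>0$, so by an implicit-function or continuity argument $T$ is continuous. Monotonicity follows because the endpoint map $\alpha\mapsto u(\overline\Gamma(\alpha,T(\alpha)))$ has derivative of sign fixed by $j$, since $j$ never vanishes. The nonvanishing of $j$ means $d\Gamma$ has full rank, so $\Gamma$ is a local diffeomorphism. Injectivity holds because the map $\alpha\mapsto u$ at fixed $v$ is strictly monotone (its derivative is proportional to $j\neq0$), so two geodesics cannot meet. This gives item 4. Item 5 follows by continuity and injectivity up to the boundary on $D$.

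\textbf{Step 4 (density, item 6).} Along each geodesic, the transport equation gives $\rho(\tau)=\rho(0)\exp(-\int\nabla\!\cdot U)$. In $3+1$ dimensions the expansion equals $\frac{d}{d\tau}\log(j\,r^2)$, so $\rho\,j\,r^2$ is conserved. The bounds on $j$ from Step 2 and on $r$ from (\ref{geom_bound_as1}) then give $\rho\le C\norm{\rho}_{C^0(\lambda)}$.
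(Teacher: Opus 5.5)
Your architecture matches the paper's, but two steps do not go through as you describe them.

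\textbf{Monotonicity of $T$ (Step 3).} The sign of the $\alpha$-derivative of the endpoint map $\alpha\mapsto u(\overline{\Gamma}(\alpha,T(\alpha)))$ gives injectivity of the boundary map (item 5, cf.\ Proposition \ref{boundary_regularity_of_gamma}). It does not decide whether $T$ increases or decreases. Formally $T'(\alpha)=-J^v/\dot\gamma^v$ at $\tau=T(\alpha)$, so the relevant sign is that of $J^v$. But that identity, like your endpoint derivative, is unavailable: nothing is known about differentiability of $\overline{\Gamma}$ at $\mathcal{CH}^+$, where the Christoffel symbols blow up.

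What proves item 3 is the ordering in $v$ at a common proper time. On a horizontal segment $[\alpha_1,\alpha_2]\times\{\tau\}\subset D^0$, the derivative $\partial_\alpha\Gamma^v=J^v=-\dot\gamma^v j$ has fixed sign and $|J^v|\ge n_{null}$ (Corollary \ref{orientation_preserving_corollary}). So the geodesic that is ahead in $v$ reaches $v=0$ first, and one passes to the limit in $\tau$. The paper runs the bootstrap of Proposition \ref{T_is_monotonic} precisely to know that such segments lie in $D^0$.

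In your order you can avoid the bootstrap by taking $\tau<\min_{[\alpha_1,\alpha_2]}T$. This requires continuity of $T$ first, and it must come from semicontinuity, since the implicit function theorem is not available at the singular boundary. Continuous dependence on data away from $\mathcal{CH}^+$ gives $\liminf_{\alpha\to\alpha_0}T(\alpha)\ge T(\alpha_0)$. The uniform bounds on $\dot\gamma^v$ (Corollary \ref{unidorm_bounds_on_geod_velocities}) give the reverse inequality for the $\limsup$. Done this way, your route to item 3 is shorter than the contour argument of Proposition \ref{contn_dep_of_T_on_alpha}.

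\textbf{Curve integrals of $\omega_{,uv}$ (Step 2).} The Fubini argument you propose for $\int|\omega_{,uv}|$ along the timelike geodesic cannot work. Fubini only turns the constant-$u$ bounds (\ref{geom_bound_as4}) into control of $\omega_{,uv}$ integrated over two-dimensional regions. A trace on a curve transversal to the null lines would need a bound on $\partial_u\omega_{,uv}$, which is not assumed. For instance, a bump in $\omega_{,uv}$ of height $|v|^{-1}$ and $u$-width $\delta|v|$ ($\delta$ small), centred on a timelike curve, is compatible with Assumptions \ref{assumptions_on_L1_norms} but has non-integrable trace on that curve.

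The trick does work in Step 1, because there you trace $\omega_{,v}$, whose $u$-derivative is $\omega_{,uv}$. But it yields $C_g+C_g(1+|v_\kappa|)\Delta u$, with $\Delta u$ the $u$-extent of the geodesic, not $C_g(2+|v_\kappa|)$. Your velocity constants therefore depend on $|v_\kappa|$ and $u_s-u_{\min}$. They are neither those claimed in item 1 nor those behind the thresholds in Conditions 5--6 of Definition \ref{define_admissible_curve}.

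The paper proves no trace bound either. In Theorem \ref{main_thm_sec2} it bounds the line integral of $\omega_{,u}$ along $\gamma$ by the supremum of the constant-$v$ integrals. It obtains (\ref{geom_bound_4}) by applying (\ref{geom_bound_as4}) along $\gamma$ together with Proposition \ref{lower_bds_for_velocities}. In effect it uses the $L^1$ hypotheses directly along radial timelike geodesics, as is the case in the example of Section \ref{EMSf_chapter}, where the estimates of Corollary \ref{stability_of_CH} are pointwise. Step 2 closes under that reading of the hypotheses, or a correspondingly strengthened assumption, not via Fubini.

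The rest is sound and in places cleaner than the paper. Injectivity from strict monotonicity of $\alpha\mapsto u$ at fixed $v$ works: the derivative is $2\dot\gamma^u j$, and the $\alpha$ reaching a given level $v$ form an interval since $v$ is monotone along $\lambda$. This replaces the contour argument of Theorem \ref{injectivity_of_Gamma}, and in the $(\alpha,v)$ parametrisation it extends to the boundary for item 5.

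Conservation of $\rho\,|j|\,r^2$ along each geodesic replaces the term-by-term estimates in Theorem \ref{dust_energy_density_theorem}. There, however, (\ref{geom_bound_as1}) bounds $r$ only from below. Bound $r(0)/r(\tau)\le 1+2C_g^2$ by integrating $r_{,u}$ and $r_{,v}$ along a broken null path using (\ref{geom_bound_as2})--(\ref{geom_bound_as3}).
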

\begin{rems}
    \begin{enumerate}
        \item The pair $(\rho, U)$ are the fluid variables of  spherically symmetric dust in $\mathcal{R}$. $\rho$ is the energy density of the dust and $U$ is the fluid velocity. As $U$ is spherically symmetric, $U^\theta$ and $U^\varphi$ are zero. By construction of the vector field $U$ as tangent to the geodesics of $\Gamma$, $U$  satisfies the geodesic equation in $\Gamma(D^0)\times S^2$. By the smoothness of $\lambda$ and standard properties of  the transport equation, $\rho$ and $U$ are smooth on $\Gamma(D^0)\times S^2$.
        \item We reiterate that admissible curves exist, as will be proved in Lemma \ref{existence_of_lambda}, ensuring that our theorem is not a statement about the empty set. 
    \end{enumerate}
\end{rems}

\subsection{Steps of the Proof of Theorem \ref{master_theorem}}
The proof of the Theorem \ref{master_theorem} relies on proving the following results under the same hypotheses:
\begin{enumerate}
    \item Components (in double null coordinates) of velocities of radial future-directed timelike geodesics are bounded and bounded away from zero on $\mathcal{CH}^+$. 
    \item The Jacobi field along an infalling radial timelike geodesic 'emanating' from a $(n_0, N_0, C_K)$--admissible curve $\lambda$ in $\mathcal{Q}_\mathcal{R}$ stays bounded and bounded away from zero. Proving this heavily relies on Condition 6 of Definition \ref{define_admissible_curve} i.e. that the image of $\lambda$ is contained close enough to the Cauchy horizon. Note that the initial data $(J,K=\nabla_\nu J)$ for the Jacobi equation is fully determined by the choice of $\lambda$. In particular, the initial value $J$ of the Jacobi field along a geodesic $\overline{\Gamma}(\alpha,\cdot)$ equals the tangent to $\lambda$ at the point $\lambda(\alpha)$, and the initial normal derivative $K$ of the Jacobi field is derived from the second fundamental form of $\lambda$ (where $\lambda$ is thought of as a 1-dimensional submanifold of $\mathcal{Q}_\mathcal{R}$).
    \item Through a bootstrap argument, and using the bounds on the geodesic components and the Jacobi field components, monotonicity of $T(\cdot)$ is proved. Again applying those bounds and the previous result of monotonicity, we prove that $T$ cannot have discontinuities using a constructive argument.  
    \item The variation $\overline{\Gamma}$ of  radially infalling timelike geodesics based on the  $(n_0, N_0, C_K)$--admissible curve $\lambda$ does not experience any shell-crossing before, or at, the Cauchy horizon.
    \item As a consequence of the transport equation, the energy density of the dust modelled by the abovementioned variation of geodesics based on an $(n_0, N_0, C_K)$--admissible curve $\lambda$ is bounded in the region $\supp(U)\subset\overline{\mathcal{Q}}_\mathcal{R}$, where it is supported.
\end{enumerate}
\subsection{Preliminary definitions for the stiff fluid}
\begin{defn}\label{define_admissible_hypersurface} Let $u_0,u_1\in(-\infty, u_s)$, $u_0<u_1$ and let $v_0\in(v_\kappa, 0)$ be such that $(u_0, v_0)\in \mathcal{Q_R}$ and such that $r_{,u}<0$ and $r_{,v}<0$ for all $(u,v)\in[u_0,u_1]\times [v_0, 0)\subset\mathcal{Q_R} $. \\[5pt]
Define the bifurcate null hypersurface $\mathcal{C}:= \mathcal{C}_+ \cup \overline{\mathcal{C}_-}$ in $\mathcal{R}$, where
\begin{align}
    &\mathcal{C}_+ = \{u_0\}\times [v_0, 0)\times S^2; \\
    &\overline{\mathcal{C}_-} = [u_0, u_1] \times \{v_0\}\times S^2
\end{align}
We will refer to $\mathcal{C}$ as a \textbf{$u_0$--admissible bifurcate null hypersurface}. 
\begin{rem}
    It is possible to choose a trapped characteristic rectangle as in Definition \ref{define_admissible_hypersurface} in view of Assumptions \ref{instability_assumptions}.
\end{rem}
\end{defn}

\subsection{Behaviour of spherically symmetric stiff fluid, precise statement of the theorem}

\begin{thm} \label{masterTheorem_2}
    Given a $u_0$-\textbf{admissible bifurcate null hypersurface} $\mathcal{C}=\mathcal{C}_+\cup \mathcal{\overline{C_-}}$ in $\mathcal{R}$, let $\mathring{\rho}$ be a continuous, positive, uniformly bounded away from zero  function on $\mathcal{C}$ such that the restrictions of $\mathring{\rho}$ to $\mathcal{C}_+$ and $\overline{\mathcal{C}_-}$ are smooth. Let $\mathring{U}$ be a normalised future-directed radial timelike continuous vector field defined on $\mathcal{C}$, such that the restrictions of $\mathring{U}$ to $\mathcal{C}_+$ and $\overline{\mathcal{C}_-}$ are smooth.  Assume furthermore that the following holds on the outgoing hypersurface $\mathcal{C}_+$:
    \begin{align}
    \norm{\sqrt{\mathring{\rho}}\mathring{U}^u(u_0,\cdot)}_{L^1(\mathcal{C_+})}:=4\pi\int_{v_0}^0 \sqrt{\mathring{\rho}}\mathring{U}^u(u_0,v)dv < \infty \label{new_assumed_L1_bound_masterThm2}
    \end{align}
    Let $(\rho, U)$ be the smooth scalar function and smooth vector field such that the stiff fluid energy-momentum conservation equations (\ref{V_is_divergence_free}-\ref{EM_conservation_normal_V}) are satisfied by $\rho$ and $V=\sqrt{\rho}U$ in $[u_0, u_1] \times [v_0, 0)$ with energy density $\rho$ and fluid velocity $U$ such that $\rho|_\mathcal{C} = \mathring{\rho}$ and $U|_\mathcal{C} = \mathring{U}$. Then
    \begin{enumerate}
        \item $\rho \to \infty$ as $v\to 0$ for every $u \in (u_0, u_1]$
        \item $U^u \to \infty$ and $U^v \to 0 $  as $v\to 0$ for every $u\in(u_0, u_1]$. 
    \end{enumerate}
\end{thm}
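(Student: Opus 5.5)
The strategy is to translate the stiff fluid problem into a characteristic problem for the linear wave equation (\ref{lner_wave_eqn_psi}), prove sign and size estimates for $\partial_u\psi$ and $\partial_v\psi$, and translate back at the end.

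\textbf{Step 1: reformulation.} With the gauge $h=\sqrt{\rho}$ of Section \ref{stiff_fluid_to_wave_equation}, set $V^a=\sqrt{\rho}U^a=\nabla^a\psi$. In double null coordinates $g^{uv}=-2e^{-2\omega}$, so
\begin{align}
V^u=-2e^{-2\omega}\partial_v\psi,\qquad V^v=-2e^{-2\omega}\partial_u\psi,\qquad \rho=4e^{-2\omega}\partial_u\psi\,\partial_v\psi .
\end{align}
It follows that
\begin{align}
U^u=e^{-\omega}\sqrt{\partial_v\psi/\partial_u\psi},\qquad U^v=e^{-\omega}\sqrt{\partial_u\psi/\partial_v\psi}.
\end{align}
The data $(\mathring\rho,\mathring U)$ therefore fix $\partial_v\psi$ on $\mathcal{C}_+$ and $\partial_u\psi$ on $\overline{\mathcal{C}_-}$. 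These determine $\psi$ up to an additive constant, which I normalise by $\psi(u_0,v_0)=0$.

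Future-directedness of $\mathring U$ is equivalent to $\partial_u\psi<0$ and $\partial_v\psi<0$ on $\mathcal{C}$. By the bounds on $\mathring\rho$ and continuity on the compact set $\overline{\mathcal{C}_-}$, we get $|\partial_u\psi|\ge c_0>0$ there. Assumption (\ref{new_assumed_L1_bound_masterThm2}) says exactly that $\int_{v_0}^0|\partial_v\psi|(u_0,v)\,dv<\infty$, using the bounds (\ref{geom_bound_as1}) on $e^{\pm2\omega}$.

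For spherically symmetric $\psi$, the equation $\Box_g\psi=0$ takes the form
\begin{align}
\partial_u(r\partial_v\psi)=-r_{,v}\,\partial_u\psi,\qquad \partial_v(r\partial_u\psi)=-r_{,u}\,\partial_v\psi .
\end{align}

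\textbf{Step 2: monotonicity.} I would run a continuity (bootstrap) argument on the rectangle $[u_0,u_1]\times[v_0,0)$, where $r_{,u}<0$ and $r_{,v}<0$ by admissibility. Suppose $\partial_u\psi<0$ and $\partial_v\psi<0$ on a characteristic subrectangle. Then both right-hand sides above are negative, so $r\partial_v\psi$ decreases in $u$ and $r\partial_u\psi$ decreases in $v$. Hence both quantities stay strictly below their (negative) initial values, which closes the bootstrap. As a by-product, $r|\partial_u\psi|(u,v)\ge r|\partial_u\psi|(u,v_0)$, so $|\partial_u\psi|\ge c>0$ everywhere, using $r\le C$ on the compact closure and $r^{-1}\le C_g$.

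\textbf{Step 3: upper bound on $|\partial_u\psi|$.} This is the step I expect to be the main obstacle, because it needs a coupled Gr\"onwall argument up to $v=0$. Set
\begin{align}
A(u)=\sup_{v}r|\partial_u\psi|(u,v),\qquad B(u)=\int_{v_0}^0 r|\partial_v\psi|(u,v)\,dv .
\end{align}
Integrating the second equation in $v$ and using $\|r_{,u}\|_{C}<\infty$ from Assumptions \ref{instability_assumptions} gives $A(u)\le A_0+C B(u)$. Integrating the first equation in $u$ and then in $v$ gives
\begin{align}
B(u)\le B(u_0)+\int_{u_0}^{u}\int_{v_0}^0|r_{,v}|\,|\partial_u\psi|\,dv\,du' .
\end{align}
The inner integral is bounded by $C_g\,A(u')$, using the $L^1$ bound (\ref{geom_bound_as3}) on $r_{,v}$ (I assume Assumptions \ref{assumptions_on_L1_norms} are available, as for the spacetimes of Section \ref{EMSf_chapter}). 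Thus $B(u)\le B(u_0)+C\int_{u_0}^u A$, and Gr\"onwall yields uniform bounds on both $A$ and $B$. In particular $c\le|\partial_u\psi|\le C$ on the whole rectangle, and $\psi$ stays bounded.

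\textbf{Step 4: blow-up of $\partial_v\psi$ and conclusion.} For $u\in(u_0,u_1]$,
\begin{align}
r|\partial_v\psi|(u,v)\ \ge\ \int_{u_0}^u |r_{,v}|(u',v)\,|\partial_u\psi|(u',v)\,du'\ \ge\ c\int_{u_0}^u |r_{,v}|(u',v)\,du' .
\end{align}
The last integral tends to $\infty$ as $v\to0$ by Fatou's lemma, since $r_{,v}\to-\infty$ pointwise. Hence $\partial_v\psi\to-\infty$, while $\partial_u\psi$ stays in $[-C,-c]$ and $e^{\pm\omega}$ stays bounded. Substituting into the formulas of Step 1 gives $\rho\to\infty$, $U^u\to\infty$ and $U^v\to0$ as $v\to0$.
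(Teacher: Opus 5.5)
Your proof is correct, and Steps 1, 2 and 4 match the paper: the reformulation via $\nabla^a\psi=\sqrt{\rho}\,U^a$, the propagation of $\partial_u\psi,\partial_v\psi<0$ using $r_{,u},r_{,v}<0$, and the Fatou argument for $\int_{u_0}^u(-r_{,v})\,du'\to\infty$.

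The genuine difference is Step 3, the upper bound on $|\partial_u\psi|$. The paper never estimates $\int_{v_0}^0|\partial_v\psi|\,dv$ through the $u$-transport equation for $r\partial_v\psi$, which would bring in $r_{,v}$. Instead, in Proposition \ref{psi_stays_bounded}, it shows that $\psi$ itself stays bounded by working on $u$-strips of width $\frac{2}{3}r_1/\norm{\partial_u r}_{C(\mathcal{R})}$, where $r_1=\lim_{v\to0}r(u_1,v)$. Using the monotonicity of $\psi$ in both variables, it writes $\int_{v_0}^v|\partial_v\psi|\,dv'=\psi(u',v_0)-\psi(u',v)\le\mathring{\psi}(u_0,v_0)-\psi(u,v)$. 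Because the strip is narrow, this term can be absorbed into the left-hand side, and the argument is then iterated across strips. The bound on $|\partial_u\psi|$ only appears afterwards, inside Corollary \ref{ratio_of_derivatives_of_psi}, in the form $\norm{\partial_u r}_{C(\mathcal{R})}\big(\psi(u,v_0)-\psi(u,v)\big)$.

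That route uses only $\norm{r_{,u}}_{C(\mathcal{R})}<\infty$, the positivity of $r$ on the closed rectangle, the bound on $e^{2\omega}$, and the $L^1$ condition on the data. In particular, it never needs $r_{,v}$ to be integrable up to $\mathcal{CH}^+$. Your coupled Gr\"onwall on $A$ and $B$ does need the stability bound (\ref{geom_bound_as3}). This is legitimate, since all assumptions of Section \ref{all_assumptions} are imposed on $(\mathcal{R},g)$, and you were right to flag it. Still, it makes the stiff-fluid result lean on a "weakness" hypothesis that the paper's monotonicity argument avoids.

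In exchange, your argument is shorter and more robust. Its upper bounds do not use the sign information of Step 2. It yields directly $c\le|\partial_u\psi|\le C$ together with $\int_{v_0}^0|\partial_v\psi|\,dv\le C$, from which boundedness of $\psi$ and the limits of $U^u,U^v$ follow at once.

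To make the Gr\"onwall step rigorous, run it on $[v_0,v_2]$ with $v_2<0$, where $A$ and $B$ are finite and continuous in $u$. Then note that the resulting constants do not depend on $v_2$, and let $v_2\to0$.
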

\begin{rem}
    About the assumption that a solution to the Euler equations (\ref{V_is_divergence_free}-\ref{EM_conservation_normal_V}) exists: Expanding the spherically symmetric wave equation $\Box_g\psi =0$ in double null coordinates gives a linear wave equation with first order terms. Given smooth characteristic data on $\mathcal{C}$, a unique smooth solution to the corresponding characteristic IVP exists in the future domain of dependence $D^+(\mathcal{C})$ by a combination of results by Rendall \cite{Rendall} and Ringstr\"{o}m \cite{Ringstrom}. In Section \ref{wave_section} we show that the solution $\psi$ has a future-directed timelike gradient throughout $D^+(\mathcal{C})$. Therefore it is in one-to-one correspondence with a solution $(\rho, U)$ of the Euler equations for a spherically symmetric stiff fluid in $(\mathcal{R}, g)$. The fluid variables $(\rho, U)$ can be reconstructed from $\psi$ by (\ref{U_intermsof_psi}) with $h^2=\rho$.
\end{rem}

\subsection{Steps of the Proof of Theorem \ref{masterTheorem_2}}
The proof of Theorem \ref{masterTheorem_2} is contained in Section \ref{wave_section} and relies on the following main steps.
\begin{enumerate}
    \item Relating $V=\sqrt{\rho}U$ to the scalar field $\psi$ such that $\nabla^a\psi = V^a$. Hence translating the stiff fluid IVP to a characteristic IVP for the wave equation, satisfied by $\psi$. 
    \item Using the future-directedness of $\mathring{U}^a$ to establish that the null derivatives of $\psi$ must be negative on the initial hypersurface. Performing a bootstrap argument to show that the negative sign of $\partial_u \psi$ and $\partial_v \psi$ is propagated inside the domain of dependence of $\mathcal{C}$. This step also uses the assumption (from Assumptions \ref{instability_assumptions}) that $\partial_u r$ and $\partial_v r$ are negative in the domain of dependence of $\mathcal{C}$. 
    \item Applying the instability assumption $\partial_v r \to - \infty$ (from Assumptions \ref{instability_assumptions}) to conclude that the energy density $\rho$  blows up at the weak null singularity.
    \item Using the assumption $\norm{\partial_u r}_{C(\mathcal{R})}<\infty$ (from Assumptions \ref{instability_assumptions}) to prove that the ratio $\partial_u \psi/\partial_v\psi$ vanishes as $v\to 0$. Back in stiff fluid terms, this translates to $U^u \to \infty$ and $U^v\to 0$ as $v\to 0$.
\end{enumerate}

\subsection{Scalar field perturbation of Reissner-Nordstr\"{o}m as an eligible spacetime}
We finalise this section with the motivating example for the class of spacetimes considered in this paper - namely spherically symmetric scalar field perturbations of the subextremal Reissner-Nordstr\"{o}m black hole solution. The following proposition, proved in Section \ref{EMSf_chapter}, shows that the Theorems \ref{master_theorem} and \ref{masterTheorem_2} hold on a nonempty (and in fact wide) class of spacetimes.  
\begin{prop} \label{EMSf_satisfies_assumptions}
 There exists a codimension-0 submanifold $(\mathcal{R}, g)$ of the spherically symmetric black hole spacetime arising from small generic spherically symmetric perturbations of subextremal Reissner-Nordstr\"{o}m initial data under the Einstein--Maxwell--scalar field system, described in \cite{Luk_Oh, Dafermos2012}, which satisfies Assumptions \ref{assumptions_on_L1_norms}-\ref{instability_assumptions}. 
\end{prop}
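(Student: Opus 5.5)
We will take the spacetime constructed by Luk--Oh \cite{Luk_Oh} from small, generic, spherically symmetric perturbations of subextremal Reissner--Nordstr\"{o}m data. We will work in the double null gauge they use near the Cauchy horizon, which is regular at the horizon; in it the horizon is $\{v=0\}$, the metric is $-\Omega^2 dudv + r^2 d\Omega_2^2$, and we set $e^{2\omega}=\Omega^2$. We then cut out the region $\mathcal{Q_R}=\{uv<u_s v_\kappa\}$, with $u_s<0$ and $v_\kappa<0$ small enough that it sits inside the part of the black hole interior where the estimates of Theorem 5.1 in \cite{Luk_Oh} hold; there the solution is close to Reissner--Nordstr\"{o}m. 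The General Assumptions \ref{general_assumptions} then follow from two facts established in \cite{Luk_Oh}: the solution is smooth in the interior, and $\omega$ and $r$ extend continuously to $\{v=0\}$ with $r$ bounded below by a positive constant, close to $r_-$.

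Next we check the Stability Assumptions \ref{assumptions_on_L1_norms}. The bound (\ref{geom_bound_as1}) follows from the $C^0$ closeness of $\Omega^2$ and $r$ to their Reissner--Nordstr\"{o}m values on the truncated region; if needed, we first shrink the region so that it is bounded away from the bifurcation sphere, where $\Omega^2$ degenerates in the chosen gauge. The bounds (\ref{geom_bound_as2})--(\ref{geom_bound_as3}) are essentially the $L^1$ estimates along null lines, $\int |\partial_v r|\,dv + \int |\partial_v \log\Omega|\,dv \lesssim 1$ together with the corresponding $u$ statements, which \cite{Luk_Oh} derive from the weighted (Price-law) decay estimates near the Cauchy horizon. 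For (\ref{geom_bound_as4}) we use the wave equation for $\log\Omega$ in double null form:
\[
\partial_u\partial_v \log\Omega = -\partial_u\phi\,\partial_v\phi + \frac{\Omega^2}{4r^2} + \frac{\partial_u r\,\partial_v r}{r^2} - \frac{\Omega^2 e^2}{2 r^4}.
\]
Integrating in $v$, each term is controlled by an $L^\infty$ quantity times an $L^1_v$ quantity: $\partial_u\phi$ and $\partial_u r$ are bounded, and $\partial_v\phi$, $\partial_v r$ are in $L^1_v$. The two terms without derivatives contribute $O(|v_2-v_1|)$. Together these give exactly the form $C_g(1+|v_2-v_1|)$.

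Then we check the Instability Assumptions \ref{instability_assumptions}. The blow-up $r_{,v}\to-\infty$ along every outgoing null line, for generic data, is the content of Theorem 5.5 of \cite{Luk_Oh}, stated there in the regular gauge. The uniform bound $\norm{r_{,u}}_{C(\mathcal{R})}<\infty$ comes from the stability estimates. For the trapped-strip condition we argue as follows. Closeness to Reissner--Nordstr\"{o}m near the horizon gives $r_{,u}<0$, since the Reissner--Nordstr\"{o}m interior is trapped there. Combining this with the blow-up of $r_{,v}$, together with the continuity of $r_{,v}$ in $u$ away from $v=0$ and the Raychaudhuri-type monotonicity $\partial_v(\Omega^{-2}\partial_v r)\le 0$, we will get $r_{,v}<0$ on a uniform $u$-strip $[u_0-\varepsilon,u_0+\varepsilon]\times[v_0,0)$. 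This monotonicity means that once $r_{,v}$ is negative, it stays negative to the future in $v$.

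The main obstacle is the bookkeeping in translating between gauges. \cite{Luk_Oh} work in an Eddington--Finkelstein-type gauge and in renormalised coordinates, and their estimates carry weights in $|u|$ and $|v|$. We must check that, after the coordinate change to the chart regular at the horizon (in which $\{v=0\}$ is at finite coordinate distance), the weighted bounds turn into unweighted $L^1$ bounds uniform in $u_i$, including as $u\to-\infty$. The step where we expect this to be delicate is the $u\to-\infty$ end, near the event horizon side. If the bounds fail to be uniform there, we will shrink $\mathcal{R}$ via the choice of $u_s$ and $v_\kappa$; the hyperbola $uv=u_sv_\kappa$ keeps $|v|$ small as $u\to-\infty$, so that only the late-time, near-horizon region, where the decay estimates are strongest, is used.
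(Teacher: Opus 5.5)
Your treatment of Assumptions~\ref{general_assumptions} and \ref{assumptions_on_L1_norms}, of $r_{,v}\to-\infty$, of $\norm{r_{,u}}_{C(\mathcal{R})}<\infty$, and of $r_{,v}<0$ on a strip follows the paper. The $\omega_{,uv}$ bound is exactly Lemma~\ref{estimate_dw_dudv}. Using $C^0$-closeness instead of the paper's compactness-plus-monotonicity argument for $\inf r>0$ is fine, given smallness or after shrinking the region.

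\textbf{The gap.} It is the other half of the trapped-strip condition: your claim that closeness to Reissner--Nordstr\"{o}m gives $r_{,u}<0$ near the horizon does not hold. In the gauge regular at $\{v=0\}$, $\partial_u r_{RN}=-\frac{1}{2\kappa_- u}\big(1-\frac{2M}{r_{RN}}+\frac{q^2}{r_{RN}^2}\big)$ vanishes identically on $\{v=0\}$, because $r_{RN}\equiv r_-$ along the Reissner--Nordstr\"{o}m Cauchy horizon; near it, $\partial_u r_{RN}=O(|v|)$. By contrast, \eqref{stability_u_new} only gives $|\partial_u(r-r_{RN})|\leq C|\ln^{-s}(-2\kappa_-u)/u|$, which does not tend to zero as $v\to 0$. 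So on $[u_0-\varepsilon,u_0+\varepsilon]\times[v_0,0)$ the error dominates the reference value, and $C^1$-closeness says nothing about the sign of $r_{,u}$. In Eddington--Finkelstein terms, $\partial_{\tilde u}r_{RN}$ decays exponentially in $\tilde u+\tilde v$, while the error decays only like $|\tilde u|^{-s}$.

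\textbf{The fix.} You need the ingoing Raychaudhuri equation \eqref{constraint_eqn2}, $\partial_u(e^{-2\omega}r_{,u})=-re^{-2\omega}\phi_{,u}^2\leq 0$. It propagates $r_{,u}<0$ to larger $u$ along each line $\{v=\mathrm{const}\}$, so you also need a place where the sign is actually known.
\begin{itemize}
\item The paper (Lemma~\ref{choose_domain_lemma}) takes the sign from the future-admissibility of Luk--Oh's Cauchy data ($\underline{L}r<0$ on $\{\rho\geq\rho_1\}\subset\Sigma_0$) together with Luk--Oh's Lemma~A.1, and obtains $\partial_u r<0$ on all of $\{v>v_*\}$.
\item Alternatively, you could start on $\Sigma$. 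There $r_{RN}\equiv r_{RN,\Sigma}>r_-$, so $|\partial_u r_{RN}|\gtrsim 1/|u|$ does dominate the $\ln^{-s}(-2\kappa_-u)/|u|$ error once $|u_s|$ is large with $u_s v_\kappa$ held fixed.
\end{itemize}
You already use the outgoing Raychaudhuri equation for $r_{,v}$; you need its ingoing counterpart here.

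\textbf{A smaller point about your ``main obstacle''.} The $\ln^{-s}$-weighted bounds on the differences integrate to $L^1$ bounds that are uniform all the way to $u=-\infty$, because $s>1$. The delicate terms are therefore the Reissner--Nordstr\"{o}m background ones, and shrinking the region does not help with them. For instance, $|\omega_{RN,u}|\sim 1/|u|$ wherever $r_{RN}$ is away from $r_-$. Integrating the crude bound $|\omega_{RN,u}|\lesssim 1/|u|$ over $[u_sv_\kappa/v,u_s]$ gives $\log(v_\kappa/v)$, which diverges as $v\to 0$ for every choice of $u_s,v_\kappa$. What works is that $\omega_{RN,u}$ has a sign and $\omega_{RN}$ is a bounded function of $r_{RN}\in(r_-,r_{RN,\Sigma}]$ on the region. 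The fundamental theorem of calculus then gives a uniform bound (Lemma~\ref{finite_omegaRN_integral}). The same sign-plus-boundedness structure handles $r_{RN,u}$, $r_{RN,v}$ and $\omega_{RN,v}$.
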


 \section{Spherical Dust and the Jacobi Equation near the WNS}\label{dust_section}
 This Section proves Theorem \ref{master_theorem} in full detail.
\subsection{Estimates for the geodesic velocity at the Cauchy horizon\label{estimates_for_geodesic_velocities}} 
This section begins with showing that the components of the velocity vector of radial future-directed timelike geodesics remain bounded in $\mathcal{R}$. Specifically, they remain bounded at the weak null singularity, if the geodesic in question extends that far.   
 \begin{thm} \label{main_thm_sec2} Let $\gamma: (0, T) \to \mathcal{R}, \mathcal{\tau}\to T$ be a future-directed radial timelike geodesic. Then the components of the geodesic velocity $\dot{\gamma}^\mu = d\gamma^\mu/ds$ in the $(u,v)$ double null coordinate chart described in Section \ref{preliminaries} remain bounded as $\tau \to T$. More precisely, there exists a $C>0$, depending only on the constant $C_g$ introduced in Assumptions \ref{assumptions_on_L1_norms} such that
 \begin{align}
     \sup_{0<\tau<T} \dot{\gamma}^\mu(\tau)\leq C \dot{\gamma}^\mu(0)
 \end{align}
 \begin{proof}
Let $\gamma$ be a future-directed radial timelike geodesic, so that $\dot{\theta} =\dot{\varphi} = 0$. Then, referring to the calculations in Appendix \ref{appendix_riemann_components}, the relevant Christoffel symbols are
        \begin{align}
        \Gamma^u_{uu} &= 2\omega_{,u} \\
        \Gamma^v_{vv} &= 2\omega_{,v} 
    \end{align}
    and $\Gamma^u_{uv} = \Gamma^v_{uv} = \Gamma^{u}_{AB} = \Gamma^v_{AB} = 0$ for $A,B\in \{ \theta, \varphi\}$. We parameterise the geodesic by the proper time so that the geodesic equation reduces to the following  system: 
    \begin{align}
        &\Ddot{\gamma}^u + 2 \partial_u \omega(u,v) (\dot{\gamma}^u)^2 = 0 \label{geod_eqn_u}\\
        &\Ddot{\gamma}^v + 2 \partial_v\omega(u,v)  (\dot{\gamma}^v)^2 = 0 \label{geod_eqn_v}\\
         &   - e^{2\omega} \dot{\gamma}^u \dot{\gamma}^v = -1 \label{normalization}
        \end{align}
        where the last equation is the normalization condition, $g(\dot{\gamma}, \dot{\gamma})=-1$. 
  Since $\gamma$ is future-directed, the components of the tangent to the timelike geodesics satisfy $\dot{\gamma}^u, \dot{\gamma}^v>0$. As $\omega$  extends continuously to $\{ v=0\}$, if $\gamma$ goes to $\mathcal{CH}^+$, the product of the velocities of the timelike curves remains bounded on the weak null singularity on $\mathcal{CH}^+$. Hence,
         \begin{align}
            \frac{\Ddot{\gamma}^u}{\dot{\gamma}^u} &= \frac{d}{d\tau} (\ln(\dot{\gamma}^u)) = - 2\omega_{,u} \dot{\gamma}^u\\
            \frac{\Ddot{\gamma}^v}{\dot{\gamma}^v} &= \frac{d}{d\tau} (\ln(\dot{\gamma}^v)) = - 2\omega_{,v} \dot{\gamma}^v
        \end{align} Integrating along the affinely parameterised geodesic, using that $\dot{\gamma}^u, \dot{\gamma}^v>0$ on $\mathcal{Q_R}$, we have: 
        \begin{align}
            \dot{\gamma}^u(\tau) &= \dot{\gamma}^u(0)\exp\Bigg(-2\int_{ 0}^\tau \omega_{,u}(u(\tau'),v(\tau'))\dot{\gamma}^u(\tau') d\tau'\Bigg) = \dot{\gamma}^u(0) \exp\Bigg(-2\int_{ u(0)}^{u(\tau)} \omega_{,u}(u,v(u))du \Bigg) \leq \nonumber\\
            & \leq \dot{\gamma}^u(0) \exp \Bigg(2\sup_{v\in(v(0),v(\tau))}\int_{u(0)}^{u(\tau)}|\omega_{,u}(u,v)|du \Bigg)\\
            & \leq \dot{\gamma}^u(0) e^{2C_g} \label{ubound}\\
            \dot{\gamma}^v(\tau) &= \dot{\gamma}^v(0) \exp\Bigg(-2\int_{0}^\tau \omega_{,v}(u(\tau'),v(\tau'))\dot{\gamma}^v(\tau') d\tau'\Bigg) =\dot{\gamma}^v(0) \exp\Bigg(-2\int_{v(0)}^{v(\tau)} \omega_{,v}(u(v),v)dv \Bigg) \leq \nonumber\\
            &\leq \dot{\gamma}^v(0) \exp \Bigg(2\sup_{u\in(u(0),u(\tau))}\int_{v(0)}^{v(\tau)}|\omega_{,v}(u,v)|dv \Bigg)\leq \dot{\gamma}^v(0) e^{2C_g}
            \label{vbound}
        \end{align}
        where we used the global geometric assumptions (\ref{geom_bound_as1} - \ref{geom_bound_as3}). We conclude that radial timelike geodesic velocities remain bounded across the Cauchy horizon. 
 \end{proof}
       \end{thm}
    The next step is to show that the geodesic velocities are bounded away from zero. 
   \begin{prop} \label{lower_bds_for_velocities} The double null coordinate components $\dot{\gamma}^u$ and $\dot{\gamma}^v$ of future-directed normalised radial timelike geodesics in $\mathcal{R}$ remain bounded away from zero. More precisely, there exists a $c>0$, depending only on the constant $C_g$ introduced in Assumptions \ref{assumptions_on_L1_norms} such that
   \begin{align}
       \inf_{0<\tau<T} \dot{\gamma}^u(\tau)\geq \frac{c}{\dot{\gamma}^v(0)} \ \ \ \text{   and   } \ \ \  \\ 
       \inf_{0<\tau<T} \dot{\gamma}^v(\tau)\geq \frac{c}{\dot{\gamma}^u(0)}
   \end{align}
   \begin{proof}
               By normalisation, $\dot{\gamma}^u \dot{\gamma}^v = e^{-2\omega}$. Employing equation (\ref{vbound}) and Assumptions (\ref{geom_bound_as1}-\ref{geom_bound_as3}), we obtain
               \begin{align}
                   \dot{\gamma}^u(\tau) = \frac{e^{-2\omega}}{\dot{\gamma}^v(\tau)} \geq \frac{1}{\norm{e^{2\omega}}_{C(\mathcal{R})} \dot{\gamma}^v(0) e^{2C_g}} \geq \frac{1}{C_g e^{2C_g} \dot{\gamma}^v(0)} 
                \end{align}
        Identically, employing equation (\ref{ubound}) and again Assumptions (\ref{geom_bound_as1}-\ref{geom_bound_as3}),
        \begin{align}
            \dot{\gamma}^v(\tau) = \frac{e^{-2\omega}}{\dot{\gamma}^u(\tau)} \geq \frac{1}{\norm{e^{2\omega}}_{C(\mathcal{R})} \dot{\gamma}^u(0) e^{2C_g}} \geq \frac{1}{C_g e^{2C_g} \dot{\gamma}^u(0) } 
        \end{align}
   \end{proof}
   \end{prop}
   The next Corollary combines the result for the lower bound on the geodesic velocities with the $L^1$ bound on the double null derivative $\omega_{,uv}$ from Assumptions \ref{assumptions_on_L1_norms}. It is used in the analysis of the Jacobi equation, Theorem \ref{bounds_for_jacobi_field_thm}.
   \begin{cor}
If $\gamma:(0,T)\to \mathcal{R}$ be a radial future-directed timelike geodesic in $\mathcal{R}$, then in the $(u,v)$ double null coordinate system we have
    \begin{align}
        \norm{\omega_{,uv}}_{L^1(\gamma)}\leq C_g^2 e^{2C_g} (1+|\lim_{\tau\to T}v(\gamma(\tau)) - \lim_{\tau\to 0} v(\gamma(\tau))|) \dot{\gamma}^u(0) \label{geom_bound_4}
    \end{align}
    where in the last inequality $\dot{\gamma}^u(0)$ denotes the $u$-component of the initial velocity of the geodesic $\gamma$.
    \begin{proof}
        Combine (\ref{geom_bound_as4}) from the global geometric assumptions with the result for the lower bound of $\dot{\gamma}^v(\tau)$ from Proposition \ref{lower_bds_for_velocities}. 
    \end{proof}
    \begin{rem}
        We can uniformly bound what is in the parentheses in (\ref{geom_bound_4}) by $1+|v_\kappa|$, hence 
        \begin{align}
            \norm{\omega_{,uv}}_{L^1(\gamma)}\leq C_g^2 e^{2C_g}(1+|v_{\kappa}|)\dot{\gamma}^u(0)
        \end{align}
    \end{rem}
         
   \end{cor}
   \subsubsection{$T(\gamma)$ is finite}
In order to have a well-defined and invertible geodesic variation map $\Gamma$, we want to ensure that any geodesic in our variation reaches the Cauchy horizon in finite proper time.
\begin{lem}\label{bounding_T_lemma}
    Let $p \in \mathcal{R}$ and $\gamma:(0,T)\to \mathcal{R}$ be a future-directed future-inextendible radial timelike geodesic with $\gamma(0)=p$. Then the proper time along $\gamma$ is bounded by
    \begin{align}
        T\leq C_g e^{C_g}|v(p)|\dot{\gamma}^u(0)
    \end{align}
    where $C_g$ is the constant from Assumptions \ref{assumptions_on_L1_norms}.
    
    \begin{proof}
          Assume $\gamma(0)=p$. By Proposition \ref{lower_bds_for_velocities}, the proper time along $\gamma$ is bounded as follows
        \begin{align}
            T = \int_{0}^T d\tau = \int_{v(p)}^{\lim_{\tau\to T }v(\tau)}  \frac{dv}{\dot{\gamma}^v(\tau(v))}  \leq C_g e^{2C_g} |v(p)| \dot{\gamma}^u(0)
        \end{align}
        where we also used that $|v(p) - \lim_{\tau\to T}v(\gamma(\tau))|\leq |v(p)|$ since $\lim_{\tau\to T}v(\gamma(\tau))\leq 0$\footnote{The limit exists because $v$ is strictly increasing along $\gamma$ and bounded above by $0$.}.\\[5pt]
    \end{proof}
\end{lem}
\begin{rem}
    In double null coordinates, 
    Theorem \ref{estimates_for_geodesic_velocities}, Proposition \ref{lower_bds_for_velocities}, Corollary \ref{unif_bounded_omega} and Lemma \ref{bounding_T_lemma} hold identically for the components of the tangent to a future-directed timelike geodesic in $\mathcal{Q_R}$. In what follows, we will apply these results both to radial geodesics in the $3+1$-dimensional $\mathcal{R}$ and to geodesics in the $1+1$-dimensional $\mathcal{Q_R}$. 
\end{rem}
  \begin{subsection}{Existence and properties of $(n_0, N_0, C_K)$--admissible curves} 

\begin{subsubsection}{Showing existence of $\lambda$}

In the next part we construct an example of a spacelike curve segment $\lambda$ which satisfies the conditions of Definition \ref{define_admissible_curve}, hence proving existence of $(n_0, N_0, C_K)$--admissible curves.
 \begin{lem}\label{existence_of_lambda}
       There exists an $(n_0, N_0, C_K)$--admissible curve for 
       \begin{align}
           n_0 = \frac{1}{\sqrt{C_g} e^{2C_g}}; \ \ \ \ \ \ \ \  N_0=\sqrt{C_g} e^{2C_g};\ \ \ \ \ \ \ \   C_K=0
       \end{align}
       where $C_g$ is the constant from the global geometric assumptions (\ref{geom_bound_as1} - \ref{geom_bound_as3}). 
       \begin{proof}
         Let $n_0 = (\sqrt{C_g} e^{2C_g})^{-1}$ and $N_0 = \sqrt{C_g} e^{2C_g}$. Fix $u_1\in(-\infty, u_s)$ and let 
        \begin{equation}
            \mu := \min{\bigg\{v_\kappa, \frac{(u_s-u_1) n_0}{N_0 C_g e^{4 C_g}} \ , \ x_*(n_0, N_0, C_K, C_g) \bigg\}}
        \end{equation}
        where $x_* = x_*(n_0, N_0, C_K, C_g)$ is the unique positive solution of the equation $Ax_* e^{B x_*} = 1$ with $A=A(n_0, N_0, C_K, C_g)$ and $B=B(n_0, N_0, C_K, C_g)$ given by (\ref{A}-\ref{B}). \\[5pt]
        We claim that the affine normalised spacelike geodesic segment $\lambda:(-a, a)\to \mathcal{Q_R}$ with 
        \begin{align}
            &a = \frac{\mu}{{4}N_0}\\
            &\lambda^v(-a) = -{\mu}\\
            &\lambda^u(-a) = u_1\\
            &(\lambda')^v(-a) =  e^{-\omega}|_{(u_1, -\mu)}\\
            &(\lambda')^u(-a) = - e^{-\omega}|_{(u_1, -\mu)}
        \end{align}
        is an $(n_0, N_0, C_K)$--admissible curve.\\[5pt]
        First we have to show that this is a smooth spacelike curve whose image stays in $\mathcal{Q_R}$. Indeed, let $\alpha_0 = \sup_{(-a, a]}\{\alpha:\text{Image}(\lambda|_{(-a,\alpha)})\subset\mathcal{Q_R} \}$. We have to show that $\alpha_0=a$. Suppose $\alpha_0<a$. Given that $v$ is increasing along $\lambda$, we have 
          \begin{align}
            \lambda^v(\alpha_0) \leq \lambda^v(-a) + \int_{-a}^{\alpha_0} (\lambda')^v(\alpha) d\alpha \leq -\mu + \frac{2\mu}{4N_0} N_0 = - \frac{\mu}{2}<0
         \end{align}
         therefore the curve stays strictly away from the Cauchy horizon at $\alpha_0$. Given that $u$ is increasing along $\lambda$, we have that $\lambda^u(\alpha_0)<u_1<u_s$, so the curve stays strictly away from the left future null boundary of $\mathcal{Q_R}$. We also find 
         \begin{align}
           \lambda^u(\alpha_0) \lambda^v(\alpha_0) \leq \Bigg|\frac{\mu}{2}\bigg( \lambda^u(-a) + \int_{-a}^{\alpha_0}(\lambda')^u(\alpha) d\alpha\bigg)\Bigg| \leq \Bigg| \frac{\mu}{2} \bigg( u_1 +
         \frac{2\mu}{4N_0} N_0\bigg)\Bigg| \leq \frac{v_\kappa}{2}(u_1 + u_s - u_1)<v_\kappa u_s
         \end{align}
        Therefore, the curve stays strictly away from the past boundary $\Sigma$, of $\mathcal{Q_R}$. It follows that there must be an $\alpha>\alpha_0$ such that $\lambda|_{(-a,\alpha)}\subset \mathcal{Q_R}$, contradicting the assumption that $\alpha_0$ is a supremum. Therefore, $\alpha_0=a$ and admissibility condition 1 is satisfied. \\[5pt]
        By the geodesic equation (\ref{geod_eqn_u}-\ref{geod_eqn_v}), for $\alpha\in(-a, a)$ we have 
           \begin{align}
               &(\lambda')^u(\alpha) = -e^{-\omega}|_{(u_1, -\mu)} \exp\Bigg(-2\int_{u(\lambda(-a))}^{u(\lambda(\alpha)} \omega_{,u}(u, v(u)) du\Bigg)\\ 
               &(\lambda')^v(\alpha) = e^{-\omega}|_{(u_1, -\mu)}
               \exp\Bigg(-2\int_{v(\lambda(-a))}^{v(\lambda(\alpha)} \omega_{,v}(u(v) u) dv\Bigg) 
           \end{align}
           In view of the assumptions (\ref{geom_bound_as1} - \ref{geom_bound_as3}) and the fact that Image$(\lambda)\subset \mathcal{Q_R}\cap\{u\leq u_1\} $, we then have 
           \begin{align}
             &n_0=\frac{1}{\sqrt{C_g} e^{2C_g}}\leq \frac{1}{\norm{e^{\omega}}_{C(\mathcal{R})}} \exp\bigg({-2\sup_{v\in(v(\lambda(-a)), v(\lambda(a)))}\norm{\omega_{,u}(\cdot,v)}_{L^1[u(\lambda(-a)), u(\lambda(a))]}}\bigg)\leq
             -(\lambda')^u(\alpha); \\ &(\lambda')^u(\alpha) \leq \norm{e^{-\omega}}_{C(\mathcal{R})} \exp\bigg({ 2\sup_{v\in(v(\lambda(-a)), v(\lambda(a)))}\norm{\omega_{,u}(\cdot,v)}_{L^1[u(\lambda(-a)), u(\lambda(a))]}}\bigg)\leq \sqrt{C_g}e^{2C_g} =N_0;\\[8pt]
               &n_0=\frac{1}{\sqrt{C_g} e^{2C_g}}\leq \frac{1}{\norm{e^{\omega}}_{C(\mathcal{R})}} \exp\bigg( {-2\sup_{u\in(u(\lambda(-a)), u(\lambda(a)))}\norm{\omega_{,v}(u,\cdot)}_{L^1[v(\lambda(-a)), 0]}}\bigg) \leq(\lambda')^v(\alpha);  \\ & -(\lambda')^v(\alpha) \leq \norm{e^{-\omega}}_{C(\mathcal{R})} \exp\bigg( {2\sup_{u\in(u(\lambda(-a)), u(\lambda(a)))}\norm{\omega_{,v}(u,\cdot)}_{L^1[v(\lambda(-a)), 0]}}\bigg) \leq \sqrt{C_g}e^{2C_g}=N_0
           \end{align}
            This tells us that the geodesic velocity components are bounded away from null by the constants $n_0 = (\sqrt{C_g} e^{2C_g})^{-1}$ and $N_0 = \sqrt{C_g} e^{2C_g}$. Therefore condition 3 of admissibility is satisfied. \\[5pt]
             In view of the fact that $u$ is decreasing along $\lambda$, we have $u_{\text{max}}=u_1$ and 
        \begin{align}
            u_{\text{min}} = \lim_{\alpha\to a} \lambda^u(\alpha) = u_1 + \int_{-a}^a (\lambda')^u(\alpha)d\alpha \geq u_1 - N_0 \frac{2\mu}{4N_0} = u_1 - \frac{\mu}{2} > -\infty
        \end{align}
            showing that condition 2 of admissibility holds.
           Next, denote by $\nu$ be the future-directed unit normal to $\lambda$ on its image. Since $\lambda$ is a geodesic, the second curvature vanishes:
        \begin{align}
            0 = \big\langle{\mathring{\nabla}}_{\lambda'} \lambda', \nu\big\rangle = - \big\langle {\mathring{{\nabla}}}_{\lambda'} \nu , \lambda' \big\rangle
        \end{align}
        So property 4 of admissibility is trivially satisfied with $C_K=0$. Finally, as $v$ is increasing along $\lambda$, $|v_\text{min}| = |\lambda^v(-a)| = {\mu}$, hence 
        \begin{align}
        &|v_{\text{min}}| \leq \frac{ (u_s-u_{\text{max}})n_0}{N_0 C_g e^{4C_g}} \implies   \text{ condition 5 of admissibility holds, and} \nonumber\\
        & |v_{\text{min}}| \leq x_* \implies  \text{ as $x_*$ solves $Axe^{Bx}=1$, condition 6 of admissibility holds.}
        \end{align}
           \end{proof}
   \end{lem}

\end{subsubsection}

\begin{subsubsection}{Showing that geodesics based on $\lambda$ reach $\mathcal{CH}^+$}

Roughly speaking, the next lemma says that all future-inextendible future-directed radial timelike geodesics emanating from an admissible curve $\lambda$ will reach the $\overline{\mathcal{R}}\cap \mathcal{CH}^+$-portion of the Cauchy horizon rather than escaping through $\{u=u_s\}$. This result is a consequence of Condition 5 of admissibility.
\begin{lem}\label{lambda_sends_to_CH} Let $\gamma:(0,T)\to \mathcal{Q_R}$ be a future-inextendible future-directed timelike geodesic based on an $(n_0, N_0, C_K)$--admissible curve $\lambda$. Then $\lim_{\tau\to T}\gamma(\tau)\in \overline{\mathcal{R}}\cap \mathcal{CH}^+$.
\begin{proof}
    As $\gamma$ is future-directed, $\dot{\gamma}^u, \dot{\gamma}^v>0$ for all $\tau\in (0,T)$. In particular, $0>\gamma^v(\tau)>\gamma^v(0)>v_\kappa$ and $u_s>\gamma^u(\tau)>\gamma^u(0)>-\infty$ for every $\tau$. Since $\gamma$ is inextendible and radial, we have 
       \begin{align}
            &\lim_{\tau \to T} \gamma^u(\tau) = u_s\\
            & \text{or } \lim_{\tau\to T} \gamma^v(\tau) = 0
       \end{align}
        or both. We will show that the first does not happen, so the second must hold. Note that for any $0<\tau<T$ 
        \begin{align}
            \gamma^u(\tau) - \gamma^u(0)&=  \int_0^\tau \dot{\gamma}^u(s)ds \leq T \dot{\gamma}^u(0) \exp{\bigg({2\sup_{v\in(v(0), v(\tau))}\norm{\omega_{,u}(\cdot,v)}_{L^1[u(0), u(\tau)]}}\bigg)} \leq  T \dot{\gamma}^u(0) e^{2 C_g} \leq \nonumber\\
            &  \leq  e^{2C_g} |v_{\text{min}}| \sqrt{\frac{C_g N_0}{n_0}} \times \sqrt{\frac{C_g N_0}{n_0}} \times e^{2 C_g}  =\frac{C_g N_0}{n_0} e^{4C_g}|v_{\text{min}}|<u_s - u_\text{max}
        \label{sends_to_CH_1}
        \end{align}

        Where we applied Theorem \ref{estimates_for_geodesic_velocities} to bound the integrand by $\dot{\gamma}^u(0)e^{2C_g}$ and Lemma \ref{bounding_T_lemma} to bound $T$. We also the fact that the $u$-component of the normal to $\lambda$ is given by $\dot{\gamma}^u(0) = \nu^u = e^{-\omega} \sqrt{(\lambda')^u/(\lambda')^v}\leq \sqrt{C_g N_0/n_0}$ where the inequality is a consequence of admissibility condition 3 and equation (\ref{geom_bound_as1}) from Assumptions \ref{assumptions_on_L1_norms}. In the last line, we applied admissibility condition 5.
         Taking the first and last expression in (\ref{sends_to_CH_1}), we immediately get
        \begin{align}
            \gamma^u(\tau)<u_s -u_{\text{max}} + \gamma^u(0) \leq u_s
        \end{align}
        Because $\gamma^u(0) \leq  u_{\text{max}}$.
\end{proof}
\end{lem}
\end{subsubsection}
\end{subsection}
    
 \subsection{The Variational Field of Timelike Geodesic Variations \label{regularity_of_timelike_geodesic_variations}}
  In this section we prove an estimate for the geodesic deviation vector field along a future-directed normalised timelike geodesic $\gamma$ in $\mathcal{Q_R}$, for geodesics based on an admissible curve. It is a standard result in differential geometry that the variation vector $J$ field along a variation of geodesics satisfies the Jacobi equation,
  \begin{align}
      \mathring{\nabla}_{\dot{\gamma}}\mathring{\nabla}_{\dot{\gamma}} J = \mathring{R}(\dot{\gamma}, J) \dot{\gamma} \label{jacobi_eqn_general}
  \end{align}
  where $\mathring{R}$ denotes the Riemann tensor of $g_{\mathcal{Q_R}}$
\subsubsection{Jacobi equation on $\mathcal{Q}_{\mathcal{R}}$ in orthonormal basis}
 To simplify the analysis in 1+1 dimensions, we rewrite the Jacobi equation along $\gamma$ in a parallelly transported orthonormal basis (ONB). Let $\tau\in(0, T)$ and define the following  basis of $T_{\gamma(0)}\mathcal{Q_R}$.
\begin{equation}
(e_0, e_1) = (\dot{\gamma}^u \partial_u + \dot{\gamma}^v \partial_v, \dot{\gamma}^u \partial_u - \dot{\gamma}^v \partial_v)  \label{ON_BASIS}
\end{equation}
In this basis $e_0 = \dot{\gamma}(0)$ and $g_{\mathcal{Q_R}}\to $ diag$(-1,1)$, the components of the Minkowski metric. Stipulating that $\nabla_{\dot{\gamma}}e_A = \nabla_{e_0} e_A = 0$ for $A=0,1$ everywhere along $\gamma$ determines orthonormal bases for the tangent spaces at every point on the image of $\gamma$, such that the resulting basis vectors are parallelly transported along $\gamma$. Expanding the Jacobi equation in the parallelly transported orthonormal basis for $T_{\gamma(\tau)}\mathcal{Q_R}$, assuming that $\dot{\gamma}^\varphi = \dot{\gamma}^\theta = 0$ we get {for $A \in \{0,1\}$:}
\begin{align}
    \mathring{\nabla}_{e_0} \mathring{\nabla}_{e_0} J^A(\tau) &= \frac{d^2 J^A}{d\tau^2}(\tau) = \big[\mathring{R}(e_0, J)e_0\big]^A = \mathring{R}^A_{\ BCD} e_0^B e_0^C J^D = \mathring{R}^A_{\ 00D} J^D \label{JAD}
\end{align}
 Let $(\Lambda^A_{\mu})$ denote the $2\times2$ basis transformation matrix from the $(\partial_\mu)$ basis to the $(e_A)$ basis at some arbitrary point on the curve, and let $(\big(\Lambda^{-1})^{\nu}_{D}\big)$ denote its inverse. Clearly
\begin{align}    (\Lambda^{A}_{\mu}) = 
    \begin{pmatrix}
    \frac{1}{2\dot{\gamma}^u} & \frac{1}{2\dot{\gamma}^u}  \\
    \frac{1}{2\dot{\gamma}^v}  & -\frac{1}{2\dot{\gamma}^v}  \\
    \end{pmatrix}; \ \ 
    \big((\Lambda^{-1})^{\nu}_D\big) = \begin{pmatrix}
    \dot{\gamma}^u & \dot{\gamma}^v\\
    \dot{\gamma}^u & -\dot{\gamma}^v\\
    \end{pmatrix}; \hspace{10pt}
\label{ON_basis_relations}
\end{align}
Hence, after a straightforward calculation transforming the Riemann tensor components given in Appendix \ref{appendix_riemann_components} into the ONB $(e_A)$, we obtain the following equations for the radial components of the Jacobi field:
\begin{align}  \Ddot{J}^0(\tau)=2\omega_{,uv}e^{-2\omega} J^0(\tau) \label{JAD_0} \\
    \Ddot{J}^1(\tau) = 2 \omega_{,uv}e^{-2\omega} J^1(\tau) \label{JAD_1}
\end{align}
So the system becomes uncoupled and very simple.

\subsubsection{Bounds for the Jacobi field}
 We are ready to state and prove the main result of this subsection which concerns the variational field of a geodesic variation based on an admissible curve in $\mathcal{Q_R}$. 
\begin{thm}\label{bounds_for_jacobi_field_thm}
 Let $\lambda$ be a $(n_0, N_0, C_K)$--admissible curve, $p\in$Image$(\lambda)$ and let  $\lambda'$ denote the tangent to $\lambda$ at $p$. Let $n_0, N_0$ be the bounds on the components of $\lambda'$ as defined in (\ref{condd5}) and let $C_K$ be the bound on the second fundamental form of $\lambda$ as defined in (\ref{bound_on_curvature_of_lambda}). Consider the initial value problem for the Jacobi equation (\ref{jacobi_eqn_general}) along the normalised future-directed timelike geodesic $\gamma: (0, T) \to \mathcal{Q_R}$ such that $\gamma(0)=p$ with $\gamma^v(\tau)\to 0$ as $\tau\to T$. Let $(e_A)$ be a parallelly transported orthonormal basis along $\gamma$, such that $\dot{\gamma} = e_0$ and $e_1 = \langle\lambda',\lambda'\rangle^{-1/2}\lambda'$ at $p$. Assume that the following initial data is supplied for the Jacobi equation:
\begin{align}
    &J|_{\tau=0}= J^1(0) e_1 = \lambda' \neq 0 \label{ICs_for_jacobi1}\\
    &\mathring{\nabla}_{\dot{\gamma}} J|_{\tau=0} \equiv K|_{\tau=0} = \langle e_1, \mathring{\nabla}_{\lambda'} \nu\rangle e_1 = \frac{\langle \mathring{\nabla}_{\lambda'}\nu, \lambda'\rangle}{\sqrt{\langle \lambda' , \lambda' \rangle}}   e_1 \label{ICs_for_jacobi2}
\end{align} 
 Here $J^1(\tau), K^1(\tau)$ are the transverse components of $J$ and $K={\mathring{\nabla}}_{\dot{\gamma}} J$ with respect to the parallelly transported ON basis. Then the solution to the initial value problem (\ref{jacobi_eqn_general}), (\ref{ICs_for_jacobi1} - \ref{ICs_for_jacobi2}) satisfies
\begin{align}
    n_\perp \leq |J^1(\tau)|\leq N_\perp \ \forall \ 0 < \tau < T \label{J_bounds_statement}
\end{align}
 for some constants $0<n_\perp\leq N_\perp$, depending on $n_0, N_0, C_K$, on $C_g$. 

    \begin{rems}
    \begin{enumerate}
        \item Notice that the initial value for the Jacobi field (\ref{ICs_for_jacobi1}) is given by the tangent to the admissible curve $\lambda$, while the initial value for the normal derivative to the Jacobi field (\ref{ICs_for_jacobi2}) is given by the (only nonvanishing component of) the second fundamental form of $\lambda$ as a 1-dimensional submanifold of $\mathcal{Q}_\mathcal{R}$.  This choice is consistent with the fact that the restriction of the geodesic velocity field to $\{\tau=0\}$ coincides with the normal to $\lambda$. In the framework of the full geodesic variation based on $\lambda$, $\gamma$ is a geodesic $\Gamma(\alpha, \cdot)$ for some $\alpha\in(-a,a)$, and $J$ is the variation field along this geodesic. This means that the commutator of $J$ and $\dot{\gamma}$ vanishes, so $\mathring{\nabla}_{\dot{\gamma}} J = \mathring{\nabla}_{J}\dot{\gamma}$. Thinking about the normal $\nu$ as a vector field defined on the image of $\lambda$, at proper time zero this looks like $\mathring{\nabla}_\nu \lambda' = \mathring{\nabla}_{\lambda'}\nu$.
        Therefore, the choice of the admissible curve $\lambda$ fully determines the initial data for the Jacobi equation.
        \item In the $3+1$-dimensional Lorentzian manifold $(\mathcal{R}, g)$, this result is equivalent to showing that the transverse component of the Jacobi field along a radial future-directed timelike geodesic is bounded and bounded away from zero, provided we start with nonzero initial transverse component $J^1$, and sufficiently close to the Cauchy horizon as per admissibility condition 6 for $\lambda$. This is because along radial geodesics the transverse (radial, normal to $\dot{\gamma}$) and parallel (radial, colinear with $\dot{\gamma}$) components of the Jacobi equation decouple when expressed in basis $(e_0, e_1, e_2, e_3)$ where $(e_0, e_1)$ are given by (\ref{ON_BASIS}) and $(e_2,e_3)$ is a basis for $T_{\gamma(0)}S^2$, parallelly transported along $\gamma$. Furthermore, along radial geodesics the angular components $J^2$ and $J^3$ of the Jacobi equation decouple from the radial components $J^0$ and $J^1$ when the equation is expressed in the basis $(e_0, e_1, e_2, e_3)$. Therefore the time evolution of $J^2$ and $J^3$ (equivalently, $J^\theta$ and $J^\varphi$) does not affect the time evolution of $J^1$.  
        \item  The parallel component is irrelevant when it comes to answering the questions whether conjugate points develop between points on a geodesic variation. Changing it only changes the parameterisation of a given family of geodesics.
         \end{enumerate}
         \begin{figure}[h]
    \centering
    \includegraphics[width=0.4\linewidth]{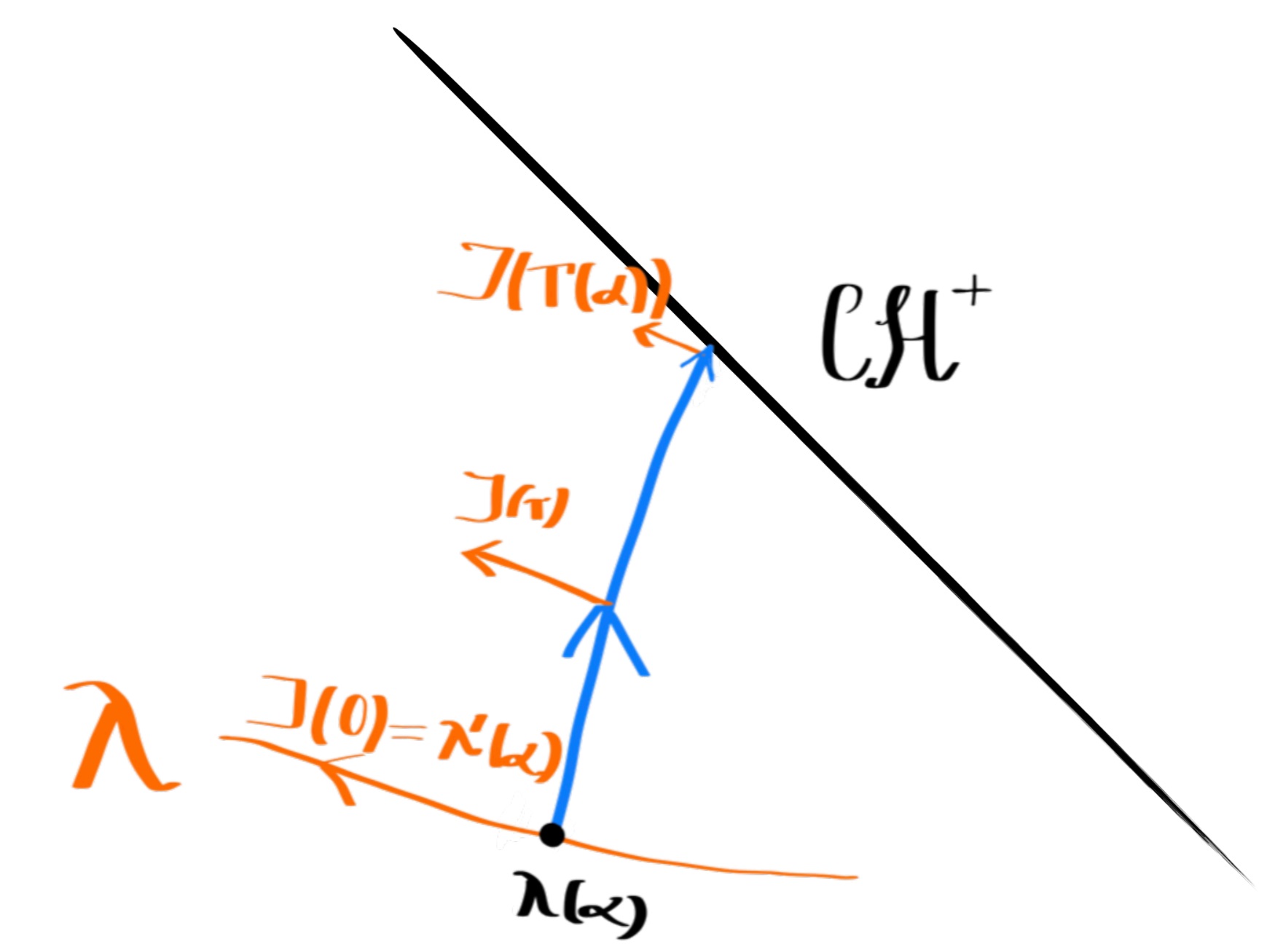}
    \caption{The Jacobi field along a future-directed inextendible radial timelike geodesic based on an admissible curve $\lambda$.}
    \end{figure}

    \end{rems}
    \begin{proof} Since the system is decoupled and $J^{0}(0) = {K}^{0}(0) = 0$, by uniqueness of the solution of the Jacobi equation, no parallel component develops i.e. $J^0 \equiv 0$ along $\gamma$.
        In this case we're left with a scalar linear ODE
        \begin{equation}
            \Ddot{J}^1 = 2 e^{-2\omega} \omega_{,uv} J^1 \label{2nd_order_lner_ode}
        \end{equation}
        We turn this into a first order system by setting $K^1 = \dot{J^1}$, so
        \begin{align}
            \frac{d}{d\tau}\begin{pmatrix}
                J^1 \\ K^1
            \end{pmatrix} = \begin{pmatrix}
                0 & 1 \\
                2\omega_{,uv} e^{-2\omega} & 0
            \end{pmatrix} \begin{pmatrix}
                J^1 \\ K^1
            \end{pmatrix} \label{Radial_Jacobi_System}
        \end{align}
        If $\norm{\cdot}_2$ denotes the Euclidean norm on $\mathbb{R}^2$ then
        \begin{align}
            &\frac{d}{d\tau}\begin{vmatrix}\begin{vmatrix}
                J^1 \\ K^1
            \end{vmatrix}\end{vmatrix}_2 = \frac{d}{d\tau} \sqrt{(J^1)^2(\tau) + (K^1)^2(\tau)} = \frac{J^1(\tau)\dot{J}^1(\tau) + K^1(\tau) \dot{K}^1(\tau)}{\sqrt{(J^1(\tau))^2 + (K^1(\tau))^2}} = \nonumber \\[5pt]
            & = \frac{J^1(\tau) K^1(\tau) + 2\omega_{,uv}e^{-2\omega}J^1(\tau) K^1(\tau)}{\begin{vmatrix}
                \begin{vmatrix}
                    J^1 \\ K^1
                \end{vmatrix}
            \end{vmatrix}_2}
        \end{align}
        where we used equation (\ref{Radial_Jacobi_System}) in the last equality. Rearranging and applying Young's inequality yields
        \begin{align}
            &\begin{vmatrix}
                \begin{vmatrix}
                    J^1 \\ K^1
                \end{vmatrix}
            \end{vmatrix}_2 \frac{d}{d\tau} \Bigg( \begin{vmatrix}
                \begin{vmatrix}
                    J^1 \\ K^1
                \end{vmatrix}
            \end{vmatrix}_2\Bigg) = \frac{1}{2}\frac{d}{d\tau} \Bigg( \begin{vmatrix}
                \begin{vmatrix}
                    J^1 \\ K^1
                \end{vmatrix}
            \end{vmatrix}^2_2\Bigg) \nonumber\\
            &= J^1 K^1 (1+2\omega_{,uv}e^{-2\omega}) \leq \frac{1}{2}\big((J^1)^2 + (K^1)^2\big)(1+ 2|\omega_{,uv}| e^{-2\omega}) \implies \nonumber \\
            &\frac{d}{d\tau}\big((J^1)^2 + (K^1)^2\big) \leq \big((J^1)^2 + (K^1)^2\big)(1+2|\omega_{,uv}|e^{-2\omega}) \label{some_est}
        \end{align}
        By uniqueness of the solution to the Jacobi equation, assuming nonzero initial data ensures that $\big( (J^1)^2 + (K^1)^2\big)$ is nonzero everywhere on $\gamma$. We can apply Gr\"{o}nwall's inequality after equation (\ref{some_est}), giving
        \begin{align}
            &(J^1(\tau))^2 + (K^1(\tau))^2 \leq \big( (J^1(0))^2 + (K^1(0))^2\big)\exp{\Bigg(\int_{0, \gamma}^ \tau[1+ 2|\omega_{,uv}| e^{-2\omega}]ds\Bigg)} \leq \nonumber \\
            \leq &\big((J^1(0))^2 + (K^1(0))^2\big)\exp{\big(T + 2 \norm{e^{-2\omega}}_{C(\gamma)}\norm{\omega_{,uv}}_{L^1(\gamma)}\big)} \leq \nonumber \\
            &\leq J^1(0)^2 \Bigg(1+\bigg(\frac{K^1(0)}{J^1(0)}\bigg)^2\Bigg) \exp{\big(T + 2 C_g^{3}e^{2C_g}(1+|v_\text{min}|)\dot{\gamma}^u(0)\big)} =\nonumber\\
            &=\langle \lambda',\lambda' \rangle \Bigg(1+ \bigg(\frac{\langle \lambda',\nabla_{\lambda'} \nu \rangle}{\langle\lambda', \lambda'\rangle}\bigg)^2\Bigg) \exp{\Bigg[T + 2 C_g^{3} e^{2C_g}(1+|v_\text{min}|) e^{-\omega}\sqrt{\frac{(\lambda')^u}{(\lambda')^v}}\Bigg]} 
            \label{upper_j_bound_1}
        \end{align}
        In the second inequality we used that $\norm{e^{2\omega}}_{C(\mathcal{R})}\leq C_g$ and $\norm{\omega_{,uv}}_{L^1(\gamma)}\leq C_g^2 e^{2C_g}(1+|v_\text{min}|)\dot{\gamma}^u(0)$. Next, we used that the initial $u$-velocity of the geodesic $\gamma$ is the $u$-component of the unit normal $\nu$ to $\lambda$ and that the $u$- component of $\nu$ can be expressed in terms of the components of $\lambda'$ as
        \begin{align}
            \nu^u = \dot{\gamma}^u(0)=e^{-\omega} \sqrt{\frac{(\lambda')^u}{(\lambda')^v}}
        \end{align}
        We also used $K^1(0) = {\langle \lambda' , \nabla_{\lambda'} \nu \rangle}/\sqrt{\langle\lambda',\lambda'\rangle}$\footnote{Recall that $K^1(0)$ geometrically corresponds to the only nonvanishing component of the second fundamental form of the 1-dimensional submanifold $\lambda$.}. We also used that the initial Jacobi field equals the tangent to $\lambda$, hence $g(\lambda', \lambda') = J^1(0)^2 =e^{2\omega}(\lambda')^u (\lambda')^v$.
        By admissibility condition 3 of $\lambda$, $(\lambda')^u<N_0$ and $(\lambda')^v< N_0$. Also, by admissibility condition 4 of $\lambda$, ${\langle \lambda',\nabla_{\lambda'} \nu \rangle}/{\langle\lambda', \lambda'\rangle}\leq C_K$. When we apply the abovementioned consequences of the admissibility of $\lambda$, (\ref{upper_j_bound_1}) becomes
        \begin{align}
            &(J^1(\tau))^2 + (K^1(\tau))^2 \leq C_g N_0^2 (1+C_K^2) \exp{\Bigg[T + 2 C_g^{7/2} e^{2C_g} (1+|v_{\text{min}}|)\sqrt{\frac{N_0}{n_0}}\Bigg]} \label{upper_j_bound_2}
        \end{align}
        It remains to bound the proper time $T$ along $\gamma$. By Lemma \ref{bounding_T_lemma}, the proper time along the geodesic $\gamma$ emanating from the admissible curve $\lambda$ with $v[\lambda(\alpha)]>v_{\text{min}} \ \forall \alpha$ satisfies
        \begin{align}
            T\leq C_g |v_{\text{min}}| \sqrt{\frac{C_g N_0}{n_0}} \exp{(2C_g)}
        \end{align}
       Inserting this into \ref{upper_j_bound_2} yields 
        \begin{align}
            &(J^1(\tau))^2 + (K^1(\tau))^2 \leq C_g N_0^2 (1+C_K^2) \exp{\Bigg[ C_g |v_{\text{min}}| \sqrt{\frac{C_g N_0}{n_0}} \exp{(2C_g)} + 2 C_g^{7/2} e^{2C_g} (1+|v_{\text{min}}|)\sqrt{\frac{N_0}{n_0}}}\Bigg] \label{upper_j_bound_3}
        \end{align}
        This establishes the upper bound on the Jacobi field along the geodesic $\gamma$:
        \begin{align}
            N_\perp := N_0\sqrt{C_g (1+C_K^2)} \exp{\Bigg[ \frac{C_g}{2} |v_{\text{min}}| \sqrt{\frac{C_g N_0}{n_0}} \exp{(2C_g)} +  C_g^{7/2} e^{2C_g} (1+|v_{\text{min}}|)\sqrt{\frac{N_0}{n_0}}\Bigg]} \label{Nperp_expression}
        \end{align}
        Note that the normal derivative component $K^1(\tau)$ is bounded by the same constant. \\[5pt]
        To prove the lower bound, we note that for each $0\leq\tau\leq T$ we have
        
        \begin{align}
            &|J^1(\tau) - J^1(0)| = \Bigg|\int_0^\tau K^1(s)ds\Bigg| \implies\nonumber\\
            &|J^1(\tau)| \geq |J^1(0)| - \int_0^\tau |K^1(s)|ds \geq |J^1(0)|\Bigg[1 - T \times \sqrt{1+\bigg(\frac{K^1(0)}{J^1(0)}\bigg)^2}\exp{\bigg(\frac{T}{2} +  C_g^3e^{2C_g}(1+|v_\text{min}|)\dot{\gamma}^u(0)\bigg)}\Bigg]\label{lower_j_bound_1}
        \end{align}
        where we inserted the square root of the third line of (\ref{upper_j_bound_1}). We already have (upper) bounds on $T$, $\dot{\gamma}^u(0)$ and $|K^1(0)/J^1(0)|$ from the admissibility conditions. Again due to admissibility condition 3, $|J^1(0)| = e^{\omega}\sqrt{(\lambda')^u(\lambda')^v}\geq n_0/\norm{e^{-\omega}}_{C(\mathcal{R})}\geq n_0/\sqrt{C_g}$. Combining everything, we can establish the lower bound of the Jacobi field,
        \begin{align}
            |J^1(\tau)|&\geq \frac{n_0}{\sqrt{C_g}}\Bigg[1 - |v_{\text{min}}| \sqrt{\frac{C_g^3 N_0(1+C_K^2)}{n_0}}\exp{\bigg(\frac{1}{2} C_g |v_{\text{min}}| \sqrt{\frac{C_g N_0}{n_0}} e^{2C_g} + C_g^{3} e^{2C_g} (1+|v_{\text{min}}|)\sqrt{\frac{C_g N_0}{n_0}}}\Bigg)\Bigg] = \nonumber\\
            &= \frac{n_0}{\sqrt{C_g}}\Bigg[1 - |v_{\text{min}}| \sqrt{\frac{C_g^3 N_0(1+C_K^2)}{n_0}}\exp{\bigg(\sqrt{\frac{C_g^7 N_0}{n_0}}e^{2C_g}  + \sqrt{\frac{C_g^3 N_0}{n_0}} e^{2C_g}\bigg(\frac{1}{2} + C_g^2 \bigg)|v_{\text{min}}| }\Bigg)\Bigg]=\nonumber\\
            &=  \frac{n_0}{\sqrt{C_g}} \bigg[1-{A|v_{\text{min}}|e^{B|v_\text{min}|}} \bigg] =: n_\perp
        \end{align}
         This is strictly positive because of admissibility condition 6. More precisely, let $x_*>0$ be the unique solution of $Axe^{Bx}=1$. Then $|v_{\text{min}}|<x_*$ implies $A|v_{\text{min}}|e^{B|v_\text{min}|}<1$. \label{n_perp_exprn}
    \end{proof}
\end{thm}
\begin{cor}\label{orientation_preserving_corollary}
    Under the hypotheses of Theorem \ref{bounds_for_jacobi_field_thm}, the components $J^u$ and $J^v$ of the Jacobi field along $\gamma$ are uniformly bounded and bounded away from zero. More precisely, $\exists\  0<n_{null}<N_{null} $ such that
    \begin{align}
    n_{null}\leq|J^u(\tau)|, |J^v(\tau)|\leq N_{null}
    \end{align}
    for every $\tau\in (0, T]$.
    \begin{rem}
        This implies that the Jacobi field preserves the original orientation of the admissible curve $\lambda$, i.e. if $u$ is decreasing and $v$ increasing along the admissible curve $\lambda$, then $J^u<0$ and $J^v>0$ holds at every $\tau$ and for all geodesics based on $\lambda$ (and vice versa if $u$ is increasing and $v$ decreasing). 
    \end{rem}
    \begin{proof}
        Using (\ref{ON_basis_relations}), we compute
        \begin{align}
            J^u  = (\Lambda^{-1})^u_{1}J^1 = \dot{\gamma}^v J^1 \implies |J^u| \in \Bigg[\sqrt{\frac{n_0}{C_g^3N_0}}  e^{-2C_g}n_\perp, \sqrt{ \frac{C_gN_0}{n_0}} e^{2C_g}N_\perp\Bigg] =: [n_{null}, N_{null}]\\
            J^v = (\Lambda^{-1})^v_{1}J^1 = -\dot{\gamma}^v J^1\implies |J^v| \in \Bigg[\sqrt{\frac{n_0}{C_g^3N_0}}  e^{-2C_g}n_\perp, \sqrt{ \frac{C_gN_0}{n_0}} e^{2C_g}N_\perp\Bigg] = [n_{null}, N_{null}] \label{nullbounds_on_J}
        \end{align}
        where we applied Theorem \ref{estimates_for_geodesic_velocities} and Proposition \ref{lower_bds_for_velocities}. We also expressed the initial future-directed unit normal $\nu =\dot{\gamma}(0)$ in terms of the double null components of $\lambda$ and applied condition 3 of admissibility. 
    \end{proof}
    \begin{cor}\label{unidorm_bounds_on_geod_velocities} Under the hypotheses of Theorem \ref{bounds_for_jacobi_field_thm}, the components $\dot{\gamma}^u$ and $\dot{\gamma}^v$ of the geodesic velocity of $\gamma$ are uniformly bounded and bounded away from zero. More precisely,
    \begin{align}
        e^{-2C_g} \sqrt{\frac{n_0}{C_g^3 N_0}} \leq \dot{\gamma}^u(\tau), \dot{\gamma}^v(\tau) \leq e^{2C_g} \sqrt{\frac{C_g N_0}{n_0}}
    \end{align}
    for every $\tau\in(0,T]$.
        \begin{proof}
            The components of unit normal to $\lambda$ are given by $\dot{\gamma}^u(0)=\nu^u=e^{-\omega}\sqrt{-(\lambda')^u/(\lambda')^v}$ and $\dot{\gamma}^v(0)=\nu^v=e^{-\omega}\sqrt{-(\lambda')^v/(\lambda')^u}$. In view of admissibility condition 3 and Assumptions \ref{assumptions_on_L1_norms}, these are $\leq \sqrt{C_g N_0/n_0}$. Applying Theorem \ref{estimates_for_geodesic_velocities} and Proposition \ref{lower_bds_for_velocities} yields the result. 
        \end{proof}
    \end{cor}
\end{cor}

\subsection{$\Gamma$ is a diffeomorphism}
\label{subsection_gamma_regularity}
\begin{thm} \label{interior_regularity_of_gamma}
 Let $\lambda:(-a,a)\to \mathcal{Q}_\mathcal{R}$ be a $(n_0, N_0, C_K)$--admissible curve with future-directed unit normal $\nu$. Let $\overline{\Gamma}: {D}\to \overline{\mathcal{Q}}_\mathcal{R}$ be the variation of geodesics based on $\lambda$, and let $\Gamma:=\overline{\Gamma}|_{D^0}$, constructed as in Theorem \ref{master_theorem} . Then items 2, 3 and 4 from the statement of Theorem \ref{master_theorem} are satisfied, namely
    \begin{itemize}
        \item {Image}$(\Gamma) \subset \mathcal{Q}_\mathcal{R}$, 
        \item The function $T(\cdot)$ is strictly monotonic and continuous.
        \item $\Gamma$ is a diffeomorphism onto its image. 
    \end{itemize}
    \end{thm}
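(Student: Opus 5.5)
The plan has three parts: first show that no geodesic of the variation touches $\partial\mathcal{Q_R}$ before proper time $T(\alpha)$; then express everything in terms of $v$, which is strictly increasing along every geodesic; finally use the sign-definite, uniformly bounded null components of the Jacobi field from Corollary \ref{orientation_preserving_corollary} to control $T$ and $\Gamma$.

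\textbf{Image$(\Gamma)\subset\mathcal{Q_R}$.} Fix $\alpha$ and $0<\tau<T(\alpha)$. By Lemma \ref{lambda_sends_to_CH}, $\gamma^u(\tau)<u_s$ strictly. Since $\dot\gamma^u,\dot\gamma^v>0$, we have $\gamma^u(\tau)\gamma^v(\tau)<u_sv_\kappa$, so the point lies strictly to the future of $\Sigma$. By the definition of $T(\alpha)$ and the strict monotonicity of $v$, also $\gamma^v(\tau)<0$. Hence $\Gamma(D^0)\subset\mathcal{Q_R}$.

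Next, Corollary \ref{unidorm_bounds_on_geod_velocities} gives uniform two-sided bounds on $\dot\gamma^u,\dot\gamma^v$. It follows that each geodesic extends continuously to $v=0$. We may then reparametrise the $\alpha$-th geodesic by $v\in[\lambda^v(\alpha),0]$, with $\tau_\alpha(v)$ the proper time at which it reaches the level $v$.

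\textbf{Monotonicity and continuity of $T$.} Differentiate the identity $v(\Gamma(\alpha,\tau_\alpha(v)))=v$ in $\alpha$ at fixed $v<0$; this is legitimate because $\Gamma$ is smooth in the interior. It gives
\begin{align}
\partial_\alpha\tau_\alpha(v)=-\frac{J^v}{\dot\gamma^v}.
\end{align}
By Corollaries \ref{orientation_preserving_corollary} and \ref{unidorm_bounds_on_geod_velocities}, the right-hand side has a fixed sign, independent of $\alpha$ and $v$, equal to $-\mathrm{sign}((\lambda')^v)$. Its absolute value lies between two positive constants depending only on $n_0,N_0,C_K,C_g$.

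Take $\alpha_1<\alpha_2$ and any $v$ exceeding both $\lambda^v(\alpha_1)$ and $\lambda^v(\alpha_2)$. Integrating in $\alpha$ shows that $\tau_{\alpha_2}(v)-\tau_{\alpha_1}(v)$ has this fixed sign and is bounded in modulus above and below by constants times $|\alpha_2-\alpha_1|$.

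Now let $v\to0^-$. We have $\tau_\alpha(v)\to T(\alpha)$, and this convergence is uniform in $\alpha$, because $T(\alpha)-\tau_\alpha(v)\le |v|/\inf\dot\gamma^v$. Therefore the two-sided Lipschitz-type bounds pass to $T$. This yields strict monotonicity and Lipschitz continuity of $T$ at once.

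I expect this limiting step to be the main obstacle. $\Gamma$ is not known to be smooth up to $\{v=0\}$, and the Christoffel symbols blow up there. The argument must therefore avoid differentiating at $\tau=T(\alpha)$ and rely only on the uniform, $\tau$-independent bounds.

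\textbf{$\Gamma$ is a diffeomorphism.} $D^0$ is open because $T$ is continuous. $\Gamma$ is smooth on $D^0$.

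For the local diffeomorphism property, note that $\partial_\tau\Gamma=e_0$ and $\partial_\alpha\Gamma=J=J^1e_1$, since $J^0\equiv0$ by the proof of Theorem \ref{bounds_for_jacobi_field_thm}. So in the orthonormal frame the Jacobian of $\Gamma$ has determinant $\pm J^1$. By (\ref{J_bounds_statement}), $|J^1|\ge n_\perp>0$, and the inverse function theorem makes $\Gamma$ a local diffeomorphism.

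For injectivity, write each geodesic as a graph $u=f_\alpha(v)$. At fixed $v$,
\begin{align}
\partial_\alpha f_\alpha(v)=J^u-\frac{\dot\gamma^u}{\dot\gamma^v}J^v .
\end{align}
By the orientation remark after Corollary \ref{orientation_preserving_corollary}, $J^u$ and $J^v$ have opposite signs. Hence the two terms have the same sign, and $\partial_\alpha f_\alpha(v)\neq0$ with a fixed sign. Consequently, on any common $v$-range, distinct geodesics are strictly ordered in $u$ and never meet.

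Suppose $\Gamma(\alpha_1,\tau_1)=\Gamma(\alpha_2,\tau_2)$. The common point has a single $v$-value, which lies in both $v$-ranges, so $\alpha_1=\alpha_2$. Then $\tau_1=\tau_2$ because $v$ is strictly increasing along each geodesic. An injective local diffeomorphism is a diffeomorphism onto its (open) image, which completes the proof.
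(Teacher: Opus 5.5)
Your proof is correct and takes a genuinely different, more economical route than the paper's. The paper proves monotonicity of $T$ by an open--closed bootstrap in $\tau$ along a fixed geodesic (Proposition \ref{T_is_monotonic}). It proves continuity by a separate contour estimate around a putative jump (Proposition \ref{contn_dep_of_T_on_alpha}). It proves injectivity (Theorem \ref{injectivity_of_Gamma}) by comparing coordinates along a horizontal-then-vertical contour in the $(\alpha,\tau)$-plane, which needs horizontal segments between points of $D^0$ to stay in $D^0$.

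You instead slice by level sets of $v$ and use the implicit function theorem. Everything then reduces to two sign-definite, uniformly bounded quantities, $\partial_\alpha\tau_\alpha(v)=-J^v/\dot\gamma^v$ and $\partial_\alpha f_\alpha(v)=J^u-(\dot\gamma^u/\dot\gamma^v)J^v$. In the frame $(e_0,e_1)$, where $J^u=\dot\gamma^uJ^1$ and $J^v=-\dot\gamma^vJ^1$, these are just $J^1$ and $2\dot\gamma^uJ^1$. This gives monotonicity and continuity in one stroke, and in fact the quantitative bound $n_\perp|\alpha_2-\alpha_1|\le|T(\alpha_2)-T(\alpha_1)|\le N_\perp|\alpha_2-\alpha_1|$. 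Your injectivity argument only needs $\{\alpha:\lambda^v(\alpha)<v\}$ to be an interval, which is immediate from the fixed sign of $(\lambda')^v$.

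One step should be justified explicitly: the differentiation at $(\alpha,\tau_\alpha(v))$, together with identifying $\partial_\alpha\Gamma$ with the Jacobi field of Theorem \ref{bounds_for_jacobi_field_thm}. Both need $\Gamma$ to be smooth on a neighbourhood of that point, i.e.\ that $D^0$ is open. As written (``$D^0$ is open because $T$ is continuous''), you derive openness only afterwards, from the very estimates it was used to prove. The paper deliberately avoids assuming openness and works with $\mathrm{int}(D)$ and limiting sequences instead.

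Openness is available for free, though. The maximal flow domain of the geodesic equation on $T\mathcal{Q_R}$ is open and the flow is smooth on it. By the two-sided velocity bounds and Lemma \ref{lambda_sends_to_CH}, a geodesic based on $\lambda$ can only leave every compact subset of $\mathcal{Q_R}$ through $v\to0$, so its maximal existence time is exactly $T(\alpha)$. Hence $T$ is lower semicontinuous and $D^0$ is open from the outset. With that sentence added, your argument is complete.
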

The proof of claim 1 is established in Lemma \ref{image_of_Gamma_is_in_Q}. The proof of claim 2 is split between Proposition \ref{T_is_monotonic}, Corollary \ref{T_is_monotonic_reverse_dirn_corollary} and Proposition \ref{contn_dep_of_T_on_alpha}. Claim 3 is split between Theorem \ref{Gamma_local_diffeomorphism} and Theorem \ref{injectivity_of_Gamma}. 
\begin{rems}
\begin{enumerate}
    \item  To reiterate key points from the Main Theorem \ref{master_theorem}: ${D} = \{ (\alpha,\tau) \in \mathbb{R}^2: -a < \alpha < a; 0 \leq \tau\leq T(\alpha)\}$
    and the map $\overline{\Gamma}: {D} \to \overline{\mathcal{Q}}_\mathcal{R}$ is a variation of affine, normalized, future-directed future-inextendible timelike geodesics on the quotient manifold-with-boundary $\overline{\mathcal{Q}}_\mathcal{R}$, such that
    $ \overline{\Gamma}(\alpha, 0)=\lambda(\alpha) $ (i.e. all geodesics start on the 'initial curve' $\lambda$) and $ v(\overline{\Gamma}(\alpha, T(\alpha))) = 0 $ for all $ \alpha\in (-a, a)$ (i.e. all geodesics end on $\mathcal{CH}^+$)  
    \item Furthermore, at each point $\lambda(\alpha)$ on the curve $\lambda$ we have 
    $\partial_\alpha\overline{\Gamma}(\alpha, 0) = \lambda'$ and $\langle\partial_\tau \overline{\Gamma}(\alpha, 0),\lambda' \rangle = 0$.  That is, the initial variational field of $\overline{\Gamma}$ equals that tangent to the initial curve $\lambda$ and is normal to the value of the geodesic velocity at $\lambda(\alpha)$, respectively.
     \item The function $T(\alpha)$ tells us the value of the proper time at which the geodesic $\overline{\Gamma}(\alpha, \cdot)$ reaches the Cauchy horizon $\{v=0\}$. 
    \item As $\lambda$ is a $(n_0, N_0, C_K)$--admissible curve, it is compactly contained in a coordinate rectangle close to $\mathcal{CH}^+$.
    \item Condition admissibility condition 5, equation (\ref{condd5})  of admissible curves implies that $\lambda$ cannot deform into a null curve at the endpoints.  Combining  admissibility condition 5 with Remark 2 above, we see that the $(u,v)$ coordinate components of the geodesic velocity are uniformly bounded away from zero on the initial curve $\lambda$ i.e. $0<c<|\nu^u|, |\nu^v|$ for some $c>0$.
    \item Through pullback using projection map $\overline{\mathcal{R}}\to\overline{\mathcal{Q}}_\mathcal{R}$, $\lambda$ and $\overline{\Gamma}$ are in one-to-one correspondence with a spacelike hypersurface Image$(\lambda)\times S^2$ and a variation through radial ($\overline{\Gamma}^\varphi, \overline{\Gamma}^\theta =$const), affine, timelike, normalized geodesics in $\overline{\mathcal{R}}$, all of which terminate on the Cauchy horizon.
    \item Showing that
    the image of the restriction of $\overline{\Gamma}$ to $D^0$ is contained in the quotient spacetime $\mathcal{Q_R}$ means showing that it does not intersect $\{v=0\}$.  
    \end{enumerate}
\end{rems}

\begin{lem}\label{no_closed_timelike_curves}
    There are no closed timelike curves in $(\mathcal{Q_R}, g_{\mathcal{Q_R}})$\footnote{It can be shown that the spacetime $(\mathcal{R}, g)$, and by extension the quotient spacetime $(\mathcal{Q_R}, g_{\mathcal{Q_R}})$, are globally hyperbolic. Nevertheless, we find the monotonicity argument above to be a more streamlined way to show that there are no closed timelike curves in $(\mathcal{Q_R}, g)$.}.
    \begin{proof}
        Indeed, $(\mathcal{Q_R}, g_{\mathcal{Q_R}})$ is equipped with a global double null coordinate chart $(u,v)$ and the $u,v$-components of timelike curves are either both strictly positive (if the curve is future-directed) or both strictly negative (if the curve is past-directed). Therefore along a timelike curve $u$ and $v$ are either both strictly increasing or strictly decreasing. It follows that a timelike curve in $\mathcal{Q_R}$ cannot be closed.
    \end{proof}
\end{lem}
\begin{lem}\label{image_of_Gamma_is_in_Q}
    Under the hypotheses of Theorem \ref{interior_regularity_of_gamma}, Image$(\Gamma)\subset\mathcal{Q_R}$. 
    \begin{proof}
      Let $\Gamma(0, \tau) = \gamma(\tau)$ and consider this geodesic. Using that Image$(\overline{\Gamma})\subset \overline{\mathcal{Q}}_\mathcal{R}$, showing that $\Gamma(D^0) \subset \mathcal{Q}_\mathcal{R}$ is easy. We just have to show that the restricted map $\Gamma$ does not intersect the Cauchy horizon. Since $\gamma$ is future-directed and timelike, it follows that $v$ is strictly increasing along $\gamma$, so $v(\gamma(\tau))<v(\gamma(T(0))) = 0$ for every $0\leq\tau<T(0)$. Therefore Image$(\Gamma)\cap \{v=0 \} = \emptyset$.
    \end{proof}
\end{lem}
\begin{rems}
\begin{enumerate}
     \item As a consequence of Lemma \ref{image_of_Gamma_is_in_Q}, $D^0=\{(\alpha,\tau)\in D: \tau>0, v(\overline{\Gamma}(\alpha,\tau))<0\} = \overline{\Gamma}^{-1}(\mathcal{Q_R})\cap I^+(\lambda)$. 
    \item The interior of $D$ equals $(D)=D\backslash\partial D$, where the boundary $\partial D$ includes the horizontal segment $(-a,a)\times\{0\}$ and the graph of the function $T(\cdot)$. Therefore, int$(D)\cap(-a,a)\times\{0\}=\emptyset$ and int$(D)\cap \{(\alpha,T(\alpha)):\alpha\in(-a,a)\}=\emptyset$. It follows that int$(D)\subseteq D^0 \subset D$.
\end{enumerate}
\end{rems}

    \subsubsection{Monotonicity of $T(\cdot)$}
    \begin{prop}
   \label{T_is_monotonic}
        Under the hypotheses of Theorem \ref{regularity_of_timelike_geodesic_variations}, if $(\lambda')^v(\alpha)>0$ then $T(\cdot)$ is strictly decreasing in $\alpha$. 
        \begin{proof}
             We prove this through a bootstrap argument. Wlog we work with the central geodesic $\Gamma(0, \cdot)$. Let $I$ be the interval defined by:
            \begin{align}
                I = \{\tau \in [0, T(0)] : \exists \delta>0  \text{ with }[-\delta,0]\times[0,\tau]\subset D \text{ and } -\delta \leq\alpha_1<\alpha_2\leq 0\nonumber\\
                \implies \Gamma^v(\alpha_1,\tau')< \overline{\Gamma}^v(\alpha_2,\tau') \ \forall \tau'\in[0, \tau] \}
            \end{align}
             \begin{rems}    
             \begin{enumerate}
            \item The interval $I$ is the set on which our bootstrap assumptions hold. The Proposition will be established by showing that the interval $I$ is nonempty, open and closed as a subset of $[0,T(0)]$. As a consequence, there is a vertical strip $[-\delta, 0)\times[0, T(0)]$ which is in $D^0$. All of the geodesics in this vertical strip reach the WNS after the geodesic $\Gamma(0, \cdot)$, so they all have proper time larger than the 'central' geodesic. Applying this argument to every $\alpha\in(-a,a)$, one finds that if $\alpha'<\alpha$ then $T(\alpha')>T(\alpha)$. Illustration of the setup is provided in Figure \ref{Tmonotonic}.
            \item Technically, we do not know a priori if $D^0$ is open, so we can only use the differentiability of $\Gamma$ in int$(D)$. When we compare the coordinates of the endpoints of a given contour in $D^0$, we are essentially differentiating under the integral sign. Therefore we must ensure that the contour minus its endpoints lies inside int$(D)$, or argue using limits of sequences which are inside int$(D)$.
            \end{enumerate}
            \end{rems}
            \underline{$I\neq \emptyset$.} Since Image$(\lambda) \subset D$ and since $(\lambda')^v>0$, $0\in I$ so $I \neq \emptyset$.\\[5pt]
            \underline{$I$ is open} (as a subset of $[0, T(0)]$). Indeed, suppose for contradiction that $I=[0, T]$ for some $T<T(0)$ (if $T=T(0)$ then $I=[0,T(0)]$ which is open). Then $[-\delta, 0]\times \{T\} \subset D^0$. Let 
            $\alpha\in[-\delta, 0]$ and define $T'=\inf_{\alpha'\in[0,\delta]}T(\alpha')$. By the bootstrap assumptions and Corollary \ref{unidorm_bounds_on_geod_velocities} we have
            \begin{align}
                0-\Gamma^v(0,T)\leq 0 - \Gamma^v(\alpha, T) = \int_{T}^{T(\alpha)} \partial_\tau \Gamma^v(\alpha,\tau)d\tau \leq e^{2C_g}\sqrt{\frac{C_g N_0}{n_0}} (T(\alpha)-T) \implies\nonumber\\
                T(\alpha)-T\geq -\Gamma^v(0,T)e^{-2C_g}\sqrt{\frac{n_0}{N_0 C_g}}>0 \implies\nonumber\\
                T' - T \geq -\Gamma^v(0,T)e^{-2C_g}\sqrt{\frac{n_0}{N_0 C_g}}>0 \label{0T_argument_1}
            \end{align}
            Then $(T,T')\neq \emptyset$ and $[0, \delta]\times [0,T')\subset D^0\subset D$. Since $(-\delta, 0)\times(0, T')$ is an open subset of $D^0$ and since int$(D)\subseteq D^0$, it follows that $(-\delta, 0)\times(0, T')$ is contained in the interior of $D$, where $\Gamma$ is known to be smooth. Therefore, Corollary \ref{orientation_preserving_corollary} can be applied to each of the horizontal segments $[-\delta, 0]\times \{\tau\}$ for $\tau \in [0, T')$, showing that $v$ is strictly increasing in $\alpha$ along each of those segments. This implies that every $\tau \in (T, T')$ is in $I$, contradicting the assumption that $I=[0,T]$. We conclude that $I$ must be open. \\[5pt]
            \underline{$I$ is closed} (as a subset of $[0, T(0)]$). Suppose for contradiction that $I=[0, T)$ for some $T>0$. We consider two cases.\\[5pt]
            Case 1: $T=T(0)$ then $I=[0, T(0))$ so there is a $\delta>0$ such that $[-\delta, 0]\times [0, T(0))\subset D$ and for $\tau\in[0,T(0))$, $\alpha\in[-\delta, 0]$ we have $\Gamma^v(\alpha,\tau)<\Gamma^v(0,\tau)<\Gamma^v(0, T(0))=0$. The second inequality follows because $v$ is increasing along the geodesic $\Gamma(0,\cdot)$. Therefore $[-\delta, 0]\times[0, T(0))$ is in $D^0$. But as $\Gamma^v( \alpha,\tau)<0$ for $(\alpha,\tau)\in[-\delta,0]\times[0,T(0))$, it follows that for $\alpha\in[-\delta,0]$ one has $\lim_{\tau \to T(0)} \Gamma^v(\alpha,\tau)\leq 0$. This implies that $[-\delta, 0]\times [0,T(0)] \subset D$. Since $(-\delta,0)\times(0,T(0))$ is an open subset of $D^0$, it must lie inside int$(D)$. Fix $\alpha\in[-\delta,0)$. Applying Corollary \ref{orientation_preserving_corollary} between the points $(0,\tau)$ and $(\alpha,\tau)$ for $\tau<T(0)$ and applying Corollary \ref{unidorm_bounds_on_geod_velocities} firstly between the points $(0,\tau)$ and $(0, T(0))$ and secondly between the points $(\alpha,\tau)$ and $(\alpha, T(0))$, we find
            \begin{align}
                &0-\Gamma^v(0, \tau)\geq e^{-2C_g}\sqrt{\frac{n_0}{C_g^3 N_0}}(T(0)-\tau);\nonumber\\
                &\Gamma^v(0,\tau) - \Gamma^v(\alpha,\tau) \geq n_{null} (0-\alpha); \nonumber \\
                & \Gamma^v(\alpha, T(0)) - \Gamma^v(\alpha, \tau) \leq e^{2C_g}\sqrt{\frac{C_g N_0}{n_0}}(T(0)-\tau)
            \end{align}
            It follows that 
            \begin{align}
                \Gamma^v(\alpha, T(0))\leq \bigg(e^{2C_g}\sqrt{\frac{C_g N_0}{n_0}} - e^{-2C_g}\sqrt{\frac{n_0}{C_g^3 N_0}}\bigg)(T(0) - \tau) - n_{null} \alpha
            \end{align}
            For large enough $\tau<T(0)$, the RHS is less than $-n_{null} \alpha/2$, so $(\alpha,T(0)) \in D^0$. Since $\alpha$ was arbitrary, it follows that $[-\delta, 0)\times \{T(0)\}\subset D^0$. We would like to apply Corollary \ref{orientation_preserving_corollary} to show that  $-\delta\leq \alpha_2\leq \alpha_1<0 \implies\Gamma^v(\alpha_1,T(0))>\Gamma^v(\alpha_2,T(0))$, but we have to be careful as we cannot guarantee that $(-\delta,0)\times\{T(0)\}\subset $ int$(D)$. Let $\{\tau_j\}_{j=0}^\infty \subset(0,T(0))$ be an increasing sequence converging to $T(0)$. Since the geodesics $\Gamma^v(\alpha_1,\cdot)$ and $\Gamma^v(\alpha_2, \cdot)$ are continuous,
            \begin{align}
                \lim_{j\to \infty} [\Gamma^v(\alpha_1,\tau_j)-\Gamma^v(\alpha_2,\tau_j)]= \Gamma^v(\alpha_1,T(0))-\Gamma^v(\alpha_2,T(0)). 
             \end{align}
            By Corollary \ref{orientation_preserving_corollary}, for each $j\geq 0$ we have 
            \begin{align}
                \Gamma^v(\alpha_1,\tau_j)-\Gamma^v(\alpha_2, \tau_j)\geq  n_{null}(\alpha_1-\alpha_2)
            \end{align}
            Taking the limit as $j\to \infty$, it follows that $\Gamma^v(\alpha_2,T(0))<\Gamma^v(\alpha_1, T(0))$ i.e. $v$ is increasing in $\alpha$ along the segment $[-\delta, 0)\times\{T(0)\}$. Since $\overline{\Gamma}^v(0, T(0))=0$, $v$ increases with $\alpha$ on the closed segment $[-\delta, 0]\times\{T(0)\}$. Therefore $T(0)\in I$, contradicting the assumption that $I$ is open. 
            \\[5pt]
            Case 2: $I=[0, T)$ where $T<T(0)$. Then for every $\tau<T$ there exists a $\delta>0$ such that $[-\delta,0]\times[0,\tau]\subset D$ and $-\delta\leq \alpha<0 \implies \Gamma^v(\alpha, \tau)< \Gamma^v(0,\tau)$. As $v$ is increasing along the geodesic $\Gamma(0, \cdot)$, we also have $\Gamma^v(0,\tau)<\Gamma^v(0,T)<0$. So for each $\alpha\in[-\delta, 0]$, $\Gamma^v(\alpha, \tau)$ is strictly bounded away from zero. By the continuity of the $v$-coordinate along each geodesic between $\alpha=-\delta$ and $\alpha=0$, it follows that $\overline{\Gamma}^v(\alpha, T)<0$ for $-\delta\leq \alpha\leq 0$, so the segment $[-\delta, 0]\times \{T\}$ must lie in $D^0$. As $(-\delta,0)\times (0,T)$ is open, it lies in the interior of $D$. Furthermore, bootstrap assumptions hold for $0<\tau<T$. Let ${\tau_j}\subset(0,T)$ be an increasing sequence converging to $T$. By Corollary \ref{orientation_preserving_corollary}, for $-\delta\leq\alpha\leq 0$ we have
            \begin{align}
                \Gamma^v(0, \tau_j)-\Gamma^v(\alpha, \tau_j)\geq n_{null}(0-\alpha)>0
            \end{align}
            By continuity along each geodesic,
            \begin{align}
                \lim_{j\to \infty} \Gamma^v(0, \tau_j)-\Gamma^v(\alpha, \tau_j) =\Gamma^v(0,T)-\Gamma^v(\alpha,T)
            \end{align}
            we conclude that the bootstrap assumptions are also satisfied at $T$. This implies $T\in I$, so $I=[0,T]$ which is closed. \\[5pt]
            We conclude that $I$ is nonempty, open and closed, so $I=[0,T(0)]$. This implies that there exists a $\delta>0$ such that $[-\delta, 0]\times[0, T(0)]\subset D$ and $[-\delta, 0]\times(0, T(0)]/\{(0,T(0))\}\subset D^0$. Therefore, at time $T(0)$, the geodesics $\Gamma(\alpha, \cdot)$ for $\alpha\in[-\delta, 0)$ have not reached the singularity, so $T(\alpha)>T(0)$ for  $\alpha\in[-\delta, 0)$. This argument was applied wlog to the 'central' geodesic $\Gamma(0, \cdot)$. Applying it to every other $\alpha\in(-a,a)$ yields that if $\alpha'<\alpha$ then $T(\alpha')>T(\alpha)$, hence concluding the proof.
        \end{proof} 
    \end{prop}
    By reversing the direction of the tangent to the admissible curve $\lambda$ and repeating the proof of Proposition \ref{T_is_monotonic}, we also establish:
\begin{cor}\label{T_is_monotonic_reverse_dirn_corollary}
         Under the hypotheses of Theorem \ref{regularity_of_timelike_geodesic_variations}, if $(\lambda')^v(\alpha)<0$ then $T(\cdot)$ is strictly increasing in $\alpha$. 
    \end{cor}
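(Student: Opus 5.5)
The plan is to reduce Corollary \ref{T_is_monotonic_reverse_dirn_corollary} to Proposition \ref{T_is_monotonic} by reparametrising $\lambda$. Define $\tilde\lambda:(-a,a)\to\mathcal{Q_R}$ by $\tilde\lambda(\alpha):=\lambda(-\alpha)$, so that $(\tilde\lambda')^v(\alpha)=-(\lambda')^v(-\alpha)>0$.

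The first step is to check that $\tilde\lambda$ is again an $(n_0,N_0,C_K)$--admissible curve. Conditions 1--3 are invariant under $\alpha\mapsto-\alpha$: the curve is still spacelike and nowhere constant, $u_{\min}$ is unchanged, and the absolute values of the components of the tangent are unchanged. The quantities $u_{\max}$ and $v_{\min}$ depend only on the image, so conditions 5 and 6 also hold.

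For condition 4 I need the future-directed unit normal. The normal $\nu$ is determined by the image and the time orientation alone, so it is unchanged. The ratio
\begin{align}
\frac{\langle\tilde\lambda',\mathring\nabla_{\tilde\lambda'}\nu\rangle}{\langle\tilde\lambda',\tilde\lambda'\rangle}
\end{align}
has numerator quadratic in $\tilde\lambda'=-\lambda'$ and denominator quadratic as well, so the bound $C_K$ persists. This is the only point requiring care: since $\nu$ is fixed by future-directedness rather than by the orientation of $\lambda'$, it does not flip sign.

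Next, I will relate the two geodesic variations. Because $\nu$ is unchanged, the variation based on $\tilde\lambda$ is $\tilde{\overline\Gamma}(\alpha,\tau)=\overline\Gamma(-\alpha,\tau)$, with proper time function $\tilde T(\alpha)=T(-\alpha)$. The Jacobi-field initial data transform consistently: $J(0)=\tilde\lambda'$ flips sign, and so does $K(0)$, so the bounds of Theorem \ref{bounds_for_jacobi_field_thm} and Corollary \ref{orientation_preserving_corollary} apply verbatim to $\tilde\lambda$.

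Applying Proposition \ref{T_is_monotonic} to $\tilde\lambda$ shows that $\tilde T$ is strictly decreasing. Hence $T(\alpha)=\tilde T(-\alpha)$ is strictly increasing. The hypothesis of Proposition \ref{T_is_monotonic} requires $(\tilde\lambda')^v>0$ everywhere. By condition 3, $|(\lambda')^v|>n_0$, so $(\lambda')^v$ never vanishes and, by continuity, has a constant sign on $(-a,a)$. The pointwise hypothesis $(\lambda')^v(\alpha)<0$ therefore holds throughout, and no step here is a genuine obstacle.
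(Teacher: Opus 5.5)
Your proposal is correct and is essentially the paper's argument: the paper obtains the corollary by reversing the direction of the tangent to $\lambda$ and rerunning the proof of Proposition \ref{T_is_monotonic}, which is exactly your reparametrisation $\tilde\lambda(\alpha)=\lambda(-\alpha)$ with $\tilde T(\alpha)=T(-\alpha)$. Your explicit checks, namely that admissibility is preserved (in particular condition 4, since $\nu$ does not flip sign) and that $(\lambda')^v$ has constant sign by condition 3, make rigorous what the paper leaves implicit.
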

    \begin{figure}
            \centering
        \includegraphics[width=0.3\linewidth]{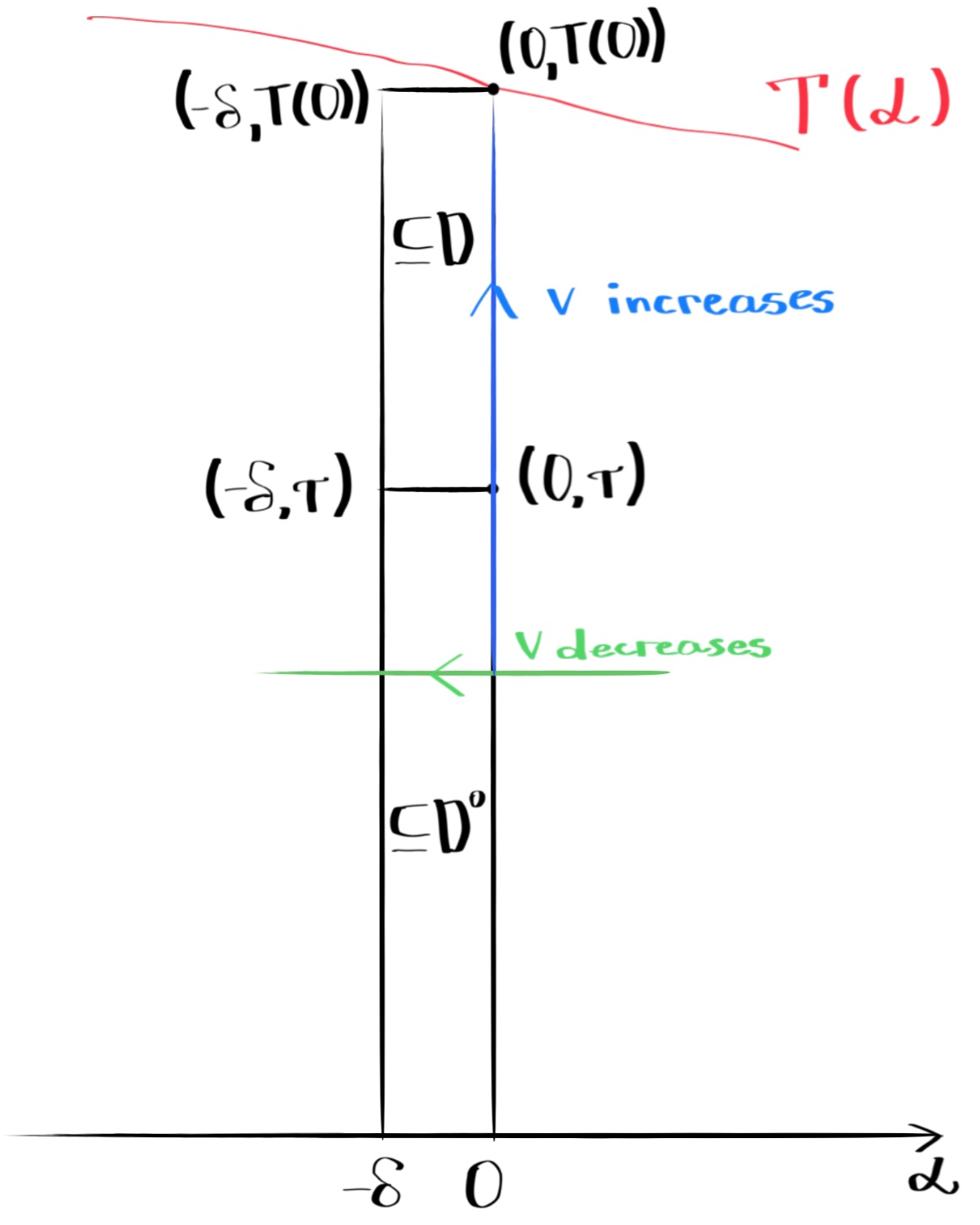}
            \caption{Illustration of the setup for the bootstrap argument used to prove Proposition \ref{T_is_monotonic}.}
            \label{Tmonotonic}
        \end{figure}
\subsubsection{Continuous dependence of $T(\cdot)$ on $\alpha$}
In the following Proposition we prove that the proper time along the geodesics depends continuously on the parameter $\alpha$, indexing the geodesics.
\begin{prop} \label{contn_dep_of_T_on_alpha}
    Under the hypotheses of Theorem \ref{regularity_of_timelike_geodesic_variations}, the final proper time function $T=T(\alpha)$ is a continuous function of the parameter $\alpha$ along the initial admissible curve $\lambda$.  
   \begin{proof}
       Wlog assume $(\lambda')^{v}(\alpha)>0 \ \forall \alpha\in(-a, a)$ and let $\alpha_*\in(-a,a)$. Let
       \begin{align}
            T_+:=\lim_{\alpha\to \alpha_*^-} T(\alpha)\geq \lim_{\alpha\to \alpha_*^+} T(\alpha) =: T_-
       \end{align}
       The left and right limits exist because $T(\cdot)$ is monotonically decreasing in $\alpha$ by Proposition \ref{T_is_monotonic}. By the same Lemma $T_-\leq T(\alpha_*)\leq T_+$.\\[5pt]
       Let $\delta_1>0$ be sufficiently small so that the straight line segment 
         \begin{align}
           \mathcal{C}_{L,\delta}: [0,1)\to \mathbb{R}^2,s\mapsto \begin{pmatrix}
               (\alpha_*-\delta)(1-s)  + \alpha_* s\\
               (T_- - \delta)(1-s) + T_+ s
           \end{pmatrix}
       \end{align}
       lies in $D^0$ for every $\delta\in (0, \delta_1]$. Such $\delta_1$ can be chosen in view of the proof of Proposition \ref{T_is_monotonic}. Then since $C_{L,\delta}\backslash\{(\alpha_*,T)\} \subset(\alpha_*-\delta_1, \alpha_*)\times(0,T_+)$, the latter being an open subset of $D$,  $C_{L,\delta}\backslash\{(\alpha_*,T)\}\subset$ int$(D)$ for every $\delta\in(0,\delta_1]$.
        Let $0<\mu\leq 1$ be sufficiently small so that the horizontal straight line segment 
       \begin{align}
           \mathcal{C}_{H,\delta}:[\alpha_*-\delta,\alpha_*+\mu\delta]\to \mathbb{R}^2, \alpha\to \begin{pmatrix}
               \alpha\\
               T_- - \delta
           \end{pmatrix}
       \end{align}
       is inside $D^0$ for every $\delta\in(0,\delta_1]$. Since $\mathcal{C}_{H,\delta}\backslash\{(\alpha_*+\mu\delta, T_--\delta)\} \subset (\alpha_*-\delta, \alpha_*+\mu\delta)\times (0,T_--\delta/2)$, the latter being an open subset of $D$, it follows that $\mathcal{C}_{H,\delta}\backslash\{(\alpha_*+\mu\delta, T_--\delta)\} \subset$ int$(D)$ for every $\delta\in(0,\delta_1]$. We assume that $\delta\leq \delta_1$ and that $\mu$ satisfies the above condition so that $\mathcal{C}_{H,\delta}\cup \mathcal{C}_{L,\delta}\subset D^0$ and $\mathcal{C}_{H,\delta}\cup \mathcal{C}_{L,\delta}\backslash\{(\alpha_*+\mu\delta, T_--\delta)\}\subset$ int$(D)$.\\[5pt]
       Firstly, we claim that $\lim_{s\to 1^-} \Gamma^v(\mathcal{C}_{L,\delta}(s)) = 0$.
        To prove this claim, let $\varepsilon>0$ be arbitrary and note that 
        \begin{align}
            \exists s_1 \in[0,1) \text{ s.t. } s_1<s<1 \implies 0 < T\big((\alpha_*-\delta)(1-s)+\alpha_* s \big) - T_+ <\varepsilon
        \end{align}
        where we used the monotonicity of $T(\cdot)$, established in Proposition \ref{T_is_monotonic}. So let $s\in(s_1,1)$. Integrating $\partial_\tau \Gamma^v$ along the geodesic $\Gamma((\alpha_*-\delta)(1-s)+\alpha_* s, \cdot)$ between the time $\tau=(T_- - \delta)(1-s) + T_+ s$ and the proper time $\tau=T((\alpha_*-\delta)(1-s)+\alpha_* s)$ when this geodesic reaches the WNS, 
        \begin{align}
            0-\Gamma^v(\mathcal{C}_{L,\delta}(s)) &= - \Gamma^v\big((\alpha_*-\delta)(1-s)+\alpha_* s, (T_- - \delta)(1-s) + T_+ s\big) \nonumber\\ 
            &\leq \sqrt{\frac{C_g N_0}{n_0}}e^{2C_g} \big[ T((\alpha_*-\delta)(1-s)+\alpha_* s) - (T_- - \delta)(1-s) - T_+ s \big] \nonumber \\
            &= \sqrt{\frac{C_g N_0}{n_0}}e^{2C_g} \big[ \underbrace{T((\alpha_*-\delta)(1-s)+\alpha_* s) - T_+}_{<\varepsilon} + (1-s)(T_+ -T_- + \delta) \big] \nonumber\\
            &<\sqrt{\frac{C_g N_0}{n_0}}e^{2C_g} \big[ \varepsilon + (1-s)(T_+ -T_- + \delta) \big]
        \end{align}
        where we used the uniform bound on $\partial_\tau \Gamma$ from Corollary \ref{unidorm_bounds_on_geod_velocities} and in the third line we added and subtracted $T_+$. We conclude that $\Gamma^v(\mathcal{C}_{L,\delta}(s))\to 0$ as $s\to 1^-$. 
       \\[5pt]
       Having established this claim, we apply Corollary \ref{orientation_preserving_corollary} to compare the $v$-coordinates of the endpoints $\Gamma^v(\alpha_* + \delta, T_- - \delta)$ and $\Gamma^v(\alpha_* - \delta, T_- - \delta)$ of the horizontal segment $\mathcal{C}_{H,\delta}$, 
   \begin{align}
       \Gamma^v(\alpha_*+\delta, T_- - \delta) - \Gamma^v(\alpha_*-\delta, T_- - \delta) = \int_{\alpha_* - \delta}^{\alpha_* + \mu\delta} \partial_\alpha\Gamma^v(\alpha, T_- -\delta) d\alpha \leq (1+\mu)N_{null} \delta\leq 2 N_{null} \delta\label{v_distance_along_horizontal}
   \end{align}
 By Proposition \ref{T_is_monotonic}, $T_-> T(\alpha_*+\mu\delta)$. Integrating along the geodesic $\Gamma(\alpha_*+\mu\delta, \cdot)$ between $T_--\delta$ and its proper time $T(\alpha_*+\mu\delta)$ allows us to bound the $v$-coordinate of the right endpoint $(\alpha_*+\mu\delta, T_--\delta)$ of the horizontal contour $\mathcal{C}_{H,\delta}$.
   \begin{align}
       -\Gamma^v(\mathcal{C}_{H,\delta}(\alpha_*+\mu\delta))=-\Gamma^v(\alpha_*+\mu\delta, T_- - \delta) \leq e^{2C_g}\sqrt{\frac{C_g N_0}{n_0}} \big(T(\alpha_*+\mu\delta)-T_- + \delta \big)<e^{2C_g}\sqrt{\frac{C_g N_0}{n_0}}\delta
       \label{v_distance_right_contour}
   \end{align}
   where we applied Corollary \ref{unidorm_bounds_on_geod_velocities} to get the uniform upper bound on $\partial_\tau\Gamma$.\\[5pt]
   Comparing the $v$-coordinates of the endpoints of the contour $\mathcal{C}_{L,\delta}$, we find 
   \begin{align}
       \underbrace{\lim_{s\to 1^-} \Gamma^v\big(\mathcal{C}_{L,\delta}(s)\big)}_{=0} - \Gamma^v\big(\mathcal{C}_{L,\delta}(0)\big) &= \int_{0}^{1} \partial_s \Gamma^v\big((\alpha_* - \delta)(1-s) + \alpha_* s, (T_--\delta)(1-s) +T_+ s\big)ds  \nonumber\\
       &=\int_0^1 \delta\partial_\alpha \Gamma^v\big(\mathcal{C}_{L,\delta}(s)\big) + (\delta+T_+-T_-)\partial_\tau \Gamma^v\big(\mathcal{C}_{L,\delta}\big(s)) ds \nonumber\\
       &\geq \delta\bigg[n_{null} + \bigg(1+\frac{T_+-T_-}{\delta}\bigg)\sqrt{\frac{n_0}{C_g^3 N_0}}e^{-2C_g}\bigg]\label{v_distance_left_contour}
   \end{align}
   where we applied Corollary \ref{orientation_preserving_corollary} and Corollary \ref{unidorm_bounds_on_geod_velocities} to get the uniform lower bounds on $\partial_\alpha\Gamma$ and on $\partial_\tau\Gamma$.\\[5pt]
   Combining the estimate (\ref{v_distance_left_contour}) with (\ref{v_distance_along_horizontal}) and (\ref{v_distance_right_contour}),
   \begin{align}
       &n_{null} + \bigg(1+\frac{T_+-T_-}{\delta}\bigg)\sqrt{\frac{n_0}{C_g^3 N_0}}e^{-2C_g} \leq \frac{1}{\delta}\big[\Gamma^v(\mathcal{C}_{H,\delta}(\alpha_*+\mu\delta)) - \Gamma^v(\mathcal{C}_{L,\delta}(0)) - \Gamma^v(\mathcal{C}_{H,\delta}(\alpha_*+\mu\delta)) \big] \nonumber\\
       &\leq\frac{1}{\delta} \bigg[2N_{null} \delta + \delta \sqrt{\frac{C_g N_0}{n_0}}e^{2C_g} \bigg] \implies\nonumber\\
       &T_+-T_- \leq \delta\bigg(\sqrt{\frac{C_g^3 N_0}{n_0}} e^{2C_g}(2N_{null}-n_{null}) + \frac{C_g^2 N_0}{n_0}e^{4C_g} -1\bigg) \to 0 \ \ \text{ as }\ \  \delta\to 0
   \end{align}
   We conclude that $T_+=T_-$. By the squeeze theorem, they both equal $T(\alpha_*)$. 
       
   \end{proof}
   \end{prop}
   \begin{figure}[h]
       \centering
       \includegraphics[width=0.4\linewidth]{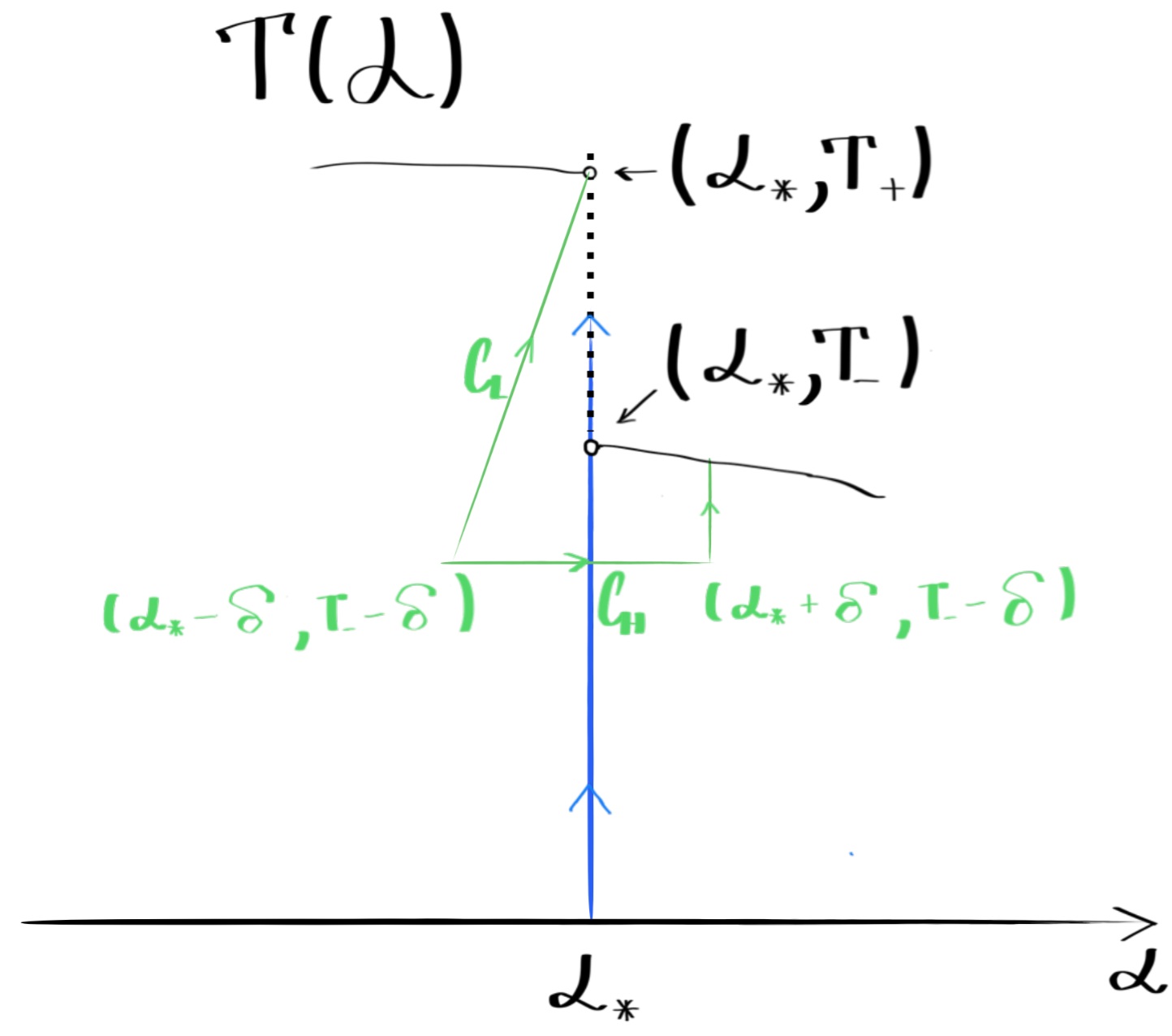}
       \caption{Illustration of the proof of Proposition \ref{contn_dep_of_T_on_alpha}. The $v$-component of the right endpoint of the horizontal contour $\mathcal{C}_{H,\delta}$ is $>-C\delta$ because the point is within $\delta$ vertical distance away from the graph of $T(\cdot)$. The difference between the $v$-coordinates of the two endpoints of $\mathcal{C}_{H,\delta}$ is $<C\delta$ because they are within $2\delta$ horizontal distance of each other. Finally, by comparing the $v$-coordinates of the endpoints of $\mathcal{C}_{L,\delta}$ we can bound the difference $T_+-T_-$.}
       \label{Cont_dep_T}
   \end{figure}
    \begin{rem}
 Since $D^0\subset D$, $D^0\backslash$int$(D)\subseteq\partial D$. Since $T(\cdot)$ is a continuous function by Proposition \ref{contn_dep_of_T_on_alpha}, 
         \begin{align}
            \partial D =& [-a, a]\times\{\tau=0\} \cup \{\alpha=-a\}\times[0, \lim_{\alpha\to -a^+}T(\alpha)]\cup \{\alpha=a\}\times [0, \lim_{\alpha\to a^-}T(\alpha)]\cup\nonumber\\
            &\{(\alpha,T(\alpha)):\alpha\in(-a,a)\}
        \end{align}
        is a continuous closed loop in $\mathbb{R}^2$ and by construction $D^0\cap \partial D=\emptyset$. 
        So $D^0=$ int$(D)$.
   \end{rem} 
    \subsubsection{$\Gamma$ is a \textit{local} diffeomorphism in the interior}
      \begin{thm}\label{Gamma_local_diffeomorphism}
          Under the hypotheses of Theorem \ref{interior_regularity_of_gamma}, $\Gamma$ is a \textit{local diffeomorphism} onto its image.
      \end{thm}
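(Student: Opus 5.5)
The plan is to apply the inverse function theorem at every point of $D^0$. By the Remark following Proposition \ref{contn_dep_of_T_on_alpha}, $D^0=$ int$(D)$ is open. On int$(D)$ the map $\Gamma$ is smooth, since it is built from the exponential map and by smooth dependence of geodesics on initial data. So it suffices to show that the differential $d\Gamma_{(\alpha,\tau)}$ is invertible at every $(\alpha,\tau)\in D^0$. The inverse function theorem then gives neighbourhoods on which $\Gamma$ is a smooth diffeomorphism with smooth inverse, which is exactly the local diffeomorphism property.

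The differential has columns $\partial_\tau\Gamma=\dot{\gamma}_\alpha(\tau)$ and $\partial_\alpha\Gamma=J_\alpha(\tau)$. Here $J_\alpha$ is the variational (Jacobi) field along the geodesic $\Gamma(\alpha,\cdot)$, with initial data $J_\alpha(0)=\lambda'(\alpha)$ and $\mathring{\nabla}_{\dot\gamma}J_\alpha(0)=\mathring{\nabla}_{\lambda'}\nu$. This is precisely the initial data (\ref{ICs_for_jacobi1}-\ref{ICs_for_jacobi2}) of Theorem \ref{bounds_for_jacobi_field_thm}. By Lemma \ref{lambda_sends_to_CH}, each such geodesic satisfies $\gamma^v\to 0$, so the hypotheses of that theorem hold. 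Its proof shows $J^0\equiv 0$ in the parallel orthonormal frame $(e_0,e_1)$, so $J_\alpha(\tau)=J^1(\tau)e_1$. Since $e_0=\dot\gamma$ and $e_1$ form a basis at every point, the determinant of $d\Gamma$ in these frames equals $\pm J^1(\tau)$. Equivalently, in double null coordinates,
\begin{align}
\det d\Gamma = \dot\gamma^u J^v - \dot\gamma^v J^u .
\end{align}
Using $J^u=\dot\gamma^v J^1$ and $J^v=-\dot\gamma^v J^1$ from the proof of Corollary \ref{orientation_preserving_corollary} (with the basis relations (\ref{ON_basis_relations})), this determinant is a nonzero multiple of $J^1$. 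The factor is a product of $\dot\gamma^u,\dot\gamma^v$, which are bounded away from zero by Corollary \ref{unidorm_bounds_on_geod_velocities}. Theorem \ref{bounds_for_jacobi_field_thm} gives $|J^1(\tau)|\geq n_\perp>0$ for $0<\tau<T(\alpha)$, so $\det d\Gamma\neq 0$ throughout $D^0$.

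The main point needing care is justifying that $\partial_\alpha\Gamma$ really is the Jacobi field with the stated initial data. This requires $\Gamma$ to be $C^2$ near the segment, so that $\partial_\alpha$ and $\partial_\tau$ commute and $\mathring{\nabla}_{\partial_\tau}\partial_\alpha\Gamma=\mathring{\nabla}_{\partial_\alpha}\partial_\tau\Gamma$; this is where openness of $D^0$ and smoothness of the exponential map are used. At $\tau=0$ it gives $\partial_\alpha\Gamma=\lambda'$ and $\mathring{\nabla}_\tau\partial_\alpha\Gamma=\mathring{\nabla}_{\lambda'}\nu$, and the latter has no $e_0$-component because $\langle\nu,\nu\rangle=-1$ implies $\langle\mathring{\nabla}_{\lambda'}\nu,\nu\rangle=0$. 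Uniqueness for the linear Jacobi ODE along each fixed geodesic then identifies $\partial_\alpha\Gamma$ with the field estimated in Theorem \ref{bounds_for_jacobi_field_thm}. Since $D^0$ is an open domain and every point of it lies on some geodesic $\Gamma(\alpha,\cdot)$ with $0<\tau<T(\alpha)$, this pointwise argument covers all of $D^0$ and completes the proof.
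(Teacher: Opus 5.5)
Your proposal is correct and follows the paper's proof essentially step for step. You apply the inverse function theorem to the Jacobian with columns $\dot\gamma$ and $J=\partial_\alpha\Gamma$, identify $J$ as the Jacobi field with initial data $(\lambda',\mathring{\nabla}_{\lambda'}\nu)$ so that $J=J^1e_1$, and use the lower bound $|J^1|\geq n_\perp>0$ from Theorem \ref{bounds_for_jacobi_field_thm} to get a determinant $\pm J^1\neq 0$ in the frame $(e_0,e_1)$. One small slip, inherited from the proof of Corollary \ref{orientation_preserving_corollary}: since $e_1=\dot\gamma^u\partial_u-\dot\gamma^v\partial_v$, one has $J^u=\dot\gamma^u J^1$ (not $\dot\gamma^v J^1$). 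Hence $\dot\gamma^u J^v-\dot\gamma^v J^u=-2\dot\gamma^u\dot\gamma^v J^1=-2e^{-2\omega}J^1$, which still vanishes nowhere, so your conclusion is unaffected.
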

       \begin{proof}
    Without loss of generality, consider the 'central' geodesic $\gamma = \Gamma(0, \cdot)$. The variational field of $\Gamma$ along $\gamma$ has components in the double null basis given by:
      \begin{equation}
          J^\mu (0, \tau)= \frac{\partial}{\partial \alpha}\Gamma^\mu(0,\tau)
      \end{equation}
      for $\mu\in\{u, v\}$ where $\Gamma^\mu = x^\mu \circ\Gamma$ are the coordinates of the geodesic variation in the double null chart. Hence at $\tau=0$:
      \begin{equation}
          J^\mu(0,0) = \frac{d\lambda}{d\alpha}(0)\label{determine_J_initial}
      \end{equation}
      The tangent to $\gamma$ at $\tau=0$ has components in this coordinate basis
      \begin{equation}
     \dot{\gamma}^\mu (\tau) = \frac{\partial}{\partial \tau} \Gamma^\mu (0,\tau=0) = \nu^\mu(\alpha=0)
      \end{equation} 
      The components of the normal derivative to the variational field $J$, $\nabla_{\dot{\gamma}} J$ at $(\alpha=0, \tau=0)$ are given by:
      \begin{align}
          \nabla_{\dot{\gamma}} J^\mu (0,0) = \partial_\tau \partial_\alpha {\Gamma}^\mu(0,0) +\mathring{\Gamma}^\mu_{\rho\sigma} \dot{\gamma}^\rho \partial_\alpha \Gamma^\sigma (0,0) = \nabla_{J(0,0)} \partial_\tau{\Gamma}^\mu (0,0) =  \nabla_{\lambda'(0)}\nu^\mu (0) \label{determine_normal_derivative}
      \end{align}
       Where the Christoffel symbols $\mathring{\Gamma}^\mu_{\nu\rho}$ of $g_\mathcal{Q_R}$ are listed in Appendix \ref{Christoffel_Symbols}. As described in Theorem \ref{bounds_for_jacobi_field_thm}, $J(0, \cdot)$ satisfies the Jacobi equation with initial data given by (\ref{determine_J_initial}) and (\ref{determine_normal_derivative}). Spacetime is smooth away from the Cauchy horizon, so this IVP for the Jacobi equation admits a unique smooth solution\footnote{This is because the geodesic velocity components and the Riemann tensor components are smooth away from the Cauchy horizon, and the Jacobi equation is a second-order linear ODE for the components of $J$.} for $0<\tau<T(\alpha=0)$. At $\tau=0$ the Jacobi field is normal to the geodesic $\gamma$. Since  $0=\nabla_{\lambda'} \langle\nu,\nu\rangle=2\langle\nabla_{\lambda'} \nu, \nu\rangle$, at $\tau=0$ the normal derivative of the Jacobi field is also normal to the geodesic $\gamma$. Hence, and as we work on the quotient manifold\footnote{Results obtained for the geodesic variation on $\mathcal{Q_R}$ are then lifted to the 3+1 manifold $\mathcal{R}$ where the fluid lives.} $\mathcal{Q}_\mathcal{R}$, by Equation (\ref{JAD_0}) both the Jacobi field $J(0,\tau)$ and its derivative $\nabla_{\dot{\gamma}} J(0, \tau)$ remain radial and normal to the geodesic $\gamma$ for all $0<\tau<T(0)$. Referring again to the parallelly transported orthonormal basis $(e_0:=\dot{\gamma}, e_1)$ from (\ref{ON_BASIS}), we can write the initial data for the Jacobi equation along $\gamma$ as $J(0,0)=J^1(0,0) e_1 = \lambda'(0)$ and  $\nabla_{\dot{\gamma}}J(0,0) = \big\langle\nabla_{\lambda'(0)} \nu(0), e_1\big\rangle e_1$ in the orthonormal basis at $\gamma(0)$. As $\lambda$ is a $(n_0, N_0, C_K)$--admissible curve, the hypotheses of Theorem \ref{bounds_for_jacobi_field_thm} are satisfied. Hence by Theorem \ref{bounds_for_jacobi_field_thm} there exist $0<n_{\perp,0}<N_{\perp,0}<\infty$ such that
      \begin{equation}
          n_{\perp,0}\leq |J^1(0,\tau)|\leq N_{\perp,0} \ \forall\  0<\tau<T(0)
      \end{equation}
      Using this property, we can employ the inverse function theorem to show that for arbitrary $\tau \in (0, T(0))$, the point $(0,\tau) \in D^0$ has an open neighbourhood $V\subset \mathbb{R}^2$ such that the restriction $\Gamma|_V: V \to \Gamma(V)\subset \mathcal{Q_R}$ is a diffeomorphism. By the inverse function theorem, this holds if and only if the linearisation $D\Gamma(0, \tau): T_{(0,\tau)}D^0\to T_{\Gamma(0,\tau)}\mathcal{Q}_\mathcal{R}$, of $\Gamma$, is invertible at $(0,\tau)$. Since coordinate charts are diffeomorphisms, this occurs if and only if the Jacobian matrix, representing the linearisation of $\Gamma$ in the double null chart $\{u,v\}$ is invertible at $(0,\tau)$. Indeed,
      \begin{align}
          [D\Gamma(0, \tau)]_i^\mu =
          \begin{pmatrix}
          \frac{\partial \Gamma^u}{\partial \alpha} & \frac{\partial \Gamma^v}{\partial \alpha} \\
          \frac{\partial \Gamma^u}{\partial \tau} & \frac{\partial \Gamma^v}{\partial \tau} 
          \end{pmatrix}\Bigg|_{\alpha=0} = 
          \begin{pmatrix}
              J^u(0,\tau) & J^v(0,\tau)\\
              \dot{\gamma}^u(\tau) & \dot{\gamma}^v(\tau)
          \end{pmatrix}
      \end{align}
      where $i$ stands for either $\tau$ or $\alpha$. We can transform this matrix into the parallelly transported orthonormal basis $(e_A)$ along $\gamma$ as in Theorem \ref{bounds_for_jacobi_field_thm},
      \begin{align}
           \begin{pmatrix}
              J^u(0,\tau) & J^v(0,\tau)\\
              \dot{\gamma}^u(\tau) & \dot{\gamma}^v(\tau)
          \end{pmatrix} = 
          \begin{pmatrix}
               (\Lambda^{-1})^u_A J^A(0,\tau) & (\Lambda^{-1})^v_A J^A(0,\tau)\\
              (\Lambda^{-1})^u_A \dot{\gamma}^A(\tau) & (\Lambda^{-1})^v_A \dot{\gamma}^A(\tau)
          \end{pmatrix} = [(\Lambda^{-1})^\mu_A][D\Gamma(0,\tau)]^A_i
      \end{align}
      where $\Lambda$ is the basis transformation matrix in (\ref{ON_basis_relations}). Note that 
      \begin{align}
              [D\Gamma(0,\tau)]^A_i = 
              \begin{pmatrix}
                  \dot{\gamma}^0(\tau) & J^0(\tau) \\
                  \dot{\gamma}^1(\tau) & J^1(\tau)
              \end{pmatrix} = 
              \begin{pmatrix}
                  1 & 0\\
                  0 & J^1(\tau)
              \end{pmatrix}
      \end{align}
      This is invertible, because $J^1(0, \cdot)$ is bounded away from zero on the image of $\gamma$. Since basis transformation matrices are invertible, it follows that $[D\Gamma]^\mu_i$ is invertible at $(0, \tau)$. Therefore by the inverse function theorem, each point in $\{0\}\times (0,T(0))$ has an open neighbourhood $V\subset D^0$ so that $\Gamma|_V: V \to \Gamma(V) \subset \mathcal{Q}_\mathcal{R}$ is a diffeomorphism. \\[5pt]
       Note that condition 3 of admissibility (Definition \ref{define_admissible_curve}) guarantees that the infimum $\inf_{\alpha} n_{\perp,\alpha}=n_\perp$ is positive and that the supremum $\sup_{\alpha} N_{\perp,\alpha} = N_\perp$ is finite. So the bounds on the components of the variational field can be taken to be uniform across the geodesic variation. This means that one can apply the above argument for each geodesic $\Gamma(\alpha, 0)$ with $\alpha\neq 0$ in the variation, proving that $\Gamma$ is locally a diffeomorphism onto its image.
    \end{proof}
   \subsubsection{$\Gamma$ is {injective in the interior}}
       In order to show that $\Gamma$ is \textit{globally} a diffeomorphism onto its image, we need to show that it is injective. The proof of this requires using the monotonicity of the coordinates $u$ and $v$ along timelike and spacelike curves (specific choices of these curves are defined below). For this argument to be executed, we need to ensure that the curves in question lie in the domain $D^0$ of $\Gamma$. Indeed, this follows from the fact that $D^0$ is a convex subset of $\mathbb{R}^2$, which is itself a consequence of the monotonicity of $T(\cdot)$, Proposition \ref{T_is_monotonic}.    
    \begin{thm}\label{injectivity_of_Gamma}
        Under the hypotheses of Theorem \ref{interior_regularity_of_gamma}, $\Gamma$ is injective.
    \end{thm}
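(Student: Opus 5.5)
Without loss of generality I assume $(\lambda')^v>0$ and $(\lambda')^u<0$ along $\lambda$; the other orientation is handled identically by reversing the parameter. Suppose for contradiction that $\Gamma(\alpha_1,\tau_1)=\Gamma(\alpha_2,\tau_2)=:q$ with $(\alpha_1,\tau_1)\neq(\alpha_2,\tau_2)$, both in $D^0$. If $\alpha_1=\alpha_2$, then $q$ lies twice on the same future-directed timelike geodesic. Along such a geodesic $v$ is strictly increasing, which forces $\tau_1=\tau_2$; equivalently, Lemma \ref{no_closed_timelike_curves} excludes a closed timelike loop. So I may take $\alpha_1<\alpha_2$.

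The main idea is to connect the two parameter points by a path in $D^0$ built from two pieces: a horizontal piece, along which Corollary \ref{orientation_preserving_corollary} applies, and a vertical piece, along which Corollary \ref{unidorm_bounds_on_geod_velocities} applies. I then use the fact that each coordinate is strictly monotone along each piece.

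First I check that $D^0$ is convex enough for this. It equals int$(D)=\{-a<\alpha<a,\ 0<\tau<T(\alpha)\}$, and $T$ is continuous and strictly decreasing by Propositions \ref{T_is_monotonic} and \ref{contn_dep_of_T_on_alpha}. Hence for $\alpha_1<\alpha_2$ and $\tau\le\min(\tau_1,\tau_2)$ the horizontal segment $[\alpha_1,\alpha_2]\times\{\tau\}$ lies in $D^0$, since $T(\alpha)\ge T(\alpha_2)>\tau_2$ is not needed, only $T(\alpha)>\tau$ for $\alpha\le\alpha_2$. More precisely, $\tau_2<T(\alpha_2)\le T(\alpha)$ for $\alpha\le\alpha_2$, so every horizontal segment at height $\tau\le\tau_2$ from $\alpha_1$ to $\alpha_2$ lies in $D^0$. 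Likewise the vertical segments $\{\alpha_i\}\times[\tau,\tau_i]$ lie in $D^0$.

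Case analysis: I compare $\tau_1$ and $\tau_2$.

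\emph{Case $\tau_1\ge\tau_2$.} I go from $(\alpha_2,\tau_2)$ horizontally to $(\alpha_1,\tau_2)$ and then vertically up to $(\alpha_1,\tau_1)$. On the horizontal piece $\partial_\alpha\Gamma^u=J^u<0$ by Corollary \ref{orientation_preserving_corollary} and its remark, since the sign is preserved from $\lambda$ by continuity and the lower bound $n_{null}$. Moving toward smaller $\alpha$, the coordinate $u$ therefore strictly increases, giving $\Gamma^u(\alpha_1,\tau_2)>\Gamma^u(\alpha_2,\tau_2)$. On the vertical piece $u$ is nondecreasing, because $\dot\gamma^u>0$. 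Hence $\Gamma^u(\alpha_1,\tau_1)>\Gamma^u(\alpha_2,\tau_2)$, which contradicts $\Gamma(\alpha_1,\tau_1)=\Gamma(\alpha_2,\tau_2)$.

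\emph{Case $\tau_1<\tau_2$.} I go from $(\alpha_1,\tau_1)$ horizontally to $(\alpha_2,\tau_1)$, which stays in $D^0$ by the convexity remark above, and then up to $(\alpha_2,\tau_2)$. Along the horizontal piece $v$ strictly increases, since $J^v>0$. Along the vertical piece $v$ is nondecreasing. Hence $\Gamma^v(\alpha_2,\tau_2)>\Gamma^v(\alpha_1,\tau_1)$, again a contradiction.

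The integrals along the segments are justified because the segments lie in int$(D)$, where $\Gamma$ is smooth, so the fundamental theorem of calculus applies.

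The main obstacle is justifying that the sign of $J^u$ and $J^v$ is constant along every geodesic and every $\tau$. This follows from $|J^u|,|J^v|\ge n_{null}>0$ together with continuity of $J$ in $\tau$ and in $\alpha$ on the connected set $D^0$, combined with the initial signs $J=\lambda'$ at $\tau=0$.
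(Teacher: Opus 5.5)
Your proof is correct and follows the same route as the paper. Both connect the two parameter points by an L-shaped path in $D^0$: a horizontal leg along an integral curve of $J$, where $u$ and $v$ are strictly monotone in opposite senses by Corollary~\ref{orientation_preserving_corollary}, and a vertical leg along a geodesic, where both coordinates are nondecreasing; Lemma~\ref{no_closed_timelike_curves} handles the equal-$\alpha$ case.

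Your version is slightly more careful than the paper's on one point. The paper writes out only the case $\tau_p\le\tau_q$, says the others are identical, and justifies its segments by a convexity of $D^0$ that monotonicity of $T$ alone does not give. You instead always put the horizontal leg at the lower height, check directly via $T(\alpha)\ge T(\alpha_2)>\tau_2$ that it stays in $D^0$, and switch to the $u$-coordinate when $\tau_1\ge\tau_2$.
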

    \begin{proof}
    Having shown monotonicity of $T(\cdot)$, we ensure that every straight line with endpoints in $D^0$ lies entirely in $D^0$. 
      To show that $\Gamma$ is injective, let $\Gamma(\alpha_p,\tau_p) = p, \Gamma(\alpha_q, \tau_q) = q$ with $(\alpha_p, \tau_p)\neq (\alpha_q,\tau_q) \in D^0$. We consider the two cases: (a) $\alpha_p = \alpha_q$; and (b) $\alpha_p \neq \alpha_q$\\[5pt]
      Case (a) is near obvious: since $\Gamma(\alpha, \cdot)$ is future-directed timelike for each $\alpha$, if (wlog) $\tau_p<\tau_q$ and $\alpha_p = \alpha_q$ then $p=\Gamma(\alpha_p,\tau_p) \ll \Gamma(\alpha_q,\tau_q)=q$\footnote{Notation: $p\ll q \iff q \in I^+(p)$.}. By Lemma \ref{no_closed_timelike_curves} there are no closed timelike curves in $\mathcal{Q_R}$, so $p\neq q$.\\[5pt]
      Case (b): $\alpha_p \neq \alpha_q$. Suppose $\alpha_p<\alpha_q$ and $\tau_p \leq \tau_q$. The cases $\alpha_p>\alpha_q$ and/or $\tau_p\geq\tau_q$ are treated in an identical way. In the $(\alpha,\tau)$-plane, let $\mathcal{C}_\perp$ denote the horizontal straight line segment which connects $(\alpha_p, \tau_p)$ to $(\alpha_q, \tau_p)$, and let $\mathcal{C}_{||}$ be the (possibly constant\footnote{If $\tau_p=\tau_q$ then $\mathcal{C}_{||} = \{(\alpha_q, \tau_q) \}$ is a point.}) vertical straight line segment connecting $(\alpha_q, \tau_p)$ to $(\alpha_q, \tau_q)$. See Figure \ref{Gamma_Map_fig}.
     \begin{figure}[h]
         \centering
         \includegraphics[width=0.9\linewidth]{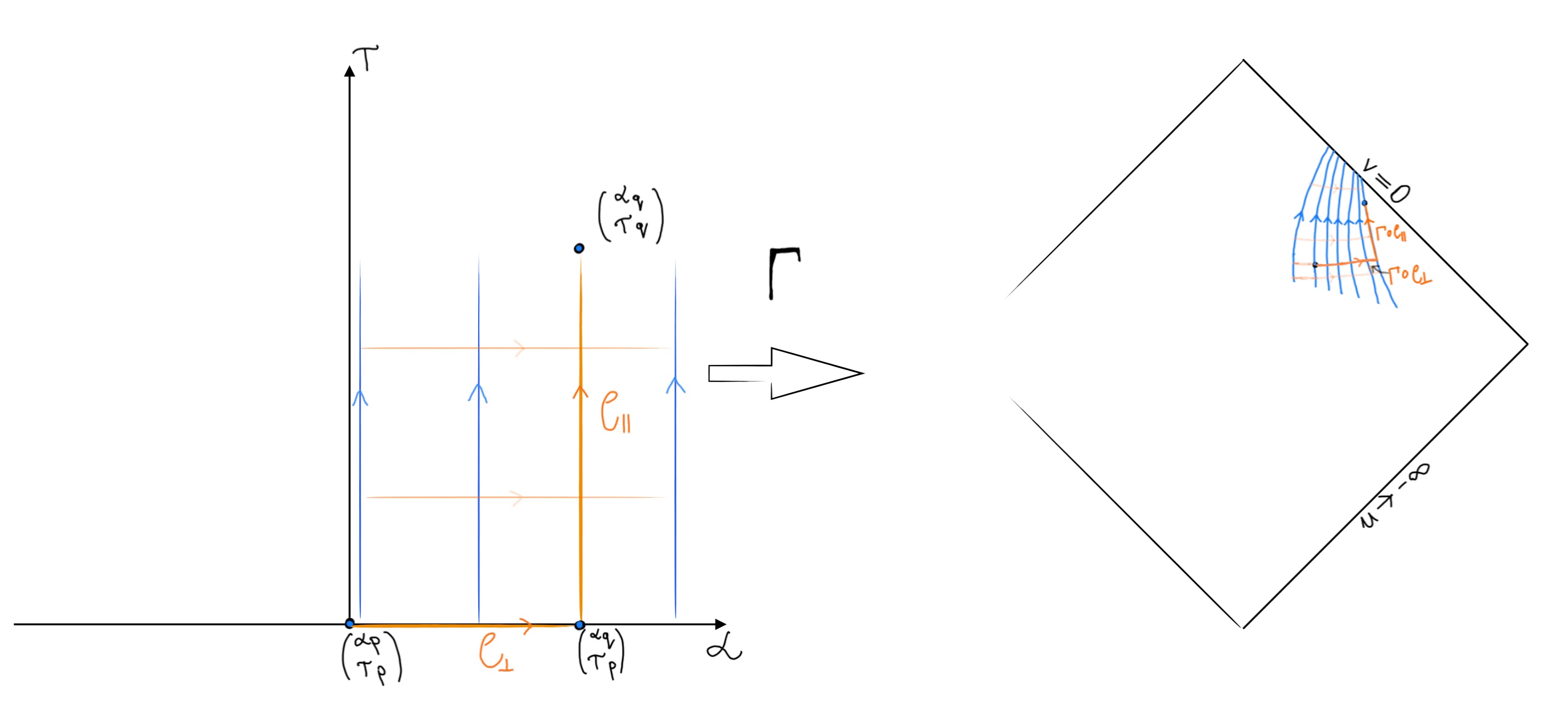}
         \caption{Images of the contours $\mathcal{C}_\perp$ and $\mathcal{C}_{||}$ under the map $\Gamma$. The image of $\mathcal{C}_{||}$ is a segment of a geodesic of $\Gamma$, and the image of $\mathcal{C}_{\perp}$ is a segment of an integral curve of the Jacobi field.}
         \label{Gamma_Map_fig}
     \end{figure}
     Then the contour $\mathcal{C}_\perp$ can be parameterized by 
     \begin{equation}\label{par1}
         [0, 1] \to \mathbb{R}^2, t \mapsto \begin{pmatrix}
             (1-t)\alpha_p + t \alpha_q \\ \tau_p
         \end{pmatrix}
     \end{equation}
     and the contour $\mathcal{C}_{||}$ can be parameterized by 
     \begin{equation}\label{par2}
         [0, 1] \to \mathbb{R}^2, t \mapsto \begin{pmatrix}
             \alpha_q \\ (1-t)\tau_p + t \tau_q
         \end{pmatrix}
     \end{equation}
     As a consequence of Lemma  \ref{T_is_monotonic}, and since the boundary of $D^0$ consists of the straight lines $\{\alpha = -a\}, \{\alpha=a\}$, $\{\tau=0\}$ and the graph of $T(\cdot)$, it follows that $D^0$ is a convex subset of $\mathbb{R}^2$. Consequently, the contour $\mathcal{C}_\perp$ lies in the domain of $\overline{\Gamma}$. With the parameterization given by (\ref{par1}-\ref{par2}),
     \begin{enumerate}
         \item \label{contour1} $\Gamma \circ \mathcal{C}_\perp, t \mapsto \Gamma((1-t)\alpha_p + t \alpha_q, \tau_p)$ is a continuous segment of the integral curve $\Gamma(\cdot, \tau_q)$ of the geodesic deviation field $J$ which passes through $p\equiv \Gamma(\alpha_p, \tau_p)$. 
         \item  \label{contour2} $\Gamma \circ \mathcal{C}_{||}, t \mapsto \Gamma(\alpha_q, (1-t)\tau_p + t \tau_q)$ is either a continuous segment of a geodesic $\Gamma(\alpha_q, \cdot)$ of $\Gamma$ which passes through $q \equiv \Gamma(\alpha_q, \tau_q)$, or a single point. 
     \end{enumerate}
     We claim that $\Gamma(\alpha_p, \tau_p) \neq  \Gamma(\alpha_q, \tau_p)$. Indeed, as the integral curves of $J$ are spacelike (so $g(J,J)= -e^{2\omega} J^u J^v > 0 \implies J^u J^v<0$), along the segment $\Gamma \circ \mathcal{C}_\perp$ the following holds for the double null coordinates $u$ and $v$: either $u$ is strictly decreasing and $v$ is stricty increasing or $u$ is strictly increasing and $v$ is strictly decreasing.  We can wlog assume that $v$ is increasing (becoming less negative) and $u$ is decreasing (becoming more negative). The opposite case is obtained by flipping the sign of $J$, or by assuming $\alpha_q<\alpha_p$. We find that 
     \begin{equation}
         \Gamma^v(\alpha_q, \tau_p)> \Gamma^v(\alpha_p, \tau_p) \text{ and } \Gamma^u (\alpha_q, \tau_p)<\Gamma^u(\alpha_p, \tau_p) \label{comparison1}
     \end{equation}
     Next, \ref{contour2}. implies that either $\tau_p = \tau_q$, or the curve $\Gamma \circ \mathcal{C}_{||}$ is future-directed timelike -  hence injective. In each case, along $\gamma$ both $u$ and $v$ are non-decreasing by future-directedness. Therefore 
     \begin{equation}
         \Gamma^v(\alpha_q, \tau_p)\geq \Gamma^v(\alpha_q,\tau_q) \text{  and  } \Gamma^u(\alpha_q, \tau_p)\leq\Gamma^u(\alpha_q, \tau_q) \label{comparison2}
     \end{equation}
     By (\ref{comparison1}) and (\ref{comparison2}) it follows that 
     \begin{equation}
         v(p) = v(\Gamma(\alpha_p, \tau_p))<v(\Gamma(\alpha_q, \tau_q)) = v(q) \implies p \neq q
     \end{equation}
     Concluding that the map $\Gamma$ is injective, hence a (global) diffeomorphism onto its 2-dimensional image. 
       \end{proof}

\subsubsection{$\overline{\Gamma}$ is {injective on the boundary}}
\begin{prop} \label{boundary_regularity_of_gamma}
    Let $\overline{\Gamma}: {D}\to \overline{\mathcal{Q}}_\mathcal{R}$ be as in Theorem \ref{interior_regularity_of_gamma}. We claim that the boundary map $\Gamma_{\mathbf{b}}: \alpha \to \Gamma(\alpha, T(\alpha))\in \mathcal{CH}^+$ is injective. 
    \begin{proof}
        Wlog we assume that $(\lambda')^v>0$ and $(\lambda')^u<0$. Then the component $J^u=\partial_\alpha\Gamma^u$ of the Jacobi field is negative everywhere on Image$(\Gamma)\subseteq \mathcal{Q}_\mathcal{R}$. Let $\alpha, \beta$ satisfy $-a<\beta<\alpha<a$. By Proposition \ref{T_is_monotonic}, $T(\beta)> T(\alpha)$. Let $\tau_\delta = T(\alpha)- \delta$ for $\delta>0$ small as we wish.
         By definition of $\Gamma_{\mathbf{b}}$, $\Gamma_\mathbf{b}^v(\alpha) = \Gamma_\mathbf{b}^v(\beta) = 0$, so we have to show that $\Gamma_\mathbf{b}^u(\alpha) \neq \Gamma_\mathbf{b}^u(\beta)$. The difference between the $u$-coordinates of the boundary values of the two geodesics can be written as:
        \begin{align}
        &\Delta\Gamma_{\mathbf{b}}^u[\alpha;\beta]:=\Gamma(\beta, T(\beta)) - \Gamma(\alpha, T(\alpha)) = \nonumber \\[3pt]
            & =\underbrace{\int_{\tau_\delta}^{T(\alpha)}\big[ \partial_\tau{\Gamma}^u(\beta, \tau) - \partial_\tau{\Gamma}^u(\alpha,\tau) \big]d\tau}_{A} + \underbrace{\big[ \Gamma^u(\beta, \tau_\delta) - \Gamma^u(\alpha, \tau_\delta)\big]}_{B} + \underbrace{\int_{T(\alpha)}^{T(\beta)} \partial_\tau{\Gamma}^u(\beta,\tau)d\tau}_{C} \label{delta_gamma_b}
        \end{align}
        \vspace{-2em}
        \begin{figure}[H]
        \centering
\includegraphics[width=0.45\linewidth]{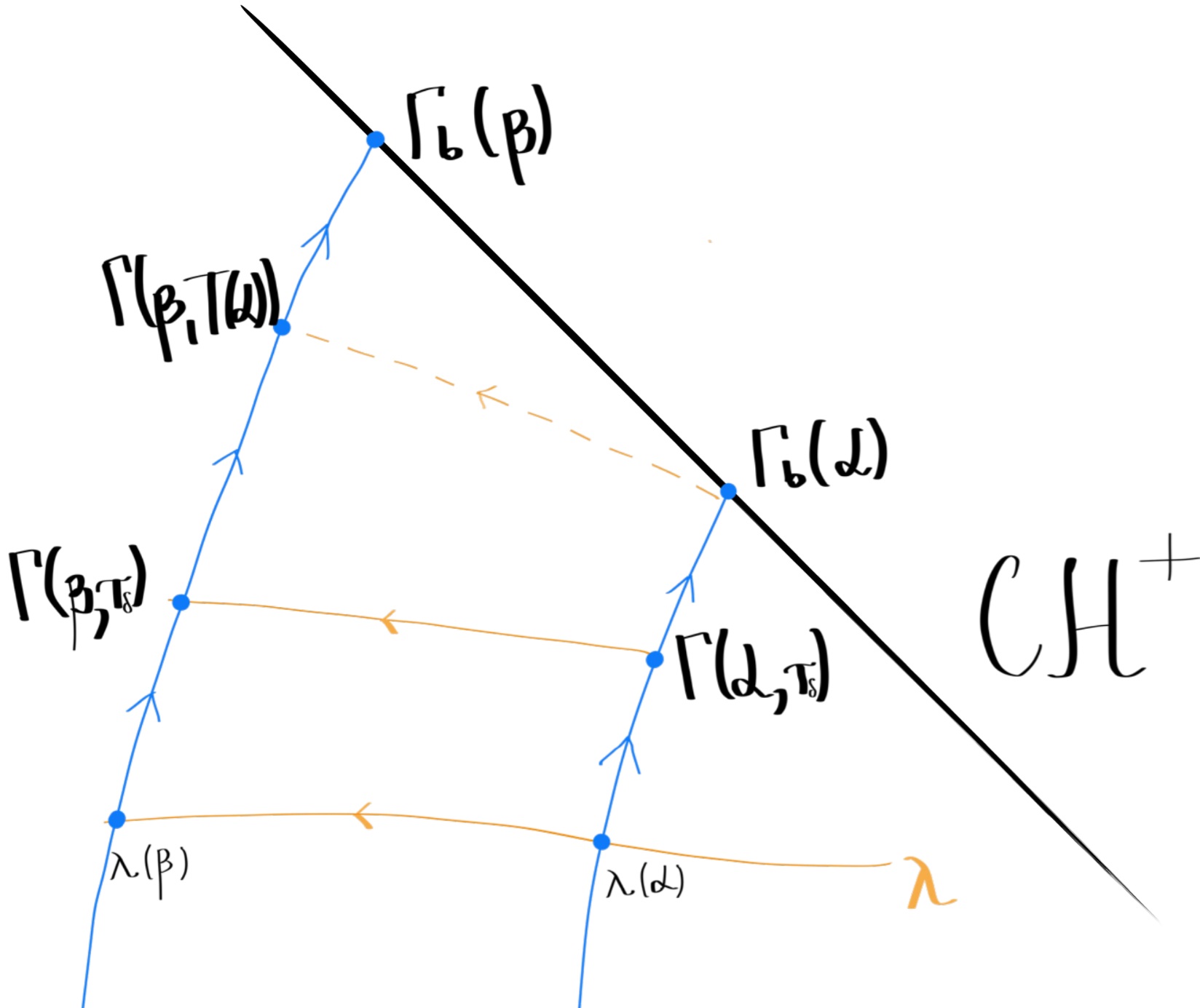}
        \caption{Illustration of the quantities involved in equation (\ref{delta_gamma_b}). We can identify the terms from the equation in the diagram: $A=[\Gamma^u(\beta,T(\alpha)) -\Gamma^u(\beta,\tau_\delta)] + [\Gamma^u(\alpha,T(\alpha)) - \Gamma^u(\alpha,\tau_\delta)]$ is the difference of the two 'parallel vertical segments', $B=[\Gamma^u(\alpha, \tau_\delta) - \Gamma^u(\beta, \tau_\delta)]$ is the 'horizontal segment' and $C = \Gamma^u(\beta,T(\beta)) - \Gamma^u(\beta, T(\alpha))$ is the 'remaining vertical segment'.}
    \end{figure}
    
       It should be clear that $A$ and $B$ are not the constants in (\ref{A}-\ref{B}) but rather terms that we estimate in the current Proposition. By Corollary \ref{unidorm_bounds_on_geod_velocities},
       \begin{align}
            A \leq \delta \big(e^{2C_g}\sqrt{C_g N_0/n_0}-e^{-2C_g} \sqrt{n_0/C_g^3N_0}\big)\to 0 
       \end{align}
       as $\delta \searrow 0$.
       Next, we look at 
        \begin{equation}
            B={\Gamma}^u(\alpha, \tau_\delta) - {\Gamma}^u(\beta,\tau_\delta) = \int^\alpha_\beta -J^u(\alpha', \tau_\delta) d\alpha' \geq n_{null}(\alpha-\beta)>0\label{term_B}
        \end{equation}
        by Corollary \ref{orientation_preserving_corollary}. It follows that $ \lim_{\delta\to 0^+} B \geq n_{null}(\alpha - \beta) > 0$. Finally, as $T(\alpha)< T(\beta)$ and $\partial_\tau{\Gamma}^u>0$ on all of $D^0$, it follows that $C>0$. We conclude that 
        \begin{equation}
        \Delta\Gamma_\mathbf{b}^u[\alpha;\beta] = \lim_{\delta\to 0^+} [A+B+C] >   n_{null} (\beta - \alpha) > 0
        \end{equation}
    \end{proof}
\end{prop}

 \subsection{The Dust Energy Density}
Consider a spherically symmetric dust field, infalling towards $\mathcal{CH}^+$. The ultimate aim of this section is to prove that the fluid energy density does not blow up before, or at, the singularity, given reasonable initial data for the energy density $\rho(u,v)$ and four-velocity $U(u,v)$. Reasonable in our case means that the initial data will be supplied on an $(n_0, N_0, C_K)$--admissible curve $\lambda$. Admissible curves are contained in a compact subset of $\mathcal{Q}_\mathcal{R}$ close enough to the Cauchy horizon, so that a variation of future-directed timelike geodesics based on an admissible curve experiences no shell-crossing. The ultimate aim is achieved in the following result:
 \begin{thm}\label{dust_energy_density_theorem}  Let  $\overline{\Gamma}: D \to \overline{\mathcal{Q}_\mathcal{R}}$ be the geodesic variation based on an  $(n_0, N_0, C_K)$--admissible curve $\lambda:(-a, a)\to \mathcal{Q}_\mathcal{R}$ with future-directed unit normal $\nu$ and with $\inf_\alpha v[\lambda(\alpha)]=v_{\text{min}}$. Let $\rho=\rho(u,v)$ be the energy density of a spherically symmetric perfect pressureless fluid in $\mathcal{R}$ such that the pushforward of the 4-velocity field $U=U(u,v)$ into $\mathcal{Q_R}$ coincides with the tangent field to the geodesic variation $\Gamma$ as defined in Theorem \ref{master_theorem}\footnote{This includes the assumption, as posed in Theorem \ref{master_theorem}, that $\rho$ is bounded and positive on $\lambda$.}. Then $\rho$ remains bounded as $v\to 0^-$. More precisely, $\exists C_\rho>0$, depending on $n_0, N_0, C_K, v_{\text{min}}$ and $C_g$ such that $\rho\leq C_\rho \norm{\rho}_{C^0(\lambda)}$ on $\supp(U)\equiv\overline{\Gamma}(D)\times S^2$.

\begin{proof} With $U$ being spherically symmetric, $U^\theta= U^\varphi=0\ $ and $U^u = \partial_\tau \Gamma^u, U^v = \partial_\tau \Gamma^v\  \forall p \in \Gamma(D^0)$, so we can think of the image of integral curves of $U$ in $\mathcal{R}$ as identical to the image of geodesics of the variation $\Gamma$ in $\mathcal{Q_R}$. Each future-directed radial timelike geodesic approaches a unique point on $\mathcal{CH}^+$ at finite affine parameter time, so the extension of the geodesic variation (i.e. the flow of $U$) to $\overline{\mathcal{Q}_\mathcal{R}}$ is well-defined. Recall, as in Theorem \ref{master_theorem}, we denote such a variation of geodesics by 
\begin{equation}
    \overline{\Gamma}: D \to \overline{\mathcal{Q_R}} \ \ \text{      for      } \ \ 
    D = \{ (\alpha,\tau):0 \leq \tau \leq T(\alpha), -a<\alpha<a\} \subset \mathbb{R}^2
\end{equation}
where at affine parameter time $0$ one has $\overline{\Gamma}(\alpha, 0) = \lambda(\alpha)$ and at affine parameter time  $T(\alpha)$ one has $\overline{\Gamma}(\alpha, T(\alpha)) \in \mathcal{CH}^+$ for every $\alpha\in(-a,a)$. \\[5pt]
With this choice of initial data, $\overline{\Gamma}|_{D^0}=: \Gamma$ satisfies the hypotheses of Theorem \ref{interior_regularity_of_gamma} and so is a diffeomorphism onto its image (which coincides with $\supp (U)/S^2$). Moreover $\overline{\Gamma}|_{\{(\alpha, s):s=T(\alpha)\}}=:\Gamma_\mathbf{b}$ is injective in $\alpha$ by Proposition \ref{boundary_regularity_of_gamma}.
By the former property, we can use $(\alpha,\tau, \theta, \varphi)$ as a coordinate chart on $\supp(U)\subset \mathcal{R}$. By the transport equation (\ref{dust_transport_equation}) for $\rho$ along the integral curves of $U$,
\begin{align}
\ln\bigg[\frac{\rho\circ\Gamma(\alpha, \tau)}{\rho\circ\Gamma(\alpha,0)}\bigg] = - \int_0^\tau \nabla \cdot U (u,v) (\Gamma(\alpha,s))ds \label{consequence_of_transport_eqn}
\end{align}
As $\rho>0$ on Image$(\lambda)\times S^2$, the LHS is well-defined. It follows that for any $0<\tau\leq T(\alpha)$ and all $\alpha \in (-a, a)$, $\rho$ is finite at the point $\Gamma(\alpha, \tau)$ iff the RHS in (\ref{consequence_of_transport_eqn}) is finite, that is $\norm{\nabla\cdot U}_{L^1(\Gamma(\alpha,\cdot))}<\infty$. To deal with this expression, we fix a local parallelly transorted orthonormal basis (ONB) $(e_A)$:
\begin{align}
    \begin{pmatrix}
        e_0 \\
        e_1 \\
        e_2 \\
        e_3 
    \end{pmatrix} = \begin{pmatrix}
        U^u \partial_u + U^v \partial_v \\
        U^u \partial_u - U^v \partial_v \\
        r^{-1} \partial_\theta \\
        r^{-1}(\sin \theta)^{-1} \partial_\varphi
    \end{pmatrix}
\end{align}
Since the ONB is parallelly transported, $\nabla_U e_A \equiv \nabla_{e_0} e_A = 0$ for $A=0,1,2,3$. Furthermore, since $U$ is normalised, 
\begin{align}
    g(U,U) = g_{00} = -1; &\  g(e_1, e_1) = g_{11}= -2g_{uv}U^u U^v = 1
\end{align}
and it is easy to verify that
$ g(e_2,e_2) = g(e_3,e_3) = 1$.
Expressed in the ONB, the divergence of $U$ is
\begin{align}
    &\nabla\cdot U  = \nabla_A U^A = \sum_{A=1}^3 g(e_A, \nabla_{e_A} U)
\end{align}
where we used that $\nabla_{e_0} U = 0$. Let's examine the terms $A=1,2,3$ in the above sum. At each point on the  geodesic $\tau\mapsto (\Gamma(\alpha, \tau), \theta=\text{const}, \varphi=\text{const})$, $e_1$ is radial and normal to $U$. Extend the $1+1$-dimensional variational field $J$ and its normal derivative $K$ to vector fields on Image$(\Gamma)\times S^2\subset\mathcal{R}$ by setting $J^2, J^3, K^2,K^3=0$ everywhere, and denote them again by $J$ and $K$, respectively. Thus $e_1$ is related to the $3+1$-dimensional variational field $J$ of the geodesic variation by
\begin{equation}
    J = |J^1(\alpha,\tau)| e_1(\tau)
\end{equation}
at each $0\leq \tau < T(\alpha)$. Hence
\begin{align}
&\big\langle e_1, \nabla_{e_1} U \big\rangle = \bigg\langle \frac{J(\alpha,\tau)}{|J^1(\alpha,\tau)|} , \frac{\nabla_{J}U(u,v)}{|J^1(\alpha,\tau)|}\bigg\rangle = \frac{1}{|J^1(\alpha,\tau)|^2} \big\langle J(\alpha,\tau), \nabla_U J(\alpha,\tau) \big\rangle =\nonumber \\[5pt]
&=\frac{1}{|J^1(\alpha,\tau)|^2}\big\langle J(\alpha,\tau), K(\alpha,\tau) \big\rangle \leq \frac{\norm{e^{2\omega}}_{C(\Gamma(\alpha,\cdot))} N_\perp^2}{n_\perp^2} \leq \frac{C_g N_\perp^2}{n_\perp^2} \label{ONB_approach1}
\end{align}
where we used $[U,J]=0 = \nabla_U J - \nabla_J U$ and that $K=\nabla_U J$ is the derivative of the Jacobi field along the geodesic $\tau\mapsto(\Gamma(\alpha,\tau),\theta,\varphi)$. In the final estimate we used the uniform upper and lower bounds on the Jacobi field on all of Image$(\overline{\Gamma})$\footnote{Recall that $N_\perp = \sup_{\alpha} N_{\perp, \alpha}$ and $n_\perp = \inf_{\alpha} n_{\perp, \alpha}$ obtained by taking supremum/infimum over $\alpha$ of the upper/lower bounds of the Jacobi field provided by Theorem \ref{bounds_for_jacobi_field_thm}.}, the uniform upper bound on its derivative, and the uniform upper bound on the lapse function in $\mathcal{R}$ from Assumptions \ref{assumptions_on_L1_norms}. Recall that $n_\perp, N_\perp$ in Theorem \ref{bounds_for_jacobi_field_thm} depend on $C_g,v_{\text{min}}, N_0, n_0, C_K$.  We conclude that $\langle e_1, \nabla_{e_1} U \rangle$ is in $L^1(\Gamma(\alpha, \cdot))$. \\[6pt]
Next, we look at 
\begin{align}
    &\big\langle e_2, \nabla_{e_2} U \big\rangle = \frac{1}{r^2}\big\langle \partial_\theta, \nabla_{\theta} U \big\rangle = \frac{g_{\theta\theta}}{r^2} (\Gamma^\theta_{u\theta} U^u + \Gamma^\theta_{v\theta}U^v) = \frac{1}{r}(r_{,u} U^u + r_{,v} U^v)\\[3pt]
    &\big\langle e_3, \nabla_{e_3} U \big\rangle = \frac{1}{r^2\sin^2{\theta}}\big\langle \partial_\varphi, \nabla_{\varphi} U \big\rangle = \frac{g_{\varphi\varphi}}{r^2\sin^2 \theta} (\Gamma^\varphi_{u\varphi} U^u + \Gamma^\varphi_{v\varphi}U^v) = \frac{1}{r}(r_{,u} U^u + r_{,v} U^v)
\end{align}
  We can bound the integral of $(\ln r)_{,\mu}$ along the geodesic $\Gamma(\alpha, \cdot)$ as follows:
 \begin{align}
     \norm{(\ln r)_{,u}}_{L^1(\Gamma(\alpha,\cdot))} &\leq \frac{\norm{r_{,u}}_{L^1(\Gamma(\alpha,\cdot))}}{\inf_{\Gamma(\alpha,\cdot)}r(u,v)} = \frac{1}{r(\Gamma^u_{\mathbf{b}}(\alpha), 0)} \int_{u(\lambda(\alpha))}^{u(\Gamma_{\mathbf{b}}(\alpha))} \frac{|r_{,u}(u,v(u))|}{\partial_\tau \Gamma^u(\alpha, \tau(u))}du \leq \nonumber\\
      &\leq \frac{1}{r(\Gamma^u_{\mathbf{b}}(\alpha), 0)} \times \frac{1}{\inf_{\tau\in(0, T(\alpha))}{\partial_\tau \Gamma^u(\alpha,\tau)}} \times \sup_{v\in(v(\lambda(\alpha)), 0)}\norm{r_{,u}(\cdot,v)}_{L^1[u(\lambda(\alpha)), u(\Gamma_{\mathbf{b}}(\alpha))]} \leq \nonumber\\
     &\leq \frac{1}{r(\Gamma^u_{\mathbf{b}}(\alpha), 0)} { C_g e^{2C_g} \nu^v(\alpha)} \times C_g \leq  \frac{C_g^{5/2} e^{2C_g}}{r(\Gamma^u_{\mathbf{b}}(\alpha), 0)} \sqrt{\frac{N_0}{n_0}} \label{du_lnr_final_estimate}
 \end{align}
 The estimates applied in (\ref{du_lnr_final_estimate}) are in the following order:
 \begin{enumerate}
     \item Assumptions \ref{instability_assumptions} are used: since for each $u_0$ there is a $v_0$ such that $r_{,u}<0$ for all $v>v_0$, and since $r_{,v} \to - \infty$ as $v\to 0$, it follows that the infimum of the area radius on the image of the geodesic $\Gamma(\alpha,0)$ is achieved at the point on the Cauchy horizon $(u(\Gamma_{\mathbf{b}}(\alpha), 0), 0)$, where the geodesic terminates.
     \item Proposition \ref{lower_bds_for_velocities} is applied to obtain a bound on the $u$-component of the geodesic velocity in terms of $C_g$ and the $u$-component of the initial velocity $\partial_\tau \Gamma^u(\alpha, 0) = \nu^u(\alpha)$, where $\nu$ denotes the future-directed unit normal to $\lambda$. 
     \item From Assumptions \ref{assumptions_on_L1_norms}, equation (\ref{geom_bound_as2}) is applied to bound the $L^1$ norm of $r_{,u}$ along an ingoing null interval.
     \item By expressing $\nu^u(\alpha) = e^{-\omega}|_{\lambda(\alpha)} \sqrt{(\lambda')^u(\alpha)/(\lambda')^v(\alpha)}$ and applying equation (\ref{geom_bound_as1}) from Assumptions \ref{assumptions_on_L1_norms} as well as admissibility condition 3, we bound $\nu^u(\alpha)$ and hence obtain the final expression.
 \end{enumerate}
  We estimate $\norm{(\ln r)_{,v}}_{L^1(\Gamma(\alpha, \cdot))}$ by following identical steps and obtain the same bound on the RHS. Nevertheless, the main steps are shown below.
 \begin{align}
       \norm{(\ln r)_{,v}}_{L^1(\Gamma(\alpha,\cdot))} &\leq\frac{1}{r(\Gamma^u_{\mathbf{b}}(\alpha), 0)} \int_{v(\lambda(\alpha))}^0 \frac{|r_{,v}(u(v),v)|}{\partial_\tau \Gamma^v(\alpha, \tau(v))}dv \leq \nonumber\\
      &\leq \frac{\sup_{u\in(u(\lambda(\alpha)), u(\Gamma_{\mathbf{b}}(\alpha)))}\norm{r_{,v}(u,\cdot)}_{L^1[v(\lambda(\alpha)), 0]}}{r(\Gamma^u_{\mathbf{b}}(\alpha), 0) \inf_{\tau\in(0, T(\alpha))} {\partial_\tau \Gamma^v(\alpha,\tau)}}  \leq  \frac{C_g^{5/2} e^{2C_g}}{r(\Gamma^u_{\mathbf{b}}(\alpha), 0)} \sqrt{\frac{N_0}{n_0}} \label{dv_r_final_estimate}
 \end{align}
Next, we would like to bound $\norm{U^\mu}_{C(\Gamma(\alpha, \cdot))}$ for $\mu\in\{u,v\}$. This is easier and the estimate consists of two steps. First, applying Theorem \ref{estimates_for_geodesic_velocities} to bound $U^\mu(u(\Gamma(\alpha,\tau)),v(\Gamma(\alpha,\tau))) = \partial_\tau \Gamma^\mu(\alpha,\tau)$ in terms of $U^\mu(u(\Gamma(\alpha,0)),v(\Gamma(\alpha,0)) ) = \nu^\mu(\alpha)$ and the universal constant $C_g$. Second, expressing $\nu^\mu$ in terms of the components of $\lambda'$ and applying admissibility condition 3 in a way that is identical to point 4 in the estimate of $\norm{(\ln r)_{,u}}_{L^1(\Gamma(\alpha, \cdot))}$. In detail,
\begin{align}
    \norm{U^\mu}_{C(\Gamma(\alpha, \cdot))} = \sup_{(u,v)\in \text{Image}(\Gamma(\alpha,\cdot))} U^\mu(u,v) =\sup_{\tau\in(0,T(\alpha))}{\partial_\tau \Gamma^\mu(\alpha,\cdot)} \leq \nu^u(\alpha) e^{2C_g} \leq C_g^{1/2} e^{2C_g} \sqrt{\frac{N_0}{n_0}} \label{U_final_estimate}
\end{align}
 Combining (\ref{du_lnr_final_estimate} - \ref{dv_r_final_estimate}) and (\ref{U_final_estimate}), we conclude that $(\ln r)_{,u} U^u + (\ln r)_{,v} U^v$ is in $L^1(\Gamma(\alpha,\cdot))$. Therefore, $\langle e_2, \nabla_{e_2} U \rangle + \langle e_3, \nabla_{e_3} U \rangle \in L^1(\Gamma(\alpha, \cdot))$. Choosing $\tau\in(0, T(\alpha)]$, we conclude that $\nabla\cdot U$ is integrable along the integral curves of $U$ up to the Cauchy horizon, that is,
 \begin{align}
     &\Bigg|\ln\bigg[\frac{\rho\circ\Gamma(\alpha, \tau)}{\rho\circ\Gamma(\alpha,0)}\bigg]\Bigg| =\Bigg|\int_0^\tau \nabla \cdot U (u,v) (\Gamma(\alpha,s))ds \Bigg| \leq \sum_{A=1}^3 \int_{0}^\tau |\langle e_A , \nabla_{e_A} U  \rangle | ds \leq \nonumber\\
     &\leq C_g \frac{N_\perp^2}{n_\perp^2} + 4C_g^{1/2} e^{2C_g} \sqrt{\frac{N_0}{n_0}} \times \frac{1}{r(\Gamma_{\mathbf{b}}(\alpha, 0))} C_g^{5/2} e^{2C_g} \sqrt{\frac{N_0}{n_0}}\leq \nonumber\\
      &\leq \frac{N_0^2 C_g^3 (1+C_K^2) \exp{\Bigg[ C_g^{3/2} e^{2C_g}|v_{\text{min}}| \sqrt{\frac{N_0}{n_0}}  +  2C_g^{7/2} e^{2C_g} (1+|v_{\text{min}}|)\sqrt{\frac{N_0}{n_0}}\Bigg]} }{n_0^2 \bigg[1-{A|v_{\text{min}}|e^{B|v_\text{min}|}} \bigg]^2} + 4C_g^4 e^{4C_g} \frac{N_0}{n_0} \label{long_dust_estimate}
 \end{align}
 The first inequality is simply the integral triangle inequality. In the second inequality, we substitute the bounds obtained for $\norm{\langle e_1, \nabla_{e_1} U \rangle}_{L^1(\Gamma(\alpha,\cdot))}$ and for $2\norm{(\ln r)_{,u} U^u + (\ln r)_{,v} U^v}_{L^1(\Gamma(\alpha, \cdot))} = \norm{\langle e_2, \nabla_{e_2} U\rangle}_{L^1(\Gamma(\alpha, \cdot))} + \norm{\langle e_3, \nabla_{e_3} U\rangle}_{L^1(\Gamma(\alpha, \cdot))}$. In the third inequality we substituted the expressions (\ref{Nperp_expression}) and (\ref{lower_j_bound_1}) for $N_\perp$ and $n_\perp$ and used the uniform upper bound on $1/r$ provided in Assumptions \ref{assumptions_on_L1_norms}, equation (\ref{geom_bound_as1}).  In the interest of readability, we have not substituted in the expressions (\ref{A}-\ref{B}) for $A$ and $B$, but we recall that they depend on $C_g, N_0, n_0, C_K$ and $|v_{\text{min}}|$. Hence the RHS of (\ref{long_dust_estimate}) is a constant $C$, depending on $C_g, N_0, n_0, C_K$ and $|v_{\text{min}}|$. In conclusion,
 \begin{align}
     &|\ln (\rho(\Gamma(\alpha,\tau)) - \ln (\rho(\Gamma(\alpha,\tau)))| \leq C(C_g, n_0, N_0, C_K,|v_{\text{min}}|) \implies \nonumber\\
     &\rho(\Gamma(\alpha,\tau)) \leq \exp(C(C_g,n_0,N_0,C_K,|v_{\text{min}}|))\rho(\Gamma(\alpha,0)) \leq C_\rho \norm{\rho}_{C^0(\lambda)} \label{final_rho_estimate}
 \end{align}
 In the last inequality we defined the constant $C_\rho:= e^C$. Since the point $\Gamma(\alpha, \tau)\in \Gamma(D^0)$ was chosen arbitrarily, the dust energy density remains bounded on $\overline{\Gamma}(D)$.
\end{proof}
\begin{rem} From (\ref{final_rho_estimate}) one can also obtain a lower bound on $\rho(\Gamma(\alpha, \tau))$ in terms of $\rho(\Gamma(\alpha, 0))$ as follows: the first line of (\ref{final_rho_estimate}) is equivalent to 
\begin{align}
    &\pm \ln(\rho(\Gamma(\alpha,\tau))) \leq C \pm \ln(\rho (\Gamma(\alpha, 0))) \implies \nonumber\\
    &\exp(\pm \ln \rho(\Gamma(\alpha,\tau))) \leq C_\rho \exp(\pm \ln \rho(\Gamma(\alpha,0)))
\end{align}
Taking the $+$ sign, we obtain the statement of Theorem \ref{dust_energy_density_theorem} i.e. the second line of (\ref{final_rho_estimate}). Taking the $-$ sign, we obtain
\begin{align}
    \rho(\Gamma(\alpha,\tau)) \geq \frac{1}{C_\rho} \rho(\Gamma(\alpha,0))
\end{align}    
\end{rem}
\end{thm}

\subsection{Summary of Results: Proof of Theorem \ref{master_theorem}}
Let $\lambda, \Gamma, U, \rho$ satisfy the hypotheses of Theorem \ref{master_theorem}. By Corollary \ref{unidorm_bounds_on_geod_velocities}, $U^\mu$ are uniformly bounded and bounded away from zero on Image$(\overline{\Gamma})\times S^2$ as they are the components of timelike geodesics based on an admissible curve $\lambda$. This establishes claim 1 of Theorem \ref{master_theorem}. By Proposition \ref{T_is_monotonic},  the proper time function $T(\cdot)$ is monotonic and by Proposition \ref{contn_dep_of_T_on_alpha}, it is continuous in the parameter $\alpha$ which parameterises the initial curve $\lambda$ and labels the geodesics. By Theorem \ref{interior_regularity_of_gamma}, the geodesic variation map $\Gamma$ is a diffeomorphism onto its image $\Gamma(D^0)\subset \mathcal{Q}_\mathcal{R}$, and by Proposition \ref{boundary_regularity_of_gamma}, its extension  $\overline{\Gamma}: D \mapsto\overline{\Gamma}(D)\subset \overline{\mathcal{Q_R}}$ is a homeomorphism onto its image. Thus claims 2, 3, 4 and 5 of Theorem \ref{master_theorem} are established. By Theorem \ref{dust_energy_density_theorem}, the energy density $\rho= \rho(u,v)$ remains bounded throughout the region where the dust is supported, including at the weak null singularity on $\mathcal{CH}^+$, which proves the last claim of Theorem \ref{master_theorem}. 
\section{Spherical Stiff Fluid and The Wave Equation near the WNS}\label{wave_section}
    This Section is dedicated to proving the second main result of this paper, namely Theorem \ref{masterTheorem_2}. 
     The stiff fluid analysis in Section \ref{preliminaries} and in particular equation (\ref{lner_wave_eqn_psi}) tells us that, provided the gradient of the scalar field remains timelike and future-directed, the characteristic initial value problem for $(\rho, U)$ in Theorem \ref{masterTheorem_2} is equivalent to a characteristic initial value problem for the homogeneous linear wave equation for a smooth scalar field $\psi$ such that $\nabla_a \psi = V_a = \sqrt{\rho} U_a$ (so $\nabla\cdot V = 0 \iff \Box_g \psi = 0$). In view of the future-directedness of $V$, $-2e^{-2\omega}\partial_v\psi =V^u>0$ and $-2e^{-2\omega}\partial_u\psi =V^v>0$. We therefore insist that the derivatives $\partial_u \psi$ and $\partial_v \psi$ must have a negative sign (on the initial characteristic hypersurface). We remark that in a generic spacetime the duality between the relativistic stiff fluid problem (an inherently nonlinear problem) and the linear wave equation can break down even inside the domain of dependence of the hypersurface where the initial data is supplied. Namely, for solutions of the linear wave equation with either compactly supported data, or tails which vanish at spatial infinity, $\nabla^a \psi$ can become spacelike at some point, violating the physically motivated condition that the fluid velocity is a  future-directed timelike vector. An example where breakdown of duality happens for a compactly supported stiff fluid in Minkowski background is provided in \cite{Gavassino}. We will show that in the spherically symmetric weak null singularity background $(\mathcal{R}, g)$, imposing characteristic initial data so that $\partial_u \psi, \partial_v\psi <0$ on a bifurcate null hypersurface close to the Cauchy horizon leads to a solution of the wave equation whose gradient remains timelike and future-directed in the domain of dependence of the data. This crucially relies on restricting to a region where the gradient of $r$ is future-directed and timelike i.e. $\partial_u r, \partial_v r<0$ in double null coordinates. The existence of such a region is a key property of the singular nature of the spacetime, a consequence of Assumptions \ref{instability_assumptions}. \\[5pt]
     To prove the statement of Theorem \ref{masterTheorem_2} we must show that any solution of the characteristic IVP for the wave equation indeed blows up in $C^1$ at the singularity. This implies blow-up of the energy density of the stiff fluid. On the other hand, showing that the ratio $\partial_u\psi/\partial_v\psi$ goes to zero at the singularity will be needed to establish the statement regarding the behaviour of the fluid velocity components.\\[5pt]
     We begin by formulating the IVP in terms of the wave equation and proceed by proving the relevant estimates in the language of the wave equation, leading to the conclusion of Theorem \ref{masterTheorem_2}.
\subsection{Formulating the characteristic IVP}
A crucial part of the monotonicity argument in this section is choosing a domain where $\nabla^ar$ is a future-directed timelike vector i.e. where $r_{,u}$ and $r_{,v}$ are negative. Let $(u_0,u_1) \subset(-\infty, u_s)$ and $v_0\in(v_\kappa, 0)$ be such that  $\mathcal{C}:= \mathcal{C}_+ \cup \overline{\mathcal{C}_-}$, where
\begin{align}
    &\mathcal{C}_+ = \{u_0\}\times [v_0, 0)\times S^2; \\
    &\overline{\mathcal{C}_-} = [u_0, u_1] \times \{v_0\}\times S^2
\end{align}
is a $u_0$--admissible bifurcate null hypersurface in $\mathcal{R}$ in the sense of Definition \ref{define_admissible_hypersurface}.\\[5pt]

On $D^+(\mathcal{C})=[u_0,u_1]\times[v_0,0)\times S^2$, let $\rho$ and $U$ be defined as in Theorem \ref{masterTheorem_2}. and let $\psi$ be a scalar function in $D^+(\mathcal{C})$ defined by $\nabla^a\psi =\sqrt{\rho} U^a$. Such a scalar function exists by  Section \ref{stiff_fluid_to_wave_equation}. Again by Section \ref{stiff_fluid_to_wave_equation}, $\psi$ satisfies the spherically symmetric homogeneous linear wave equation in $D^+(\mathcal{C})$ with characteristic initial data $\mathring{\psi}$ supplied on $\mathcal{C}$
\begin{align}
    &\Box_g{\psi}(u,v) \ \ \in D^+(\mathcal{C})  \label{box_wave_eqn}\\ 
    &\psi|_{\mathcal{C}} = \mathring{\psi} \label{box_wave_char_data}
\end{align}
where $\partial_v\mathring{\psi}|_{\mathcal{C}_+} =-\frac{1}{2}e^{2\omega}\sqrt{\mathring{\rho}} \mathring{U}^u|_{\mathcal{C}_+}$ and $\partial_u\mathring{\psi}|_{\overline{\mathcal{C}_-}} =-\frac{1}{2}e^{2\omega}\sqrt{\mathring{\rho}} \mathring{U}^v|_{\overline{\mathcal{C}_-}}$ with $(\mathring{\rho}, \mathring{U})$ defined as in Theorem \ref{masterTheorem_2}.
Since $\mathring{\rho}$ and $\mathring{U}^\mu$ are assumed to be smooth functions on the restrictions $\mathcal{C}_+$ and $\overline{\mathcal{C}_-}$ of $\mathcal{C}$, it follows that $\mathring{\psi}|_{\mathcal{C}_+}\in C^\infty(\mathcal{C}_+)$ and $\mathring{\psi}|_{\overline{\mathcal{C}_-}}\in C^\infty(\overline{\mathcal{C}_-})$. \\[5pt]
Section 2 of \cite{Rendall} establishes a  method to reduce the characteristic IVP for a generic \textit{quasilinear} wave equation with smooth coefficients and smooth characteristic initial data to a Cauchy problem for an equation of the same form and with smooth Cauchy data. Furthermore, the future domain of dependence of the Cauchy data contains the domain of dependence of the characteristic data and the solution is identically zero on the complement. By Chapter 9 of \cite{Ringstrom}, the Cauchy problem for the quasilinear wave equation admits a unique smooth solution in a future neighbourhood of the initial spacelike hypersurface. On the other hand, Chapter 8 of \cite{Ringstrom} shows that local solutions to Cauchy problems for \textit{linear} wave equations can be uniquely extended to global solutions on the domain of dependence of the initial spacelike hypersurface. Combining the abovementioned results, we verify that there is a unique smooth solution of (\ref{box_wave_eqn}-\ref{box_wave_char_data}) in the future domain of dependence $D^+(\mathcal{C})$ of the hypersurface $\mathcal{C}$, illustrated in Figure \ref{wave_domain} below.

 \begin{figure}[H]
        \centering
\includegraphics[width=0.5\linewidth]{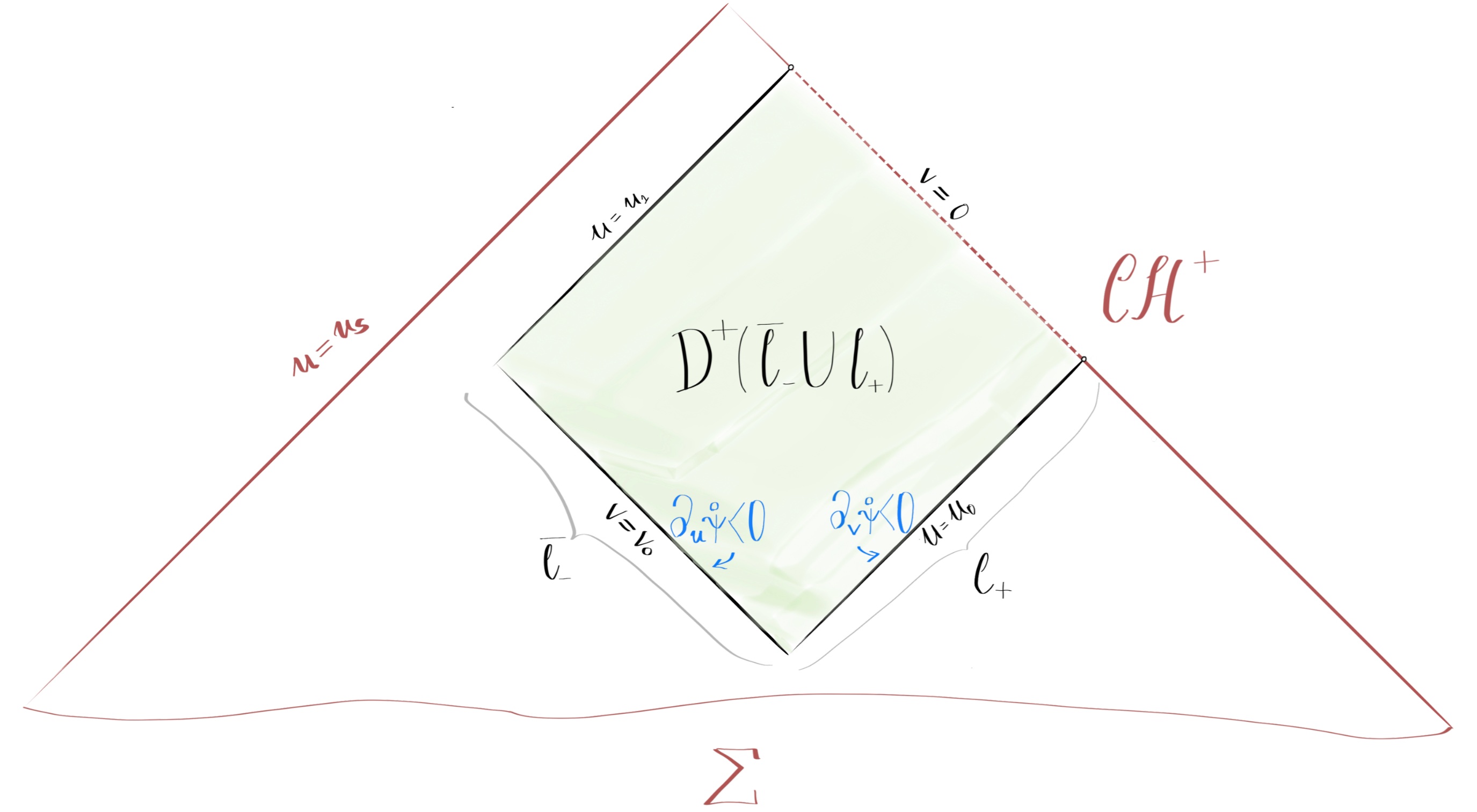}
        \caption{Illustration of the domain where we have a unique solution to the characteristic initial value problem for the spherically symmetric wave equation. Characteristic initial data $\mathring{\psi}$ is supplied on the bifurcate null hypersurface $\overline{\mathcal{C}_-} \cup \mathcal{C}_+$.}
        \label{wave_domain}
    \end{figure}
    
In double null coordinates on the spherically symmetric manifold $(\mathcal{R}, g)$ this becomes
\begin{align}
    &\partial_u \partial_v \psi (u,v) = - (\ln r)_{,u} \psi_{,v} - (\ln r)_{,v} \psi_{,u}   \in D^+(\mathcal{C})  \label{wave_eqn_main} \\
    &\psi|_{\mathcal{C}} = \mathring{\psi} \label{wave_char_data_main}
\end{align}
where (\ref{wave_eqn_main}) is also equivalent to both of the following:
\begin{align}
    &\partial_u(r \psi_{,v}) = - r_{,v} \psi_{,u} \label{wave_eqn_du}\\ 
    &\partial_v(r \psi_{,u}) = - r_{,u} \psi_{,v} \label{wave_eqn_dv}
\end{align}
in $\in D^+(\mathcal{C})$.

\subsection{Blow-up results on $\mathcal{CH}^+$}
Showing that the solution of the wave equation fails to maintain a bounded $C^1$ norm at the singularity leads to blow-up of the stiff fluid energy density. Theorem \ref{wave_blow_up_with_monotonicity} below establishes that the outgoing derivative of $\psi$ blows up as a consequence of Assumptions \ref{instability_assumptions}, which are related to $\mathcal{CH}^+$ being a weak null singularity. The proof of Theorem \ref{wave_blow_up_with_monotonicity} relies on a monotonicity argument established in Lemma \ref{monotonicity_is_propagated_on_C} and Proposition \ref{monotonocity_is_propagated_in_D}. Similar monotonicity arguments are used for the IVP for the Einstein--Maxwell--scalar field system from Dafermos' thesis, \cite{Dafermos03} and the more recent paper  by Sbierski and Fournodavlos studying the wave equation on Schwarzschild background, \cite{Sbierski_Fournodavlos}.
\subsubsection{Blow-up of $\partial_v \psi$}
\begin{thm}\label{wave_blow_up_with_monotonicity} Let $\psi$ be a smooth solution of the characteristic initial value problem for the wave equation (\ref{wave_eqn_main} - \ref{wave_char_data_main}). Assume that the null derivatives of the initial data satisfy $\partial_u\mathring{\psi}|_{\overline{\mathcal{C}_-}} <0$ and $\partial_v \mathring{\psi}|_{\mathcal{C}_+}<0$. 
Then 
\begin{align}
    |\partial_v \psi (u,v)| \geq   \frac{\inf_{\overline{\mathcal{C}_-}}|{\partial_u\mathring{\psi}}|}{r(u_0,v_0)}  \int_{u_0}^u (-r_{,v})(u',v)du'
\end{align}
\begin{cor}\label{integral_of_dvr_blows_up}
     Under the hypotheses of Theorem \ref{wave_blow_up_with_monotonicity}, $|\partial_v \psi| \to \infty$ as $v\to 0$ for every $u\in (u_0, u_1]$. 
    \begin{proof} 
    We apply Fatou's lemma to prove this. As $-\partial_v r$ is continuous in $[u_0, u_1]\times[v_0,0)$, for every fixed $v$ the function $(-\partial_v r)(\cdot, v)$ is (Lebesque) measurable on $[u_0, u_1]$. Let $\{v^{(n)}\}_{n=0}^\infty$ be an increasing sequence in $[v_0, 0)$ with $v^{(n)}\to 0$ as $n\to \infty$. By Assumptions \ref{instability_assumptions}, for every $u\in[u_0, u_1]$ we have $\partial_v r (u,v) \to -\infty$ as $v\to0$. Then for every $u\in[u_0, u_1]$ we have $-\partial_v r(u, v^{(n)})\to \infty$ as $n\to \infty$, so $\liminf_{n\to \infty}-\partial_v r(u, v^{(n)}) = \infty$. By Fatou's lemma,
    \begin{align}
         \liminf_{n}\int_{u_0}^{u}(-r_{,v})(u',v^{(n)}) du' \geq \int_{u_0}^{u}\liminf_n{(-r_{,v})(u',v^{(n)})du'} 
    \end{align}
    but the RHS is infinite unless $u=u_0$. Therefore, so is the LHS (unless $u=u_0$). As the sequence $v^{(n)}$ was chosen arbitrarily, we can conclude that for $u\in(u_0, u_1]$, 
    \begin{align}
        \lim_{v\to0} \int_{u_0}^{u}(-r_{,v})(u',v) du' = \infty = \int_{u_0}^{u}\lim_{v\to 0}(-r_{,v})(u',v)du'
    \end{align}
    By the statement of Theorem \ref{wave_blow_up_with_monotonicity}, for every $u\in (u_0, u_1]$, $|\partial_v \psi(u,v)|\to \infty$ as $v\to 0$. 
    \end{proof}
\end{cor}
    
\end{thm}
Before we prove Theorem \ref{wave_blow_up_with_monotonicity}, we need to show that the monotonicity assumption on $\psi$ in the ingoing coordinate is propagated inside the domain of dependence of $\mathcal{C}$. First, in Lemma \ref{monotonicity_is_propagated_on_C} we will prove that the monotonicity assumptions are propagated along the initial bifurcate null hypersurface $\mathcal{C} = \overline{\mathcal{C}_-} \cup \mathcal{C}_+$. Following that, in \ref{monotonocity_is_propagated_in_D}, we will use a bootstrap argument to prove that the subset of $D^+(\mathcal{C})$ on which $|\partial_u \psi|\geq c$ and $|\partial_v \psi| \geq c$ coincides with $D^+(\mathcal{C})$.
\begin{lem}\label{monotonicity_is_propagated_on_C}
 Under the hypotheses of Theorem \ref{wave_blow_up_with_monotonicity} we have $\partial_u{\psi}|_{\mathcal{C}_+}< \partial_u \mathring{\psi}(u_0, v_0)$ and $\partial_v{\psi}|_{\overline{\mathcal{C}}_-} < \partial_v\mathring{\psi}(u_0, v_0)$. 
\begin{proof}
  By the wave equation (\ref{wave_eqn_dv}), we have at any $(u,v) \in D^+(\mathcal{C})$,
  \begin{align}
      &\partial_u \psi (u,v) = \frac{1}{r(u,v)} \Bigg[ r(u,v_0) \partial_u \mathring{\psi}(u, v_0)  - \int_{v_0}^v \partial_u r(u, v') \partial_v \psi(u,v') dv' \Bigg] \label{dupsi_exprn}\\
      &\partial_v \psi (u,v) = \frac{1}{r(u,v)} \Bigg[ r(u_0,v) \partial_v \mathring{\psi}(u_0, v)  - \int_{u_0}^u \partial_v r(u', v) \partial_u \psi(u',v) du' \Bigg] \label{dvpsi_exprn}
  \end{align}
  so on $\mathcal{C}_+ = \{u_0\} \times [v_0, 0)$ we have in particular
  \begin{align}
      \partial_u {\psi} (u_0,v) = \frac{1}{r(u_0,v)} \Bigg[ r(u_0,v_0) \partial_u \mathring{\psi}(u_0, v_0)  - \int_{v_0}^v \partial_u r(u_0, v') \partial_v \psi(u_0,v') dv' \Bigg] < \partial_u \mathring{\psi}(u_0, v_0)
  \end{align}
  where we used that the integrand is strictly positive by the assumption that $\partial_u \mathring{\psi}(u_0,v_0)<0$, and the property that the area radius function is decreasing in $u$ and in $v$. The latter also ensures that $r(u_0,v_0)/r(u_0,v)\geq 1$. Using the negativity of $\partial_v \mathring{\psi}(u_0,v_0)$, we get an identical estimate for $\partial_v {\psi}$ on $\overline{\mathcal{C}}_{-} = [u_0,u_1]\times\{v_0\}$:
  \begin{align}
      \partial_v {\psi} (u,v_0) = \frac{1}{r(u,v_0)} \Bigg[ r(u_0,v_0) \partial_v \mathring{\psi}(u_0, v_0)  - \int_{u_0}^u \partial_v r(u', v_0) \partial_u \psi(u',v_0) du' \Bigg] < \partial_v \mathring{\psi}(u_0,v_0) 
  \end{align}
As before, the monotonicity of $r(u,v)$ ensures that $r(u_0,v_0)/r(u,v_0)\geq 1$
  
\end{proof}
\end{lem}

\begin{prop}\label{monotonocity_is_propagated_in_D} Let $v_1 \in (v_0/2, 0)$ be arbitrary and let 
\begin{align}
    c:=\min\{\inf_{\overline{\mathcal{C}_-}}|\partial_u\mathring{\psi}|, \inf_{\mathcal{C}_+\cap[v_0,v_1]}|\partial_v\mathring{\psi}|\} >0
\end{align}
Under the hypotheses of Theorem \ref{wave_blow_up_with_monotonicity}, $|\partial_u \psi(u,v)| \geq c$ and $|\partial_v \psi (u,v)| \geq c$ at every $(u,v) \in {[u_0, u_1]\times [v_0, v_1]}$\footnote{Note that since the null hypersurfaces $\overline{\mathcal{C}_-}$ and $ \mathcal{C}_+\cap [v_0,v_1]$ are compact, $\partial_u \mathring{\psi}$ attains a negative maximum on ${\overline{\mathcal{C}_-}}$ and $\partial_v \mathring{\psi}$ attains a negative maximum on ${\mathcal{C}_+}\cap[v_0,v_1]$.}.

\begin{figure}
    \centering
    \includegraphics[width=0.5\linewidth]{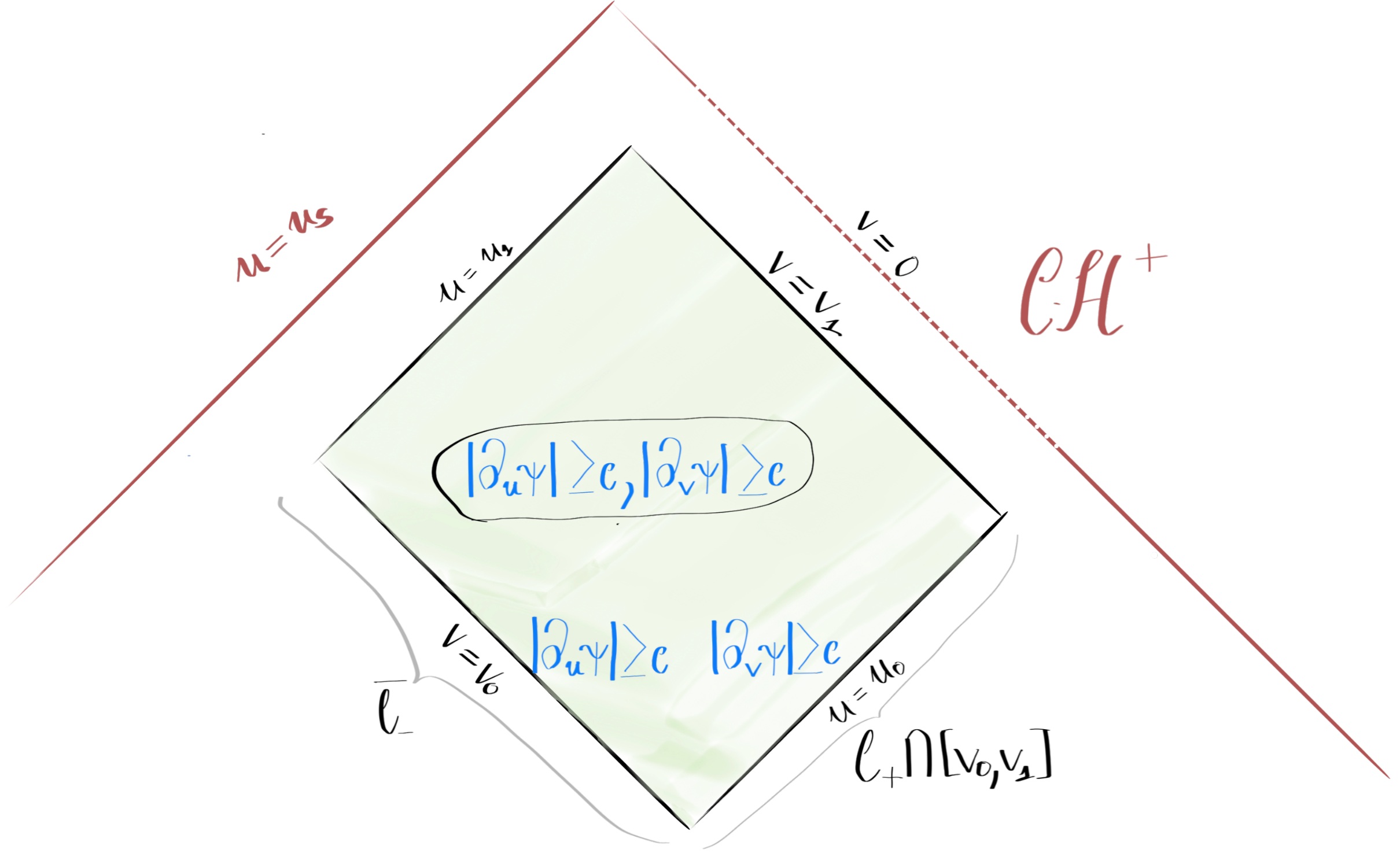}
    \caption{Illustration of Proposition \ref{monotonocity_is_propagated_in_D}. The minimal value of the null derivatives of $\psi$ in the domain of dependence of a bifurcate null hypersurface is no less than their minimal value on the bifurcate null hypersurface itself.}
    \end{figure}

\begin{proof}
  Given $t\in [0,1]$, define the following subsets of $D^+(\mathcal{C})$
  \begin{align}
      &\mathcal{A}_L(t) := 
     [u_0, u_1] \times [v_0, v_0 + t(v_1 - v_0)]\\
      &\mathcal{A}_R(t) := [u_0, u_0 + t(u_1 - u_0)] \times [v_0, v_1]
  \end{align}
  and let $\mathcal{A}(t):= \mathcal{A}_L(t) \cup \mathcal{A}_R(t)$. These sets are left and right shells around the initial ingoing vs outgoing null hypersurfaces $\overline{\mathcal{C}}_-$ and $\mathcal{C}_+\cap [v_0,v_1]$. These are illustrated in Figure \ref{bootstrap_shells}. Define the bootstrap set
  \begin{equation}
      \mathcal{B}:= \big\{t \in [0,1] : \partial_u \psi  \leq -c \text{ and } \partial_v \psi \leq -c \text{ in } \mathcal{A}(t)  \big\}
  \end{equation}
 
 \begin{figure}[H]
        \centering
\includegraphics[width=0.4\linewidth]{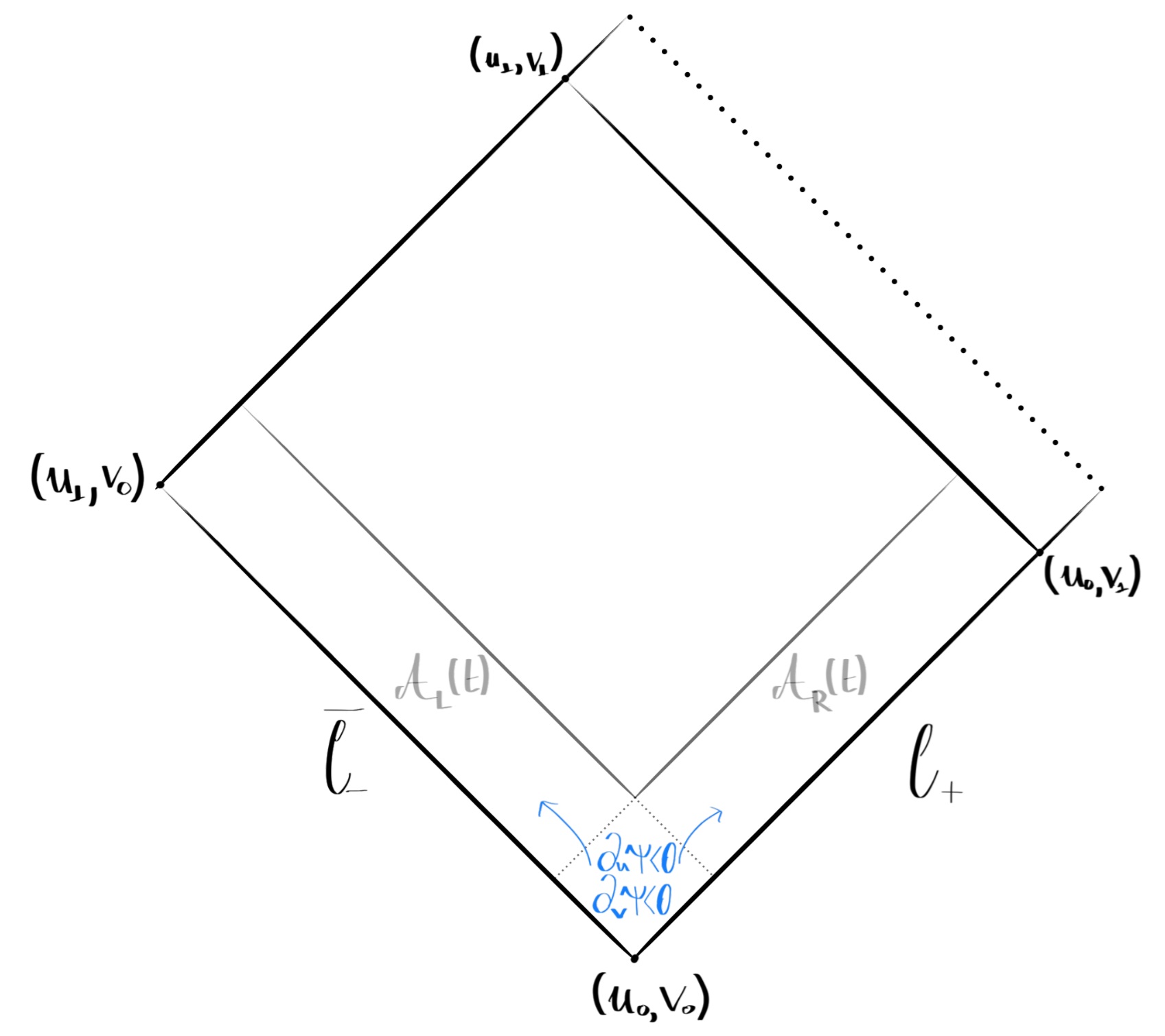}
        \caption{The left and right shells $\mathcal{A}_L(t)$ and $\mathcal{A}_R(t)$. We note that $t\in\mathcal{B}$ if and only if both the ingoing and outgoing derivatives of $\psi$ are negative on the bifurcate shell $\mathcal{A}_L(t) \cup \mathcal{A}_R(t)$.}
    \label{bootstrap_shells}
    \end{figure}
    
 \underline{Step 1: $\mathcal{B}$ is nonempty}\\[5pt]
  By Lemma \ref{monotonicity_is_propagated_on_C}, $\partial_u{\psi}|_\mathcal{C}\leq -c$ and $\partial_v{\psi}|_\mathcal{C} \leq -c$. Since $\mathcal{A}(0) = \{u_0\}\times [v_0, v_1] \cup [u_0, u_1]\times\{v_0\} = \mathcal{C}$, $0 \in \mathcal{B}$, so $\mathcal{B}\neq \emptyset$. \\[5pt]
  
  \underline{Step 2: $\mathcal{B} $ is closed in $[0,1]$.}\\[5pt]
   Let $\{t_i \}_{i=1}^\infty\subset\mathcal{B}$ be a sequence in $\mathcal{B}$ converging to $\overline{t}\in [0,1]$. We need to show that $\overline{t} \in \mathcal{B}$. Since $\mathcal{B}$ is obviously a connected interval containing $0$, we can assume wlog that $\{t_i\}$ is an increasing  sequence, $t_i \nearrow \overline{t}$. Indeed, by the continuity of the null derivatives $\partial_\mu \psi$ for $\mu \in \{u,v\}$ we have 
  \begin{align}
      &\lim_{i\to \infty}\partial_\mu \psi (u_0 + (u_1-u_0)t_i, v) = \partial_\mu \psi(u_0 + (u_1-u_0)\overline{t}, v) \text{ for $\overline{t}\leq 1$,  and  }\\
      & \lim_{i\to \infty} \partial_\mu \psi (u, v_0 + (v_1-v_0)t_i) = 
      \partial_\mu \psi (u, v_0 + (v_1-v_0)\overline{t}) \text{ for } \overline{t}\leq 1 
  \end{align}
  Hence
  \begin{align}
      &|\partial_\mu \psi (u_0 + (u_1-u_0)\overline{t}, v)| \geq \liminf_{i\to \infty} |\partial_\mu \psi (u_0 + (u_1-u_0)t_i, v)|  \geq |\partial_\mu \mathring{\psi} (u_0, v_0)|\geq c \text{  since   } \{t_i\} \subset \mathcal{B} \\
      &|\partial_\mu \psi (u, v_0 + (v_1-v_0)\overline{t})| \geq \liminf_{i\to \infty} |\partial_\mu \psi (u, v_0 + (v_1-v_0)t_i)| \geq |\partial_\mu \mathring{\psi} (u_0, v_0)|\geq c \  \text{  since   } \{t_i\} \subset \mathcal{B} \text{  for  } \overline{t}\leq 1
  \end{align}
  Since $\partial_\mu \psi|_{\mathcal{A}(t_i)} \leq -c$ for all $i$, this also holds on $\mathcal{A}(1)$. We conclude that $\mathcal{B}$ is closed in $[0,1]$.
  \\[5pt]

  \underline{Step 3: $\mathcal{B} $ is open in $[0,1]$.}\\[5pt]
  We note the nontrivial step of the bootstrap argument is to show that $\mathcal{B}$ is open. We will do this by showing that 
  \begin{itemize}
  \item There is a $t_0>0$ such that $t_0\in \mathcal{B}$. That is, there is at least a two-dimensional shell around $\overline{\mathcal{C}_-} \cup \mathcal{C_+} \cap [v_0, v_1]$ on which the bootstrap assumptions hold.
\item If $t_0 \in \mathcal{B}$ then there exists a $\delta>0$ such that $\partial_u \psi<-c$ and $\partial_v \psi<-c$ on $\mathcal{A}(t_0+\delta)$ (improving the bootstrap).  
  \end{itemize}
  Because $\partial_u \psi$ and $\partial_v \psi$ are bounded above by $-c$ on $\overline{\mathcal{C}}_-$ and continuous in $\mathcal{A}(1)$, there is a $\delta_L>0$ such that  
  \begin{align}
      \partial_u \psi \leq -c/2\  \text{  and  }\  \partial_v \psi \leq -c/2 \ \text{ in }B_{\delta_L}(u,v_0)
  \end{align}
  \footnote{Notation: $B_r(u_0,v_0)$ denotes the ball of radius $r$ centered at $(u_0,v_0)\in \mathcal{Q_R}$ with respect to the Euclidean metric.}for every $u \in [u_0, u_1]$.
  Identically, as $\partial_u \psi$ and $\partial_v \psi$ are 
  bounded below by $-c$ on $\mathcal{C}_+ \cap [v_0, v_1]$ and continuous in $\mathcal{A}(1)$, there is a $\delta_R>0$ such that 
  \begin{align}
      \partial_u \psi \leq -c/2\  \text{  and  }\  \partial_v \psi \leq -c/2 \ \text{ in }B_{\delta_R}(u_0,v)
  \end{align}
  for every $v\in[v_0,v_1]$. Let $t_0 = \min\{ \delta_L, \delta_R \}$ 
  \begin{align}
      \implies \partial_u \psi \leq -c/2\  \text{  and  }\  \partial_v \psi \leq -c/2 \ \text{ in } \mathcal{A}(t_0)
  \end{align}
  By (\ref{dupsi_exprn} - \ref{dvpsi_exprn}), for $(u,v) \in \mathcal{A}(t_0)$ we have 
  \begin{align}
      &\partial_u \psi (u,v) \leq -c - \frac{c}{2r(u_0, v_0)}(v-v_0)\norm{r_{,u}(u, \cdot)}_{L^1[v_0,v]}<-c \label{improve_b_u} \\
      &\partial_v \psi (u,v)  \leq -c - \frac{c}{2r(u_0,v_0)}(u-u_0)\norm{r_{,v}(\cdot, v)}_{L^1[u_0,u]}<-c  \label{improve_b_v}
  \end{align}
  As the bootstrap assumptions hold on $\mathcal{A}(t_0)$, $t_0 \in \mathcal{B}$. So the first point is proved. \\[5pt]
  Now suppose the bootstrap assumptions hold on $\mathcal{A}(t_0)$ for $t_0>0$. We can repeat the argument of the first point, swapping $\overline{\mathcal{C}_-}$ for $[u_0, u_1]\times \{v_0 + t_0 v_1\}$ and $\mathcal{C}_+$ for $\{u_0 + t_0 (u_1-u_0)\} \times [v_0, v_1]$. By calculations identical to (\ref{improve_b_u}-\ref{improve_b_v}) we get that $\exists \ \delta>0$ such that $\partial_u \psi <- c$ and $\partial_v \psi <-c$ on $\mathcal{A}(t_0+\delta)$.  It follows that $\mathcal{B}$ is open. We conclude that $\mathcal{B} = [0,1]$.
\end{proof}
\end{prop}

\begin{cor}\label{monotonicity_is_propagated_in_D+}
    Under the hypotheses of Theorem \ref{wave_blow_up_with_monotonicity}, the monotonicity properties $\partial_u \psi<0$ and $\partial_v\psi<0$ hold on $D^+(\mathcal{C})$.
    \begin{proof} In Proposition \ref{monotonocity_is_propagated_in_D} we showed that for every $v_1\in(v_0/2, 0)$ there exists a $c=c(v_1)>0$ such that $\partial_u \psi\leq -c(v_1)$ and $\partial_v \psi \leq- c(v_1)$ in $[u_0,u_1]\times[v_0, v_1]$. Therefore on $D^+(\mathcal{C}) = [u_0,u_1]\times \bigcup_{v_1<0} [v_0, v_1]$ we have $\partial_u \psi<0$ and $\partial_v\psi <0$.
    \end{proof}
\end{cor}

Having shown that monotonicity of the derivatives is propagated throughout the domain of $\psi$ all the way to the singularity, we are ready to prove Theorem \ref{wave_blow_up_with_monotonicity}.\\[5pt]
\textit{Proof of Theorem \ref{wave_blow_up_with_monotonicity}.}
From the wave equation (\ref{wave_eqn_du}), applying the monotonicty assumptions on the initial data $\partial_v \mathring{\psi}(u_0, v)< 0$ and $\partial_u \mathring{\psi}(u,v_0)<0$, we find
\begin{align}
    \partial_v \psi(u,v) &= \frac{1}{r(u,v)} \bigg[ r(u_0, v) \partial_v \mathring{\psi}(u_0, v) - \int_{u_0}^u r_{,v}(u',v) \partial_u \psi(u', v) du'  \bigg] < \nonumber\\
    & <-\frac{1}{r(u_0, v_0)} \bigg|\int_{u_0}^u r_{,v}(u',v) \partial_u \psi(u', v) du'\bigg| = \frac{1}{r(u_0, v_0)} \int_{u_0}^u r_{,v}(u',v) |\partial_u \psi(u', v)| du' \leq \nonumber\\
    &\leq  \frac{\inf_{\overline{\mathcal{C}_-}}{|\partial_u\mathring{\psi}}|}{r(u_0, v_0)} \int_{u_0}^u r_{,v}(u', v) du' 
\end{align}
    where in the equality we used Corollary \ref{monotonicity_is_propagated_in_D+} to remove the absolute value around the integral, as well as the property that $r_{,v}<0$ in $D^+(\mathcal{C})$.
$\qed$
\begin{figure}[h]
    \centering
    \includegraphics[width=0.5\linewidth]{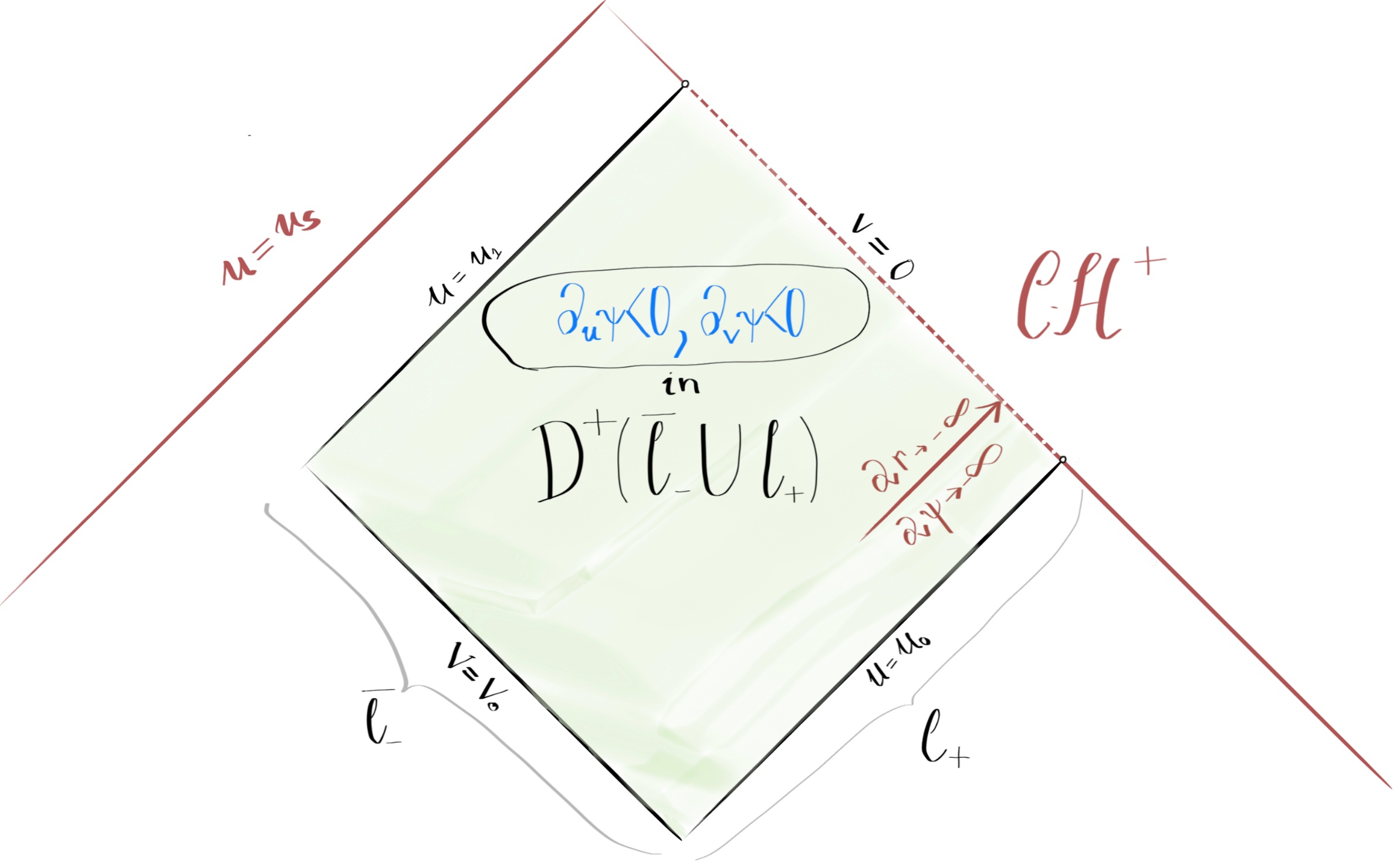}
    \caption{Illustrating the blow-up of the solution to $\Box_g\psi=0$ arising from smooth initial data on the bifurcate null hypersurface $\mathcal{C}$.}
    \end{figure}

\begin{cor}\label{dupsi_upper_bound} The ingoing null derivative of $\psi$ is bounded away from zero in $D^+(\mathcal{C})$. More precisely, for every 
\begin{align}
|\partial_u \psi (u,v)| \geq \frac{r(u_1,v_0)}{r(u_0,v_0)} \inf_{\overline{\mathcal{C}_-}}|\partial_u \mathring{\psi}|  \ \ \forall(u,v)\in D^+(\mathcal{C})
\end{align}
\begin{proof}
    This follows directly from expression (\ref{dupsi_exprn}) by applying the monotonicity of $r$ inside the domain $D^+(\mathcal{C})$.
\end{proof}
    
\end{cor}

\subsubsection{$\psi$ stays bounded}
In the following proposition we show that $\psi$ stays bounded as $v\to0$ for every $u\in [u_0,u_1]$, regardless of whether $\partial_u \psi$ blows up as $v\to 0$ or not. The result is crucial to showing that the ratio $\partial_u \psi/\partial_v \psi \to 0$ as we approach the weak null singularity. We will later demonstrate how this implies that the velocity of the stiff fluid approaches an ingoing null vector as $v\to 0$. \\[5pt]
Firstly we show that the initial data $\mathring{\psi}$ for the characteristic IVP for the wave equation, as defined in terms of the initial data $(\mathring{\rho},\mathring{U})$ for the Euler equations for the stiff fluid, does not blow up as $v\to 0$.
\begin{lem}\label{psi_initially_bounded} On the initial hypersurface $\mathcal{C}$, $\mathring{\psi}$ is bounded. More precisely,
\begin{align}     &\sup_{u\in[u_0,u_1]}|\mathring{\psi}(u,v_0)|<\infty \label{psi_bounded_on_Cminus}\\
    &|\mathring{\psi}(u_0,v)|\leq |\mathring{\psi}(u_0,v_0)| +\frac{C_g}{8\pi}\norm{\sqrt{\mathring{\rho}}\mathring{U}^u(u_0,\cdot)}_{L^1(\mathcal{C}_+)}  \ \ \text{for every }v\in[v_0,0) \label{psi_bounded_on_Cplus}
\end{align}
\begin{proof}
    The first estimate (\ref{psi_bounded_on_Cminus}) holds because $\mathring{\psi}(\cdot,v_0)$ is a continuous function on a closed interval, namely $[u_0,u_1]$.
    For the second estimate (\ref{psi_bounded_on_Cplus}), we apply the definition of the gradient of $\psi|_{\mathcal{C}_+}$ in terms of the initial values $(\mathring{\rho},\mathring{U})$, of the energy density and the stiff fluid velocity $(\rho, U)$ from Theorem \ref{masterTheorem_2}. In coordinates we have $\sqrt{\mathring{\rho}} \mathring{U}^u= -2e^{-2\omega}\partial_v \mathring{\psi}$ on $\mathcal{C}_+$. It follows that on $\mathcal{C}_+$
    \begin{align}
        |\mathring{\psi}(u_0, v)| = \Bigg|\mathring{\psi}(u_0,v_0) -\frac{1}{2}\int_{v_0}^v e^{2\omega} \sqrt{\mathring{\rho}}\mathring{U}^u(u_0,v') dv' \Bigg| \leq |\mathring{\psi}(u_0,v_0)| + \frac{C_g}{8\pi} \norm{\sqrt{\mathring{\rho}}\mathring{U}^u(u_0,\cdot)}_{L^1(\mathcal{C}_+)} 
    \end{align}
    Where we applied Assumptions \ref{assumptions_on_L1_norms} and the estimate (\ref{new_assumed_L1_bound_masterThm2}) from the hypotheses of Theorem \ref{masterTheorem_2}.
\end{proof} 
\end{lem}

\begin{prop}\label{psi_stays_bounded}
    There exist constants $0<c<C$, depending on $r_0:=r(u_0, v_0)$, $r_1:= \lim_{v\to0} r(u_1, v)$, the initial data $\mathring{\psi}|_{\mathcal{C}}$ such that
    \begin{align}
        c\leq \psi (u,v) \leq C \text{ in } D^+(\mathcal{C})
    \end{align}
    \begin{rem}
        See the proof for explicit expressions of the constants $c$ and $C$. \vspace{-10pt}
    \end{rem}
    \begin{proof}
        For $n \in \mathbb{N}$ let
        \begin{center}
        $u^{(n)} = \begin{cases}
            u_0 + \frac{2n}{3}\frac{r_1}{\norm{\partial_u r}_{C(\mathcal{R})}}, \text{ if this is }<u_1\\
            u_1, \text{  otherwise}
        \end{cases}$
        \end{center}
        We claim that $\psi$ is bounded as $v\to 0$ for $u\in[u_0, u^{(n)}]$ for all $n\in \mathbb{N}$. We prove the claim by induction on $n$. Base case is $n=1$. Hence $u^{(1)} = u_0 + \frac{2}{3}\frac{r_1}{\norm{\partial_u r}_{C(\mathcal{R})}}$. So let $u\in[u_0, u^{(1)}]$.
        \begin{align}
            &\mathring{\psi}(u_0, v) - \psi(u,v) = \int_{u_0}^u -\partial_u \psi(u',v) du' = \nonumber\\
            &=\int_{u_0}^u -\frac{1}{r(u',v)} \bigg[r(u', v_0) \partial_u \mathring{\psi}(u', v_0) - \int_{v_0}^v \partial_u r(u',v) \partial_v \psi (u',v') dv'\bigg] du'\label{basecase_inductiveproof}
        \end{align}
        By an application of the fundamental theorem of calculus, the first term in the integral is bounded by 
        \begin{align}
            \frac{r_0}{r_1} \big( \mathring{\psi}(u_0, v_0) - \mathring{\psi} (u,v_0)\big) 
        \end{align}
        By Assumptions \ref{instability_assumptions}, $\norm{\partial_u r}_{C(\mathcal{R})}<\infty$. The second term is therefore bounded by 
        \begin{align}
            \int_{u_0}^u \frac{\norm{\partial_u r}_{C(\mathcal{R})}}{r_1} \big[\mathring{\psi}(u', v_0) - \psi (u',v)\big] du' \label{secondterm_basecase_induction}
        \end{align}
        As $\psi$ is decreasing in $u$ and $v$, it follows that for every $u'\in [u_0, u]$ we have $\mathring{\psi}(u_0, v_0) \geq \mathring{\psi} (u', v_0) \geq \psi (u', v) \geq \psi (u,v)$. In particular, $\mathring{\psi}(u', v_0) - \psi (u',v) \leq \mathring{\psi}(u_0,v_0) - \psi(u,v)$. Hence the expression in (\ref{secondterm_basecase_induction}) is bounded by 
        \begin{align}
            \frac{\norm{\partial_u r}_{C(\mathcal{R})}}{r_1} \big[\mathring{\psi}(u_0, v_0) - \psi (u,v)\big](u-u_0) 
        \end{align}
        It follows that 
        \begin{align}
            \mathring{\psi}(u_0, v) - {\psi}(u,v)  \leq \frac{r_0}{r_1} \big( \mathring{\psi}(u_0, v_0) - \mathring{\psi} (u,v_0)\big) +  \frac{\norm{\partial_u r}_{C(\mathcal{R})}}{r_1} \big[\mathring{\psi}(u_0, v_0) - \psi (u,v)\big](u-u_0) 
        \end{align}
        Rearranging and using that $\frac{\norm{\partial_u r}_{C(\mathcal{R})}}{r_1}(u-u_0) \leq \frac{2}{3} $, we find  
        \begin{align}
            \bigg(1-\frac{\norm{\partial_u r}_{C(\mathcal{R})}}{r_1}(u-u_0) \bigg)^{-1}\bigg( \mathring{\psi}(u_0, v) - \frac{r_0}{r_1} \big( \mathring{\psi}(u_0,v_0) - \mathring{\psi}(u,v_0)\big) - \frac{\norm{\partial_u r}_{C(\mathcal{R})}}{r_1}(u-u_0) \mathring{\psi}(u_0,v_0)\bigg) \leq \psi(u,v) \label{basecase3}
        \end{align}
         Extend the initial data $\mathring{\psi}(u_0, v)$ to the singularity by defining $\lim_{v\to 0} \mathring{\psi}(u_0, v)=:\mathring{\psi}(u_0,0)$. This limit exists by the monotone convergence theorem, which applies because of the assumption that the initial data $\mathring{\psi}|_{\mathcal{C}_+}$ is monotonically decreasing in $v$ and bounded below on $\mathcal{C}_+$ by Lemma \ref{psi_initially_bounded}. By the monotonicity  of the initial data $\mathring{\psi}$, we have $\mathring{\psi}(u_0,v)\geq \mathring{\psi}(u_0,0)$ and $\mathring{\psi}(u,v_0) \geq \mathring{\psi}(u_1,v_0)$. Hence (\ref{basecase3}) becomes
        \begin{align}
            \frac{r_1}{r_1  - \norm{\partial_u r}_{C(\mathcal{R})}(u-u_0)} \mathring{\psi}(u_0, 0) + \frac{r_0}{r_1  - \norm{\partial_u r}_{C(\mathcal{R})}(u-u_0)} \mathring{\psi}(u_1, v_0) - \bigg( \frac{\norm{\partial_u r}_{C(\mathcal{R})}}{r_1} &(u-u_0) + \frac{r_0}{r_1}\bigg) \mathring{\psi}(u_0,v_0) \leq\nonumber\\
            &\leq\psi(u,v)
        \end{align}
        Combining $u>u_0$ with $r_1 - \norm{\partial_u r}_{C(\mathcal{R})}(u-u_0)\in [r_1/3,r_1]$, it follows that $r_1/(r_1 - \norm{\partial_u r}_{C(\mathcal{R})}(u-u_0))\in [1,3]$. As $r_0>r_1$, it follows that $r_0/(r_1 - \norm{\partial_u r}_{C(\mathcal{R})}(u-u_0))\in [r_0/r_1,3r_0/r_1] $. Finally, by the same bounds we have $\frac{\norm{\partial_u r}_{C(\mathcal{R})}}{r_1}(u-u_0) + \frac{r_0}{r_1} \in[r_0/r_1, r_0/r_1+2/3]$. In any case\footnote{Meaning, whatever the sign of $\mathring{\psi}$ is at the points $(u_0,v_0), (u_0, 0)$ and $(u_1,v_0)$.}
        \begin{align}
            -3 |\mathring{\psi}(u_0, 0)| - \frac{3r_0}{r_1}|\mathring{\psi}(u_1,v_0)| - \bigg(\frac{r_0}{r_1}+\frac{2}{3}\bigg)|\mathring{\psi}(u_0, v_0)| \leq \psi(u,v)\leq \mathring{\psi}(u_0, v_0)
        \end{align}
        where the upper bound follows from the monotonicity of $\psi$ throughout $D^+(\mathcal{C})$ and is included for completeness. This concludes the proof of the base case.\\[5pt]
        \begin{figure}[h]
            \centering
            \includegraphics[width=0.45\linewidth]{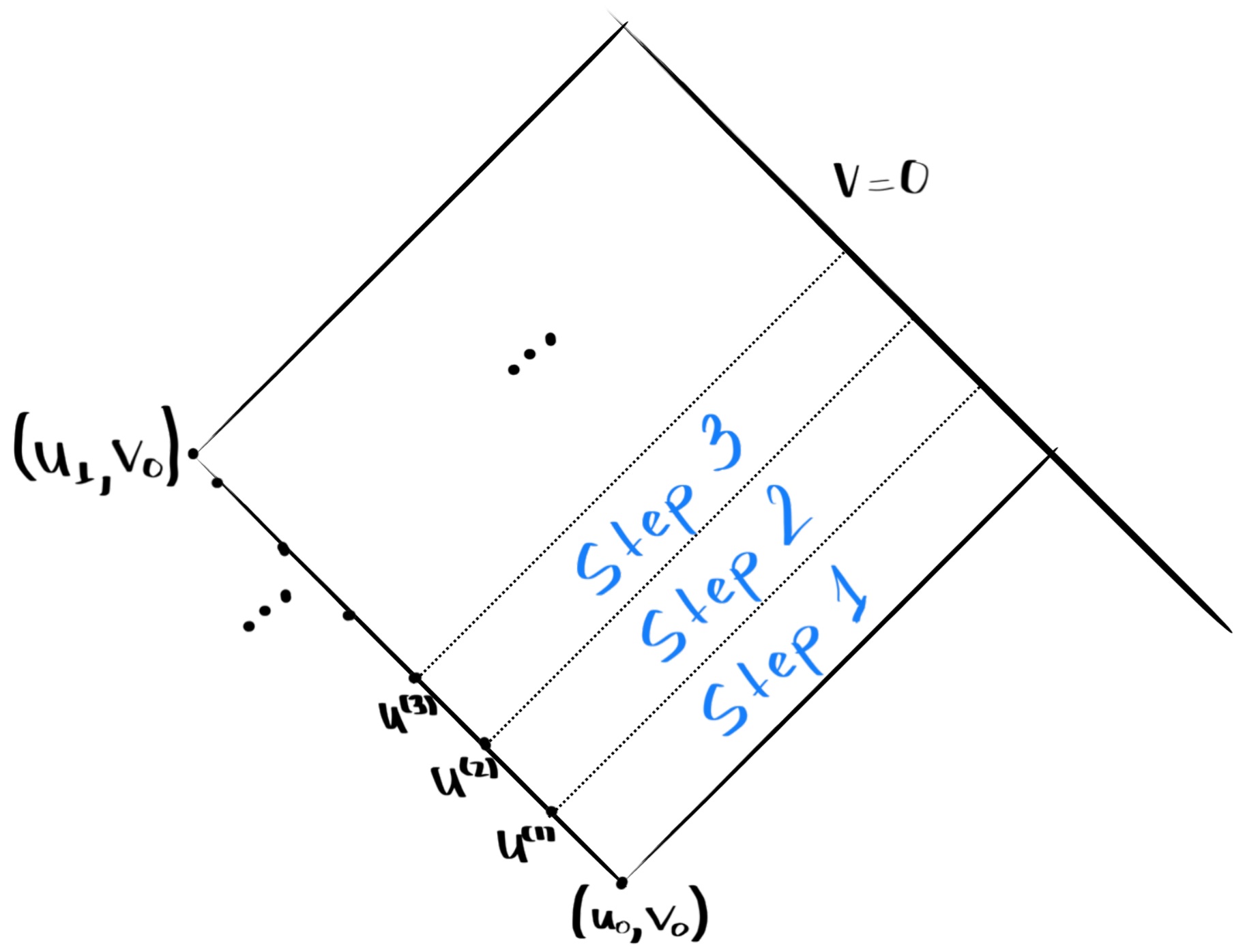}
            \caption{Illustration of the method of proof of Proposition \ref{psi_stays_bounded}. Step 1 is proving that $\psi$ stays bounded in $[u_0, u^{(1)}]$. In the definition of the sequence $\{u^{(n)}\}$, two consecutive elements are close enough so that $(1-\norm{\partial_u r}_{C(\mathcal{R})}(u-u^{(n)})/r_1)$ is positive for $u\in [u^{(n)}, u^{(n+1)}]$. This  allows us to pull $\psi(u,v)$ on the RHS of equation \ref{basecase3} for the case $n=0$ and a perform a similar manipulation for $n>0$. This is crucial for the proof. In Step $n$ we leverage that $\psi$ stays bounded in $[u_0, u^{(n)}]$ and use $\{u=u^{(n)}\}$ as the new initial ingoing hypersurface to prove that $\psi$ stays bounded in $[u_0, u^{(n+1)}]$.}
            \label{induction_illustration}
        \end{figure}
        The induction step is essentially repeating the base case with the slight modification that we swap the original initial ingoing hypersurface $\{u=u_0\}$ for the advanced ingoing hypersurface $\{u=u^{(n)}\}$. An illustrative Penrose diagram is shown in Figure \ref{induction_illustration}. Nevertheless, we include the main steps for completeness. Assume that $\psi$ is bounded as $v\to 0$ for $u \in [u_0, u^{(n)}]$  and let $u \in [u^{(n)}, u^{(n+1)}]$. As in the base case, we begin by expressing the difference:
        \begin{align}
             &\psi(u^{(n)}, v) - \psi(u,v) = \int_{u^{(n)}}^u -\partial_u \psi(u',v) du' = \nonumber\\
            &\int_{u^{(n)}}^u -\frac{1}{r(u',v)} \bigg[r(u', v_0) \partial_u \mathring{\psi}(u', v_0) - \int_{v_0}^v \partial_u r(u',v) \partial_v \psi (u',v') dv'\bigg] \label{inductivestep}
        \end{align}
        Analogously to the base case, the first term in (\ref{inductivestep}) is bounded by 
        \begin{align}
            \frac{r_0}{r_1} \big( \mathring{\psi}(u^{(n)}, v_0) - \mathring{\psi} (u,v_0)\big) 
        \end{align}
        and the second term in (\ref{inductivestep}) is bounded by 
        \begin{align}
             \frac{\norm{\partial_u r}_{C(\mathcal{R})}}{r_1} \big[\mathring{\psi}(u^{(n)}, v_0) - \psi (u,v)\big](u-u^{(n)}) 
        \end{align}
        Since $\partial_v\psi$ and $\partial_u \psi$ are strictly negative and since $\lim_{v\to 0} \psi(u^{(n)}, v)>-\infty$, $\psi({u^{(n)}, v})$ can be extended to the singularity. Define $\lim_{v\to 0} \psi(u^{(n)}, v)=: \psi(u^{(n)},0)$. Using that $\frac{\norm{\partial_ur}_{C(\mathcal{R})}}{r_1}(u-u^{(n)}) \in [0, 2/3]$ and applying exactly the same steps as in the base case but with $u^{(n)}$ instead of $u_0$, we find \begin{align}
            &\mathring{\psi}(u_0,v_0) \geq \psi(u,v) \geq \nonumber\\
            &\geq\frac{r_1}{r_1  - \norm{\partial_u r}_{C(\mathcal{R})}(u-u^{(n)})} {\psi}(u^{(n)}, 0) + \frac{r_0}{r_1  - \norm{\partial_u r}_{C(\mathcal{R})}(u-u^{(n)})} \mathring{\psi}(u_1, v_0) -\nonumber\\
            &-\bigg( \frac{\norm{\partial_u r}_{C(\mathcal{R})}}{r_1} (u-u^{(n)}) + \frac{r_0}{r_1}\bigg) \mathring{\psi}(u_0,v_0) \geq\nonumber\\
            &\geq -3|\psi(u^{(n)}, 0)| - 3\frac{r_0}{r_1} |\mathring{\psi}(u_1,v_0)| - \bigg(\frac{r_0}{r_1} +\frac{2}{3}\bigg)|\mathring{\psi}(u_0, v_0)|
        \end{align} 
        This proves that $\psi$ is bounded as $v\to 0$ for $u \in [u^{(n)}, u^{(n+1)}]$ hence concluding the induction step.
    \end{proof}
\end{prop}
\begin{cor}\label{ratio_of_derivatives_of_psi}
    For every $u \in (u_0, u_1]$,
    \begin{align}
        \frac{\partial_u \psi(u,v)}{ \partial_v \psi (u,v)} \to 0 \ \ \text{ as } \ v \to 0 
    \end{align}
    \begin{proof}
        Let $u\in (u_0, u_1]$. Using (\ref{dupsi_exprn}- \ref{dvpsi_exprn}) and applying Theorem \ref{wave_blow_up_with_monotonicity} we find
        \begin{align}
            &\frac{\partial_u \psi}{\partial_v \psi} (u,v) = \frac{\frac{1}{r(u,v)} \bigg(\partial_u \mathring{\psi} (u,v_0) r(u,v_0) - \int_{v_0}^v \partial_u r(u,v') \partial_v \psi(u,v')\bigg)}{\partial_v \psi(u,v)} \leq\nonumber\\
            &\leq\frac{r(u_0, v_0) {|\partial_u \mathring{\psi}(u,v_0)|} r (u,v_0)} {r(u,v)\inf_{\overline{{\mathcal{C}}_-}}|\partial_u\mathring{\psi}|\int_{u_0}^u(-\partial_v r)(u',v) du'} + \frac{r(u_0, v_0) \norm{\partial_u r}_{C(\mathcal{R})} \big( \psi(u,v_0) - \psi(u,v)\big)}{r(u,v) \inf_{\overline{\mathcal{C}_-}}|\partial_u\mathring{\psi}| \int_{u_0}^u(-\partial_v r)(u',v) du'}
        \end{align}
    The numerator in the first term is obviously bounded. By Proposition \ref{psi_stays_bounded}, the numerator in the second term is bounded. By the assumption that for every $u\in[u_0,u_1]$ we have $\partial_vr(u,v) \to -\infty$ as $v\to 0$ (c.f. Assumptions \ref{instability_assumptions}) and by the proof of Corollary \ref{integral_of_dvr_blows_up} the denominator in both terms goes to infinity as $v\to 0$.
    \end{proof}
\end{cor}
\subsection{Energy blow-up - concluding the proof of Theorem \ref{masterTheorem_2}}
For the stiff fluid velocity it follows by Theorem \ref{wave_blow_up_with_monotonicity} that
\begin{align}
     V^u = \partial^u\psi = g^{uv} \partial_v \psi = -2 e^{-2\omega} \partial_v \psi \to \infty \text{  as  } v\to 0 \text{  for every } u \in(u_0, u_1]
\end{align}
Furthermore, for the fluid energy density we have
\begin{align}
    \rho = - \rho \langle U, U \rangle = - \langle V, V \rangle = - \langle d\psi , d\psi \rangle = 4e^{-2\omega} \psi_{,u} \psi_{,v} \to \infty \text{  as  } v\to 0 \text{ for every $u\in(u_0, u_1]$}
\end{align}
by Theorem \ref{wave_blow_up_with_monotonicity}, Corollary \ref{dupsi_upper_bound} and Assumptions \ref{instability_assumptions}.
This establishes item 1 of Theorem \ref{masterTheorem_2}. Moreover, for the fluid velocity components we have
\begin{align}
    &U^u = \frac{V^u}{\sqrt{\rho}} = - \frac{2e^{-2\omega} \psi_{,v}}{\sqrt{4e^{-2\omega} \psi_{,u} \psi_{,v}}} = e^{-\omega}\sqrt{\frac{\psi_{,v}}{\psi_{,u}}} \to \infty \text{   as  } v\to 0 \text{  for every  } u \in(u_0, u_1]\\
    &U^v = \frac{V^v}{\sqrt{\rho}} = - \frac{2e^{-2\omega} \psi_{,u}}{\sqrt{4e^{-2\omega} \psi_{,u} \psi_{,v}}} = e^{-\omega}\sqrt{\frac{\psi_{,u}}{\psi_{,v}}} \to 0 \text{   as  } v\to 0 \text{  for every  } u \in(u_0, u_1]
\end{align}
where we used Corollary \ref{ratio_of_derivatives_of_psi}. This shows that the stiff fluid velocity approaches a null vector at the weak null singularity, establishing item 2 of Theorem \ref{masterTheorem_2}. The stiff fluid velocity flow is illustrated in Figure \ref{stf_flowlines}. This concludes the proof of our second main result.
\begin{figure}[H]
    \centering
    \includegraphics[width=0.45\linewidth]{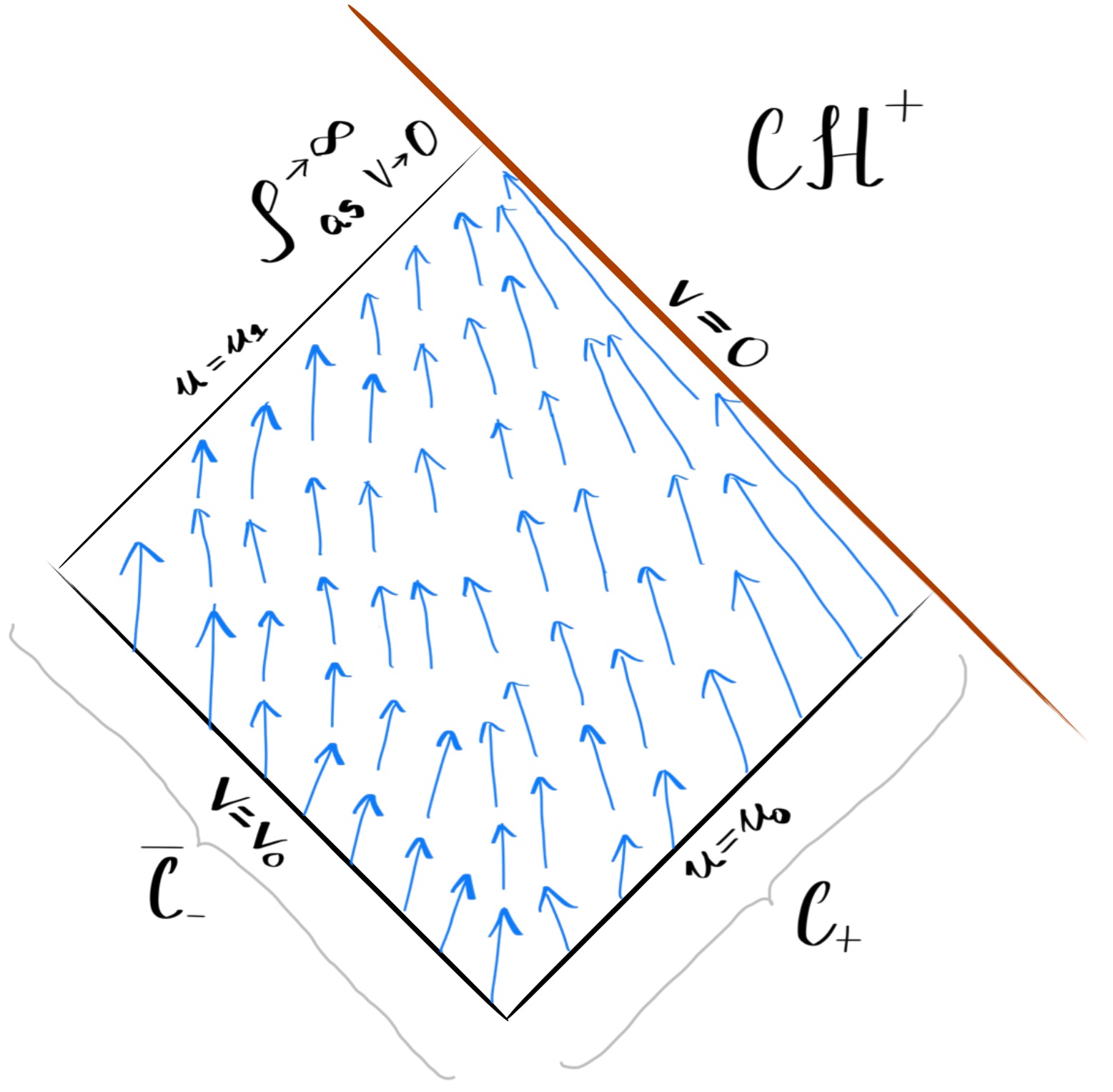}
    \caption{Illustrating the flow lines of the stiff fluid four-velocity $U$. They approach ingoing null curves as we approach the weak null singularity. The stiff fluid energy density goes to infinity at the singularity.}
    \label{stf_flowlines}
\end{figure}

\section{The Einstein--Maxwell--scalar field system giving an 'eligible' black hole spacetime}\label{EMSf_chapter}
In this part we revisit the class of spherically symmetric black hole spacetimes analysed in the work of Luk and Oh \cite{Luk_Oh}, arising from the Einstein--Maxwell--scalar field system. We demonstrate that a codimension-0 submanifold of a spacetime $(\mathcal{M},g)$ of this class satisfies the global geometric assumptions. This shows that there is a wide class of  spacetimes on which our results for the behaviour of spherically symmetric dust versus spherically symmetric stiff fluid hold. 
\subsection{Background spacetime: Scalar field perturbations of\\
Reissner-Nordstr\"{o}m} The following setup is based primarily on Theorem 4.1 in \cite{Luk_Oh} but the description of the Einstein--Maxwell--scalar field system in spherical symmetry from Section 2 of \cite{Luk_Oh} and the conditions on the initial data from Section 3 of \cite{Luk_Oh} will also be used. The spherically symmetric spacetime $(\mathcal{M}, g, F, \phi)$ analysed in \cite{Luk_Oh}, with Penrose diagram in Figure \ref{full_manifold_penrose_diagram}, is the Cauchy development of smooth, admissible, asymptotically flat, spherically symmetric Einstein--Maxwell--scalar field Cauchy data supplied on a spacelike hypersurface $\Sigma_0$ with two asymptotic ends. The admissibility of the initial data is in the sense of Definition 3.1 in \cite{Luk_Oh}. The spacetime is spherically symmetric in the sense of Definition 2.1 of \cite{Luk_Oh}. Due to the spherical symmetry, $\mathcal{M}$ admits a quotient manifold $\mathcal{Q}=\mathcal{M}/SO(3)$. The metric on $\mathcal{M}$ may be written as
\begin{align}
    g=g_{\mathcal{Q}}+r^2d\Omega_2^2
\end{align}
where $g_{\mathcal{Q}}$ is the metric on the quotient manifold, $d\Omega_2^2$ is the unit metric on $S^2$ and $r:\mathcal{Q}\to \mathbb{R}_{\geq 0}$ is the area radius function of the spacetime, defined by the property that the sphere at point $p\in \mathcal{Q}$ has area $4\pi r(p)^2$. The area radius function is smooth and positive on $\mathcal{M}$.
The domain of outer communications $\mathcal{E}$ of $\mathcal{M}$ consists of two connected components $\mathcal{E}_1$ and $\mathcal{E}_2$ which we will call the \textit{right and left exterior regions}, respectively. They correspond to the two asymptotic ends of $\Sigma_0$. Each connected component has a complete future null infinity $\mathcal{I}_i$ (for $i=1,2$) and $\mathcal{E}_1 = J^-(\mathcal{I}_1^+)\cap J^+(\Sigma_0)$ and $\mathcal{E}_2 = J^-(\mathcal{I}_2^+)\cap J^+(\Sigma_0)$ and $\mathcal{E}_i$ approaches a connected component of the exterior region
of a subextremal Reissner–Nordstr\"{o}m spacetime. The future null infinities have the property that the area radius function $r$ diverges to infinity along any future-directed causal curve in $\mathcal{M}$ extending to $\mathcal{I}^+_i$. The past endpoints of $\mathcal{I}_i^+$, coinciding with the endpoints of $\Sigma_0$, correspond to the \textit{spacelike infinities $i_1^0$ and $i_2^0$}. The future endpoints of $\mathcal{I}^+_i$ correspond to the \textit{timelike infinities $i^+_1$ and $i^+_2$.} 
The exterior regions $\mathcal{E}_i$ are bounded to the past by a segment of $\Sigma_0$ and to the future by the event horizons\footnote{Given an open subset $\mathcal{O}$ of $\mathcal{M}$, $\partial^+\mathcal{O}$ denotes the \textit{future boundary} of $\mathcal{O}$} 
\begin{align}
    &\mathcal{H}_1^+ = \partial^+ J^-(\mathcal{I}_1^+); \ & \mathcal{H}_2^+ = \partial^+ J^- (\mathcal{I}_2^+)
\end{align}
The event horizons $\mathcal{H}_1^+$ and $\mathcal{H}_2^+$ are future-complete null hypersurfaces\footnote{$\mathcal{H}_i^+$ ($ i=1,2$) extend to $i_i^+$ to the future.} with past endpoints on $\Sigma_0$. 
Each exterior region $\mathcal{E}_i$ is globally hyperbolic and can be covered by a single double null chart.
The union of the right and left event horizons, $\mathcal{H}:=\mathcal{H}_1\cup\mathcal{H}_2$, together with the segment of $\Sigma_0$ between the endpoints of $\mathcal{H}_1^+$ and $\mathcal{H}_2^+$, constitute the past boundary of a spherically symmetric charged black hole $\mathcal{B}:=\mathcal{M}\backslash\mathcal{E}$. The future boundary of $\mathcal{B}$ consists of the \textit{right Cauchy horizon} $\mathcal{CH}^+_1$, the \textit{left Cauchy horizon} $\mathcal{CH}_2^+$ and a possibly empty achronal set $\mathcal{S}$. The Cauchy horizons are half-open\footnote{The convention is that $\mathcal{CH}_i^+$ ($i=1,2$) include their future endpoints. To the past, they extend to $i_i^+$.} null segments with the property that $r$ extends continuously as a positive function to $\mathcal{CH}^+:=\mathcal{CH}_1^+ \cup \mathcal{CH}_2^+$ except possibly at the future endpoints of $\mathcal{CH}_i^+$. The achronal set $\mathcal{S}$, if non-empty, shares endpoints with the future endpoints of the Cauchy horizons and has the property that $r$ extends continuously to zero on $\mathcal{S}$. For the final\footnote{The spacetime is dynamical and settles to {subextremal Reissner-Nordstr\"{o}m} with parameters $(M,q)$.} black hole mass and charge parameters, subextremality is assumed: $0<|q|<M$. The scalar function $\phi\in C^\infty(\mathcal{M}\to \mathbb{R})$ is the (real) massless scalar field. The spacetime is also characterised by the Maxwell 2-form $F=dA:\mathcal{M}\to \Lambda^2(\mathcal{M})$ where the 1-form $A$ denotes the electromagnetic potential. The set $(\mathcal{M}, g, \phi, F)$ solves the Cauchy problem for the Einstein--Maxwell--scalar field system in spherical symmetry with initial data on $\Sigma_0$ satisfying the admissibility conditions posed in Definition 3.1 of \cite{Luk_Oh}. See Definition 2.1 of \cite{Luk_Oh} or Appendix \ref{appendix_evol_eqns_and_constraints} for the relevant Einstein equations.
\begin{figure}
    \centering
    \includegraphics[width=0.7\linewidth]{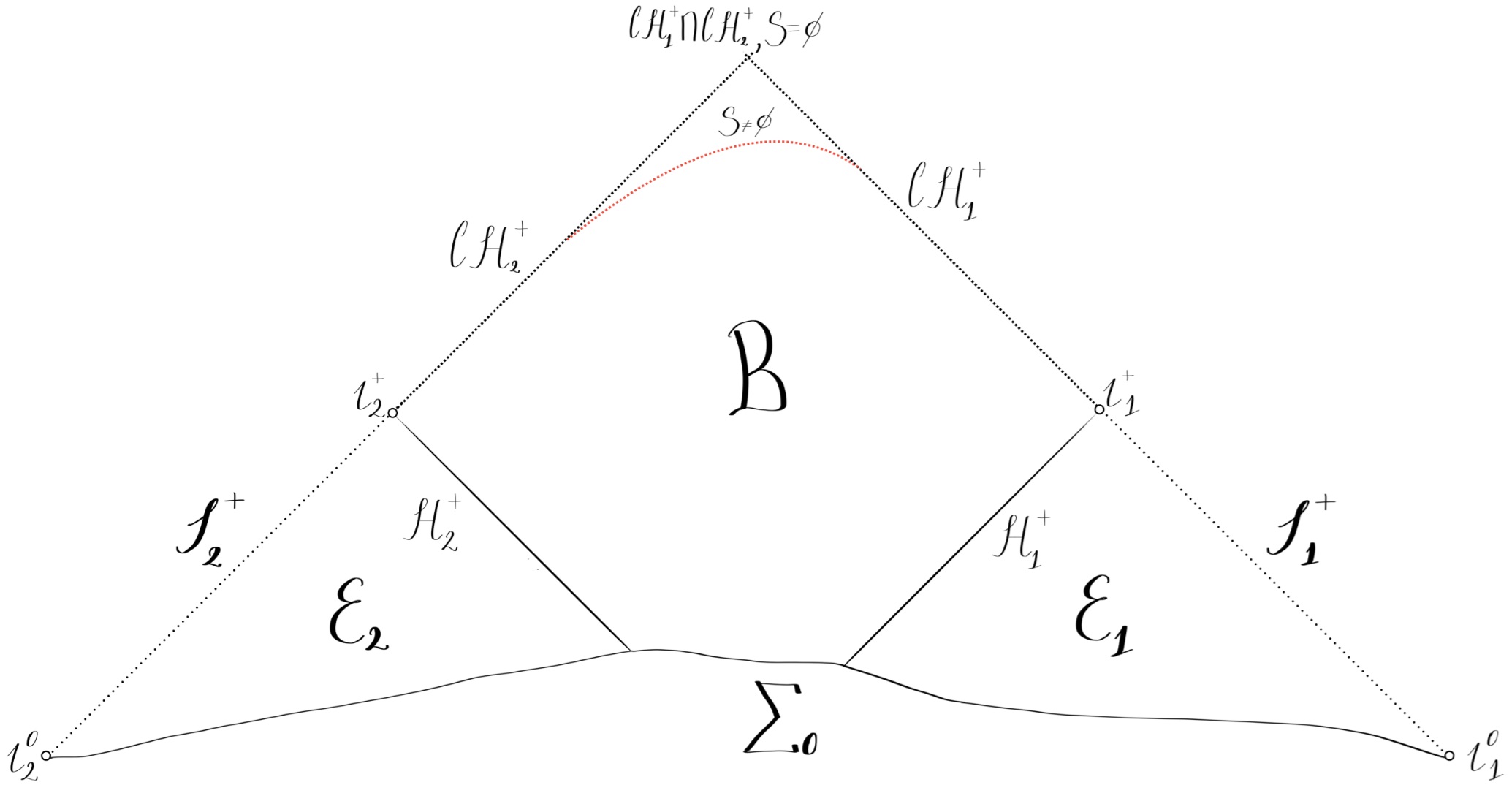}
    \caption{Penrose diagram of $(\mathcal{M},g)$, covering the cases where $\mathcal{S}=\emptyset$ (the null punctured black line of $\mathcal{CH}_1^+$ and $\mathcal{CH}_2^+$ meet) and where $\mathcal{S}\neq \emptyset$ (spacelike punctured red line emanating from the punctured black lines where $\mathcal{CH}_1^+$ and $\mathcal{CH}_2^+$). The initial data for the Cauchy problem studied in \cite{Luk_Oh} is supplied on the 2-ended spacelike asymptotically flat hypersurface $\Sigma_0$. In the final development, two asymptotic regions $\mathcal{E}_1$ and $\mathcal{E}_2$ are present, as well as one black hole region, $\mathcal{B}$.}
    \label{full_manifold_penrose_diagram}
\end{figure}
 \subsubsection{Parameterizing the black hole interior}
The open submanifold $\mathcal{B}$, describing the black hole interior, has topology $\mathcal{B} = \mathcal{Q_B}\times S^2$
where the quotient manifold $\mathcal{Q_B} = \mathcal{B}/SO(3)$ is an open subset of $\mathbb{R}^2$. We define $\overline{\mathcal{B}}:=\mathcal{B}\cup \mathcal{CH}^+\cup \mathcal{S}\cup \mathcal{H}^+\cup\Sigma_{0,\mathcal{B}}$ where by $\Sigma_{0,\mathcal{B}}$ we denote the connected subset of $\Sigma_0$ between the past endpoints of $\mathcal{H}_1$ and $\mathcal{H}_2$ i.e. $\Sigma_{0,\mathcal{B}}:=\Sigma_0\backslash J^-(\mathcal{I}_1^+\cup \mathcal{I}_2^+)$. We also define the quotient $\overline{\mathcal{Q_B}}:=\overline{\mathcal{B}}/SO(3)$.\\[5pt]
$\overline{\mathcal{B}}$ should be viewed as a 4-manifold-with-boundary and $\overline{\mathcal{Q_B}}$ should be viewed as a 2-manifold-with-boundary. Since $\mathcal{Q_B}\subset \mathbb{R}^2$, $\mathcal{B}$ can be covered by a single chart. In particular, the quotient 1+1 Lorentzian manifold $\mathcal{Q_B}$ is equipped with a double null coordinate chart $(\Tilde{u},\Tilde{v})\in\mathbb{R}^2$ in which the spacetime metric takes the form 
\begin{equation}
    g_{\mathcal{Q_B}} = -e^{2\Tilde{\omega}} d\Tilde{u} d\Tilde{v} 
\end{equation}
The lapse function 
$e^{2\Tilde{\omega}}:\mathcal{Q_B} \to \mathbb{R}^+$ is smooth and the coordinate ranges are such that 
\begin{align}\label{tilde_coordinate_ranges}
     &\Tilde{u}\to-\infty \text{ on } \mathcal{H}_1^+, \  \Tilde{v}\to-\infty \text{ on } \mathcal{H}_2^+\nonumber\\
     &\Tilde{v}\to \infty \text{ on } \mathcal{CH}_1^+,\    \Tilde{u}\to \infty \text{ on } \mathcal{CH}_2^+
\end{align}
Technically, $\Tilde{u}$ and $\Tilde{v}$ would be finite and negative on the portion of $\Sigma_{0,\mathcal{B}}$ away from the past endpoints of $\mathcal{H}_i$ and, if $\mathcal{S}\neq \emptyset$, they would be finite and positive on $\mathcal{S}$ away from the future endpoints of $\mathcal{CH}_i$. However, the precise parameterization of the spacelike components of the boundary of $\mathcal{B}$ in double null coordinates is irrelevant for this paper. The full form of the metric on $\mathcal{B}$ in double null gauge $(\Tilde{u},\Tilde{v},\theta,\varphi)$\footnote{$(\theta, \varphi)$ are the standard spherical coordinates on $S^2$} is given by
\begin{equation}
    g = -e^{2\Tilde{\omega}} d\Tilde{u}d\Tilde{v} + r^2(\Tilde{u},\Tilde{v}) d\Omega^2_2 \label{metric_main_expression}
\end{equation}
By the properties described in the previous subsection, $r$ smooth and positive on $\mathcal{Q_B}$, and admits a continuous extension to $\overline{\mathcal{Q_B}}$ which is positive on $\overline{\mathcal{Q_B}}\backslash\mathcal{S}$. The time orientation in $\mathcal{M}$ is such that $\partial_{\Tilde{u}} + \partial_{\Tilde{v}}$ is future-directed.
 
\subsection{Stability of the Cauchy horizon}
The proof that the an open subset of the development of Einstein-Maxwell initial data studied in Luk-Oh \cite{Luk_Oh} satisfies the Stability Assumptions \ref{assumptions_on_L1_norms} greatly depends on Theorem 5.1 in \cite{Luk_Oh} which controls the metric functions $\Tilde{\omega}$ and $r$, and the scalar field $\phi$ and their derivatives inside a characteristic diamond which is bounded by a portion of $\mathcal{H}_1^+$ to the past and a portion of $\mathcal{CH}_1^+$ to the future. From another point of view, this characteristic diamond is the development of the characteristic Einstein--Maxwell--scalar field initial data supplied on a bifurcate null initial hypersurface which includes a portion of the event horizon - and satisfies the corresponding admissibility conditions satisfied by the Cauchy data. The following theorem due to Luk and Oh is a stability statement: roughly speaking it asserts that, provided the characteristic initial data for the Einstein--Maxwell--scalar field system is close to the initial data for Reissner-Nordstr\"{o}m on a bifurcate null hypersurface involving a portion of the event horizon, the solution remains $C^0$-close to perfect Reissner-Nordstr\"{o}m up to the Cauchy horizon and $C^1$-close in an Eddington-Finkelstein-like coordinate system (to be defined below) to perfect Reissner-Nordstr\"{o}m away from the Cauchy horizon.
For convenience, we state Luk-Oh's theorem on Stability of the Cauchy horizon below. It is important to note that the assumptions of the following theorem \textit{always} hold in the global setting (the 2-ended Cauchy development) as noted in Remark 5.3 of \cite{Luk_Oh}. The original theorem is written in a double null gauge which is not regular near $\mathcal{CH}_1^+$. To stay true to the original form of the theorem, we first define the relations between all coordinate charts that will be used throughout this part. \\[8pt]
\textbf{$({u}, {v})$ double null chart}\\[5pt]
Luk and Oh's work \cite{Luk_Oh} primarily uses an 'Eddington-Finkelstein-like' double null coordinate chart in the precise version of their stability statement. The coordinates $(\Tilde{u}, \Tilde{v})$, defined on $\mathcal{Q_B}$, introduced in Section 2 of \cite{Luk_Oh}\footnote{The notation we use for this, and the following coordinate systems, is not the same as the one used in \cite{Luk_Oh}. In particular, In \cite{Luk_Oh} the 'Eddington-Finkelstein-like' coordinate system is denoted by $(u,v)$. In this paper, what we mean by $(u,v)$ is really an offset by $1$ of the chart $(U_{\mathcal{CH}}, V_{\mathcal{CH}})$ in the notation of \cite{Luk_Oh}.} are unbounded at the event horizons and at the Cauchy horizons (Equation (\ref{tilde_coordinate_ranges})). Throughout the other Sections of this article, we have used a 'Kruskal-like' double null gauge $(u,v)$. On the black hole interior $\mathcal{B}$ in \cite{Luk_Oh}, such a gauge is defined by 
\begin{align}
    u = -\frac{1}{2\kappa_-}e^{-2\kappa_-\Tilde{u}},\ \ \
    v = -\frac{1}{2\kappa_-}e^{-2\kappa_-\Tilde{v}}. \label{gauge_relations}
\end{align}
where $\kappa_\pm := (r_+ - r_-)/r_{\pm}^2$ are the \textit{surface gravities} associated to the background Reissner-Nordstr\"{o}m spacetime on $\mathcal{H}^+$ and $\mathcal{CH}^+$, respectively. Here $r_+ := r_{RN}|_{\mathcal{H}^+}$ is the \textit{outer radius} and $r_- := r_{RN}|_{\mathcal{CH}^+}$ is the \textit{inner radius} associated to the background Reissner-Nordstr\"{o}m spacetime, so they are constant.\\[5pt] Let $\omega$ denote the lapse function in these coordinates. Since $g_\mathcal{Q_B} =- e^{2\omega}dudv = -e^{2\Tilde{\omega}}d\Tilde{u}d\Tilde{v}$, we can relate the lapse function $\Tilde{\omega}$ in $(\Tilde{u},\Tilde{v})$ coordinates to the lapse function $\omega$ in $(u,v)$ coordinates by 
\begin{align}
    &e^{2\Tilde{\omega}} = e^{2\omega}\frac{du}{d\Tilde{u}} \frac{dv}{d\Tilde{v}} = e^{2\omega-2\kappa_- \Tilde{u}-2\kappa_-\Tilde{v}}\implies
    {\omega} = \Tilde{\omega} + \kappa_- (\Tilde{u} + \Tilde{v})
\end{align}
This gauge is particularly useful in that $\omega$ extends continuously to $\mathcal{CH}^+\backslash\mathcal{S}$ (that is, away from the future endpoints of $\mathcal{CH}_i^+$).
\\[8pt]
\textbf{$(u_{\mathcal{H}}, \Tilde{v})$ double null chart} 
\\[5pt]
Another coordinate chart on $\mathcal{Q_B}$, introduced in Section 5 of \cite{Luk_Oh}, and in particular the chart in which the original form of the Stability of the Cauchy horizon theorem is stated, is the double null coordinate chart $(u_{\mathcal{H}}, \Tilde{v})$. Here the outgoing coordinate is related to $v$ by (\ref{gauge_relations}), while the ingoing coordinate $u_\mathcal{H}$ is regular on the right event horizon $\mathcal{H}_1^+$ and defined by:
\begin{align}
    u_\mathcal{H} = \frac{1}{2\kappa_+} e^{2\kappa_+ \Tilde{u}}
\end{align}
hence $u_\mathcal{H}\to 0$ as $\Tilde{u}\to {-\infty}$ i.e. $u_\mathcal{H} = 0$ on $\mathcal{H}_1^+$.\\[5pt]
Let $\omega_\mathcal{H}$ denote the lapse function in these coordinates. Since $g_\mathcal{Q} = - e^{2\omega}dudv = -e^{2\omega_\mathcal{H}}du_\mathcal{H} d\Tilde{v}$, the lapse function $\omega_\mathcal{H}$ in $(u_\mathcal{H}, \Tilde{v})$ coordinates is related to the lapse function $\omega$ in $(u,v)$ coordinates by 
\begin{align}
    e^{2\omega_\mathcal{H}} = e^{2\Tilde{\omega}} \frac{d\Tilde{u}}{du_\mathcal{H}} \implies \omega_\mathcal{H} = \Tilde{\omega} - \kappa_+\Tilde{u}
\end{align} 
\\[5pt]
\textbf{$(\Tilde{u}, v_{\mathcal{CH}})$ double null chart} 
\\[5pt]
The last chart used in Luk-Oh's statement of the Stability of the Cauchy Horizon Theorem (See Section 5 of \cite{Luk_Oh}) is the double null chart $(\Tilde{u}, v_{\mathcal{CH}})$ on $\mathcal{Q_B}$ where the ingoing null coordinate $\Tilde{u}$ is defined in (\ref{gauge_relations}), while the outgoing null coordinate is regular at the right Cauchy horizon and equal to 
\begin{equation}
    v_{\mathcal{CH}} = -\frac{1}{2\kappa_-} e^{-2\kappa_- \Tilde{v}} + 1 = v+1
\end{equation}
Hence $v_{\mathcal{CH}} \to 1$ as $\Tilde{v}\to \infty$ i.e. $v_{\mathcal{CH}}=1$ on $\mathcal{CH}_1^+$.\\[5pt]
In the following, $g_{{RN}}$ denotes the Reissner-Nordstr\"{o}m (RN) metric, $\omega_{{RN}}$ denotes the RN lapse function and $r_{{RN}}$ denotes the RN area radius function. Expressions are given in Appendix \ref{RN_int_geometry}.

\begin{thm}\textbf{(Stability of the Cauchy Horizon, original statement)}\label{stability_of_CH_original} Fix the black hole parameters $q,M$ so that $0<|q|<M$ and a real number $s>1$. Consider the characteristic initial value problem for the Einstein--Maxwell--scalar field system (c.f. Proposition 2.5 in Luk-Oh \cite{Luk_Oh}) with $q\neq 0$ and initial data supplied on the bifurcate null hypersurface $\overline{C_1} \cup C_{-\infty}$ where
\begin{align}
    &\overline{C_{1}} = \{(u_\mathcal{H},\Tilde{v}) : u_\mathcal{H}\leq u_{\mathcal{H},0}, \Tilde{v}=1 \} \nonumber\\
    &{C_{-\infty}} = \{(u_\mathcal{H},\Tilde{v}) : u_\mathcal{H}=0, \Tilde{v}\geq 1 \} \nonumber
\end{align}
for $u_{\mathcal{H},0}>0$. Assume the initial data satisfies the constraint equations in spherical symmetry (\ref{constraint_eqn1} - \ref{constraint_eqn2}), and that the following holds on the initial hypersurface for some $E>0$:
\begin{enumerate}
    \item $C_{-\infty}$ (viewed as the event horizon) is an affine complete null hypersurface approaching a subextremal Reissner-Nordstr\"{o}m event horizon with $0<|q|<M$. More precisely, {in the gauge}\footnote{That such a gauge exists is proved in Appendix A of \cite{Luk_Oh}.}
    \begin{equation}
        e^{2\omega_\mathcal{H}} = \frac{4}{r_+^2}e^{-2\kappa_+ r_+}(r_+-r_-)^{1+\kappa_+/\kappa_-}e^{2\kappa_+ \Tilde{v}}
    \end{equation}
    we have
    \begin{enumerate}
        \item $r\to M + \sqrt{M^2-q^2}$ as $\Tilde{v}\to \infty$
        \item The following decay rate holds for $\Tilde{v}\geq 1$ for the scalar field $\phi$ and its $\Tilde{v}$-derivative:
        \begin{equation}
            |\phi(0, \Tilde{v})| + |\partial_{\Tilde{v}} \phi(0, \Tilde{v})|\leq E\Tilde{v}^{-s}  
        \end{equation}
    \end{enumerate}
    \item On $\overline{C_1}$, after normalizing $u_\mathcal{H}$ by 
    \begin{equation}
        \partial_{u_\mathcal{H}} r = -1
    \end{equation}
    the following holds for all $u_\mathcal{H}\leq u_{\mathcal{H},0}$
    \begin{equation}
        |\partial_{u_\mathcal{H}} \phi (u_\mathcal{H}, 1)| \leq E
    \end{equation}
\end{enumerate}
Then, restricting to some nonempty subset $\overline{C_1'} := \{(u_\mathcal{H},\Tilde{v}) \in \overline{\mathcal{Q}}: u_\mathcal{H}\leq u_{\mathcal{H},s}, \Tilde{v}=1 \} \subset \overline{C_1}$ where $u_{\mathcal{H},s}<u_{\mathcal{H},0}$ is sufficiently small depending on the decay rate $s$, the globally hyperbolic future development of the initial data has a Penrose diagram given by {Figure \ref{CH-Stability_illustration}}. 
Also, in the $(\Tilde{u}, v_{\mathcal{CH}}$) we can attach the null boundary $\{v_{\mathcal{CH}}=1\}=\mathcal{CH}^+$ to the spacetime $(\mathcal{M}, g)$ so that the metric extends continuously to $\mathcal{CH}^+$. \\[5pt]
Furthermore, in the $(\Tilde{u},\Tilde{v})$ coordinate system the following estimates hold in $D^+(\overline{C_1'} \cup C_{-\infty})$ for some constant $C=C(M,q,E,s)>0$:
\begin{align}
    |\phi(\Tilde{u},\Tilde{v})| + |(r-r_{RN})(\Tilde{u},\Tilde{v})| + |(\Tilde{\omega}-\Tilde{\omega}_{RN})(\Tilde{u},\Tilde{v})| 
     \leq C\big[ \Tilde{v}^{-s} + |\Tilde{u}|^{-s+1} \big]\label{stability_0}
 \end{align}\vspace{-2em} 
 \begin{align}   
    |\phi_{,\Tilde{v}}(\Tilde{u},\Tilde{v})| + |(r-r_{RN})_{,\Tilde{
    v}}(\Tilde{u},\Tilde{v})| + |(\Tilde{\omega}-\Tilde{\omega}_{RN})_{,\Tilde{v}}(\Tilde{u},\Tilde{v})| 
    \leq C\Tilde{v}^{-s}
\end{align}
Moreover, for every $A\in\mathbb{R}$ there exists a $C>0$, depending on $M,q,E, s$ {and $u_{\mathcal{H},0}$}, such that for all $(\Tilde{u},\Tilde{v})$ with $\Tilde{u}<\Tilde{u}_s$ and $\Tilde{v}>1$ \footnote{The conditions  $\Tilde{u}<\Tilde{u}_s$ and $\Tilde{v}>1$ are equivalent to the condition that a point with coordinates $(\Tilde{u},\Tilde{v},\theta, \varphi)$ lies in $D^+(\overline{C_1'} \cup C_{-\infty})$} the following decay rate holds:
\begin{align}
    |\phi_{,\Tilde{u} }(\Tilde{u},\Tilde{v})| + |(r-r_{RN})_{,\Tilde{u}}(\Tilde{u},\Tilde{v})| + |(\Tilde{\omega}-\Tilde{\omega}_{RN})_{,\Tilde{u}}(\Tilde{u},\Tilde{v})| 
    &\leq C |\Tilde{u}|^{-s} \text{  for  } \Tilde{u} + \Tilde{v} \geq A \  ;\label{stability_v} \\
    &\leq C e^{2\Tilde{\omega}_{RN}}\Tilde{v}^{-s} \text{  for  } \Tilde{u}+ \Tilde{v} \leq A \label{stability_u}
\end{align} 
\label{CauchyStabilityTheorem}\end{thm}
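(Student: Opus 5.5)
This statement is Theorem 5.1 of Luk--Oh \cite{Luk_Oh}, restated in our notation, so in the paper the natural course is to cite it. If I were to reconstruct the argument, I would work in the spherically symmetric Einstein--Maxwell--scalar field system written in the $(\Tilde{u},\Tilde{v})$ Eddington--Finkelstein-like gauge (equivalently $(u_\mathcal{H},\Tilde{v})$ near $\mathcal{H}_1^+$). The plan has three ingredients:
\begin{itemize}
\item the wave equation for $\phi$, written as $\partial_{\Tilde{u}}(r\phi_{,\Tilde{v}})=-r_{,\Tilde{v}}\phi_{,\Tilde{u}}$ and the symmetric version;
\item the equation for $r\,\partial_{\Tilde{u}}\partial_{\Tilde{v}}r$, together with the Raychaudhuri equations $\partial_{\Tilde{u}}(e^{-2\Tilde{\omega}}r_{,\Tilde{u}})=-re^{-2\Tilde{\omega}}(\phi_{,\Tilde{u}})^2$ and its $\Tilde{v}$-analogue;
\item the equation for $\partial_{\Tilde{u}}\partial_{\Tilde{v}}\Tilde{\omega}$.
\end{itemize}
I would subtract the corresponding Reissner--Nordstr\"{o}m equations and treat $(\phi,\ r-r_{RN},\ \Tilde{\omega}-\Tilde{\omega}_{RN})$ as perturbations. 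The goal is to propagate the bounds (\ref{stability_0})--(\ref{stability_u}) from the data on $\overline{C_1'}\cup C_{-\infty}$ by integrating along null lines.

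I would split $D^+(\overline{C_1'}\cup C_{-\infty})$ into three regions by the level sets of $\Tilde{u}+\Tilde{v}$ (i.e.\ of $r_{RN}$).

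First, the red-shift region near $\mathcal{H}_1^+$. There $\partial_{\Tilde{u}}$-derivatives are weighted by $e^{2\Tilde{\omega}_{RN}}$, and the positive surface gravity $\kappa_+$ gives damping, which produces the $e^{2\Tilde{\omega}_{RN}}\Tilde{v}^{-s}$ bound (\ref{stability_u}) for $\Tilde{u}+\Tilde{v}\le A$.

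Second, a no-shift region of bounded $\Tilde{u}+\Tilde{v}$. Here a Gr\"onwall argument over a region of bounded $\Tilde{v}$-length per $\Tilde{u}$ suffices, and the decay $\Tilde{v}^{-s}$ inherited from the horizon data is preserved.

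Third, the blue-shift region near $\mathcal{CH}_1^+$, where $\Tilde{u}+\Tilde{v}\ge A$. Here $e^{2\Tilde{\omega}_{RN}}\sim e^{-2\kappa_-(\Tilde{u}+\Tilde{v})}$ decays exponentially. In each region I would run a bootstrap: assume the desired estimates with constant $2C$ on a characteristic rectangle, then integrate the transport equations to recover them with $C$. Smallness comes from choosing $u_{\mathcal{H},s}$ small, i.e.\ $|\Tilde{u}|$ large.

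The main obstacle is the blue-shift region. The key point is that the source terms coupling $\Tilde{u}$- and $\Tilde{v}$-derivatives carry the factor $e^{2\Tilde{\omega}}$, which is integrable in both null directions with the polynomial weights $\Tilde{v}^{-s}$ and $|\Tilde{u}|^{-s}$, $s>1$. For example, $\int \Tilde{v}^{-s}d\Tilde{v}<\infty$ is exactly where $s>1$ is used, and it yields the $|\Tilde{u}|^{-s+1}$ term in (\ref{stability_0}). I would first control $\phi_{,\Tilde{v}}$ and $r_{,\Tilde{v}}-(r_{RN})_{,\Tilde{v}}$ by $C\Tilde{v}^{-s}$ by integrating in $\Tilde{u}$. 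I would then control the $\Tilde{u}$-derivatives by $C|\Tilde{u}|^{-s}$ by integrating in $\Tilde{v}$, using that $\int e^{2\Tilde{\omega}_{RN}}d\Tilde{v}$ is bounded along each ingoing line. Finally I would integrate both to get the $C^0$ bounds.

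The $C^0$ closeness then gives the continuous extension of $r$ and of $\omega=\Tilde{\omega}+\kappa_-(\Tilde{u}+\Tilde{v})$ to $\{v_{\mathcal{CH}}=1\}$. This holds because $\Tilde{\omega}_{RN}+\kappa_-(\Tilde{u}+\Tilde{v})$ extends continuously, and the perturbation $\Tilde{\omega}-\Tilde{\omega}_{RN}$ tends to a limit as $\Tilde{v}\to\infty$ by the integrable $\Tilde{v}^{-s}$ bound on its $\Tilde{v}$-derivative. That gives the Penrose diagram and the continuous metric extension claimed in the statement.
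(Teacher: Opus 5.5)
Your proposal agrees with the paper: it gives no proof of this statement and simply quotes Theorem~5.1 of Luk--Oh \cite{Luk_Oh} for later use, so the citation is all that is needed. Your reconstruction is only a heuristic outline and should not be read as a proof. Two points in it are off. First, the argument in \cite{Luk_Oh} treats the blue-shift region more finely, with an early/late split. Second, in that region the coupling terms such as $r_{,\Tilde{v}}\phi_{,\Tilde{u}}$ and $\phi_{,\Tilde{u}}\phi_{,\Tilde{v}}$ do not carry a factor $e^{2\Tilde{\omega}}$; generically $r_{,\Tilde{v}}$ is not $O(e^{2\Tilde{\omega}})$, which is equivalent to the blow-up $r_{,v}\to-\infty$ used later in the paper. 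Instead, these terms are controlled as products of the weights $\Tilde{v}^{-s}$ and $|\Tilde{u}|^{-s}$.
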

See Figure \ref{CH-Stability_illustration} (left) for a more detailed illustration of Theorem \ref{stability_of_CH_original}. \\[5pt]
The following corollary restates the results of Theorem \ref{stability_of_CH_original} (Theorem 5.1 in \cite{Luk_Oh}) in the $(u,v)$ coordinates which are regular in the vicinity of the Cauchy horizon.
\begin{cor}\textbf{(Stability of the Cauchy Horizon in $(u,v)$ coordinates)}\label{stability_of_CH} 
Under the hypotheses of Theorem \ref{stability_of_CH_original}, 
 the following estimates hold in $\mathcal{R}$ for some constant $C=C(M,q,E,s)>0$, written in the Kruskal-like double null coordinate system $(u,v)$:
\begin{align}
    |\phi(u,v)| + |(r-r_{RN})(u,v)| + |(\omega-\omega_{RN})(u,v)| 
     \leq C\big[ |\ln^{-s}(-2\kappa_-v)| + |\ln^{-s+1}(-2\kappa_- u)| \big]\label{cauchy_stability_eqn_1}
 \end{align}\vspace{-2em} 
 \begin{align}   
    |\phi_{,v}(u,v)| + |(r-r_{RN})_{,v}(u,v)| + |(\omega-\omega_{RN})_{,v}(u,v)| 
    \leq C\bigg|\frac{\ln^{-s}(-2\kappa_- v)}{v}\bigg| \label{stability_v_new}
\end{align}
Moreover, there exists a $C>0$, depending on $M,q,E, s$ \underline{and $u_0:=u(u_{\mathcal{H},0})$}, such that for all $(u,v)$ with $u<u_s$ and $v>v_{\kappa}$ \footnote{The condition  $u<u_s$ and $v>v_{\kappa}$ is equivalent to the condition that a point with coordinates $(u,v,\theta, \varphi)$ lies in $D^+(\overline{C_1'} \cup C_{-\infty})$} and $uv \leq u_s v_\kappa$ the following decay rate holds:
\begin{align}
    |\phi_{,u }(u,v)| + |(r-r_{RN})_{,u}(u,v)| + |(\omega-\omega_{RN})_{,u}(u,v)| 
    \leq C \bigg|\frac{\ln^{-s}(-2\kappa_- u)}{u} \bigg| \label{stability_u_new}
\end{align} 
\begin{proof} Transform the estimates (\ref{stability_0}-\ref{stability_v}) into the $(u,v)$ coordinate system using the relation between the two coordinate systems (\ref{gauge_relations}) in order to obtain equations (\ref{cauchy_stability_eqn_1} - \ref{stability_v_new}). Choosing $A$ such that $\Tilde{u} + \Tilde{v}\geq A \iff uv <u_s v_\kappa$, we get that there exists a $C=C(M,q,E,s, u_0)$ such that $(\ref{stability_u})$ holds. Writing this inequality in the $(u,v)$ coordinate system gives (\ref{stability_u_new}).
\end{proof}\end{cor}

For the remainder of this section we shall work under the hypotheses of Theorem \ref{stability_of_CH_original} (equivalently Corollary \ref{stability_of_CH} in $(u,v)$ coordinates). Ultimately, this will help us establish that Assumptions \ref{assumptions_on_L1_norms} and item 3 in Assumptions \ref{instability_assumptions} hold on a submanifold of $(\mathcal{M}, g)$ that we will specify below. 

\subsubsection{Restricting to a region near $\mathcal{CH}^+$} In this section we restrict to the submanifold of the black hole interior arising from the   Einstein--Maxwell--scalar field system where the Assumptions listed in Section  \ref{preliminaries} are satisfied. This is the codimension-0 submanifold $(\mathcal{R}, g)$ defined below and illustrated in Figure \ref{restriction_to_R}. This in itself is a subset of the interior of the black hole where Stability of the Cauchy horizon, Corollary \ref{stability_of_CH}, holds. As is standard in spherically symmetric problems, the focus is on the (1+1)-dimensional Lorentzian manifold-with-boundary $(\overline{\mathcal{Q_R}} = \overline{\mathcal{R}}/SO(3), g_{\mathcal{Q}})$.\\[5pt]
In view of the gauge relations (\ref{gauge_relations}), the bifurcate null hypersurface $C_{-\infty} \cup \overline{C_1'}$ defined in Theorem \ref{stability_of_CH_original} in the $(u,v)$ coordinate system looks like
\begin{align}
    &\overline{C_{1}'} = \{(u,v) \in \overline{\mathcal{Q_B}}: u\leq u_s, v=v_{\kappa} \} \nonumber\\
    &{C_{-\infty}} = \{(u,v) \in \overline{\mathcal{Q_B}} : u=-\infty, v>v_{\kappa} \} \nonumber
\end{align}
where $v_\kappa := -e^{-2\kappa_-}/(2\kappa_-)$ (the $v$ associated with $\Tilde{v} = 1$) and $u_s := -e^{-2\kappa_- \Tilde{u}_s}/(2\kappa_-)$ (the $u$ associated with $\Tilde{u}_s$). This explains the choice of notation in the definition of the spacetime $(\mathcal{R},g)$ in Section \ref{preliminaries}. 
\begin{defn}
Let $s, u_s, \overline{C_1'}, C_{-\infty}$ be defined as in Theorem \ref{stability_of_CH_original} and let $\kappa_{\pm}$ be the surface gravities at the outer/inner horizons of the Reissner-Nordstr\"{o}m black hole with parameters $0<|q|<M$. In the double null coordinates $(x^\mu)=(u,v, \theta, \varphi)$ which are regular at the right Cauchy horizon\footnote{More precisely, such that $\mathcal{CH}_1^+ = \{v=0 \}$.}, let $\Sigma$ be the spacelike hypersurface in $\mathcal{B}$ defined by the condition $uv=u_s v_\kappa$. Let $(\mathcal{R},g)$ be the open subset of $(\mathcal{B}, g)$, given by\footnote{Given a set $S$, $D^+(S)$ denotes the future domain of dependence of $S$. }
\begin{align}
    &\mathcal{R} = D^+(\overline{C_1}' \cup C_{-\infty}) \cap J^+(\Sigma) = 
    \{(x^\mu) \in \mathcal{B} : -\infty<u\leq u_s, v_\kappa<v<0, uv\leq u_s v_\kappa \}\nonumber\\[5pt]
    &g=-e^{2\omega(u,v)} dudv + r^2 d\Omega_2^2; \label{restricted_submanifold_R}
\end{align}
for  $(u,v)\in \mathcal{Q_R} := \mathcal{R}/SO(3)$. Denote by $\overline{\mathcal{Q_R}}$ the 2-manifold-with-boundary 
\begin{align}
   \overline{\mathcal{Q_R}}= \mathcal{Q_R}\cup [-\infty, u_s]\times\{v=0\}\cup \{u=u_s\}\times[v_\kappa, 0]\cup \{(u,v):u\leq u_s, v_\kappa\leq v, uv\leq u_s v_\kappa \}
\end{align}
and by $\overline{\mathcal{R}}:=\overline{\mathcal{Q_R}}\times S^2$ the 3-manifold-with-boundary associated with $\overline{\mathcal{Q_R}}$.
\end{defn}
The region $\mathcal{R}$ is illustrated in yellow in the Penrose diagram on the right in Figure \ref{CH-Stability_illustration}. Restricting to the region $\mathcal{R}$, we extract the estimates of Theorem \ref{stability_of_CH_original} in the Kruskal-like double null gauge $(u,v)$, which is regular near $\mathcal{CH}_1^+$.
\begin{figure}[h]
    \centering
\begin{subfigure}{0.48\textwidth}
    \includegraphics[width=0.9\textwidth]{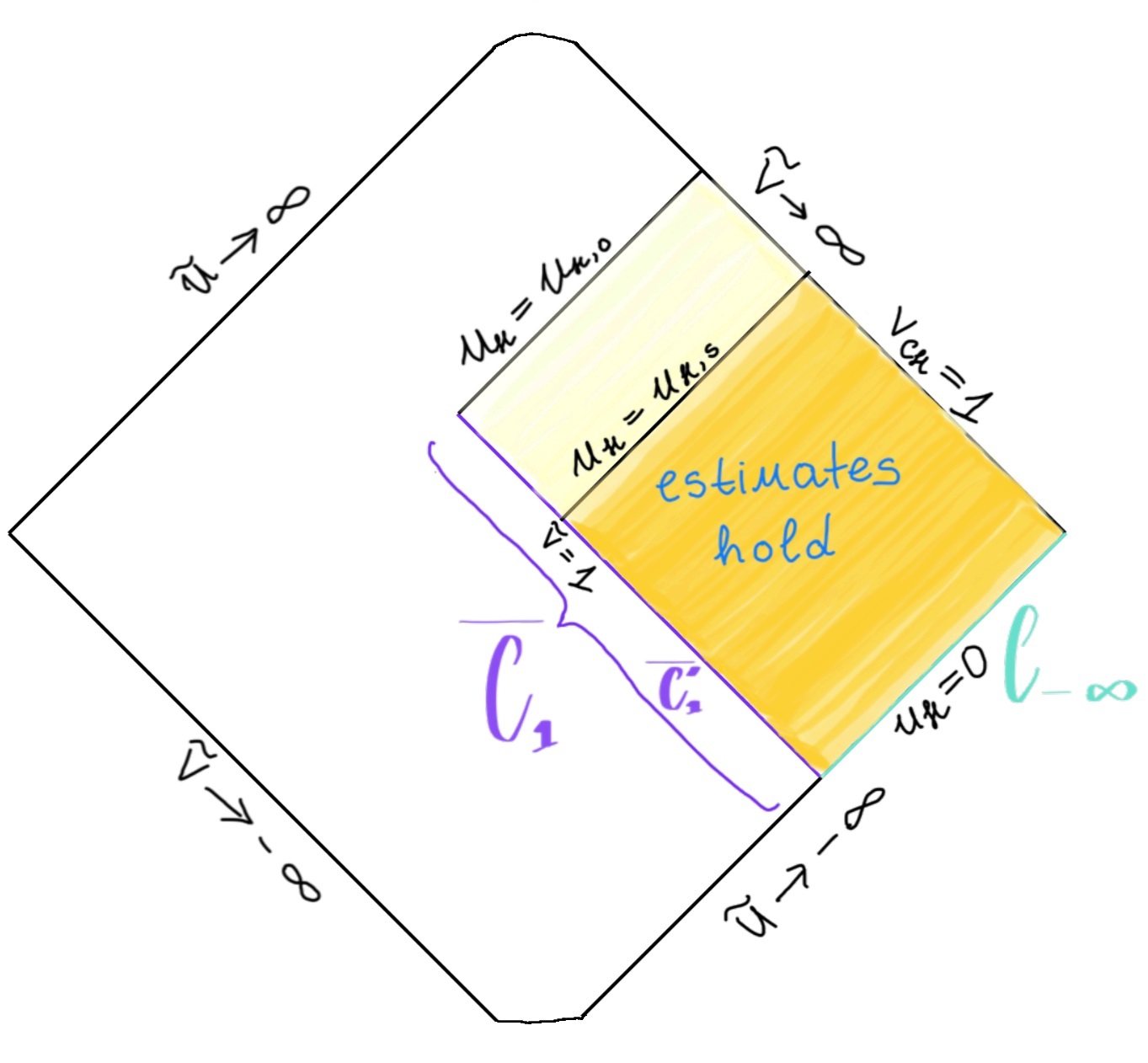} 
    \label{fig:a}
\end{subfigure}
\hfill
\begin{subfigure}{0.48\textwidth}
    \includegraphics[width=0.8\textwidth]{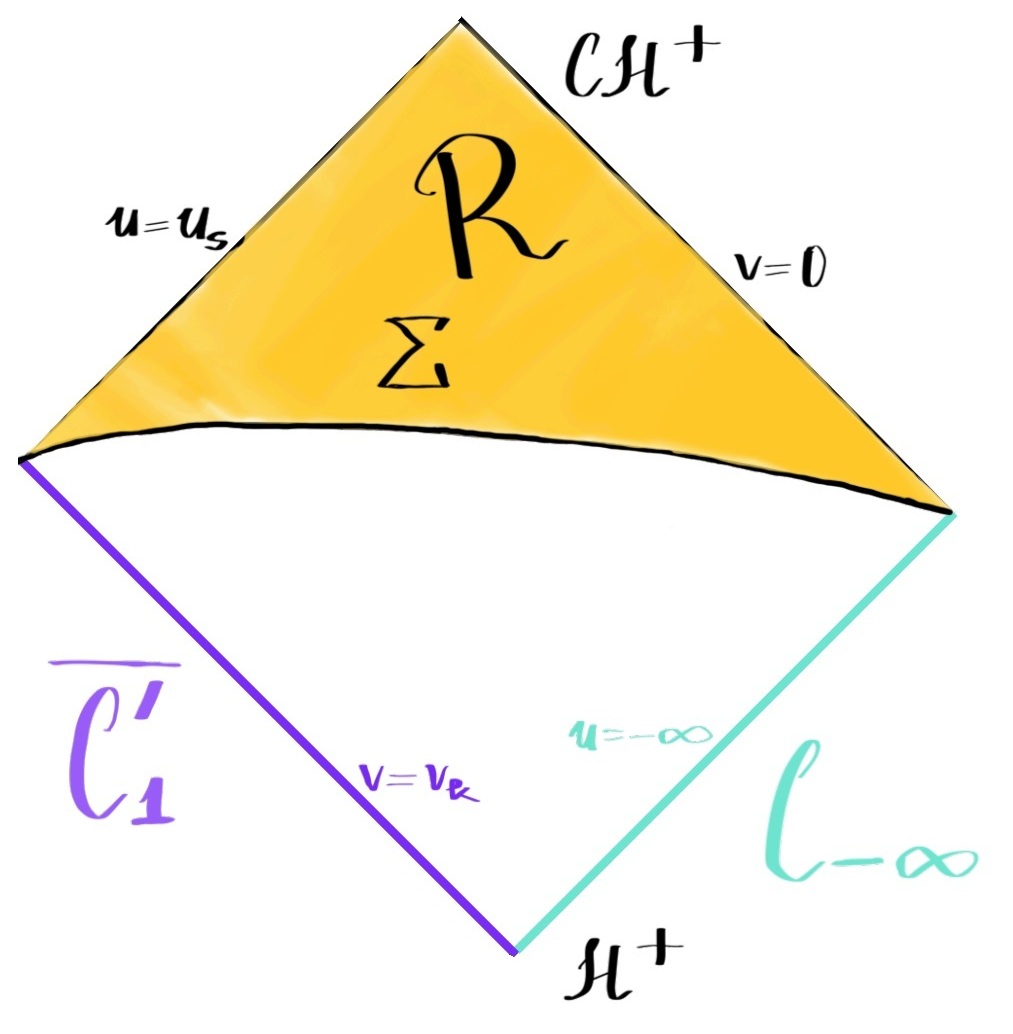}
    \label{fig:b}
\end{subfigure}
     \caption{Left: Illustration of the original statement of Stability of the Cauchy horizon for Einstein--Maxwell--scalar field initial data on a bifurcate null hypersurface, satisfying certain decay conditions. The stability estimates hold in the yellow region with blue letters. Right: Penrose diagram illustrating the open subset $\mathcal{R}\subset \mathcal{M}$, shown in yellow. We restrict to the future of the spacelike hypersurface $\Sigma =\{uv=u_{s}v_{\kappa}\}$, intersected with the domain of dependence of the bifurcate null hypersurface $\overline{C'_{1}}\cup C_{-\infty}$. On $\mathcal{R}$, the statement of Corollary {\ref{stability_of_CH}} holds. The past boundary of $\mathcal{R}$ is $\Sigma$ and the future boundary of $\mathcal{R}$ is $\{u=u_s\}\cup\{v=0\}\cap \overline{\mathcal{R}}$.}
    \label{CH-Stability_illustration}
\end{figure}
 
    {In the rest of this subsection we show that Assumptions \ref{assumptions_on_L1_norms} are satisfied in $\mathcal{R}$. We begin with the simple observation that $\norm{r^{\pm 1}}_{C(\mathcal{R})}$ is bounded by a constant depending on the geometry in $(\mathcal{R},g)$, which is encapsulated in the following lemma:

    \begin{lem}\label{bounds_on_r_in_R} There exist constants $0<r_0<r_1$ such that $r_0\leq r\leq r_1$ in $\mathcal{R}$. Furthermore, $r_1$ can be chosen to depend only on $(M,q,E,s,u_0)$.   
        \begin{proof}
            According to the previous subsection (but more precisely Theorem 4.1 in \cite{Luk_Oh}), the area radius function $r$ is smooth and positive in $\mathcal{Q_B}$ and extends continuously as a positive function on  $\mathcal{CH}_i^+$ away from $\mathcal{S}$ and $i^+_{i}$. Therefore, $r$ is smooth and positive on $\mathcal{R}$ and extends continuously as a positive function to $\overline{\mathcal{R}}$. To prove the existence of $r_0$, we will invoke Lemma \ref{choose_domain_lemma} which is stated and proved in the next subsection. It uses the Raychaudhuri equations (\ref{constraint_eqn1}-\ref{constraint_eqn2}) and the admissibility conditions on the Einstein--Maxwell--scalar field Cauchy data from Definition 3.1 of \cite{Luk_Oh} to show the existence of a semi-global trapped characteristic rectangle in $\mathcal{B}$ . We stress that no part of the proof of Lemma \ref{choose_domain_lemma} requires any bounds on $r$ so there is no circular argument. By Lemma \ref{choose_domain_lemma}, there is a $v_*<0$ such that $\partial_u r<0$ in $\mathcal{R}\cap \{v\geq v_*\}$. Let $u_r\in(-\infty,u_s)$. Note that $\overline{\mathcal{R}}\cap(\{u\geq u_r\}\cup\{v\leq v_*\})$ is a compact set containing $\mathcal{U}_r:={\mathcal{R}}\cap(\{u\geq u_r\}\cup\{v\leq v_*\})$ and so $\inf_{{\mathcal{U}_r}}r>0$. Now fix $u<u_r$ and $v>v_*$ such that $(u,v)\in \mathcal{Q_R}\backslash(\mathcal{U}_r/S^2)$ and $(u',v)\in \mathcal{U}_r/ S^2$. Since $\partial_u r<0$ in $\mathcal{R}\cap \{v\geq v_*\}$, we have $r(u,v)\geq r(u',v)\geq \inf_{\mathcal{U}_r}r$. Since the choice of point $(u,v)$ in $\mathcal{R}\backslash\mathcal{U}_r$ was arbitrary, it follows that $r_0:=\inf_{\mathcal{R}}r=\inf_{\mathcal{U}_r}r>0$. 
            \\[5pt]
            On the other hand, by the first estimate in the statement of the Stability of the Cauchy horizon, Corollary \ref{stability_of_CH},
            \begin{align}
                r(u,v)&\leq |(r-r_{RN})(u,v)|+|r_{RN}(u,v)|\leq C\big[|\ln^{-s}(-2\kappa_-v)|+|\ln^{-s+1}(-2\kappa_-u)|\big] + \sup_{R}|r_{RN}| \nonumber\\
                &\leq C\big[(2\kappa_-)^{-s} + |\ln^{-s+1}(-2\kappa_-u_s)| \big]+r_+=:r_1(M,q,E,s,u_0)
            \end{align}
            holds for $(u,v)\in \mathcal{R}$ for some $C=C(M,q,E,s)>0$. In the last line, we also used that $u_s$ depends only on $u_0$ and $s$ and $v_{\kappa}=e^{-2\kappa_-}/(2\kappa_-)$ depends only on $M$ and $q$.
        \end{proof}
    \end{lem}
\begin{rem}
The lower bound $r_0$ clearly depends on the geometry of the submanifold $(\mathcal{R},g)$. We cannot get $r_0$ to depend only on $(M,q,E,s,u_0)$ unless we also assume smallness of the perturbation (smallness of the constant $C$ in equation (\ref{cauchy_stability_eqn_1})). Indeed, if the perturbation is small enough so that the RHS of (\ref{cauchy_stability_eqn_1}) is $\leq r_-+r_0'$ for some $r_0'>0$, then
\begin{align}
    \inf_{\mathcal{R}}r\geq |(r-r_{RN})(u,v)| - \inf_{\mathcal{R}}r_{RN} \leq r_--r_0'-r_-=r_0'
\end{align}
where $r_0'$ depends only on $(M,q,E,s,u_0)$. However, we will not assume such smallness in what follows.

\end{rem}
We will refer to $(M,q,E,s, u_0, r_0)$ as \textit{the global parameters}. These parameters are specific to the region $\mathcal{R}$, hence any quantity which depends only on the global parameters is constant throughout $\mathcal{R}$. We will define the 'big constant $C_g$' depending on the geometry of the spacetime to be the maximum of a finite set of constants $C=C\big(M,q,E,s, u_0,r_0\big)$ which depend only on the global parameters. Next, we show that in the $(u,v)$ coordinates $e^{\pm 2\omega}$ is uniformly bounded on $\mathcal{R}$ by a constant depending only on the global parameters. This is a straightforward consequence of the $L^\infty$ estimate (\ref{stability_0}) in Corollary \ref{stability_of_CH}, which we demonstrate below.}

\begin{cor}\label{unif_bounded_omega}
In this Kruskal-like double null coordinate system $(u,v, \theta, \varphi)$ which is regular at the Cauchy horizon, there exists a $C>0$ depending on $(M,q,E,s,u_0)$ such that the following uniform bound holds as a result of Corollary \ref{stability_of_CH} throughout the region $\mathcal{R}$:
\begin{equation}
\norm{e^{\pm2\omega}}_{C(\mathcal{R})} \leq C
\end{equation}
\begin{proof} 
We first claim that $r_{RN}=const$ on the hypersurface $\Sigma = \{uv=u_s v_\kappa \}$. To prove this claim, use the expression for the RN tortoise coordinate $r_{RN}^*$ from Appendix \ref{RN_int_geometry}. On $\Sigma$, we have
\begin{align}
    &u_s v_\kappa = uv = \frac{1}{4\kappa_-^2} \exp(-2\kappa_- r_{RN}^*) = \frac{1}{4\kappa_-^2}\exp\big(-2\kappa_- r_{RN} - \frac{\kappa_-}{\kappa_+} \ln(r_+ - r_{RN}) + \ln(r_{RN} - r_-)\big) = \nonumber \\
    & =\frac{1}{4\kappa_-^2} e^{-2\kappa_- r_{RN}} (r_+ - r_{RN})^{-\kappa_-/\kappa_+}(r_{RN} - r_-) \label{r_const_on_Sigma}
\end{align}
By (\ref{r_const_on_Sigma}), the tortoise coordinate $r_{RN}^*$ is constant on $\Sigma$ - this fixes a unique constant $r_{RN, \Sigma} \in (r_-, r_+)$ such that $r_{RN}(u,v) = r_{RN, \Sigma}$ for all $(u,v) \in \Sigma$. \\[5pt]
Next we claim that if $\Sigma'$ is a spacelike hypersurface in $\mathcal{R}$ defined by the properties that $uv=const$ on $\Sigma'$ and $\Sigma' \subset J^+(\Sigma)$, then $r_{RN}(u,v) \equiv r_{RN, \Sigma'}=const$ for all $(u,v)\in\Sigma'$ and $r_{RN, \Sigma'}<r_{RN, \Sigma}$. Let's prove this: By an identical argument as in (\ref{r_const_on_Sigma}), the RN area radius function is constant on any other spacelike hypersurface $\Sigma'$ defined by the property $uv=const$. Now since $\partial_u + \partial_v$ is future-directed, $u$ and $v$ are non-decreasing along future-directed causal curves, so $\Sigma' \subset J^+(\Sigma) \implies uv|_{\Sigma'} < uv|_{\Sigma} = u_s v_\kappa$. This means that $r_{RN}^*|_{\Sigma'}<r_{RN}^*|_{\Sigma}$, hence $r_{RN, \Sigma'}<r_{RN, \Sigma}$. \\[5pt]
Note that the submanifold $\mathcal{R}$ can be foliated by such surfaces:
\begin{equation}
    \mathcal{R} = D^+(\Sigma)\cap\bigcup_{\infty<u<u_s, v_\kappa<v<0} \Sigma'
\end{equation}
Therefore in $\mathcal{R}$ we have $r_{RN, \Sigma}>r_{RN}(u,v)>r_-$.
Referring to Appendix \ref{RN_int_geometry} or Section 5.1 from \cite{Luk_Oh} for the geometry of the Reissner-Nordstr\"{o}m black hole, one finds that in the $(u,v)$ coordinate system which is regular at the Cauchy horizon, one can express the Reissner-Nordstr\"{o}m lapse function in $\mathcal{R}$ as:
\begin{align}
    &e^{2\omega_{RN}} = \frac{-4\frac{r_{RN}^2 - 2M r_{RN} + q^2}{r_{RN}^2}}{4\kappa_-^2 uv} =  \frac{4}{r_{RN}^2} e^{2\kappa_-r_{RN}}(r_+ - r_{RN})^{1+\kappa_-/\kappa_+} \implies \nonumber\\
    &\omega_{RN} = \frac{1}{2} \ln \bigg[ \frac{4}{r_{RN}^2} e^{2\kappa_-r_{RN}}(r_+ - r_{RN})^{1+\kappa_-/\kappa_+} \bigg] \label{omega_RN_in_terms_of_r}
\end{align}
Hence 
\begin{align}
    &\norm{\omega_{RN}}_{C(\mathcal{R})} = \frac{1}{2} \sup_{(u,v)\in \mathcal{Q}_\mathcal{R}} \bigg| \ln \bigg[ \frac{4}{r_{RN}^2} e^{2\kappa_-r_{RN}}(r_+ - r_{RN})^{1+\kappa_-/\kappa_+} \bigg] \bigg|\leq\nonumber\\
    &\leq\max \Bigg\{  \bigg|\ln\bigg( \frac{4}{r_{-}^2} e^{2\kappa_-r_{RN, \Sigma}}(r_+ - r_{RN, \Sigma})^{1+\kappa_-/\kappa_+}\bigg)\bigg|, \bigg|\ln\bigg(\frac{4}{r_{RN, \Sigma}^2} e^{2\kappa_-r_{-}}(r_+ - r_{-})^{1+\kappa_-/\kappa_+}\bigg) \bigg| \Bigg\} \leq C(M,q,E,s, u_0)
\end{align}
where the dependence of $r_{RN, \Sigma}$ on the global parameters $(M,q,E,s)$ is through $u_s$ and $v_\kappa$. Then by the estimate (\ref{cauchy_stability_eqn_1}) from Stability of The Cauchy Horizon, Corollary \ref{stability_of_CH}, 
\begin{align}
   &\norm{\omega}_{C(\mathcal{R})} \leq \norm{\omega - \omega_{RN}}_{C(\mathcal{R})} +\norm{\omega_{RN}}_{C(\mathcal{R})} \leq\nonumber\\
    &\leq 
    \sup_{(u,v)\in \mathcal{Q}_\mathcal{R}} C\big(|\ln^{-s}(-2\kappa_- v)| + |\ln^{-s+1}(-2\kappa_- u)|\big) + \norm{\omega_{RN}}_{C(\mathcal{R})} \leq C(M,q,E,s,u_0) \label{omegabound_forCg}
\end{align}
It follows that $\norm{e^{2\omega}}_{C(\mathcal{R})}$ and $\norm{e^{-2\omega}}_{C(\mathcal{R})}$ are both bounded by a constant which depends only on the global parameters. 
\end{proof}
\end{cor} 

\begin{lem}\label{unif_bounded_u_derivatives}
    Given $v\in (v_\kappa, 0)$, let $u_\Sigma(v) = u_s v_\kappa/v$ i.e. $u_\Sigma$ is the ingoing null coordinate such that the point $(u_\Sigma(v), v)$ lies on $\Sigma$. We have
    \begin{align}
        &\sup_{u\in [u_\Sigma(v), u_s]}|\partial_u \phi(u,v)| \leq C(M,q,E,s,u_0) \\
        &\sup_{u \in [u_\Sigma(v), u_s]}|\partial_u r(u,v)| \leq C(M,q,E,s,u_0) \label{bound_on_dur}
    \end{align}
    \begin{rem}
        In particular, equation \ref{bound_on_dur} implies that $\sup_{(u,v)\in \mathcal{R}} |\partial_u r(u,v)|< \infty$, proving that item 3 of Assumptions \ref{instability_assumptions} holds in $(\mathcal{R}, g)$. 
    \end{rem}
    \begin{proof} By Stability of the Cauchy Horizon Theorem \ref{stability_of_CH},
     \begin{align}
        &\sup_{u\in [u_\Sigma(v), u_s]}|\partial_u \phi(u,v)| + \sup_{u \in [u_\Sigma(v), u_s]}|\partial_u (r-r_{RN})(u,v)| \leq \nonumber\\
        &\leq \sup_{u\in[u_\Sigma(v), u_s]} C\bigg| \frac{\ln^{-s} (-2\kappa_- u)}{u}\bigg|=  C\bigg| \frac{\ln^{-s}(-2\kappa_- u_s)}{u_s} \bigg| \sim C(M,q,E,s, u_0)
    \end{align}
        and by Appendix \ref{RN_int_geometry}, equation (\ref{express_durrn_appendix}) we have 
        \begin{align}
            \sup_{u\in[u_\Sigma(v), u_s]}|\partial_u r_{RN}| &= \sup_{u\in[u_\Sigma(v), u_s]} \bigg|  \frac{1}{2\kappa_- u} \bigg(\ \frac{r_{RN}^2 - 2M r_{RN} + q^2}{r_{RN}^2} \bigg) \bigg| \leq\nonumber\\
            &\leq\bigg|\frac{1}{2\kappa_- u_s r_-^2}(r_{RN, \Sigma}^2 - 2M r_{RN, \Sigma} + q^2)\bigg| \sim C(M,q,E,s,u_0)
        \end{align}
        where $r_{RN,\Sigma}$ is introduced in Corollary \ref{unif_bounded_omega}.
    \end{proof}
\end{lem}

\begin{lem}\label{finite_omegaRN_integral} Given $v\in (v_\kappa, 0)$, let $u_\Sigma(v) = u_s v_\kappa/v$ i.e. $u_\Sigma$ is the ingoing null coordinate such that the point $(u_\Sigma(v), v)$ lies on $\Sigma$. We have 
\begin{align}
    \sup_{v_\kappa<v<0} \int_{u_\Sigma(v)}^{u_s} |\omega_{RN,u}(u,v)|du \leq C(M,q,E,s,u_0)
\end{align}
\begin{proof} Firstly, note that the point $(u,v)$ is in $\mathcal{Q}_\mathcal{R}$ for any $u\in(u_\Sigma, u_s)$. Let us refer to Appendix \ref{RN_int_geometry} or Section 5.1 from Luk-Oh \cite{Luk_Oh} to derive an explicit expression for $\omega_{RN}$ in terms of the RN area radius function $r_{RN}$, the Kruskal-like coordinates $(u,v)$ and the global parameters $(M,q,E,s)$. One has 
\begin{align}
    &\omega_{RN,u} = \frac{1}{2}\partial_u \big(\ln(e^{2\omega_{RN}})\big) = \frac{1}{2}e^{-2\omega_{RN}} \frac{\partial}{\partial u} \bigg(  -\frac{1}{\kappa_-^2 u v} \bigg( 1 - \frac{2M}{r_{RN}} + \frac{q^2}{r_{RN}^2}\bigg) \bigg) = \nonumber\\
    &=\frac{1}{2} e^{-2\omega_{RN}} \bigg[ \frac{1}{\kappa_-^2 u^2 v} \bigg( 1 - \frac{2M}{r_{RN}} + \frac{q^2}{r_{RN}^2}\bigg) - \frac{2}{\kappa_-^2 uv} \bigg( \frac{M}{r_{RN}^2} - \frac{q^2}{r_{RN}^3} \bigg)\partial_u r_{RN}\bigg]  \label{express_duwrn}
\end{align}
Referring to Appendix \ref{RN_int_geometry}, 
\begin{align}
    \partial_u r_{RN} = -\frac{1}{2\kappa_- u} \bigg( 1 - \frac{2M}{r_{RN}} + \frac{q^2}{r_{RN}^2}\bigg) \label{express_durrn}
\end{align}
Rearranging equations (\ref{express_duwrn}) and (\ref{express_durrn}) and employing (\ref{express_OmegaRN_appendix}) yields
\begin{align}
    \omega_{RN,u} = -\frac{1}{2u}\bigg( 1+ \frac{1}{\kappa_-} \bigg( \frac{M}{r_{RN}^2} - \frac{q^2}{r_{RN}^3} \bigg) \bigg) \label{duwrn}
\end{align}
Using that $\kappa_- = 2\sqrt{M^2 - q^2}/(M- \sqrt{M^2-q^2})^2$, one confirms that the RHS of (\ref{duwrn}) is nonnegative for $r_-\leq r_{RN}\leq r_+$ and zero at $r_{RN}=r_-$, so $\omega_{RN,u}>0$ in $\mathcal{R}$. Thus, given $v\in (v_\kappa, 0)$ we can apply the fundamental theorem of calculus:
\begin{align}
\norm{\omega_{RN,u}}_{L^1[u_\Sigma(v), u_s]} = \int_{u_s v_\kappa/v}^{u_s} \partial_u \omega_{RN}(u,v) du = \omega_{RN}(u, v)\big|^{u=u_s}_{u=u_s v_\kappa /v}
\end{align}
This is obviously finite for all $v<0$. By the proof of Corollary \ref{unif_bounded_omega}, $r_{RN}=r_{RN, \Sigma}$ is constant on $\Sigma$ and since $(u_\Sigma(v),v)\in \Sigma$ for every $v_\kappa<v<0$, by expression (\ref{omega_RN_in_terms_of_r}) $\omega_{RN}$ is constant on $\Sigma$. In particular 
\begin{equation}
    \lim_{v\to 0} \omega_{RN}(u_s v_\kappa/v, v) = \omega_{RN} (u_s, v_\kappa)<\infty
\end{equation}
\end{proof}
    
\end{lem}

\subsubsection{$L^1$ bounds on $\omega_{,v}$, $\omega_{,u}$, $r_{,v}$, $r_{,u}$, $\phi_{,v}$, $\phi_{,u}$.}
In this section, we establish the $L^1$ estimates of the null derivatives of geometric quantities along $u$-  and $v$-intervals, hence proving that Assumptions \ref{assumptions_on_L1_norms} needed for the dust regularity are satisfied in the submanifold $\mathcal{R}$ of the Einstein--Maxwell--scalar field black hole interior.  Recall that these were crucial for establishing the proof of Theorem \ref{master_theorem}. Like most of the prerequisites for the proof of dust regularity, they follow from Stability of The Cauchy Horizon, Corollary \ref{stability_of_CH} and hold in this form specifically in the region $\mathcal{R}$. 
\begin{prop} \label{stability_in_w11}Let $ (\mathcal{Q}_\mathcal{R},g_{\mathcal{Q}})$ be as above.
 Then for every $u_1<u_2$ and $v_1<v_2$ such that  $(u_i, v_j)\in \mathcal{Q}_\mathcal{R}$, the following $L^1$ bounds in the $(u,v)$ double null coordinate system hold as a consequence of Corollary \ref{stability_of_CH}:
\begin{align}
    \norm{\omega_{,u} - \omega_{RN,u}}_{L^1[u_1, u_2]} + \norm{r_{,u} - r_{RN,u}}_{L^1[u_1, u_2]} + \norm{\phi_{,u}}_{L^1[u_1, u_2]} \leq C \label{u_derivative_boundss}\\
    \norm{\omega_{,v} - \omega_{RN,v}}_{L^1[v_1,v_2]} + \norm{r_{,v} - r_{RN,v}}_{L^1[v_1, v_2]} + \norm{\phi_{,v}}_{L^1[v_1, v_2]} \leq C' \label{v_derivative_boundss}
\end{align}
For  $C, C'>0$, depending only on the global parameters $(M,q, E, s,u_0)$.
\begin{proof}
   Assume $(u_i, v_j)\in \mathcal{Q}_\mathcal{R}$ for $i,j\in\{1,2\}$. By construction, the hypotheses of the Theorem on Stability of the Cauchy Horizon (Theorem 5.1 in \cite{Luk_Oh}, Corollary \ref{stability_of_CH} above)  are satisfied in $\mathcal{R}$. Rewriting the estimates for the partial derivatives of $\omega, r$ and $\phi$ in the $(u,v)$ double null coordinate system which is regular at $\mathcal{CH}_1^+$ and integrating,
    \begin{align}
         &\norm{\omega_{,u} - \omega_{RN,u}}_{L^1[u_1, u_2]} + \norm{r_{,u} - r_{RN,u}}_{L^1[u_1, u_2]} + \norm{\phi_{,u}}_{L^1[u_1, u_2]} \leq \nonumber \\
         &\leq C\int_{u_1}^{u_2} \frac{1}{|-2\kappa_- u|} \frac{|\ln^{-s}(-2\kappa_- u)|}{|-2\kappa_- |^{-s}} du \leq C(2\kappa_-)^s \int_{-\infty}^{u_s} \frac{|\ln^{-s}(-2\kappa_- u)|}{|-2\kappa_-u|} du = C (2\kappa_-)^{s-1}\frac{ |\ln^{1-s}{(-2\kappa_- u_s)}|}{s-1} \label{blaa_u}
    \end{align}
    for some $C=C(M,q,E,s,u_0)>0$. We used that the decay rate $s$, defined in Corollary \ref{stability_of_CH} is bigger than $1$. Since the Reissner-Nordstr\"{o}m surface gravity on the inner horizon $\kappa_-$ depends only on $M$ and $q$, and since $u_s$ depends only on the decay rate $s$, we have proved (\ref{u_derivative_boundss}). \\[5pt]
    In the same way we establish (\ref{v_derivative_boundss}). Recall $v_\kappa = -e^{-2\kappa_-}/(2\kappa_-)$. Rewriting the estimates in the $(u,v)$ coordinate system and integrating over $v$,
   \begin{align}
        &\norm{\omega_{,v} - \omega_{RN,v}}_{L^1[v_1,v_2]} + \norm{r_{,v} - r_{RN,v}}_{L^1[v_1, v_2]} + \norm{\phi_{,v}}_{L^1[v_1, v_2]} \leq \nonumber\\
        &\leq C' \int_{v_1}^{v_2} \frac{1}{|-2\kappa_- v|} \frac{\ln^{-s}(-2\kappa_- v)}{(-2\kappa_- )^{-s}} dv = C' (-2\kappa_-)^s \int_{v_\kappa}^0 \frac{\ln^{-s}(-2\kappa_- v)}{-2\kappa_{-}v}dv = \nonumber \\
        &=C'\frac{(-2\kappa_-)^{s-1}}{2\kappa_-} \frac{ \ln(e^{-2\kappa_-})}{1-s} = C'\frac{1}{(2\kappa_- (s-1))}
   \end{align}
   which depends only on the global parameters $(M,q, E, s, u_0)$.
\end{proof}
\end{prop}
   
   \subsubsection{$L^1$ bound on $\partial_u \partial_v \omega$.}\label{L1_bds_proof2}
   {The next lemma establishes that the assumption (\ref{geom_bound_as4}) is also satisfied by the Einstein--Maxwell--scalar field black hole interior. Recall that this was important for establishing an $L^1$ estimate for the RHS of the Jacobi equation in Section \ref{regularity_of_timelike_geodesic_variations} and by extension, for the proof of Theorem \ref{master_theorem}. To prove that the assumption holds for the region $\mathcal{R}\subset\mathcal{M}$, we will need to consult the evolution equation for the lapse function in the spherically symmetric Einstein-Maxwell-real scalar field system, (\ref{evol_eqn_omega_appendix}).} Rewritten below for convenience,
\begin{align}
    \omega_{,uv} = - \phi_{,u}\phi_{,v} - \frac{1}{2}e^{2\omega}r^{-2}\bigg(\frac{{q}^2} {r^{2}} - \frac{1}{2}\bigg) + (\ln{r})_{,u}(\ln{r})_{,v} \label{express_ddomega}
\end{align}

    \begin{lem} \label{estimate_dw_dudv}For any $u$ and any $v_1< v_2$ such that $(u, v_1), (u, v_2 )\in \mathcal{Q_R}$, we have
     \begin{equation}
         \norm{\omega_{,uv}(u, v)}_{L^1[v_1, v_2]}\leq {C\big(v_2-v_1 + 1\big)}
     \end{equation}
     Where $C$ depends only on the global parameters $(M,q,E,s,u_0, r_0)$.
\begin{proof}
     Integrating equation (\ref{express_ddomega}) between $v_1$ and $v_2$ and applying Lemma \ref{bounds_on_r_in_R}, Proposition \ref{stability_in_w11} and Lemma \ref{unif_bounded_u_derivatives} we have
\begin{align}
     \norm{\omega_{,uv}}_{L^1[v_1, v_2]} &\leq \norm{\phi_{,u}\phi_{,v}}_{L^1[v_1, v_2]} + \frac{ \norm{e^{2\omega}}_{C(\mathcal{R})}|v_2 - v_1|}{2}\bigg( \frac{q^2}{r_0^4} + \frac{1}{r_0^2}\bigg) + \frac{1}{r_0^2}\norm{r_{,u}r_{,v}}_{L^1[v_1, v_2]} \leq \nonumber\\&\leq\norm{\phi_{,u}}_{C(\mathcal{R})}\norm{\phi_{,v}}_{L^1[v_1, v_2]} + \frac{ \norm{e^{2\omega}}_{C(\mathcal{R})}|v_2 - v_1|}{2}\bigg( \frac{q^2}{r_0^4} + \frac{1}{r_0^2}\bigg) + \frac{1}{r_0^2}\norm{r_{,u}}_{C(\mathcal{R})} \norm{r_{,v}}_{L^1[v_1, v_2]} \leq \nonumber\\
     &\leq C(1+v_2-v_1) \label{dwdudv_bound}
\end{align}
Where $C=C(M,q, E, s, u_0, r_0)$. We conclude that the $L^1$ norm of the factor $\omega_{,uv}$ along any $v$-interval in $\mathcal{Q_R}$ is bounded by a constant depending on the global parameters, multiplied by the length $|v_2 - v_1|$ of the interval.
\end{proof}
\begin{rem}
    As $|v_\kappa| \geq |v_2 - v_1|$ and in view of the fact that $v_\kappa$ is defined in terms of the global parameters $M$ and $q$ of the black hole interior, the estimate can be made uniform:
    \begin{equation}
        \norm{\omega_{,uv}}_{L^1[v_1, v_2]} \leq C(1+|v_\kappa|) \leq C(M,q,E,s, u_0,r_0)\label{misc}
    \end{equation}
\end{rem}
\end{lem}
\textbf{Choosing the geometric constant $C_g$}
\\[5pt]
The constant $C_g$ for the subset $(\mathcal{R}, g)$ of the  Einstein--Maxwell--scalar field black hole interior can then be chosen to be the maximum between the constants $C(M,q,E,s, u_0,r_0)$ in (\ref{u_derivative_boundss} - \ref{v_derivative_boundss}), (\ref{misc}) and the exponential of the constant $C(M,q,E,s,u_0)$ in (\ref{omegabound_forCg}) plus $r_0^{-1}+r_1$.
\\[5pt]
\textbf{The Assumption $\norm{\partial_u r}_{C(\mathcal{R})}<\infty$}
\\[5pt] 
The third item in Assumptions \ref{instability_assumptions} is actually a direct consequence of Lemma \ref{unif_bounded_u_derivatives} which is itself a consequence of Stability of the Cauchy Horizon - Theorem 5.1 in \cite{Luk_Oh}.

 \subsection{Instability of the Cauchy Horizon Theorem}
\textbf{The Assumption $\partial_v r(u,v)\to -\infty$ as $v\to -\infty$.}\\[5pt] 
The first line of Assumptions \ref{instability_assumptions} requires that the area radius function $r_{,v}$ goes to $-\infty$ as $v\to 0$ in double null coordinates which are regular at the Cauchy horizon. This is satisfied by the Einstein--Maxwell--scalar field black hole interior according to the Instability of the Cauchy Horizon - Theorem 5.5 in Luk-Oh, \cite{Luk_Oh}. In particular, equation (5.19) in \cite{Luk_Oh}. \\[5pt]
The second condition in Assumptions \ref{instability_assumptions} asserts the existence of a trapped causal diamond in $\mathcal{R}$ which borders the Cauchy horizon. More precisely, the condition asserts that for every $u_0\in(-\infty,u_s)$ a local trapped characteristic diamond $[u_0-\varepsilon, u_0+\varepsilon]\times[v_0,0)$ exists for some $v_0<0$ and $\varepsilon>0$, both depending on $u_0$. Before we prove the above result, we show the existence of a semi-global region in $\mathcal{B}$ by applying the admissibility conditions on the initial data assumed in Definition 3.1 in \cite{Luk_Oh}. 
\begin{lem}\label{choose_domain_lemma} 
There exists a $v_*<0 $ such that $\partial_u r<0$ everywhere in $\mathcal{Q_B}\cap\{v>v_*\}$ and a $u_*<0$\footnote{Not necessarily $<u_s$. If $u_*<u_s$ then we have a stronger condition than the one in Assumptions \ref{instability_assumptions}. Namely, we get a semi-global trapped region in $\mathcal{R}$. See Lemma \ref{large_trapped_rgn} on that topic.} such that $\partial_v r<0$ everywhere in $\mathcal{Q_B}\cap\{u>u_*\}$.
\begin{rem}
    Since $\mathcal{Q_R}\subset\mathcal{Q_B}$, the first statement above implies $\partial_u r<0$ in $\mathcal{Q_R}\cap \{v>v_*\}$ and the second statement implies $\partial_v r<0$ in $\mathcal{Q_R}\cap \{u>u_*\}$. As $v_*<0$, the set $\mathcal{Q_R}\cap \{v>v_*\}$ is never empty, but the same is not true for $\mathcal{Q_R}\cap \{u>u_*\}$. Namely, this intersection is only nonempty if $u_*<u_s$, hence the additional condition in on the global parameter $E$.  
    If this additional condition is satisfied then an obvious consequence of this Lemma is that $\partial_u r, \partial_v r<0$ everywhere in $\mathcal{Q_R}\cap \{u>u_*\}\cap\{v>v_*\}$, the latter set being nonempty. 
\end{rem}
\begin{proof}
   
     Let $L$ and $\underline{L}$ be radial null geodesic vector fields such that $L$ generates the outgoing null hypersurfaces in $\mathcal{M}$ and $\underline{L}$ generates the ingoing null geodesic hypersurfaces in $\mathcal{M}$. Then in an arbitrary local double null chart $(U,V)$ defined on some subset $\mathcal{O}$ of $\mathcal{Q}$ we have $L\propto \partial_V$ and $\underline{L}\propto \partial_U$ in $\mathcal{O}\times S^2$. In particular, $L\propto \partial_v$ and $\underline{L}\propto \partial_u$ in $\mathcal{R}$ \\[5pt] Furthermore, the Raychaudhuri equation (\ref{raychaudhuri_general_form}) applies for the null congruences generated by $L$ and $\underline{L}$. For radial null geodesic congruences in a spherically symmetric spacetime (\ref{raychaudhuri_general_form}) simplifies to
     \begin{align}
         &L\hat{\theta}=-\frac{1}{2} \hat{\theta}^2-R_{ab}L^aL^b\label{constraint_tensorial_2}\  \text{ for the outgoing geodesics}\\
       &\underline{L\hat{\theta}}=-\frac{1}{2}\underline{\hat{\theta}}^2-R_{ab}\underline{L}^a\underline{L}^b \ \text{ for the ingoing geodesics}
     \end{align}
In the above, $R_{ab}$ denotes the Ricci tensor on $\mathcal{M}$ and  $\hat{\theta}:=\nabla\cdot L$ and $\underline{\hat{\theta}}:=\nabla\cdot\underline{L}$ denote the expansions of the outgoing and ingoing null geodesic congruences. The shear and rotation vanish due to the spherical symmetry and since $L$ and $\underline{L}$ are orthogonal to the spheres of symmetry. We calculate
\begin{align}
    &\hat{\theta}=\frac{2Lr}{r},  &\underline{\hat{\theta}}=\frac{2\underline{L}r}{r}
\end{align}Due to the Einstein equations (\ref{EEs_general})\footnote{We use the convention $R_{ab}-\frac{1}{2}g_{ab}R=2T_{ab}$ - the same convention is used in \cite{Luk_Oh}.} and using that $L$ and $\underline{L}$ are null, we have
\begin{align}
    &R_{ab}L^aL^b = 2 (T_{ab}^{\text{[sf]}}+T_{ab}^{\text{[EM]}})L^aL^b,  &R_{ab}\underline{L}^a\underline{L}^b = 2(T_{ab}^{\text{[sf]}}+T_{ab}^{\text{[EM]}})\underline{L}^a\underline{L}^b 
\end{align}
Applying the massless scalar field energy-momentum tensor (\ref{sf_stressenergy_tensor}) to $L^aL^b$ and $\underline{L}^a\underline{L}^b$, we find
\begin{align}
    &T_{ab}^{\text{[sf]}}L^a L^b = (L\phi)^2, &T_{ab}^{\text{[sf]}}\underline{L}^a \underline{L}^b = (\underline{L}\phi)^2
\end{align} 
where we again used that $L,\underline{L}$ are null. Applying the Maxwell field energy-momentum tensor given in (\ref{EM_stressenergy_tensor}) we have 
\begin{align}
    &T_{ab}^{\text{[EM]}}L^aL^b=0, &T_{ab}^{\text{[EM]}}\underline{L}^a\underline{L}^b=0
\end{align}
 because the first term of $T_{ab}^{\text{[EM]}}$ is antisymmetric and the second term is proportional to $g_{ab}$. It follows that the Raychaudhuri equations take the global form
\begin{align}
    &L\bigg(\frac{2Lr}{r}\bigg)= -\frac{1}{2}\bigg(\frac{2Lr}{r}\bigg)^2-2(L\phi)^2 \iff L(Lr)=-r(L\phi)^2\label{constraint_tensorial_outgoing}\\
    &\underline{L}\bigg(\frac{2\underline{L}r}{r}\bigg)= -\frac{1}{2}\bigg(\frac{2\underline{L}r}{r}\bigg)^2-2(\underline{L}\phi)^2 \iff \underline{L}(\underline{L}r)=-r(\underline{L}\phi)^2\label{constraint_tensorial_ingoing}
\end{align}
These are tensor equations on $\mathcal{M}$ so they are independent of the coordinates. 
     \\[5pt]
   By the future admissibility condition on the Cauchy data in Definition 3.1 \cite{Luk_Oh} (See also Lemma 2.3 and Remark 3.2 in \cite{Luk_Oh}), in the coordinates $(\rho,\theta,\varphi)$ parameterizing the initial spacelike hypersurface $\Sigma_0$, there exist $\rho_1<\rho_2$ such that\footnote{Technically, the condition given in Remark 3.2 of \cite{Luk_Oh} is that $\partial_V r|_{\Sigma_0} <0$ for $\rho\leq\rho_2$ and $\partial_U r|_{\Sigma_0} <0$ for $\rho\geq\rho_1$. It is meant that this condition holds in an \textit{arbitrary local double null chart} $(U,V)$ covering the region of interest.} 
    \begin{align}
           &Lr <0 \text{ on }\{\rho\leq\rho_2\}\subset\Sigma_0 \\
            &\underline{L}r <0 \text{ on }\{\rho\geq\rho_1\} \subset\Sigma_0
    \end{align}
   As mentioned in Remark 3.2 of \cite{Luk_Oh}, this implies that $\{\rho_1\leq \rho\leq \rho_2\}\subset\Sigma_0$ is trapped i.e. $\underline{L}r, L r<0$ in that region. Let $\rho\leq \rho_2$ and let $S_{\rho}$ be the sphere at $\rho$ on $\Sigma_0$, so $Lr|_{S_\rho}<0$. Let $\gamma$ be an integral curve of $L$ parameterised by affine parameter $s$ such that $\gamma(0)\in S_\rho$. By the Raychaudhuri equation (\ref{constraint_tensorial_outgoing}),    
   \begin{align}
       &Lr(\gamma(s)) = Lr(\gamma(0)) -\int_{0}^{s}r (L\phi)^2 (\gamma(s'))ds'<0
   \end{align}
   where we used that the integrand is nonnegative in the second term of the sum. Since $\rho\leq \rho_2$ was arbitrary, this shows that $Lr<0$ along the outgoing null generators originating from the region $\{\rho\leq\rho_2\}$ on $\Sigma_0$, which is equivalent to $Lr<0$ in $J^+(\Sigma_0\cap\{\rho\leq \rho_2\})$\footnote{To see this, let $p\in J^+(\Sigma_0\cap\{\rho\leq \rho_2\})$ and cover $J^-(p)$ with a double null chart $(U,V)$. This is possible by Subsection 5.1.2 of \cite{Luk_Oh}. There is a past-directed causal curve $\sigma$ from $p$ to some point $q$ on $\Sigma_0$ with $\rho(q)\leq\rho_2$. As $\sigma$ is past-directed, $U$ and $V$ are non-increasing along $\sigma$, so in particular $U(q)\leq U(p)$. As $\Sigma_0$ is a Cauchy surface for the spacetime, the past-directed outgoing radial null geodesic $\gamma$ through $p$ eventually hits $\Sigma_0$ at some point $q'$. Furthermore, $V$ is decreasing and $U$ is constant along $\gamma$, so in particular $U(p)=U(q')\geq U(q)$ and so $\rho(q')\leq \rho(q)\leq\rho_2$. Therefore $p$ lies on a future-directed radial outgoing nul geodesic emanating from $\Sigma_0\cap\{\rho\leq \rho_2\}$. The converse statement is obvious.}. 
  \\[5pt]
   Identically, choose $\underline{\rho}\geq \rho_1$ and let $S_{\underline{\rho}}$ denote the corresponding sphere in $\Sigma_0$, so $\underline{L}r|_{S_{\underline{\rho}}}<0$. Let $\underline{\gamma}$ be an integral curve of $\underline{L}$, affinely parameterised by $s$ and such that $\underline{\gamma}(0)\in S_{\underline{\rho}}$. By the Raychaudhuri equation (\ref{constraint_tensorial_ingoing}),
   \begin{align}
       &\underline{L}r(\underline{\gamma}(s)) = \underline{L}r(\underline{\gamma}(0)) -\int_{0}^{s}r (\underline{L}\phi)^2 (\underline{\gamma}(s'))ds'<0
   \end{align}
We used again that the integrand is nonnegative in the second term of the sum. Since $\underline{\rho}\geq \rho_1$ was arbitrary, this shows that $\underline{L}r<0$ along the ingoing null generators originating from the region $\{\rho\geq\rho_1\}$ on $\Sigma_0$, which is equivalent to $\underline{L}r<0$ in $J^+(\Sigma_0\cap\{\rho\geq \rho_1\})$\footnote{The reasoning is equivalent to that in the above footnote.}.\\[5pt]
Combining these results, we find that $Lr, \underline{L}r<0$ in $J^+(\{\rho_1\leq \rho\leq\rho_2\}\cap \Sigma_0)$. We cannot directly conclude that this implies the existence of the desired region in $\mathcal{Q_R}$, because we don't know where $\rho_1$ and $\rho_2$ are on $\Sigma_0$. But we can invoke Lemma A.1 from \cite{Luk_Oh}, which states that there are no trapped surfaces in the black hole exterior\footnote{In other terms, the \textbf{apparent horizon} - the boundary of trapped surfaces - always lies on or inside the event horizon.}. Therefore, $J^+(\{\rho_1\leq \rho\leq\rho_2\}\cap \Sigma_0)$ must lie in the black hole interior $\mathcal{B}$. In particular, this implies $\{(\rho, \theta,\varphi)\in \Sigma_0:\rho_1\leq\rho\leq\rho_2\}\subset\Sigma_{0,\mathcal{B}}$. On $\Sigma_{0,\mathcal{B}}/S^2$, in the Kruskal-like interior coordinates $(u,v)$ spanning from $-\infty$ to $0$, let $v_*$ be the $v$ coordinate of the sphere at $\rho_1$ and let $u_*$ be the $u$ coordinate of the sphere at $\rho_2$. Then $Lr<0$ in $J^+(\Sigma_0\cap\{\rho\leq \rho_2\})$ implies $\partial_v r<0$ in $\{u>u_*\}\cap \mathcal{Q_B}$. Similarly, $\underline{L}r<0$ in $J^+(\Sigma_0\cap \{\rho\geq \rho_1\})$ implies $\partial_u r<0$ in $\{v>v_*\}\cap \mathcal{Q_B}$.
\end{proof}
\end{lem}
In the following Lemma, we show that the second condition in Assumptions \ref{instability_assumptions} is satisfied in $\mathcal{R}$. Given the first condition in Assumptions \ref{instability_assumptions} and Lemma \ref{choose_domain_lemma}, the proof of the existence of the local trapped diamond around every $u_0$ is a simple continuity argument.
\begin{lem}\label{thin_trapped_strip_lemma}
    For each $u_0\in (-\infty, u_s)$ there exists an $\varepsilon>0$ and a $v_0\in(v_\kappa,0)$ such that $[u_0-\varepsilon, u_0+\varepsilon]\times[v_0,0)$ is a trapped region.
    \begin{proof}
        If $u_0\in (-\infty,u_s)$. Since $\partial_v r(u,v)\to -\infty$ as $v\to 0$, there exists a $v_0<0$ such that $\partial_v r(u,v)\leq -1 \ \forall v\geq v_0$. By continuity of $\partial_v r$, $\exists \varepsilon>0$ such that $\partial_v r<0$ on $[u-\varepsilon,u+\varepsilon]\times\{v_0\}$. From the Raychaudhuri equation (\ref{constraint_eqn1}) it follows that $\partial_v r<0$ in $[u-\varepsilon,u+\varepsilon]\times[v_0,0)$.
        Next, choose a point $q\in[u-\varepsilon,u+\varepsilon]\times[v_0,0)\times S^2$. Consider the past-directed radial ingoing null geodesic through $q$. This hits $\Sigma_0$ at some local coordinate $\rho$\footnote{Recall that $\rho$ parameterises $\Sigma_0/S^2$.}. This point must lie in the black hole exterior so by the proof of Lemma \ref{choose_domain_lemma}, $\rho \geq \rho_1$ (as $\rho_1$ must be in the black hole interior according to the previous Lemma). By Lemma A.1 from \cite{Luk_Oh}, $\underline{L}r<0$ at the point parameterised by $\rho$ on $\Sigma_0$, 
        so by an application of the Raychaudhuri equations (an argument identical to those used in
        the proof of Lemma \ref{choose_domain_lemma}), $\underline{L}r<0$ at $q$ and so $\partial_u r<0$ at $q$. Since $q$ was arbitrary, $\partial_u r<0$ in $[u-\varepsilon,u+\varepsilon]\times[v_0,0)$. Therefore, this strip is trapped. 
    \end{proof}
\end{lem}
Lemma \ref{choose_domain_lemma} proves that the region $J^+(\{\rho_1\leq \rho\leq\rho_2\}\cap \Sigma_0)/S^2 = \{(u,v)\in \mathcal{Q_B}: u>u_* \text{ and } v>v_*\}$ is trapped. Under a mild additional condition on the Einstein--Maxwell--scalar field initial data in Definition 3.1 of \cite{Luk_Oh}, namely that the trapped subset of the initial hypersurface $\Sigma_0$ is sufficiently large, we can ensure that the trapped subset of $\mathcal{Q_R}$ given by $\{(u,v)\in \mathcal{Q_R}:u>u_* \text{ and }v>v_*\}$ is nonempty. This is encapsulated in the final result of this paper, Lemma \ref{large_trapped_rgn} below.
\begin{lem}\label{large_trapped_rgn}  Let $u_*$ be as in Lemma \ref{choose_domain_lemma}. If the coordinate $\rho_2$ in Definition 3.1 of \cite{Luk_Oh} is sufficiently large, then $u_*<u_s$.
\begin{proof}
    Recall that $u_*$ is defined to be the ingoing coordinate of the sphere at $\rho_2$ in the Kruskal-like coordinates $(u,v)$, defined on $\mathcal{Q_B}$. Let $\rho_{\mathcal{H}_1}$ denote the coordinate where the event horizon $\mathcal{H}_1$ intersects the Cauchy hypersurface $\Sigma_0$. 
    As a consequence of Lemma A.1 of \cite{Luk_Oh},  $\rho_2\leq\rho_{\mathcal{H}_1}$. In $(u,v)$ coordinates we have $u\to -\infty$ as $\rho\to \rho_{\mathcal{H}_1}^-$. So in particular there exists a $\rho_{s}$ such that $\rho\geq\rho_s \implies u(\rho)<u_s$.  Hence if the initial data for the global Cauchy problem in \cite{Luk_Oh} satisfies $\rho_2\geq \rho_s$, then $u_*=u(\rho_2)<u_s$.
\end{proof}
\end{lem}

\begin{rem}
    We expect that an alternative to requiring that $\rho_2$ is large is requiring that the spherically symmetric scalar field perturbation on the Reissner-Nordstr\"{o}m initial data is small. This would be characterised by a small constant $E$ and a large constant $s$ in Theorem \ref{stability_of_CH_original}, which would make the existence time $u_s$ larger. However, we do not investigate this conjecture further in the current article. 
\end{rem}

\begin{figure}[H]
    \centering
    \includegraphics[width=0.8\linewidth]{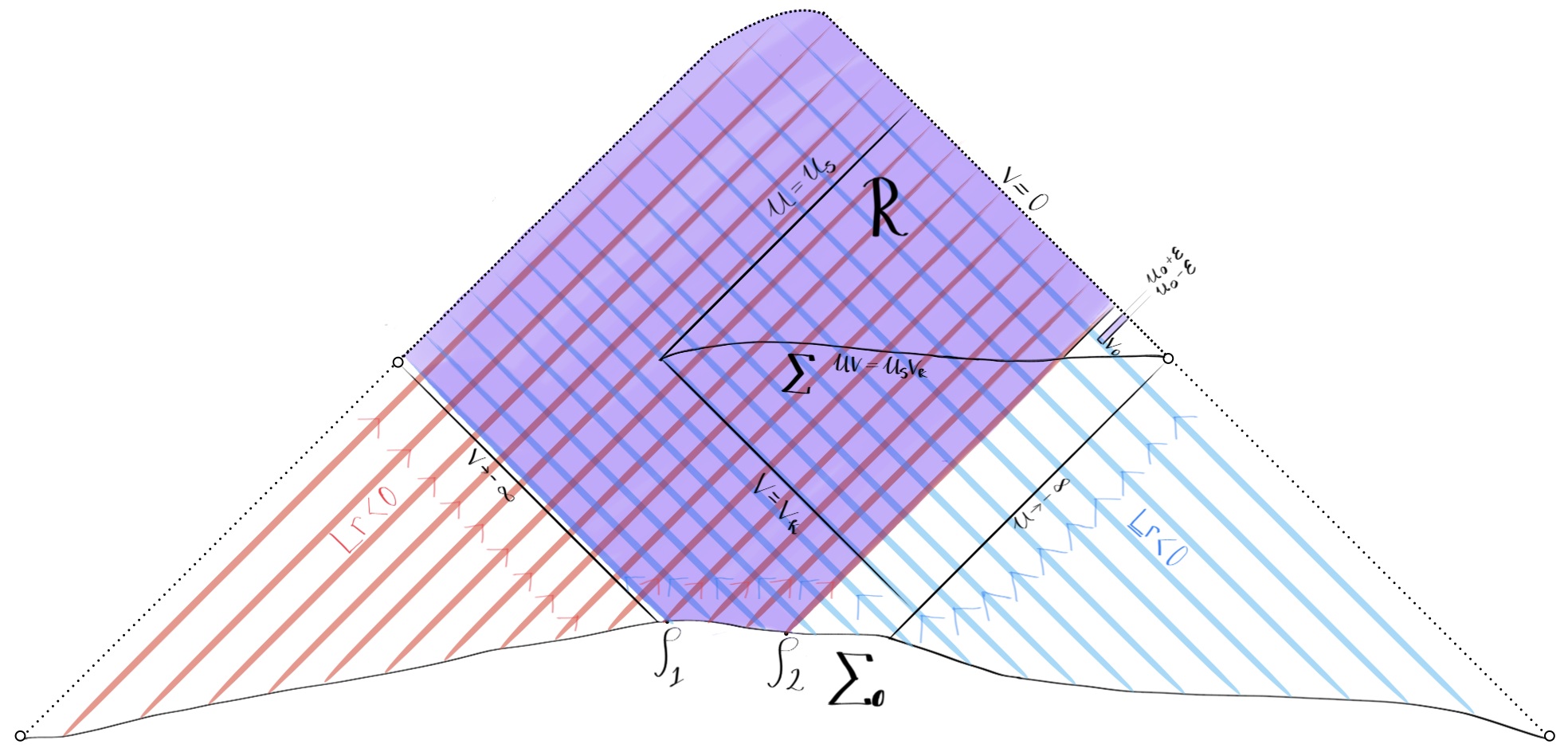}
    \caption{Illustration of the proofs of Lemma \ref{choose_domain_lemma} and Lemma \ref{thin_trapped_strip_lemma} in the case $u_*<u_s$. By the Raychaudhuri equations, the sign of $Lr$ is preserved along outgoing null geodesics (red) and the sign of $\underline{L}r$ is preserved along ingoing null geodesics (blue), emanating from $\Sigma_0$. By Definition 3.1 of \cite{Luk_Oh}, $Lr<0$ for $\rho\leq \rho_2$ and $\underline{L}r<0$ for $\rho\geq\rho_1$. Since the segment of $\Sigma_0$ defined by $\rho_1\leq\rho\leq \rho_2$ is trapped, it must be between $\mathcal{H}_1$ and $\mathcal{H}_2$. The semi-global trapped region where the blue and red lines intersect is highlighted in purple. There is a small purple strip around $u_0$ outside of this region - this represents the thin trapped strip whose existence is proved in Lemma \ref{thin_trapped_strip_lemma}. }
    \label{trapped_region_illustration}
\end{figure}

\subsection{Concluding remark}
    Recall that as a result of the abovementioned blow-up results from Luk-Oh, the estimates we establish for the solution $(\rho, U)$ of the characteristic initial value problem for the stiff fluid in $\mathcal{R}$ indeed imply that the energy density of any such solution blows up at the singularity. Combining this result  with Lemma \ref{unif_bounded_u_derivatives} shows that the $u$-componenet of the stiff fluid velocity blows up at the singularity in $(\mathcal{R}, g)$. This is in contrast with the spherically symmetric dust model, where we showed that the $L^1$ bounds on the $v$-derivatives from the Stability of the Cauchy Horizon Theorem lead to bounded dust energy density and fluid velocity components at the singularity.

\appendix\label{appendix:a}\section{Appendix}

\subsection{Einstein equations with a Maxwell field and a massless scalar field}
The spacetimes in \cite{Luk_Oh} are spherically symmetric solutions of the Einstein equations
\begin{align}
    R_{ab}-\frac{1}{2}g_{ab}R=2\big(T_{ab}^{[\text{sf}]}+T_{ab}^{\text{[EM]}}\big) \label{EEs_general}
\end{align}
where $R_{ab}$ denotes the Ricci tensor and $R=R_c^{\ c}$ denotes the Ricci scalar of the spacetime $(\mathcal{M},g)$, and the massless scalar field energy-momentum tensor is
\begin{align}  T_{ab}^{\text{[sf]}}:=\nabla_a \phi\nabla_b\phi - \frac{1}{2}g_{ab}\nabla_c\phi\nabla^c\phi \label{sf_stressenergy_tensor}
\end{align}
and the Maxwell field energy momentum tensor is 
\begin{align}
    T_{ab}^{\text{[EM]}}:=F_{ac}F_b^{\ c}-\frac{1}{4}g_{ab}F_{cd}F^{cd} \label{EM_stressenergy_tensor}
\end{align}
where $F=dA$, the one-form $A$ being the electromagnetic potential.
\begin{subsection}{The Raychaudhuri equation for a null geodesic congruence}
Let $L$ be an affinely parameterised null geodesic vector field generating a null geodesic congruence on $(\mathcal{M},g)$. Then the Raychaudhuri equation holds:
\begin{align}
    L\hat{\theta} = -\frac{1}{2}\hat{\theta}^2 -\hat{\sigma}_{ab}\hat{\sigma}^{ab} + \hat{\omega}_{ab}\hat{\omega}^{ab}-R_{ab}L^a L^b \label{raychaudhuri_general_form}
\end{align}
where the scalar function $\hat{\theta}=\nabla\cdot L$ is the expansion of $L$, the $(0,2)$-tensor  $\hat{\sigma}_{ab}$ denotes the shear of $L$ and the 2-form $\hat{\omega}_{ab}$ denotes the rotation of $L$. 
\end{subsection}

 \subsection{Spherically symmetric Einstein--Maxwell--scalar field evolution equations and constraints in double null coordinates} \label{appendix_evol_eqns_and_constraints}
The Einstein--Maxwell--scalar field evolution equations in spherical symmetry take the following form in double null coordinates:
\begin{align}
&r_{,uv}= -\frac{e^{2\omega}}{4r} - \frac{r_{,u}r_{,v}}{r} +\frac{e^{2\omega}q^2}{4r^3} \label{evol_eqn_r_appendix}\\
&\phi_{,uv} = - \phi_{,u} (\ln r)_{,v} - \phi_{,v} (\ln r)_{,u} \\
&\omega_{,uv} = - \phi_{,u} \phi_{,v} + \frac{e^{2\omega}}{2r^2}\bigg(\frac{1}{2} - \frac{q^2}{r^2}\bigg)
+ (\ln r)_{,u} (\ln r)_{,v} \label{evol_eqn_omega_appendix}
\end{align}
And the Raychaudhuri constraints for the same system in double null coordinates take the form:
\begin{align}
    & \partial_v(e^{-2\omega}r_{,v}) = - re^{-2\omega} \phi_{,v}^2 \label{constraint_eqn1} \\
    & \partial_u(e^{-2\omega}r_{,u}) = - re^{-2\omega} \phi_{,u}^2 \label{constraint_eqn2}
\end{align}

\subsection{Christoffel symbols and Riemann tensor components in spherical symmetry}\label{appendix_riemann_components}
For convenience we list the Christoffel symbols of $g$ in double null coordinates $(u,v,\theta, \varphi)$:
\begin{align} 
   & \Gamma^u_{\mu\nu} = 
    \begin{pmatrix}
        2\omega_{,u} & 0 & 0 & 0 \\
        0 & 0 & 0 & 0 \\
        0 & 0 & 2e^{-2\omega}r r_{,v} & 0 \\
        0 & 0 & 0 & 2e^{-2\omega} r r_{,v} \sin^{2}(\theta)
    \end{pmatrix}; \ \ 
    \Gamma^v_{\mu\nu} = 
    \begin{pmatrix}
        0 & 0 & 0 & 0 \\
        0 & 2\omega_{,v} & 0 & 0 \\ 
        0 & 0 & 2e^{-2\omega}r r_{,u} & 0\\
        0 & 0 & 0 & 2e^{-2\omega} r r_{,u} \sin^{2}(\theta) 
    \end{pmatrix}; \nonumber \\[5pt]
    & \Gamma^\theta_{\mu\nu} =
    \begin{pmatrix}
        0 & 0 & \frac{r_{,u}}{r}  & 0 \\
        0 & 0 & \frac{r_{,v}}{r}  & 0 \\
        \frac{r_{,u}}{r}  & \frac{r_{,v}}{r}  & 0 & 0 \\
        0 & 0 & 0 & -\sin{(\theta)}\cos{(\theta)}
    \end{pmatrix}; \ \ 
    \Gamma^\varphi_{\mu\nu} = 
    \begin{pmatrix}
        0 & 0 & 0 & \frac{r_{,u}}{r}  \\
        0 & 0 & 0 & \frac{r_{,v}}{r}  \\
        0 & 0 & 0 & \cot{(\theta)} \\
        \frac{r_{,u}}{r}  & \frac{r_{,v}}{r}  & \cot{(\theta)} & 0 
    \end{pmatrix} \label{Christoffel_Symbols}
\end{align}
The Christoffel symbols of $g_{\mathcal{Q_R}}$ in $(u,v)$ coordinates are just the upper-left $2\times2$ submatrices of $\Gamma^\mu_{\nu\rho}$,
\begin{align}
     & \mathring{\Gamma}^u_{\mu\nu} = 
    \begin{pmatrix}
        2\omega_{,u} & 0 \\
        0 & 0 
    \end{pmatrix}; \ \ 
    \mathring{\Gamma}^v_{\mu\nu} = 
    \begin{pmatrix}
        0 & 0   \\
        0 & 2\omega_{,v}
    \end{pmatrix} 
\end{align}

Hence a calculation shows that the Riemann tensor components of $g_{\mathcal{Q_R}}$ in the $(u,v)$ double null coordinates are:
\begin{align}
     &(\mathring{R}^u_{\ u\mu\nu}) =  
    \begin{pmatrix} 
    0 & -2\omega_{,uv}  \\
    2\omega_{,uv} & 0 
    \end{pmatrix} =  - (\mathring{R}^v_{\ v\mu\nu});
\end{align}

\subsection{Reissner-Nordstr\"{o}m interior geometry}\label{RN_int_geometry}
Following Section 5 from Luk-Oh \cite{Luk_Oh}, in the (Kruskal-like) coordinates $(u,v)$ which are regular near $\mathcal{R}\cap \mathcal{CH}^+$, the Reissner-Nordstr\"{o}m metric takes the form
\begin{equation}
    g_{RN} = -e^{2\omega_{RN}} du dv + r^2 g_{S^2}
\end{equation}
where the RN lapse function is 
\begin{align}
    e^{2\omega_{RN}} = - \frac{1}{\kappa^2_- uv} \bigg( 1 - \frac{2M}{r_{RN}} + \frac{q^2}{r_{RN}^2} \bigg) \label{express_OmegaRN_appendix}
\end{align}
where $r_{RN} = r_{RN}(u,v)$ is the RN area radius function. The RN surface gravities are $\kappa_\pm = (r_+ - r_-)r_{\pm}^{-2}$ where
the inner and outer horizon area radii are given by the roots of (\ref{express_OmegaRN_appendix}) i.e. $r_\pm = M \pm \sqrt{M^2-q^2}$. \\[5pt]
The $u,v$ double null coordinates are related to the RN tortoise coordinate $r_{RN}^*$ by 
$4\kappa_{-}^2 uv = e^{-2\kappa_- r_{RN}^*}$, where the tortoise coordinate is given by
\begin{align}
    r_{RN}^* = r_{RN} + \frac{1}{2\kappa_+} \ln (r_+ - r_{RN}) - \frac{1}{2\kappa_-} \ln(r_{RN} - r_-)
\end{align}
Hence the RN lapse function in this double null chart can be expressed in terms of the area radius function as
\begin{equation}
    e^{2\omega_{RN}} = 4r_{RN}^{-2}e^{2\kappa_- r_{RN}}(r_{+}-r_{RN})^{1+\kappa_-/\kappa_+}
\end{equation}
We also have the following expressions for the ingoing and outgoing derivatives of the RN area radius function in the Kruskal-like coordinates, c.f. Section 5.1 in \cite{Luk_Oh}:
\begin{align}
    &\partial_u r_{RN} = -\frac{1}{2\kappa_- u} \bigg( 1 - \frac{2M}{r_{RN}} + \frac{q^2}{r_{RN}^2}\bigg) \label{express_durrn_appendix} \\
    &\partial_v r_{RN} = -\frac{1}{2\kappa_- v} \bigg( 1 - \frac{2M}{r_{RN}} + \frac{q^2}{r_{RN}^2}\bigg) \label{express_dvrrn_appendix}
\end{align}

\printbibliography\nocite{*}

@article{Sbierski2021,
  author  = {Sbierski, Jan},
  title   = {On holonomy singularities in general relativity and the {$C^{0,1}_{loc}$}-inextendibility of spacetimes},
  journal = {arXiv preprint},
  year    = {2020},
  note    = {arXiv:2007.12049 [gr-qc]},
}

@article{Luk_Oh,
  author  = {Luk, Jonathan and Oh, Sung-Jin},
  title   = {Strong cosmic censorship in spherical symmetry for two-ended asymptotically flat initial data {I}. {T}he interior of the black hole region},
  journal = {Inventiones mathematicae},
  volume  = {217},
  pages   = {1--111},
  year    = {2019},
  note    = {arXiv:1702.05715 [gr-qc]},
}

@article{Sbierski_Fournodavlos,
  author  = {Fournodavlos, Grigorios and Sbierski, Jan},
  title   = {Generic blow-up results for the wave equation in the interior of a {S}chwarzschild black hole},
  journal = {Archive for Rational Mechanics and Analysis},
  volume  = {235},
  pages   = {2141--2182},
  year    = {2020},
  note    = {arXiv:1804.01941 [gr-qc]},
}

@article{Chr_SelfGravitatingFluid,
  author  = {Christodoulou, Demetrios},
  title   = {Self-gravitating relativistic fluids: {A} two-phase model},
  journal = {Archive for Rational Mechanics and Analysis},
  volume  = {130},
  number  = {4},
  pages   = {343--400},
  year    = {1995},
  doi     = {10.1007/BF00375144},
}

@article{Dafermos2012,
  author  = {Dafermos, Mihalis},
  title   = {Black holes without spacelike singularities},
  journal = {Communications in Mathematical Physics},
  volume  = {332},
  number  = {2},
  pages   = {729--757},
  year    = {2014},
  note    = {arXiv:1201.1797 [gr-qc]},
}

@article{Gavassino,
  author  = {Gavassino, L.},
  title   = {Solitons and singularities in relativistic ultrastiff fluids},
  journal = {arXiv preprint},
  year    = {2025},
  note    = {arXiv:2504.20332 [gr-qc]},
}

@article{acoustic_spacetimes,
  author  = {Visser, Matt and Molina-Paris, Carmen},
  title   = {Acoustic geometry for general relativistic barotropic irrotational fluid flow},
  journal = {New Journal of Physics},
  volume  = {12},
  number  = {9},
  pages   = {095014},
  year    = {2010},
  note    = {arXiv:1001.1310 [gr-qc]},
}

@article{Luk_WNS,
  author  = {Luk, Jonathan},
  title   = {Weak null singularities in general relativity},
  journal = {Journal of the American Mathematical Society},
  volume  = {30},
  number  = {3},
  pages   = {847--927},
  year    = {2017},
  note    = {arXiv:1311.4970 [gr-qc]},
}

@article{Song,
  author  = {Song, Y.},
  title   = {Weak null singularity for the {E}instein--{E}uler system},
  journal = {arXiv preprint},
  year    = {2025},
  note    = {arXiv:2506.16635 [gr-qc]},
}

@article{Dafermos03,
  author  = {Dafermos, Mihalis},
  title   = {Stability and instability of the {C}auchy horizon for the spherically symmetric {E}instein--{M}axwell--scalar field equations},
  journal = {Annals of Mathematics},
  volume  = {158},
  number  = {3},
  pages   = {875--928},
  year    = {2003},
  doi     = {10.4007/annals.2003.158.875},
}

@article{Dafermos05,
  author  = {Dafermos, Mihalis},
  title   = {The interior of charged black holes and the problem of uniqueness in general relativity},
  journal = {Communications on Pure and Applied Mathematics},
  volume  = {58},
  number  = {4},
  pages   = {445--504},
  year    = {2005},
}

@article{Sbierski2024,
  author  = {Sbierski, Jan},
  title   = {Lipschitz inextendibility of weak null singularities from curvature blow-up},
  journal = {arXiv preprint},
  year    = {2024},
  note    = {arXiv:2409.18838 [gr-qc]},
}

@article{Chr10,
  author  = {Christodoulou, Demetrios},
  title   = {The formation of black holes and singularities in spherically symmetric gravitational collapse},
  journal = {Communications on Pure and Applied Mathematics},
  volume  = {44},
  number  = {3},
  pages   = {339--373},
  year    = {1991},
}

@article{LinearStabilityOfSchwartzschild,
  author  = {Dafermos, Mihalis and Holzegel, Gustav and Rodnianski, Igor},
  title   = {The linear stability of the {S}chwarzschild solution to gravitational perturbations},
  journal = {Acta Mathematica},
  volume  = {222},
  number  = {1},
  pages   = {1--214},
  year    = {2019},
}

@article{NonlinearStabilityOfSchwartzschild,
  author  = {Dafermos, Mihalis and Holzegel, Gustav and Rodnianski, Igor and Taylor, Martin},
  title   = {The non-linear stability of the {S}chwarzschild family of black holes},
  journal = {arXiv preprint},
  year    = {2021},
  note    = {arXiv:2104.08222 [gr-qc]},
}

@article{Joshi_Saraykar,
  author  = {Joshi, P. S. and Saraykar, R. V.},
  title   = {Shell-crossings in gravitational collapse},
  journal = {International Journal of Modern Physics D},
  volume  = {22},
  number  = {11},
  pages   = {1350027},
  year    = {2013},
}

@article{Szekeres_Lun,
  author  = {Szekeres, P. and Lun, A.},
  title   = {What is a shell-crossing singularity?},
  journal = {Publications of the Astronomical Society of Australia},
  volume  = {16},
  number  = {1},
  pages   = {1--11},
  year    = {1999},
}

@article{Fr_Klein,
  author  = {Frauendiener, J. and Klein, C.},
  title   = {On crossing of dust shells},
  journal = {Journal of Mathematical Physics},
  volume  = {36},
  number  = {7},
  pages   = {3632--3643},
  year    = {1995},
}

@article{GC_CS_ITF,
  author  = {Zhang, X. and Gao, S.},
  title   = {Geodesic completeness, curvature singularities and infinite tidal forces},
  journal = {arXiv preprint},
  year    = {2025},
  note    = {arXiv:2507.04616 [gr-qc]},
}

@article{Rendall,
  author  = {Rendall, A. D.},
  title   = {Reduction of the characteristic initial value problem to the {C}auchy problem and its applications to the {E}instein equations},
  journal = {Proceedings of the Royal Society of London. Series A},
  volume  = {427},
  pages   = {221--239},
  year    = {1990},
}

@book{Ringstrom,
  author    = {Ringstr{\"o}m, Hans},
  title     = {The Cauchy Problem in General Relativity},
  publisher = {European Mathematical Society},
  series    = {ESI Lectures in Mathematics and Physics},
  year      = {2009},
}

@unpublished{Dafermos_ETHlectures,
  author = {Dafermos, Mihalis},
  title  = {The geometry and analysis of black hole spacetimes in general relativity},
  note   = {ETH Nachdiplom lectures},
  year   = {2012},
}

\end{document}